\newif\ifabstract
\newif\iffull
\newcommand{\myparskip}{3pt}
\newcommand{\val}{\operatorname{val}}
\newcommand\tO{\ensuremath{\tilde O}}
\newcommand{\DS}{\operatorname{DS}}
\newcommand{\algterm}{\ensuremath{\operatorname{AlgSelectTerminals}}\xspace}
\newcommand{\alg}{\ensuremath{\operatorname{Alg}}\xspace}
\newcommand{\ceil}[1]{\ensuremath{\left\lceil#1\right\rceil}}
\newcommand{\floor}[1]{\ensuremath{\left\lfloor#1\right\rfloor}}
\newcommand{\band}{\wedge}
\newcommand{\bor}{\vee}
\newcommand{\event}{{\cal{E}}}
\newcommand{\opt}{\mathsf{OPT}}
\newcommand{\set}[1]{\left\{ #1 \right\}}
\newcommand{\tset}{{\mathcal T}}
\newcommand{\pset}{{\mathcal{P}}}
\newcommand{\qset}{{\mathcal{Q}}}
\newcommand{\bset}{{\mathcal{B}}}
\newcommand{\aset}{{\mathcal{A}}}
\newcommand{\be}{\begin{enumerate}}
\newcommand{\ee}{\end{enumerate}}
\newcommand{\bd}{\begin{description}}
\newcommand{\ed}{\end{description}}
\newcommand{\bi}{\begin{itemize}}
\newcommand{\ei}{\end{itemize}}
\newtheorem{theorem}{Theorem}[section]
\newtheorem{fact}[theorem]{Fact}
\newtheorem{lemma}[theorem]{Lemma}
\newtheorem{observation}[theorem]{Observation}
\newtheorem{corollary}[theorem]{Corollary}
\newtheorem{claim}[theorem]{Claim}
\newtheorem{definition}[theorem]{Definition}
\newenvironment{proof}{\par \smallskip{\bf Proof:}}{\hfill\stopproof}
\def\stopproof{\square}
\def\square{\vbox{\hrule height.2pt\hbox{\vrule width.2pt height5pt \kern5pt
\vrule width.2pt} \hrule height.2pt}}
\newenvironment{proofof}[1]{\noindent{\bf Proof of #1.}}%
        {\hfill\stopproof}
\newenvironment{prog}[1]{
\begin{minipage}{5.8 in}
\begin{center}
{\sc #1}
\end{center}
}
{
\end{minipage}
}
\newcommand{\program}[3]{\begin{figure} \fbox{\vspace{2mm}\begin{prog}{#1} #3 \end{prog}\vspace{2mm}} 
			\caption{#1 \label{#2}} \end{figure}}
\renewcommand{\phi}{\varphi}
\newcommand{\eps}{\epsilon}
\newcommand{\half}{\ensuremath{\frac{1}{2}}}
\newcommand{\poly}{\operatorname{poly}}
\newcommand{\expect}[2][]{\text{\bf E}_{#1}\left [#2\right]}
\newcommand{\prob}[2][]{\text{\bf Pr}_{#1}\left [#2\right]}
\newenvironment{properties}[2][0]
{
\begin{enumerate} \setcounter{enumi}{#1}}{\end{enumerate}}
\newcommand{\edel}{E^{\operatorname{del}}}
\newcommand{\mynote}[2][red]{\textcolor{red}{\sc\bf{[#2]}}}
\newcommand{\cmax}{C_{\mbox{\textup{\footnotesize{max}}}}}
\newcommand{\wmax}{W_{\mbox{\textup{\footnotesize{max}}}}}
\newcommand{\OUT}{\mathsf{OUT}}
\newcommand{\IN}{\mathsf{IN}}
\newcommand{\reg}{\operatorname{reg}}
\newcommand{\spec}{\operatorname{spec}}
\newcommand{\inn}{\operatorname{in}}
\newcommand{\out}{\operatorname{out}}
\newcommand{\explore}{\mbox{\sf{Explore} $H'$}\xspace}
\newcommand{\customlabel}[2]{%
   \protected@write \@auxout {}{\string \newlabel {#1}{{#2}{\thepage}{#2}{#1}{}} }%
   \hypertarget{#1}{}
}
\def\ifempty#1{%
 \def\@tmp@a{#1}%
 \ifx\@tmp@a\@empty%
}
\newtheoremstyle{ams-theorem}%
  {\item[\hskip\labelsep \theorem@headerfont ##1\ ##2\theorem@separator]}%
  {\item[\hskip\labelsep {\theorem@headerfont ##1\ ##2}{\normalfont\ (##3)}{\theorem@headerfont
  \theorem@separator}]}
\newtheoremstyle{ams-restatedtheorem}
  {\item[\hskip\labelsep \theorem@headerfont ##1\ ##2\theorem@separator]}%
  {\item[\hskip\labelsep {\theorem@headerfont ##1\ ##2}{\normalfont\ (##3)}{\theorem@headerfont
  \theorem@separator}]}
\newtheoremstyle{nonumberams-restatedtheorem}%
  {\item[\theorem@headerfont \hskip\labelsep ##1\theorem@separator]}%
  {\item[\hskip\labelsep \theorem@headerfont ##3\theorem@separator]}%
\begin{document}

\begin{titlepage}
	
	\title{Breaking the $O(mn)$-Time Barrier for Vertex-Weighted Global Minimum Cut\footnote{STOC 2025, to appear}}
	\author{Julia Chuzhoy\thanks{Toyota Technological Institute at Chicago. Email: {\tt cjulia@ttic.edu}. Supported in part by NSF grant CCF-2402283 and NSF HDR TRIPODS award 2216899.}\and Ohad Trabelsi\thanks{Toyota Technological Institute at Chicago. Email: {\tt ohadt@ttic.edu}.}}
	\date{}
	\maketitle
\pagenumbering{gobble}
	
\thispagestyle{empty}
	
	\begin{abstract}
We consider the Global Minimum Vertex-Cut problem: given an undirected vertex-weighted graph $G$, compute a minimum-weight subset of its vertices whose removal disconnects $G$. The problem is closely related to Global Minimum Edge-Cut, where the weights are on the graph edges instead of vertices, and the goal is to compute a minimum-weight subset of edges whose removal disconnects the graph. Global Minimum Cut is one of the most basic and extensively studied problems in combinatorial optimization and graph theory. While an almost-linear time algorithm was known for the edge version of the problem for awhile (Karger, STOC 1996 and J. ACM 2000), the fastest previous algorithm for the vertex version (Henzinger, Rao and Gabow, FOCS 1996 and J. Algorithms 2000) achieves a running time of $\tO(mn)$, where $m$ and $n$ denote the number of edges and vertices in the input graph, respectively. For the special case of unit vertex weights, this bound was broken only recently  (Li {et al.}, STOC 2021);
their result, combined with the recent breakthrough almost-linear time algorithm for Maximum $s$-$t$ Flow (Chen {et al.}, FOCS 2022, van den Brand {et al.}, FOCS 2023), yields an almost-linear time algorithm for Global Minimum Vertex-Cut with unit vertex weights.

In this paper we break the $28$ years old bound of Henzinger {et al.}  for the general weighted Global Minimum Vertex-Cut, by providing a randomized algorithm for the problem with running time $O(\min\{mn^{0.99+o(1)},m^{1.5+o(1)}\})$.
	\end{abstract}

\newpage
\end{titlepage}
\tableofcontents{}
\newpage

\pagenumbering{arabic}

\section{Introduction}

We consider the Global Minimum Vertex-Cut problem: given an undirected $n$-vertex and $m$-edge graph $G=(V,E)$ with integral weights $0\leq w(v)\leq W$ on its vertices $v\in V$, compute a subset $S\subseteq V$ of vertices, such that the graph $G\setminus S$ is not connected. A closely related and extensively studied problem is Global Minimum Edge-Cut: here, the input graph has weights on its edges instead of vertices, and the goal is to compute a minimum-weight subset of edges, whose removal disconnects the graph. The Global Minimum Cut problem is among the most fundamental and extensively studied in computer science, with algorithms dating back to the 1960's ~\cite{Kle69,Pod73,ET75,LLW88}. 
The famous edge-contraction based algorithm  of Karger and Stein~\cite{KS93} for the edge version of the problem is one of the gems of Theoretical Computer Science and is routinely taught in basic Algorithms classes. 
The notion of a Global Minimum Cut is closely related to the central Graph-Theoretic notions of edge- and vertex-connectivity. The edge-connectivity of a graph $G$ is the smallest value $\lambda$, such that, for every pair $u,v$ of vertices of $G$, there is a collection $\lambda$ of edge-disjoint paths connecting $u$ to $v$. Vertex-connectivity is defined similarly, except that the paths are required to be internally vertex-disjoint. By Menger's Theorem~\cite{Men1927}, if $G$ is a graph with unit vertex weights, then the value of the Global Minimum Vertex-Cut  in $G$ is equal to its vertex-connectivity. One can naturally extend the notion of vertex-connectivity to the vertex-weighted setting, where a collection of $\lambda$ internally disjoint $u$-$v$ paths is replaced with  a $u$-$v$ flow of value $\lambda$, that obeys the vertex capacities defined by their weights. Similarly, the value of the Global Minimum Edge-Cut in a graph with unit edge weights is equal to its edge-connectivity, and the value of the Global Minimum Edge-Cut in a graph with arbitrary edge weights can be thought as corresponding to the ``weighted'' notion of edge-connectivity, where edge-disjoint paths are replaced by flows.

The edge version of the Global Minimum Cut problem has been studied extensively, and is now reasonably well understood in undirected graphs. The classical algorithm of Karger and Stein~\cite{KS93} mentioned above achieves running time $\tilde{O}(n^2)$, improving upon the previous fastest $\tO(mn)$-time algorithm by Hao and Orlin~\cite{HO94}. Later, Karger~\cite{Kar00} obtained a randomized algorithm with near optimal running time $\tO(m)$, using tree-packing and dynamic programming. This running time was recently matched by a deterministic algorithm~\cite{HLRW24}.
The same problem was also studied in the directed setting: Cen et al.~\cite{CLN21}, building on some of the ideas from~\cite{Kar00}, recently presented a randomized algorithm for the directed edge version with running time $O(\min\{n/m^{1/3}, \sqrt{n}\}\cdot m^{1+o(1)})$.

Despite this, progress on the vertex-cut version of the problem has been much slower. In 1996, Henzinger et al.~\cite{HRG00} presented an algorithm for vertex-weighted Global Minimum Cut with running time $\tO(mn)$.
In the 28 years since then, no significant progress has been made on the general vertex-weighted version.
This lack of progress is likely due to several existing technical barriers to adapting the approaches used for the edge-cut version of the problem.  For example, while there are at most $\binom{n}{2}$ minimum cuts in the undirected edge-cut setting, the number of minimum vertex-cuts can be as large as $\Theta(2^k\cdot (n/k)^2)$ even in the unweighted setting, where $k$ is the value of the optimal solution~\cite{Kan90}. Furthermore, the tree-packing method, such as the one used in~\cite{Kar00,CLN21}, is not known to provide similar guarantees in the vertex capacitated setting.
Only recently, the $O(mn)$ barrier was broken for the \emph{unweighted} version of Global Minimum Vertex-Cut by Li et al.~\cite{LNP21}. Their algorithm, combined with the recent breakthrough almost-linear time algorithm for Maximum $s$-$t$ Flow \cite{CKLP22,BCP23}, achieves a running time of $O(m^{1+o(1)})$. They also obtain an $O(\min\{\tO(n^2),mn^{1-1/12+o(1)}\})$-time algorithm for the directed version of the problem with unit vertex weights.

We note that the recent progress on fast algorithms for Maximum $s$-$t$ Flow mentioned above does not appear to directly imply algorithms for Global Minimum Cut with general vertex weights whose running time is below $O(mn)$.
Moreover, while, in the context of Maximum $s$-$t$ Flow, it is well known that the vertex-capacitated version of the problem in undirected graphs can be cast as a special case of the directed edge-capacitated version, this connection is not known in the context of Global Minimum Vertex-Cut. Unfortunately, the standard reduction, where an undirected vertex-weighted graph is replaced by the corresponding edge-weighted directed \emph{split graph}\footnote{A split graph of a vertex-weighted undirected graph $G$ is obtained by replacing every vertex $v\in V(G)$ by a directed edge $(v^{\inn},v^{\out})$ of capacity $w(v)$, and every edge $(x,y)$ of $G$ by a pair of edges  $(x^{\out},y^{\inn})$ and $(y^{\out},x^{\inn})$ of infinite capacity.}, does not work for the Global Minimum Vertex-Cut problem (see \Cref{fig: reduction_fails}), and so the recent algorithms for the directed Global Minimum Edge-Cut problem of~\cite{CLN21} cannot be directly leveraged to obtain a faster algorithm for undirected Global Minimum Vertex-Cut.
This raises the following fundamental question:

\begin{center}\customlabel{Q1}{Q1}
\textit{\textbf{Q1:} Can the Global Minimum Vertex-Cut problem be solved in time faster than $\Theta(mn)$?}
\par\end{center}

 
In this paper, we answer this question affirmatively, by presenting a randomized algorithm for weighted Global Minimum Vertex-Cut with running time $O(\min\{mn^{0.99+o(1)}),m^{3/2+o(1)}\})$. Our result also resolves an open question 
posed by \cite{NSY19} regarding the existence of an $o(n^2)$-time algorithm for the problem, for the setting where $m=O(n)$.
We introduce several new technical tools and expand the scope of some existing techniques; we hope that these tools and techniques will lead to even faster algorithms for the problem, and will be useful in other graph-based problems. We now provide a more detailed overview of previous work, followed by the formal statement of our results and a high-level overview of our techniques.

\begin{figure}[h]
	\centering
	\subfigure[A vertex-weighted graph $G$. The unique global minimum vertex-cut in $G$ is $(\{x\},\{y\},\{z\})$, of value $\opt=10$.]{\scalebox{0.6}{\includegraphics{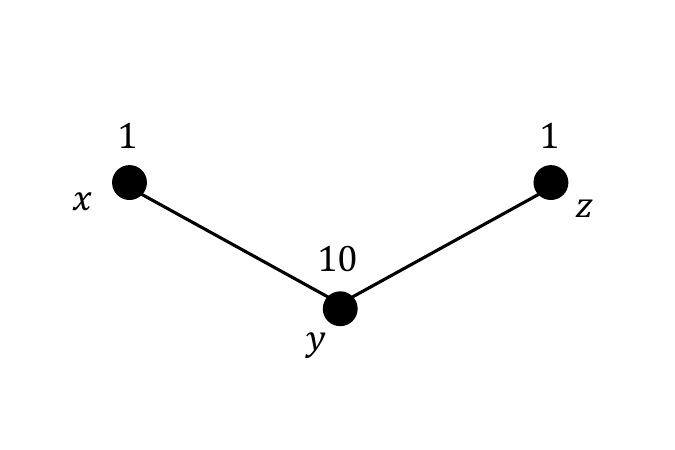}}}
	\hspace{1cm}
	\subfigure[The corresponding split graph $G'$; the blue edges have infinite capacity. Consider the cut $(X,Y)$, where $Y=\{z^{\out}\}$ and $X=V(G')\setminus Y$. The cut value is $1$, but it does not define a vertex cut in $G$ of the same value.]{
		\scalebox{0.4}{\includegraphics{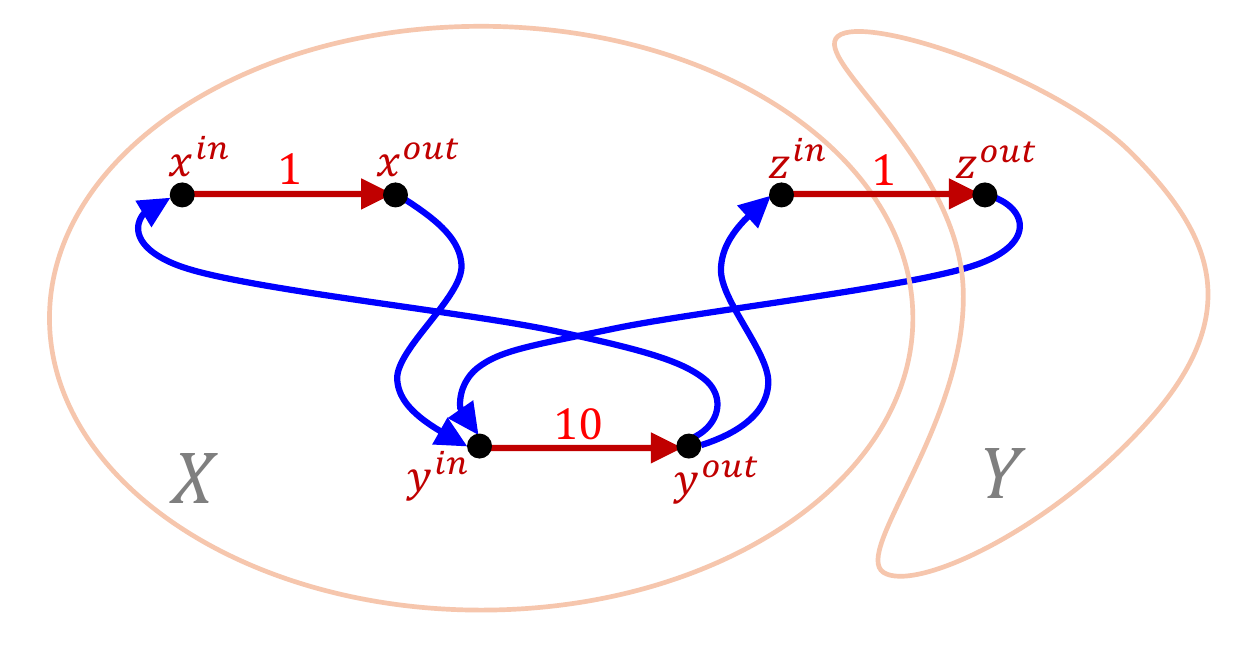}}}
	\caption{An illustration of the classic split-graph reduction from  undirected vertex-weighted graphs to directed edge-weighted graphs, in the context of Global Minimum Vertex-Cut.\label{fig: reduction_fails}}
\end{figure}


\subsection{Overview of Previous Related Work}

In this subsection we provide a high-level overview of previous results for central variants of the Global Minimum Cut problem, and highlight the main challenges in adapting the techniques used in these results to the weighted Vertex-Cut setting.

\paragraph*{Undirected edge-cut version.}
The famous algorithm of Karger and Stein~\cite{KS93} mentioned above achieves an $\tilde O(n^2)$ running time for the problem. The algorithm is iterative. In every iteration, an edge $(u,v)$ is sampled with a probability proportional to its weight, and then $u$ and $v$ are contracted into a single vertex. This process continues until only two vertices remain in the graph, that naturally define an edge-cut, which is then returned by the algorithm. Intuitively, the algorithm succeeds in computing a Global Minimum Edge-Cut with a reasonably high probability, because, if we fix some optimal cut $(X,Y)$, then the edges crossing the cut have a relatively small weight, making it not very likely that an edge crossing the cut is chosen for contraction.  This method is clearly tailored to the (undirected) edge-cut version of the problem.

In a later paper, Karger~\cite{Kar00} presented an $\tO(m)$-time algorithm, that achives a near-optimal running time for the same setting. This algorithm is based on the seminal result by Tutte and Nash-Williams~\cite{Tut61, Nash61}, which showed that any unweighted $c$-edge-connected graph contains a collection of $c/2$ edge-disjoint spanning trees.
Karger~\cite{Kar00} made a crucial observation that the edges of a minimum cut must be partitioned across these 
$c/2$ spanning trees, and so there must be some tree $T$ in the packing, such that the number of edges of $T$ crossing the cut is at most $2$; if the number of edges of a tree $T$ crossing a cut $(X,Y)$ is at most $i$, then we say that the cut \emph{$i$-respects} the tree. 
%
To find the global minimum cut, the algorithm identifies, for each tree in the packing, the minimum cut among those that $2$-respect the tree, and returns the smallest resulting cut. To make this approach efficient, the following strategy was used.
First, in a preprocessing step, the input graph is sparsified by randomly subsampling edges in proportion to their capacities, and treating the resulting ``skeleton" graph $H$ as unweighted. This reduces the number of graph edges to $\tO(n)$ and the minimum cut value to 
$O(\log n)$, 
while ensuring that, if $(X,Y)$ is a minimum cut in $G$, then its value in $H$ is close to the value of the optimal cut in $H$. Next, $O(\log n)$ trees are packed into $H$ using the algorithm of Gabow~\cite{Gabow95}. Finally, dynamic programming is used to examine all trees in $\tO(m)$ time, in order to compute a minimum cut in the original graph.
A sequence of works aimed at achieving fast deterministic algorithms for undirected Global Minimum Edge-Cut \cite{HRW17,KawT19,Li21} recently culminated with an $\tilde O(m)$-time algorithm \cite{HLRW24}.

\paragraph*{Directed edge-cut version.}
%
The fastest current algorithm for this version, due to~\cite{CLN21}, solves the problem in the time of $O(\min\{n/m^{1/3},\sqrt{n}\})$ applications of Maximum $s$-$t$ Flow, that, combined with the recent almost-linear time algorithm for the latter problem by \cite{CKLP22,BCP23}, yields an $O(\min\{n/m^{1/3},\sqrt{n}\}\cdot m^{1+o(1)})$-time algorithm for directed edge-weighted Global Minimum Cut. At a high level, the algorithm uses the ideas from \cite{Kar00} described above.
However, one significant barrier in this setting is that, in the context of directed graphs, a sparsifier that (approximately) preserves all cuts may not exist (see, e.g.,~\cite{CCPS21}). 
In order to overcome this difficulty, the authors only rely on this method if the global minimum cut is unbalanced; since the number of unbalanced cuts can be suitably bounded even in directed graphs, arguments similar to those from~\cite{Kar00} can be used. 
The case of a balanced global minimum cut is handled by sampling vertices and computing Maximum $s$-$t$ Flow between a subset of pairs of vertices from that sampled set; it is then shown that, with high probability, this subset contains at least one pair of vertices from opposite sides of the cut.

We note that the tree-packing based techniques used in the above algorithms appear to be unhelpful in the vertex-cut setting, as tree-packings have no known analogues in the vertex-connectivity regime that achieve similar guarantees (see~\cite{CGK14}).

\paragraph*{Unweighted vertex-cut version.}
This version of the problem received a considerable amount of attention over the years~\cite{Kle69,Pod73,ET75,LLW88}. Recently, a series of results~\cite{NSY19,FNY20} culminated in an almost linear $O(m^{1+o(1)})$-time algorithm for this setting in undirected graphs, along with a somewhat similar $O(\min\{\tO(n^2),mn^{1-1/12+o(1)}\})$-time algorithm for directed graphs~\cite{LNP21}; both of the above bounds follow from the recent breakthrough almost-linear time algorithm for directed Maximum $s$-$t$ Flow of~\cite{CKLP22,BCP23}.
Our algorithm for weighted Global Minimum Vertex-Cut builds on several high-level ideas of~\cite{LNP21}, but requires a significant additional amount of work  and new ideas in order to handle the weighted version of the problem. We provide a more detailed description of the algorithm of~\cite{LNP21}, the barriers to adapting it to the weighted version of the problem, and describe how our algorithm overcomes them in \Cref{sec: results}.

\paragraph*{Weighted vertex-cut version.}
Henzinger et al.~\cite{HRG00} presented an algorithm with running time $\tO(mn)$ for this version of the problem, by generalizing the algorithm of Hao and Orlin~\cite{HO94} for the edge-cut version. 
At a high level, the algorithm of~\cite{HO94} identifies $n-1$ instances of the Maximum $s$-$t$ Flow problem, such that one of the instances is guaranteed to correspond to the global minimum cut. However, instead of performing $n-1$ applications of the algorithm for Maximum $s$-$t$ Flow, which would be too time consuming, Hao and Orlin showed that one can solve all these $n-1$ instances in time that is amortized to match the time of a single Maximum $s$-$t$ Flow computation. However, this approach only works provided that the Maximum $s$-$t$ Flow algorithm follows the Push-Relabel scheme~\cite{GT88}. Unfortunately, the best current such algorithm has running time $O(mn)$, and the recent almost linear-time algorithms for maximum $s$-$t$ flow~\cite{CKLP22,BCP23} do not follow the Push-Relabel paradigm and hence cannot be leveraged.
Henzinger et al.~\cite{HRG00} refined this method of~\cite{HO94} and adapted it to the vertex-weighted version of the problem. Their algorithm uses the ideas from \cite{BDD82} 
to identify a collection of $\tO(n)$ instances of Maximum $s$-$t$ Flow, whose solution is sufficient in order to compute global minimum vertex-cut. A combination of these two approaches then yields their $\tilde O(mn)$-time algorithm for Global Minimum Vertex-Cut.
Another relevant work in this context is the recent combinatorial $O\left(n^{2+o(1)}\right)$-time algorithm for Maximum s-t Flow of~\cite{BBS24}. The algorithm proceeds by setting the initial $s$-$t$ flow $f$ to $0$, and then iteratively augmenting it. In every iteration, an approximate Maximum $s$-$t$ Flow $f'$ in the residual flow network $G_f$ is computed, via an algorithm that can be viewed as following the Push-Relabel scheme. The flow $f$ is then augmented via this flow $f'$, and the algorithm continues until an optimal flow $f$ is obtained. Since this adaptation of the Push-Relabel scheme only computes an approximate flow in the residual flow network, to the best of our understanding, it cannot be directly used to obtain faster algorithms for Global Minimum Vertex-Cut using the approach of~\cite{HO94,HRG00} described above.
Other recent developments on vertex-weighted Global Minimum Cut include a $(1+\varepsilon)$-approximation algorithm with running time $\tO(n^2/\varepsilon^2)$ by~\cite{CLN21}, and an exact algorithm  with pseudo-polynomial running time $\tO(\opt \cdot n \cdot \sum_{u\in V} w(u))$ by ~\cite{CQ21}.

\paragraph{Other Related Work.}
%
%
%
As mentioned already, there is a long and extensive line of work on Global Minimum Cut and its many variations.
We summarize here some additional previous results that are most relevant to us, namely algorithms for the unweighted version of Global Minimum Vertex-Cut. We denote by $k$ the value of the optimal solution (that may be as high as $\Omega(n)$), and by $T(m)$ the running time of the Maximum $s$-$t$ Flow algorithm on an $m$-edge graph.
One of the earliest algorithms for this problem by~\cite{Kle69} achieved a running time of $O(nk \cdot T(m))$ for undirected graphs. Subsequently,~\cite{Pod73, ET75} developed algorithms for directed graphs, with a similar running time $O(nk \cdot T(m))$.
 Even~\cite{Even75} provided an algorithm with a running time of $(n + k^2) \cdot T(m)$, and later~\cite{BDD82} designed an algorithm with a running time of $\tO(n) \cdot T(m)$ for undirected graphs.
Linial, Lov\'asz, and Wigderson~\cite{LLW88} discovered a surprising connection between the problem and convex hulls, that led to an algorithm with running time $\tO(n^{\omega} + nk^{\omega})$ for undirected graphs, where $\omega$ is the matrix multiplication exponent, that simplifies to $\tO(n^{1+\omega})$ in the worst case when $k=\Theta(n)$.
Their proof mimics a physical process, where an embedding of a graph is computed by viewing its edges as ideal springs, and  then letting its vertices settle into an equilibrium. In a follow-up work,~\cite{CR94} extended their result to directed graphs.
An algorithm with running time $O(k^3n^{1.5} + k^2n^2)$ was designed for the undirected version of the problem by~\cite{NI92, CKT93}.
Finally,~\cite{Gab06} provided an algorithm with running time $\tO(n + k\sqrt{n}) \cdot T(m) = \tO(n^{1.5}) \cdot T(m)$ by using expanders in order to identify a small number of vertex pairs for which an algorithm for the Maximum $s$-$t$ flow is then executed.
More recently,~\cite{FNY20} gave an algorithm with $\tO(m+nk^3)$ running time for undirected graphs and $\tO(mk^2)$ time for directed graphs, and~\cite{SY22} provided an algorithm with  running time $\tO\left(m^{1+o(1)} \cdot 2^{O(k^2)}\right)$ for undirected graphs.
We note that in the worst-case setting where $k=\Omega(n)$, the running times of all these algorithms are at least as high as $\Theta(mn)$.

\subsection{Our Results and Techniques}\label{sec: results}
We note that, if a graph $G$ is a complete graph, then it does not have any vertex-cuts. Given a graph $G$, we can check whether this is the case in time $O(|E(G)|)$. Therefore, in the statements of our results we assume that this is not the case.
Our main result is summarized in the following theorem.

\begin{theorem}\label{thm: main}
	There is a randomized algorithm, that, given a simple undirected $n$-vertex and $m$-edge graph $G$ with integral weights $0\leq w(v)\leq W$ on vertices $v\in V(G)$, such that $G$ is not a complete graph, returns a vertex-cut $(A,B,C)$ in $G$, such that, with probability at least $1-1/n$, cut $(A,B,C)$ is a global minimum vertex-cut in $G$. The running time of the algorithm is: $$O\left(\min\{m^{3/2+o(1)},mn^{0.99+o(1)}\}\cdot (\log W)^{O(1)}\right ).$$
\end{theorem}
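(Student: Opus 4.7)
The plan is to follow the high-level structure of Li--Nanongkai--Panigrahi \cite{LNP21} for the unweighted case and extend it to vertex weights. Fix an optimal cut $(L,S,R)$ with $w(S)=k$ and $|L|\le|R|$, and guess $k$ and $|L|$ within constant factors by $O(\log(nW))$ outer binary-search steps; this is what is responsible for the final $(\log W)^{O(1)}$ factor. The two extreme regimes where $|L|=O(1)$ or $|L|=\Theta(n)$ can be handled by known tools: the vertex-capacitated isolating-cuts lemma combined with the almost-linear-time max $s$-$t$ flow of \cite{CKLP22,BCP23} dispatches the first, and random vertex-pair sampling together with one max-flow call dispatches the second. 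The remaining ``intermediate'' range of $|L|$ is where one must beat $mn$ by a factor of $n^{0.01}$ (or $\sqrt{n/m}$).

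In the intermediate regime, I would sample each vertex into a terminal set $T$ with probability $\Theta(1/|L|)$ and prove that with constant probability $T$ is a \emph{good terminal set}: it contains exactly one vertex $x\in L$, at least one vertex in $R$, and every sampled vertex of $S$ lies in $N[x]$. Contracting $T\setminus N[x]$ to a single super-sink $t_x$, the min weighted vertex-cut between $x$ and $t_x$ recovers $k$, so it suffices to run one weighted max vertex-flow per candidate $x\in T$. The chief new difficulty relative to \cite{LNP21} is that in the unweighted setting the inequality $\deg(v)\ge k$ was crucial for arguing that most of $S$ lies in $N(x)$; under arbitrary weights a single heavy vertex could hold most of $w(S)$. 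My plan is to bucket vertices by weight into $O(\log(nW))$ classes, run the ``good terminals'' argument per bucket, and explicitly enumerate the $O(n/|L|)$ vertices whose weight is a constant fraction of $k$ as additional, deterministic terminals.

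The main obstacle is the sparsification. Naively, running a full max flow per candidate $x$ costs $|T|\cdot m^{1+o(1)}\approx(n/|L|)\cdot m^{1+o(1)}$, which is too slow. In \cite{LNP21} one uses that every unweighted max flow decomposes into vertex-disjoint paths, each containing exactly one neighbor of $x$ and of $t_x$; this allows deleting edges inside $N(x)$, inside $N(t_x)$, and contracting $2$-hop $x$-$t_x$ paths, cutting the effective edge count per instance to $\tO(n|L|)$. The weighted max flow lacks such a decomposition, so this step requires new ideas. My approach would be to first build a Karger-style skeleton of the \emph{vertex} graph by sampling copies of each vertex $v$ in proportion to $w(v)$ at a rate that shrinks the minimum cut to $\mathrm{polylog}(n)$, obtaining a near-unit-weight instance on which the unweighted structural lemma holds up to the $(1+\eps)$ slack absorbed by the outer binary search. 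The edge-removal oracles would then be implemented implicitly by a BFS-like exploration augmented with $\ell_0$-sampling sketches, so the sparsifier is never explicitly materialized on the original graph. Proving that this skeleton preserves \emph{vertex} cuts (as opposed to edge cuts, for which it is classical) to within $(1+\eps)$ is the hardest single ingredient I anticipate.

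Putting the pieces together, each of the $\Theta(n/|L|)$ instances becomes a weighted max vertex-flow on a graph with $\tO(n|L|)$ edges, solvable in $(n|L|)^{1+o(1)}$ time via \cite{CKLP22,BCP23}, for a total of $n^{2+o(1)}/|L|^{o(1)}$ plus $m\cdot\mathrm{polylog}$ preprocessing. Balancing $|L|$ against the cost of the isolating-cuts branch and optimizing across graph densities yields the advertised bounds $mn^{0.99+o(1)}$ and $m^{3/2+o(1)}$, while the outer $O(\log(nW))$ guesses contribute only the $(\log W)^{O(1)}$ factor.
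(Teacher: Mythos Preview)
Your proposal correctly identifies the good-terminal-set framework from \cite{LNP21} as the starting point, but the central sparsification step has a genuine gap. You propose a ``Karger-style skeleton'' obtained by sampling copies of each vertex in proportion to its weight, then invoking the unweighted structural lemma. No vertex-cut analogue of the Bencz\'ur--Karger edge sparsifier is known: the standard concentration argument relies on bounding the number of $\alpha$-near-minimum cuts, which for edge cuts is $n^{O(\alpha)}$ but for vertex cuts can be $\Theta(2^k(n/k)^2)$, exponential in the cut value. You acknowledge this as ``the hardest single ingredient,'' but without it the plan does not go through. There is also a secondary issue: even granted such a skeleton, it would only preserve cuts to within $(1+\eps)$, and your outer loop merely guesses the \emph{scale} of $k$ and $|L|$ up to constant factors, so it cannot convert an approximate subroutine into the exact answer the theorem requires.

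The paper sidesteps sparsification-of-the-input altogether. For the hard regime (dense $G$, $|L|\le n^{\eps}$) it maintains an $s$-$t$ flow in the split graph and augments it over $O(\poly\log n)$ phases at progressively finer granularities $M_i$. A phase (i) performs a generalized local-flow exploration that augments via \emph{flows} rather than single paths while inserting random shortcut edges to $t$; (ii) uses repeated min-cost max-flow computations to partition the heavy vertices into $(Q,Q')$ with $|Q'\cap S|$ small and, when $|Q|$ is large, to extract a nearly-saturated $s$-$t$ cut; and (iii) exploits $(Q,Q')$ and that cut to sparsify only the current \emph{residual} graph to $\tO(n^{2-4\eps})$ edges before one more max-flow. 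A subgraph oracle needed in (iii) is answered in bulk across all candidate terminals $x$ via a reduction to triangle listing with fast matrix multiplication. These are the ideas that replace the unweighted flow-decomposition structure you were hoping to recover by sampling.
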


We now provide a high-level overview of the techniques used in the proof of \Cref{thm: main}, where we mostly focus on the algorithm with running time $O\left(mn^{0.99+o(1)}\cdot (\log W)^{O(1)}\right )$, the most technically challenging part of our result.
The high-level intuitive description provided here is somewhat oversimplified and omits some technical details, since we prioritize clarity of exposition over precision in this high-level overview.

 Let $G$ be the input simple undirected $n$-vertex and $m$-edge graph with integral weights  $0\leq w(v)\leq W$ on its vertices. In order to simplify the discussion in this section, we assume that $W\leq \poly(n)$. For simplicity, we assume that all vertex weights are non-zero; we later show a simple transformation that allows us to achieve this property without changing the problem. Throughout, we use a parameter $0<\eps<1$. We fix a global minimum vertex-cut $(L,S,R)$, and we assume w.l.o.g. that $w(L)\leq w(R)$. We denote by $\opt=w(S)$ the value of the optimal solution.
 It appears that the most difficult special case of the problem is where $|L|\leq n^{\eps}$ and the graph $G$ is dense (e.g., $m\geq n^{1.98}$). We start by providing simple algorithms for other special cases, that essentially reduce the problem to this hard special case.
 
 \paragraph{Algorithm $\alg_1$: when $|S|$ is small.} Our first algorithm, $\alg_1$, is designed to work in the regime where $|S|\leq |L|\cdot n^{1-\eps}$. The algorithm for this special case follows  the paradigm that  was introduced by~\cite{LNP21} in the context of computing global minimum vertex-cut in unweighted graphs. We start by carefully subsampling a subset $T$ of vertices of $G$ in a way that ensures that, with probability at least $\Omega(1/n^{1-\eps})$, $|T\cap L|=1$ and $T\cap S=\emptyset$ holds. If the sampled set $T$ of vertices has these properties, then we say that the sampling algorithm is \emph{successful}. We then use the Isolating Cuts Lemma of~\cite{LP20,AKT21_stoc}, combined with the recent almost linear-time algorithm for directed Maximum $s$-$t$ Flow of \cite{CKLP22,BCP23} (see also \Cref{thm: min cuts via isolating}) to compute, in time $O(m^{1+o(1)})$, for every vertex $v\in T$, a minimum vertex-cut separating $v$ from $T\setminus\set{v}$ in $G$. Lastly, we return the smallest of the resulting cuts. It is immediate to verify that, if the sampling algorithm is successful, then the cut returned by the algorithm is indeed a global minimum vertex-cut. By repeating the algorithm $\tilde O(n^{1-\eps})$ times, we ensure that with high probability, in at least one of its executions, the sampling algorithm is successful. The total running time of the algorithm for this special case is $O(mn^{1-\eps+o(1)})$. 

\paragraph{Algorithm $\alg_2$: when $|S|$ is large and $G$ is not dense.}
Our second algorithm, $\alg_2$, is designed for the regime where $|S|\geq n^{1-\eps}\cdot |L|$ and $m\leq n^{2-2\eps}$. In this case, we let $T$ be the set of all vertices of $G$ whose degree is at least $n^{1-\eps}$. Since $m\leq n^{2-2\eps}$, it is easy to verify that $|T|\leq 2n^{1-\eps}$. Our main observation is that at least one vertex of $T$ must lie in $L$. Indeed, it is easy to verify that every vertex of $S$ must have an edge connecting it to some vertex of $L$, and so the total number of edges incident to the vertices of $L$ is at least $|S|$. Since $|S|\geq n^{1-\eps}\cdot |L|$, at least one vertex in $L$ must have degree at least $n^{1-\eps}$, so it must lie in $T$. Next, we provide a random procedure that receives as input a vertex $x\in V(G)$, and returns another vertex $y_x\in V(G)$, such that, if $x\in L$, then with a sufficiently high probability $y_x\in R$ holds. The procedure uses the fact that, if $x\in L$, then the total weight of all vertices of $S$ that are not neighbors of $x$ is bounded by $w(L)\leq w(R)$. Vertex $y_x$ is then selected uniformly at random from $V(G)\setminus\left(\set{x}\cup N_G(x)\right )$, and, if $x\in L$, then it is guaranteed to lie in $R$ with a constant probability. Lastly, for every vertex $x\in T$, we compute a minimum vertex-cut separating $x$ from $y_x$ in $G$ using the almost-linear algorithm of \cite{CKLP22,BCP23} for directed Maximum $s$-$t$ Flow (see also \Cref{cor: min_vertex_cut}), and return the cut of the smallest weight from among all the resulting cuts. Since at least one vertex $x\in T$ must lie in $L$, with a sufficiently high probability $y_x\in R$ must hold, and in this case, the cut that the algorithm returns is indeed a minimum vertex-cut. 

By combining Algorithms $\alg_1$ and $\alg_2$, we obtain an algorithm for global Minimum Vertex-Cut in non-dense graphs: specifically, if $m\leq n^{2-2\eps}$, then the running time of the resulting algorithm is $O\left (mn^{1-\eps+o(1)}\right )$. We note that this algorithm can be used directly to obtain a randomized algorithm for global minimum vertex-cut with running time $O\left(m^{3/2+o(1)}\right )$; see \Cref{subsec: finishing the alg} for details. We use this algorithm with $\eps=1/100$ to obtain an algorithm with running time $O\left (mn^{0.99+o(1)}\right )$ for graphs in which $m\leq n^{1.98}$ holds. From now on it is sufficient to design an algorithm for dense graphs, where $m\geq n^{1.98}$. In the remainder of this exposition, we focus on such graphs, and we set $\eps=1/45$ for this part of the algorithm. Note that Algorithm $\alg_1$ works well even in dense graphs, provided that $|S|\leq |L|\cdot n^{1-\eps}$ holds. From now on we assume that this is not the case, so in particular, $|L|\leq \frac{|S|}{n^{1-\eps}}\leq n^{\eps}$. 
We denote our algorithm for this remaining special case, that we describe below, by $\alg_3$.

\paragraph{A good set of terminals.}

At a very high level, we employ the notion of \emph{a good set of terminals}, that was (implicitly) introduced by~\cite{LNP21}, in their algorithm for  global Minimum Vertex-Cut in unweighted graphs. 
Recall that we have fixed a global minimum vertex-cut $(L,S,R)$, where $|L|\leq n^{\eps}$. For a vertex set $T\subseteq V(G)$, we say that it is a \emph{good set of terminals} if $|T\cap L|=1$ and $T\cap R\neq \emptyset$ hold, and, additionally, if we denote by $x$ the unique vertex in $T\cap L$, then $T\cap S\subseteq N_G(x)$. For every vertex $x\in T$, we denote by $T_x=T\setminus\left (\set{x}\cup N_G(x)\right )$. We say that $(x,T)$ is a \emph{good pair}, if $T$ is a good set of terminals, and $x$ is the unique vertex in $L\cap T$. Notice that, if $(x,T)$ is a good pair, then $T_x\subseteq R$ and $T_x\neq\emptyset$, and so any minimum vertex-cut separating $x$ from $T_x$ in $G$ is a global minimum cut.

The algorithm of \cite{LNP21} starts by selecting a set $T$ of vertices at random, where every vertex is added to $T$ with probability roughly $1/|L|$. It is not hard to show that, if the graph $G$ is unweighted, then with a reasonably high probability, the resulting set $T$ is a good set of terminals, and moreover, $|T|\leq \tilde O(n/|L|)$. For every vertex $x\in T$, let $G_x$ be the graph obtained from $G$ by unifying the vertices of $T_x$ into a destination vertex $t_x$. For every vertex $x\in T$, the algorithm of~\cite{LNP21} computes the value of the minimum $x$--$t_x$ vertex cut in $G_x$, and then returns the smallest of the resulting cuts. In order to do so efficiently, they first \emph{sparsify} each such graph $G_x$, by cleverly employing  sparse recovery sketches. With a sufficiently high probability\footnote{In fact they provide a different simple algorithm for the special case where $S$ contains few low-degree vertices, and the above bound only holds for the remaining case; we ignore these technical details in this informal exposition.}, each resulting sparsified graph $\tilde G_x$ contains at most $\tilde O(\opt\cdot |L|)$ edges. 
Applying the almost-linear time algorithm for maximum $s$-$t$ flow of \cite{CKLP22,BCP23} to each such graph $\tilde G_x$ for $x\in T$ then yields an algorithm for computing global minimum cut in $G$, whose running time, with high probability, is bounded by $O\left((\opt\cdot |L|)^{1+o(1)}\cdot |T|\right )\leq O\left(\opt\cdot n^{1+o(1)}\right )$. Finally, it is easy to see that, in the unweighted setting, the degree of every vertex in $G$ must be at least $\opt$, so $m\geq \Omega(\opt\cdot n)$ must hold, leading to the bound of $O\left(m^{1+o(1)}\right )$ on the running time of their algorithm.

Our algorithm also follows this high-level scheme: we use a randomized procedure, that is adapted to vertex-weighted graphs, in order to select a subset $T\subseteq V(G)$ of vertices, so that, with a reasonably high probability, $T$ is a good set of terminals. Then  for every terminal $x\in T$, we compute the value of the minimum $x$-$t_x$ cut in the corresponding graph $G_x$, and output the smallest such value. But unfortunately, the algorithm of~\cite{LNP21} can no longer be used in order to sparsify each such graph $G_x$ in the weighted setting, and it is not clear how to extend it to this setting. Among other problems, in the weighted setting, the value $\opt$ of the optimal solution may be much higher than $n$, and we can no longer claim that the degree of every vertex in $G$ is at least $\opt$. Additionally, in order to ensure that the set $T$ of vertices is a good set of terminals with a sufficiently high probability, we need to sample vertices into $T$ with probability that is proportional to their weight, and other parts of the analysis of the algorithm of~\cite{LNP21}  do not work in this setting. We now proceed to describe our algorithm in more detail.

We start by computing the \emph{split graph} $G'$ corresponding to graph $G$, that is a standard step in transforming cut and flow problems in undirected vertex-capacitated graphs to directed edge-capacitated graphs. The set of vertices of $G'$ is $V(G')=\set{v^{\inn},v^{\out}\mid v\in V(G)}$; we refer to $v^{\inn}$ and $v^{\out}$ as the \emph{in-copy} and the \emph{out-copy} of $v$. Given a vertex set $X\subseteq V(G)$, we will also denote by $X^{\inn}=\set{x^{\inn}\mid x\in X}$, $X^{\out}=\set{x^{\out}\mid x\in X}$, and $X^*=X^{\inn}\cup X^{\out}$. We denote by $V^{\inn}=\set{v^{\inn}\mid v\in V(G)}$ and by $V^{\out}=\set{v^{\out}\mid v\in V(G)}$. 
The set $E(G')$ of edges of $G'$ is partitioned into two subsets: the set of \emph{special edges}, that contains, for every vertex $v\in V(G)$, the special edge $e_v=(v^{\inn},v^{\out})$ of capacity $c(e_v)=w(v)$, and the set of \emph{regular edges}, that contains, for every edge $(u,v)\in E(G)$, the regular edges $(u^{\out},v^{\inn})$ and $(v^{\out},u^{\inn})$, whose capacities are set to a value $\wmax>4n\cdot W$, that is an integral power of $2$. 

For every vertex $x\in T$, let $G'_x$ be the graph that is obtained from $G'$ by adding a destination vertex $t$, and, for every vertex $y\in T_x$, adding a regular edge $(y^{\inn},t)$ of capacity $\wmax$ to the graph. It is immediate to verify that the value of the minimum $x$-$t_x$ vertex-cut in $G_x$ is equal to the value of the minimum $x^{\out}$-$t$ edge-cut in $G'_x$, which, in turn, from the Max-Flow / Min-Cut theorem, is equal to the value of the maximum $x^{\out}$-$t$ flow in $G'_x$.

Our algorithm computes, for every vertex $x\in T$, a value $c_x$, that is at least as high as the value of the maximum $x^{\out}$-$t$ flow in $G'_x$. Additionally, if $(x,T)$ is a good pair, then our algorithm guarantees that, with a sufficiently high probability, $c_x$ is equal to the value of the maximum $x^{\out}$-$t$ flow in $G'_x$ (which, in turn, must be equal to $\opt$). Let $x\in T$ be the vertex for which $c_x$ is the smallest, and let $y$ be any vertex in $T_x$. If $T$ is a good set of terminals, then the value of the global minimum cut in $G$ is then guaranteed, with a sufficiently high probability, to be equal to the value of the minimum vertex-cut separating $x$ from $y$ in $G$, which, in turn, is equal to $c_x$. Computing the minimum $x$-$y$ vertex-cut in $G$ then yields the  global minimum vertex-cut with a sufficiently high probability.

At the heart of the algorithm is our main technical subroutine, that receives as input a set $T\subseteq V(G)$ of vertices and a vertex $x\in T$, together with two copies of the graph $G'_x$ in the adjacency-list representation. The algorithm outputs a value $c_x$, that is at least as high as the value of the maximum $x^{\out}$-$t$ flow in $G'_x$. Moreover, if $(x,T)$ is a good pair, then with probability at least $1/2$, $c_x$ is guaranteed to be equal to the value of the maximum $x^{\out}$-$t$ flow in $G'_x$.
In the remainder of this exposition, we focus on the algorithm for this subroutine. For convenience, we denote $G'_x$ by $G'$ and $x^{\out}\in V(G')$ by $s$. We also denote by $\opt_x$ the value of the maximum $s$-$t$ flow in $G'$.
We note that, since $|T|$ may be as large as $\Omega(n)$, we need to ensure that the running time of the subroutine is significantly lower than $n^2$. In particular, if $G$ is sufficiently dense, this running time may need to be lower than $|E(G')|$.

Recall that we have set the parameter $\eps=1/45$. We also use another parameter $n^{7/9}\leq \gamma\leq 2n^{7/9}$, that is an integral power of $2$. This setting ensures that $n^{14\eps}\leq \gamma\leq 2n^{1-10\eps}$.
The algorithm for the subroutine consists of $z=O(\poly\log n)$ \emph{phases}. For all $0\leq i\leq z$, we also use the parameter $M_i=\frac{\wmax'}{2^i\cdot \gamma}$, where $\wmax'$ is an integral power of $2$ that is greater than the largest weight of any vertex in $G$.

Intuitively, our algorithm maintains an $s$-$t$ flow $f$ in $G'$, starting with $f=0$, and then gradually augments it. The \emph{deficit} of the flow $f$ is $\Delta(f)=\opt_x-\val(f)$.
Specifically, for all $1\leq i\leq z$, 
we denote by $f_i$ the flow $f$ at the beginning of the $i$th phase. We ensure that, for all $1\leq i\leq z$, flow $f_i$ is $M_i$-integral, and its deficit is bounded by $4nM_i$. Additionally, for technical reasons, we require that, for every vertex $v\in V(G)$ with $w(v)\geq M_i$, if $e=(v^{\inn},v^{\out})$ is the corresponding special edge, then $f_i(e)\leq w(v)-M_i$; in other words, the residual capacity of the edge $e$ remains at least $M_i$. Assume first that we can indeed ensure these properties. If $z$ is sufficiently large, then the flow $f_z$ has deficit at most $4nM_z<1$. We can then construct a graph $\hat G\subseteq G'$, that only contains edges $e\in E(G')$ with $f(e)>0$; the number of such edges is bounded by the total running time of our subroutine so far, which, in turn, is significantly less than $\Theta(n^2)$. Since all edge capacities in $\hat G$ are integral, the value of the maximum $s$-$t$ flow in $\hat G$ must be $\ceil{\val(f_z)}=\opt_x$. We then compute the value of the maximum $s$-$t$ flow in $\hat G$ and output it as $c_x$. Before we describe our algorithm for the preprocessing step and for each phase, we need to introduce the notion of \emph{shortcut operations}.

\paragraph{Shortcut Operations.}
Over the course of the algorithm, we may slightly modify the graph $G'$ by performing \emph{shortcut operations}. In a shortcut operation, we insert an edge $(v,t)$ into $G'$, for some vertex $v\in V(G')$. If the pair $(x,T)$ is not good, then we say that such an operation is always \emph{valid}. Notice that the insertion of the edge may only increase the value of the maximum $s$-$t$ flow in $G'$. Otherwise, if the pair $(x,T)$ is good, then we say that the operation is valid only if $v\in S^{\out}\cup R^*$, that is, $v$ is either an out-copy of a vertex of $S$, or it is an in-copy or an out-copy of a vertex of $R$. By inspecting the minimum $s$-$t$ edge-cut in $G'$ (whose corresponding edge set is $\set{(u^{\inn},u^{\out})\mid u\in S}$), it is easy to verify that a valid shortcut operation may not change the value of the maximum $s$-$t$ flow in $G'$ in this case. We ensure that, with a sufficiently high probability, all shortcut operations that the algorithm performs are valid. For clarity, we continue to denote the original graph $G'$ by $G'$, and we use $G''$ to denote the graph obtained from $G'$ after our algorithm possibly performed a number of shortcut operations. The flow $f$ that our algorithm maintains is in graph $G''$. Recall that our algorithm is given as input 2 copies of the graph $G'$ in the adjacency-list representation. We use one of these copies to maintain the graph $G''$, and we use the second copy in order to maintain the residual graph of $G''$ with respect to the current flow $f$, that we denote by $H$.

\paragraph{The Preprocessing Step.} The purpose of the preprocessing step is to compute a flow $f_1$ that can serve as the input to the first phase. Recall that we require that $f_1$ is $M_1$-integral, where $M_1=\frac{\wmax'}{2\gamma}$, and that its deficit is at most $4nM_1$. We also require that, for every vertex $v\in V(G)$ with $w(v)\geq M_1$, if $e=(v^{\inn},v^{\out})$ denotes the corresponding edge of $G''$, then $f_1(e)\leq w(v)-M_1$.
Consider the graph $\hat G\subseteq G''$, that is obtained from $G''$ as follows. First, we delete, for every vertex $v\in V(G)$ with $w(v)<M_1$, both copies of $v$ from the graph. Next, for every remaining special edge $e$, we decrease the capacity of $e$ by at least $M_1$ and at most $2M_1$ units, so that the new capacity is an integral multiple of $M_1$. It is easy to verify that, by computing a maximum $s$-$t$ flow in the resulting flow network $\hat G$, we can obtain the flow $f_1$ with all desired properties. Unfortunately, it is possible that the graph $\hat G$ is very dense, so we cannot compute the flow directly. Instead, we \emph{sparsify} $\hat G$ first, using the following two observations, that essentially generalize Equation (2) and Observation $2$ of~\cite{LNP21} to arbitrary vertex capacities. Let $N=\set{v\in V(\hat G)\mid (s,v)\in E(\hat G)}$, and let $N'\subseteq V(G)$ contain all vertices $u$ such that a copy of $u$ lies in $N$. The first observation is that there is an optimal flow in $\hat G$ that does not use any edges that enter the vertices of $N$, except for the edges that start at $s$, so all such edges can be deleted. Second, it is not hard to prove that the total number of vertices $v\in S\setminus N'$ of weight at least $M_1$ is bounded by $2|L|\gamma$. It then follows that, if $u\in V(G)$ has more than $3|L|\gamma$ neighbors in $V(G)\setminus N'$ that have weight at least $M_1$, then at least one such neighbor must lie in $R$, and so $u$ may not belong to $L$. We can then safely add a shortcut edge $(v^{\out},t)$ to both $G''$ and $\hat G$, and we can ignore all other edges leaving $v^{\out}$. These two observations allow us to sparsify the graph $\hat G$, obtaining a smaller graph $\hat G'\subseteq \hat G$, such that, on the one hand, $|E(\hat G')|\leq \tilde O(n^{2-4\eps})$, while on the other hand, all shortcut operations performed so far are valid with a high enough probability, so the value of the maximum $s$-$t$ flow in $\hat G'$ is at least as high as that in $\hat G$, and the two flow values are equal if $(x,T)$ is a good pair.

\paragraph{Executing a Single Phase.}
We now provide the description of the $i$th phase, for $1\leq i\leq z$. Recall that we receive as input a flow $f=f_i$ in the current graph $G''$, that is $M_i$-integral, and whose deficit is bounded by $4nM_i$. Additionally, for every vertex $v\in V(G)$ with $w(v)\geq M_i$, if $e=(v^{\inn},v^{\out})$ is its corresponding edge in $G''$, then $f(e)\leq w(v)-M_i$. Therefore, if we denote by $U_i=\set{v\in V(G)\mid w(v)\geq M_i}$, then, for every vertex $v\in U_i$, the residual capacity of the edge $e=(v^{\inn},v^{\out})$ is at least $M_i$.
Throughout, we denote by $H$ the residual flow network of $G''$ with respect to the current flow $f$. By the properties of the residual flow network, the value of the maximum $s$-$t$ flow in $H$ is at least $\opt_x-\val(f)$ (and it is equal to $\opt_x-\val(f)$ if $(x,T)$ is a good pair and all shortcut operations so far have been valid).
 As before, we can consider the graph $\hat H$, that is obtained from $H$ by first deleting, for every vertex $u\in V(G)\setminus U_i$, both copies of $u$, and then reducing, for each remaining forward special edge $(v^{\inn},v^{\out})$, its residual capacity by at least $M_{i+1}=M_i/2$ and at most $2M_{i+1}$, so that it becomes an integral multiple of $M_{i+1}$. As before, computing a maximum $s$-$t$ flow in the resulting flow network $\hat H$ would yield the flow $f_{i+1}$ with all desired properties. But unfortunately graph $\hat H$ may be very dense, and we may not be able to afford to compute the flow directly. As before, we will sparsify this graph first, though the sparsification procedure is much more challenging now. In fact our algorithm may first gradually augment the flow $f$ and add some new shortcut edges, before sparsifying the resulting new graph $\hat H\subseteq H$.
 
 The key step in our sparsification procedure is to compute a partition $(Q,Q')$ of the set $U_i$ of vertices of $G$, so that $|Q'\cap S|<n^{1-4\eps}$. Additionally, if $|Q|>n^{1-4\eps}$, then we also compute an $s$-$t$ edge-cut $(X^*,Y^*)$ in the current residual flow network $H$, that has the following property: the total residual capacity of the edges $E_H(X^*,Y^*)$ is close to the value of the maximum $s$-$t$ flow in $H$. In other words, in any maximum $s$-$t$ flow in $H$, the edges of $E_H(X^*,Y^*)$ are close to being saturated. Lastly, we ensure that $|E_H(X^*,Y^*)|$ is relatively small, and, for every vertex $v\in Q$, $v^{\inn}\in X^*$ and $v^{\out}\in Y^*$ holds. Assume for now that we have computed  the partition $(Q,Q')$ of $U_i$ and the cut $(X^*,Y^*)$ as above. Observe that, if some vertex $v\in V(G)$ has more than $2n^{1-4\eps}$  neighbors in $Q'$, then, since $|L|\leq n^{\eps}$ and $|Q'\cap S|\leq n^{1-4\eps}$, at least one of these neighbors must lie in $R$, and so $v$ may not lie in $L$. We can then safely add a shortcut edge $(v^{\out},t)$, and ignore all other edges leaving $v^{\out}$ in $\hat H$. Assume now that $|Q|>n^{1-4\eps}$, and let $E'=\set{e=(y,x)\in E(H)\mid y\in Y^*,x\in X^*}$. Consider any maximum $s$-$t$ flow $f^*$ in $H$, and denote its value by $F$. Since the total capacity of all edges in $E_H(X^*,Y^*)$ is close to $F$, it is easy to see that the total flow that $f^*$ sends on the edges of $E'$ is relatively low. Indeed, if $\pset$ is a flow-path decomposition of $f^*$, and $\pset'\subseteq \pset$ is the subset of paths containing the edges of $E'$, then every path in $\pset'$ must use at least two edges of $E_H(X^*,Y^*)$, so these paths may only carry a relatively small amount of flow. We can then delete the edges of $E'$ from the sparsified graph without decreasing the value of the maximum $s$-$t$ flow by too much. We use these and other observation in order to sparsify the graph $\hat H$, to obtain a graph $\hat H'$ with $|E(\hat H')|\leq O(n^{2-4\eps+o(1)})$, such that the value of the maximum $s$-$t$ flow in $\hat H'$ is close to that in $\hat H$, and all edge capacities in $\hat H'$ are integral multiples of $M_{i+1}$. Computing the maximum $s$-$t$ flow in $\hat H'$ then yields the desired flow $f_{i+1}$, that serves as the output of the current phase.

\paragraph{Computing the Partition $(Q,Q')$.}
The most technically challenging part of our algorithm is computing the partition $(Q,Q')$ of $U_i$, and, if needed, the cut $(X^*,Y^*)$ in $H$ with the properties described above. Our first step towards this goal is to compute a subgraph $J\subseteq H$ with $s\in J$ and $t\not \in J$, such that $|V^{\out}\cap V(J)|$ is small, and additionally, for every vertex $u\not \in J$, the total number of edges in set $\set{(x,u)\mid x\in V(J)}$ is relatively small, as is their total capacity in $H$. Assume first that we have computed such a graph $J$. We start with an initial partition $(Q,Q')$ of $U_i$, where $Q$ contains all vertices $v\in U_i$ for which $v^{\inn}\in V(J)$ and $v^{\out}\not\in V(J)$, and $Q'=U_i\setminus Q$. We show that $|Q'\cap S|$ must be relatively small. Consider now a graph $\tilde H$, that is obtained from $H$, by unifying all vertices of $V(H)\setminus V(J)$ into a new destination vertex $t'$; we keep parallel edges and discard self-loops. Note that, for every edge $e=(x,y)\in E(H)$ with $x\in V(J)$ and $y\not\in V(J)$, there is an edge $(x,t')$ corresponding to $e$ in $\tilde H$; we do not distinguish between these two edges.
Let $F$ denote the value of the maximum $s$-$t$ flow in $H$, and let $F'$ denote the value of the maximum $s$-$t'$ flow in $\tilde H$.
 Our key observation is that, on the one hand, $F'$ is quite close to $F$, while, on the other hand, if $f'$ is any maximum $s$-$t'$ flow in $\tilde H$, then the total flow on the edges $\set{(v^{\inn},v^{\out})\mid v\in Q\cap S}$ must be close to their capacity. In other words, if we compute any maximum $s$-$t'$ flow $f'$ in $\tilde H$, and a subset $B\subseteq Q$ of vertices, such that the total amount of flow on the edges of  $\set{(v^{\inn},v^{\out})\mid v\in B}$ is significantly lower than their capacity, then only a small fraction of vertices of $B$ may lie in $S$; we call such a vertex set $B\subseteq Q$ a \emph{bad batch}. Our algorithm iteratively computes bad batches $B\subseteq Q$ of a sufficiently large cardinality, and then moves all vertices of $B$ from $Q$ to $Q'$. In order to compute a bad batch, we set up an instance of Min-Cost Maximum $s$-$t'$ Flow in $\tilde H$, by setting the \emph{cost} of every edge $(v^{\inn},t')$ corresponding to vertices $v\in Q$ to $1$, and setting the costs
of all other edges to $0$. Once the algorithm terminates, we are guaranteed that $Q$ may no longer contain large bad batches of vertices, or, equivalently, in any maximum $s$-$t'$ flow in $\tilde H$, the vast majority of the edges in $\set{(v^{\inn},v^{\out})\mid v\in Q}$ are close to being saturated. If $|Q|\leq n^{1-4\eps}$, then we terminate the algorithm with the resulting partition $(Q,Q')$ of $U_i$. Otherwise, we perform one additional step that computes a cut $(X,Y)$ in the graph $\tilde H$, that in turn defines an $s$-$t$ cut $(X^*,Y^*)$ in $H$ with the desired properties. Note that this step crucially relies on the fact that the graph $\tilde H$ is relatively small, which, in turn, follows from the fact that $|V(J)\cap V^{\out}|$ is relatively small, and so is the number of edges $(x,y)\in E(H)$ with $x\in V(J)$ and $y\not\in V(J)$.

\paragraph{Computing the Subgraph $J$.} Recall that, in order to compute the partition $(Q,Q')$ as described above, we need first to  compute a subgraph $J\subseteq H$ with $s\in J$ and $t\not \in J$, such that $|V^{\out}\cap V(J)|$ is small, and additionally, for every vertex $u\not \in J$, the total number of edges in set $\set{(x,u)\mid x\in V(J)}$ is relatively small, and so is their total capacity in $H$. We do not compute this subgraph directly. Instead, we perform a number of iterations. In every iteration, we either add a shortcut edge and augment the current flow $f$ by a significant amount; or we compute the subgraph $J$ of the current residual network $H$ with the desired properties and terminate the algorithm. We note that, as we add shortcut edges, and as the flow $f$ is augmented, the residual flow network $H$ is updated accordingly, and the graph $J$ that our algorithm eventually computes is a subgraph of the current residual flow network $H$.
Our algorithm for this step relies on the technique of \emph{local flow augmentations}, that was first introduced by~\cite{CHILP17} in the context of the Maximal $2$-Connected Subgraph problem. 
The technique was later refined and applied to \emph{unweighted} and \emph{approximation} versions of the global minimum vertex-cut problem~\cite{NSY19,FNY20,CQ21}.

We first provide a high-level intuitive description of this technique in the context of past work on unweighted global minimum vertex-cut, and then describe our approach that extends and generalizes this technique. 
Our description here is somewhat different from that provided in previous work, as we rephrase it to match the terminology used in this overview.
Consider an instance $G$ of the unweighted Global Minimum Vertex-Cut problem, and assume that there is an optimal solution $(L,S,R)$ where $|L|$ is small. Suppose we are given a vertex\footnote{In fact a vertex $v$ is selected from $G$ via some random process, and we are only guaranteed that with some reasonably high probability, $v\in L$; this is similar to our setting where we first select a random set $T$ of vertices that is a good set of terminals with a sufficiently high probability, and then select a random vertex $v\in T$.} $v\in L$. We construct a split graph $G'$ as before, and add a destination vertex $t$ to it. Intuitively, our goal is to compute a flow of value $|S|$ from $s=v^{\out}$ to $t$ in $G'$, while we are allowed to add valid shortcut edges to $G'$. We start with a flow $f=0$, and then iterate. In every iteration, we perform a DFS search in the residual flow network $H$ of $G'$ corresponding to the current flow $f$, and, once the search discovers roughly $|L|\cdot |S|$ vertices, we terminate it. Let $X$ be the set of vertices that the search has discovered. We note that, if $G$ is a graph with unit vertex weights, and $f$ is an integral flow, then every vertex of $V^{\inn}$ has exactly one edge leaving it in $H$, and the other endpoint of the edge lies in $V^{\out}$. Additionally, every vertex of $V^{\out}$ has exactly one incoming edge in $H$. Therefore, at least half the vertices of $X$ must lie in $V^{\out}$. We select a vertex $u\in X\cap V^{\out}$ uniformly at random, and add a shortcut edge $(u,t)$ to $G'$. Since $|X|\geq 2|L|\cdot |S|$, with probability at least $\left (1-\frac 1{2|S|}\right )$, $u\in S^{\out}\cup R^{\out}$, and so the shortcut operation is valid. Let $P$ be the path connecting $s$ to $u$ that the search computed. We augment $f$ by sending $1$ flow unit via the path $P\circ (u,t)$ and terminate the iteration. Lastly, if the DFS search only discovers fewer than $2|L|\cdot |S|$ vertices, then the total number of vertices reachable from $s$ in $H$ is small, and we can compute a maximum $s$-$t$ flow in $H$ directly. Since the number of iterations is bounded by $\opt=|S|$, with a reasonably high probability, all shortcut operation are valid, and so we compute the value of the Global Minimum Vertex-Cut correctly.

There are several issues with adapting this algorithm to the vertex-weighted setting. The first issue is that, if $G$ is a graph with arbitrary vertex weights, then the residual flow network $H$ may no longer have the property that every vertex of $V^{\out}$ has exactly one incoming edge; the number of such edges may be large. As the result, it is possible that the vast majority of the vertices in the set $X$ discovered by the DFS lie in $V^{\inn}$. But we cannot select vertices of $V^{\inn}$ for shortcut operations, since we are not allowed to add edges connecting vertices of $S^{\inn}$ to $t$, and it is possible that $|S|=\Omega(n)$. Therefore, our DFS search needs to ensure that the number of vertices of $V^{\out}$ that it discovers is sufficiently large. To achieve this high number of vertices of $V^{\out}$, it may be forced to explore a significantly larger number of vertices of $V^{\inn}$, leading to a much higher running time of each iteration. In order to obtain a fast overall running time, we then need to ensure that the number of iterations is sufficiently low, and this, in turn, can be ensured by sending a large amount of flow in each iteration. Therefore, when performing a DFS search in $H$, we only consider edges whose capacities are quite high, at least $\gamma\cdot M_i$. Once the DFS search terminates with a set $X$ containing a relatively low number of vertices of $V^{\out}$, we could let $J$ be the subgraph of $H$ induced by the vertices of $X$. This approach could indeed be used to guarantee that $|E(J)|$ is low, but unfortunately it does not ensure the other crucial property that we need: that every vertex $u\in V(H)\setminus V(J)$ has relatively few edges $(u',u)$ with $u'\in V(J)$, and that all such edges have a relatively low capacity. 

In order to overcome these difficulties, we expand the scope of the local-search technique. In every iteration of the local-search algorithm, we explore the graph $H$, starting from the vertex $s$, as follows. Initially, we let $X=\set{s}$. Whenever we encounter an edge $(u,v)$ with $u\in X$ and $v\not \in X$, whose capacity in $H$ is at least $\gamma\cdot M_i$, we add $v$ to $X$. Additionally, whenever we discover a vertex $a\not \in X$, such that, for some integer $i$, there are at least $2^i$ edges in set $\set{(b,a)\in E(H)\mid b\in X}$, whose capacity is at least $M_i\cdot \gamma/2^i$, we add $a$ into $X$. We show that, for every vertex $v$ that is ever added to $X$ via one of the above rules, there is an $s$-$v$ flow in $H[X]$ of value $\gamma\cdot M_i$. An iteration terminates once $|X\cap V^{\out}|$ is sufficiently large, or once neither of the above rules can be applied to expand $X$. In the latter case, we let $J=H[X]$, and we show that $J$ has all required properties. In the former case, we select a vertex $v\in X\cap V^{\out}$ uniformly at random, and connect it to $t$ via a shortcut operation. We then compute an $s$-$v$ flow $f'$ of value $\gamma\cdot M_i$ in graph $H[X]$, which is guaranteed to contain relatively few edges, and augment the current flow $f$ by sending $\gamma\cdot M_i$ flow units from $s$ to $t$ via the flow $f'$, that is extended via the edge $(v,t)$ to reach $t$. We note that, unlike previous work that used the local-flow technique, where the flow augmentation in every iteration was performed via a single path, we augment the flow $f$ via the flow $f'$ that may be more general.


To summarize, the algorithm for a single phase consists of three steps. In the first step, we compute the subgraph $J\subseteq H$, after possibly augmenting the initial flow and adding new shortcut edges. This step exploits the expanded local-search technique. In the second step we compute the partition $(Q,Q')$ of $U_i$, and, if required, an $s$-$t$ cut $(X^*,Y^*)$ in $H$. The former is done by repeated applications of the algorithm for min-cost maximum flow, and the latter is achieved by computing a minimum $s$-$t$ cut in an appropriately constructed graph. Lastly, in the third step, we sparsify the residual flow network $H$, and compute the desired flow $f_{i+1}$. This step relies on the properties of the partition  $(Q,Q')$ and of the cut $(X^*,Y^*)$.

\paragraph{A Subgraph Oracle.}
The algorithm described above needs an access to a subroutine, that we refer to as the \emph{subgraph oracle}. The oracle is given access to an undirected graph $G$ in the adjacency-list representation, a subset $Z$ of its vertices, and a parameter $\delta$. The oracle must return a partition of $V(G)$ into two subsets $Y^{\ell}$ and $Y^h$ such that, for every vertex $v\in Y^h$, the number of neighbors of $v$ that lie in $Z$ is at least roughly $n^{1-\delta}$, while for every vertex $u\in Y^{\ell}$, the number of such neighbors is at most roughly $n^{1-\delta}$. Additionally, the algorithm must return the set $E'=E_G(Y^{\ell},Z)$ of edges, and its running time should be  at most $n^{2-\Theta(\delta)}$. 
Our main technical subroutine, when processing a vertex $x\in T$, performs at most $z\leq \poly\log n$ calls to the oracle, with at most one such call performed per phase.
While it is not hard to design an algorithm that computes the partition $(Y^{\ell},Y^h)$ of $V(G)$ with the desired properties by using the standard sampling technique, computing the set $E'$ of edges within the desired running time appears much more challenging. Instead, we design an algorithm that can simultaneously process up to $n$ such subgraph queries in bulk, in time $O(n^{3-\Theta(\delta)})$. This algorithm casts the problem of computing the sets $E'$ of edges for all these queries simultaneously as an instance of the Triangle Listing problem in an appropriately constructed graph, and then uses the algorithm of \cite{bjorklund2014listing} for this problem.
Our final algorithm for the Global Minimum Vertex-Cut problem proceeds in $z$ iterations. In each iteration $i$, for every vertex $x\in T$, we run the algorithm for processing $x$ described above, while recording all random choices that the algorithm makes, until it performs the $i$th call to the subgraph oracle, and then we terminate it. Once we obtain the $i$th query to the subgraph oracle from each of the algorithms that process vertices of $T$, we ask the subgraph oracle all these queries in bulk. Then iteration $(i+1)$ is performed similarly, except that we follow all random choices made in iteration $i$. Since we now have the response to the $i$th query to the subgraph oracle, we can continue executing the algorithm for processing each vertex $x\in T$ until it performs the $(i+1)$th such query.

\paragraph{Organization.} We start with preliminaries in Section~\ref{sec: prelims}. We provide a high-level outline of our algorithm in Section~\ref{sec: the_algorithm}, including the complete descriptions of Algorithms $\alg_1$ and $\alg_2$, and a description of Algorithm $\alg_3$, with the proof of our main technical result that implements the main subroutine deferred to subsequent sections. In Section~\ref{sec: main subroutine} we provide the algorithm for the main subroutine, with the algorithms describing Steps $1,2$ and $3$ of a single phase provided in sections~\ref{sec: step 1},~\ref{sec: step 2},~\ref{sec: step 3}, respectively.

\section{Preliminaries}
\label{sec: prelims}

All logarithms in this paper are to the base of $2$, unless stated otherwise.
We assume that all graphs are given in the adjacency list representation, unless stated otherwise.

\subsection{Graph-Theoretic Notation, Cuts and Flows}

Given an undirected graph $G$ and a vertex $v\in V(G)$, we denote by $\delta_G(v)$ the set of all edges incident to $v$ in $G$, by $\deg_G(v)=|\delta_G(v)|$ the degree of $v$ in $G$, and by $N_G(v)$ the set of all neighbors of $v$ in $G$: namely, all vertices $u\in V(G)$ with $(u,v)\in E(G)$. If the graph $G$ is directed, then we denote by 
$N^-_G(v)=\set{u\in V(G)\mid (u,v)\in E(G)}$, and we refer to $N^-_G(v)$ as the \emph{set of all in-neighbors of $v$}. Similarly, we denote by $N^+_G(v)=\set{u\in V(G)\mid  (v,u)\in E(G)}$ the set of all \emph{out-neighbors} of $v$. We also denote by $\delta^-_G(v)$ and by $\delta^+_G(v)$ the sets of all incoming and all outgoing edges of $v$ in $G$, respectively. Finally, we denote the degree of $v$ in $G$ by $\deg_G(v)=|\delta^-_G(v)|+|\delta^+_G(v)|$, and we denote by  $N_G(v)=N^+_G(v)\cup N^-_G(v)$; we refer to $N_G(v)$ as the set of all neighbors of $v$ in $G$.
Given a pair $Z,Z'$ of subsets of vertices of $G$ (that are not necessarily disjoint), we denote by $E_G(Z,Z')$ the set of all edges $(x,y)$ with $x\in Z$ and $y\in Z'$; this definition applies to both directed and undirected graphs.
 We may omit the subscript $G$ when the graph is clear from context.

\paragraph{Vertex-Cuts in Undirected Graphs.}
Let $G=(V,E)$ be an undirected graph with weights $w(v)\geq 0$ on its vertices $v\in V$. For any subset $X\subseteq V$ of its vertices, we denote the \emph{weight of $X$} by $w(X)=\sum_{v\in X}w(v)$. A \emph{global vertex-cut}, or just a \emph{vertex-cut}, in $G$ is a partition $(L,S,R)$ of its vertices, such that (i) $L,R\neq \emptyset$; and (ii) no edge of $G$ connects a vertex of $L$ to a vertex of $R$. The \emph{value}, or the \emph{weight} of the cut is $w(S)$. Whenever we consider global vertex-cuts $(L,S,R)$, we will always assume w.l.o.g. that $w(L)\leq w(R)$. In the Global Minimum Vertex-Cut problem, given an undirected graph $G$ with vertex weights as above, the goal is to compute a global vertex-cut of minimum value; we call such a cut a \emph{global minimum vertex-cut}, and we denote its value by $\opt$. We note that some graphs may not have any vertex cuts (for example a clique graph). In such a case, we may say that the value of the global minimum vertex-cut is infinite.

Consider now an undirected vertex-weighted graph $G$ as above and a pair $s,t$ of its vertices. We say that a vertex-cut $(L,S,R)$ \emph{separates} $s$ from $t$, or that $(L,S,R)$ is an \emph{$s$-$t$ vertex-cut}, if $s\in L$ and $t\in R$ holds. We say that it is a \emph{minimum vertex-cut separating $s$ from $t$}, if the value of the cut is minimized among all such cuts. Similarly, given a vertex $s\in V(G)$ and a subset $T\subseteq V(G)\setminus\set{s}$ of vertices, we say that a vertex-cut $(L,S,R)$ separates $s$ from $T$ if $s\in L$ and $T\subseteq R$ holds. As before, we say that $(L,S,R)$ is the minimum $s$-$T$ vertex-cut, or a minimum cut separating $s$ from $T$ if it is a smallest-value cut separating $s$ from $T$. We define cuts separating two subsets $X,Y$ of vertices similarly. Notice that, if there is an edge connecting a vertex of $X$ to a vertex of $Y$ in $G$, then $G$ does not contain a vertex-cut separating $X$ from $Y$; in such a case, we say that the value of the minimum vertex-cut separating $X$ from $Y$ is infinite, and that the cut is undefined.

\paragraph{Edge-cuts in directed graphs.} 
Let $G=(V,E)$ be a directed graph with capacities $c(e)\geq 0$ on its edges $e\in E$. Given a subset $E'\subseteq E$ of edges, we denote by $c(E')=\sum_{e\in E'}c(e)$ the \emph{total capacity of the edges in $E'$}. An \emph{edge-cut} in $G$ is a partition $(X,Y)$ of the vertices of $G$ into two disjoint non-empty subsets, and the \emph{value} of the cut is $c(E_G(X,Y))$. We say that a cut $(X,Y)$ is a \emph{global minimum edge-cut for $G$}, if it is an edge-cut of $G$ of minimum value. Given a pair $s,t$ of vertices of $G$, we say that a cut $(X,Y)$ is an \emph{$s$-$t$ edge-cut}, or that it is an \emph{edge-cut that separates $s$ from $t$}, if $s\in X$ and $t\in Y$ holds. We define edge-cuts separating a vertex $s$ from a subset $T\subseteq V\setminus \set{s}$ of vertices, and edge-cuts separating disjoint subsets $S,T\subseteq V(G)$ of vertices similarly.

\paragraph{Flows in edge-capacitated directed graphs.}
Let $G=(V,E)$ be a directed graph with capacities $c(e)\geq 0$ on its edges $e\in E$, and let $s,t\in V$ be a pair of vertices of $G$. An \emph{$s$-$t$ flow} in $G$ is an assignment of flow values $0\leq f(e)\leq c(e)$ to every edge $e\in E$, such that, for every vertex $v\in V\setminus\set{s,t}$, the following \emph{flow conservation constraint} holds: $\sum_{e\in \delta^+(v)}f(e)=\sum_{e\in \delta^-(v)}f(e)$. We also require that, if $e\in \delta^-(s)\cup \delta^+(t)$, then $f(e)=0$. 
We say that the flow $f$ is \emph{acyclic}, if there is no cycle $C$ in the graph, such that every edge $e\in E(C)$ has $f(e)>0$. For a value $M>0$, we say that the flow $f$ is \emph{$M$-integral}, if, for every edge $e\in E$, $f(e)$ is an integral multiple of $M$ (note that it is possible that $M<1$ under this definition).
The \emph{value} of the flow is $\val(f)=\sum_{e\in \delta^+(s)}f(e)$.
A \emph{flow-path decomposition} of an $s$-$t$ flow $f$ is a collection $\pset$ of $s$-$t$ paths in $G$, together with a flow value $f(P)>0$ for every path $P\in \pset$, such that $\sum_{P\in \pset}f(P)=\val(f)$, and, for every edge $e\in E(G)$, $\sum_{\stackrel{P\in \pset:}{e\in E(P)}}f(P)\leq f(e)$. 

A for a vertex $s\in V$ and a subset $T\subseteq V\setminus\set{s}$ of vertices, we can define the $s$-$T$ flow $f$ similarly: the only difference is that now the flow conservation constraints must hold for all vertices $v\in V\setminus (T\cup \set{s})$, and the flow-paths decomosition of the flow only contains paths connecting $s$ to vertices of $T$.

\subsection{A Modified Adjacency-List Representation of Graphs}
\label{subsec: modified adj list}

Suppose we are given a directed $n$-vertex and $m$-edge graph $G=(V,E)$ with integral capacities $0<c(e)\leq \cmax$ on its edges $e\in E$. In order to make our algorithms efficient, we will sometimes use a slight modification of the standard adjacency-list representation of such graphs, that we refer to as the \emph{modified adjacency-list representation} of $G$.

In the modified adjacencly-list representation of $G$, for every vertex $v\in V(G)$, for all $0\leq i\leq \ceil{\log(\cmax)}+1$, there are two linked lists $\OUT_i(v)$ and $\IN_i(v)$, where $\OUT_i(v)$ contains all edges $e\in \delta^+(v)$ with $2^i\leq c(e)<2^{i+1}$, and $\IN_i(v)$ similarly contains  all edges $e\in \delta^-(v)$ with $2^i\leq c(e)<2^{i+1}$.
 Notice that, given a directed graph $G$ in the adjacency-list representation, we can compute the modified adjacency-list representation of $G$ in time $O(m\log \cmax)$.
Consider now some index $0\leq i\leq \ceil{\log(\cmax)}+1$, and let $G^{\geq i}$ be the graph obtained from $G$ by deleting edges $e$ with $c(e)<2^i$. Notice that, if we are given the modified adjacency-list representation of $G$, then we can use it to simulate the modified adjacency-list representation of $G^{\geq i}$, by simply ignoring the lists $\OUT_{i'}(v)$ and $\IN_{i'}(v)$ for all $v\in V$ and $i'<i$.

\subsection{A Split Graph and Parameters $\wmax'$, $\wmax$}
\label{subsec: split graph}

Given an undirected graph $G=(V,E)$ with integral weights $w(v)\geq 0$ on its vertices $v\in V(G)$, we denote by $W'_{\max}(G)$ the smallest integral power of $2$, such that $\wmax'(G)> \max_{v\in V}\set{w(v)}$ holds. We let $W_{\max}(G)$ be the smallest integral power of $2$ with $\wmax\geq2n\cdot W'_{\max}$. When the graph $G$ is clear from context, 
we denote $\wmax'(G)$ and $\wmax(G)$ by $\wmax'$ and $\wmax$, respectively. Notice that:

\begin{equation}\label{eq: bound on wmax}
\wmax(G)\leq 4n\cdot \wmax'(G)\leq 8n\cdot \max_{v\in V}\set{w(v)}.
\end{equation}

A \emph{split graph} $G'$ corresponding to $G$ is 
is defined as follows. The set of vertices of $G'$ contains two copies of every vertex $v\in V$, an \emph{in-copy} $v^{\inn}$, and an \emph{out-copy} $v^{\out}$, so $V(G')=\set{v^{\inn},v^{\out}\mid v\in V}$. The set of edges of $G'$ is partitioned into two subsets: the set $E^{\spec}$ of \emph{speical} edges, and the set $E^{\reg}$ of \emph{regular edges}. For every vertex $v\in V(G)$, the set $E^{\spec}$ of special edges contains the edge $e_v=(v^{\inn},v^{\out})$, whose capacity $c(e_v)$ is set to be $w(v)$; we refer to $e_v$ as the \emph{special edge representing $v$}. Additionally, for every edge $e=(x,y)\in E(G)$, we add two regular edges representing $e$: the edge $(x^{\out},y^{\inn})$, and the edge $(y^{\out},x^{\inn})$. The capacities of both these edges are set to  $\wmax(G)$.
See Figure \ref{fig: transformation} for the transformation of the neighborhood of a vertex $v\in V(G)$. Given any subset $Z\subseteq V(G)$ of vertices of $G$, we denote by $Z^{\inn}=\set{v^{\inn}\mid v\in Z}$, $Z^{\out}=\set{v^{\out}\mid v\in Z}$, and $Z^*=Z^{\inn}\cup Z^{\out}$.
We also denote by  $V^{\inn}=\set{v^{\inn}\mid v\in V(G)}$ and $V^{\out}=\set{v^{\out}\mid v\in V(G)}$.

\begin{figure}[h]
	\centering
	\subfigure[Before]{\scalebox{0.6}{\includegraphics{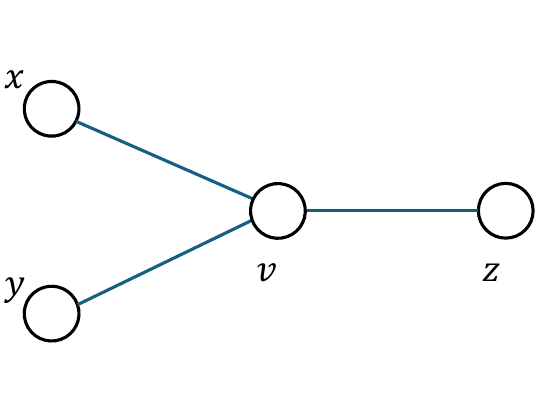}}\label{fig: before}}
	\hspace{1cm}
	\subfigure[After]{
		\scalebox{0.6}{\includegraphics{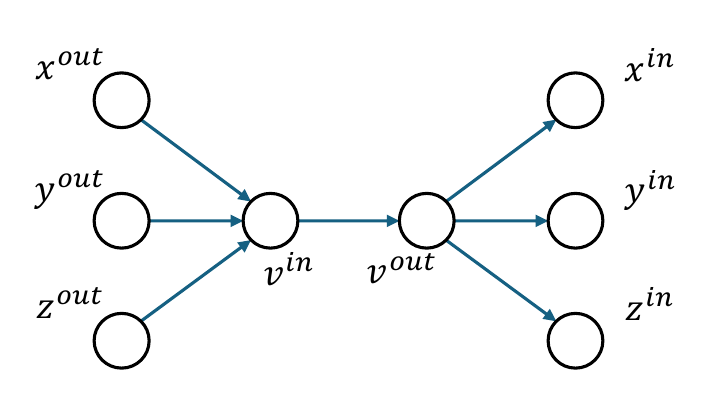}}\label{fig: after}}
	\caption{The local transformation for vertex $v$ for computing the split graph. \label{fig: transformation}}
\end{figure}

Observe that, for every pair $s,t\in V(G)$ of vertices, the value of minimum $s$-$t$ vertex-cut in $G$ is equal to the value of the minimum $s^{\out}$--$t^{\inn}$ edge-cut in $G'$, which, in turn, is equal to the value of the maximum  $s^{\out}$--$t^{\inn}$ flow in $G'$ from the Max-Flow/Min-Cut theorem. In particular, the value of the global minimum vertex-cut in $G$ is equal to maximum value of the $s^{\out}$--$t^{\inn}$ flow in $G'$, over all vertex pairs  $s,t\in G$.
We will use the following simple observation.

\begin{observation}\label{obs: from split to regular}
	Let $G$ be an undirected graph with integral weights $0\leq w(v)\leq W$ on its vertices $v\in V(G)$, and let $s,t\in V(G)$ be a pair of distinct vertices that are not connected by an edge in $G$. 
	Consider the corresponding split graph $G'$, and let $(S,T)$ is a minimum $s^{\out}$-$t^{\inn}$ edge-cut in $G'$. Then there is a deterministic algorithm that, given a cut $(S,T)$ in $G'$ with the above properties, computes, in time $O(|E(G)|)$ a minimum $s$-$t$ vertex-cut $(X,Y,Z)$ in $G$.
\end{observation}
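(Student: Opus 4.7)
The plan is to exploit the fact that every regular edge of $G'$ has capacity $\wmax$, which is strictly larger than the value of any minimum $s^{\out}$-$t^{\inn}$ edge-cut in $G'$. To see this, I exhibit an explicit edge-cut: put $s^{\inn},s^{\out}$ and $\{v^{\inn}:v\neq s,t\}$ on one side and $t^{\inn},t^{\out}$ together with $\{v^{\out}:v\neq s,t\}$ on the other. Since $s$ and $t$ are not adjacent in $G$, no regular edge crosses this cut, and its value is at most $\sum_{v\neq s,t}w(v)\leq n\wmax'<\wmax/2$ by \eqref{eq: bound on wmax}. Hence the minimum $s^{\out}$-$t^{\inn}$ edge-cut value is strictly less than $\wmax$, so no regular edge can cross from $S$ to $T$ in the given minimum cut $(S,T)$.

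Given this, define $Y=\{v\in V(G) : v^{\inn}\in S \text{ and } v^{\out}\in T\}$, the set of vertices whose special edges cross $(S,T)$. Since only special edges contribute, $c(E_{G'}(S,T))=w(Y)$. The algorithm is then: (1) scan $V(G)$ to build $Y$; (2) run a BFS or DFS from $s$ in $G\setminus Y$ to compute the set $A$ of reachable vertices; (3) set $B=V(G)\setminus(A\cup Y)$ and return $(A,Y,B)$. Both steps take $O(|V(G)|+|E(G)|)=O(|E(G)|)$ time.

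To verify correctness, I will show that $(A,Y,B)$ is a valid $s$-$t$ vertex cut. Clearly $s\in A$, and by construction of BFS there is no edge of $G$ between $A$ and $B$. The key claim is that $t\in B$: if there were an $s$-$t$ path $s=u_0,u_1,\ldots,u_k=t$ in $G\setminus Y$, it would lift to the path $s^{\out},u_1^{\inn},u_1^{\out},u_2^{\inn},u_2^{\out},\ldots,u_{k-1}^{\out},t^{\inn}$ in $G'$ that alternates regular edges with special edges. By the preceding paragraph, none of the regular edges along this path lies in $E_{G'}(S,T)$; moreover for each $1\leq i\leq k-1$ we have $u_i\notin Y$, so the special edge $(u_i^{\inn},u_i^{\out})$ is not in $E_{G'}(S,T)$ either. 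But then this path connects $s^{\out}\in S$ to $t^{\inn}\in T$ in $G'$ without using any edge of $E_{G'}(S,T)$, contradicting that $(S,T)$ is a cut. So no such path exists, proving $t\in B$.

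Finally, the value of the returned vertex-cut is $w(Y)=c(E_{G'}(S,T))$, which by assumption is the minimum $s^{\out}$-$t^{\inn}$ edge-cut value in $G'$, and therefore equals the minimum $s$-$t$ vertex-cut value in $G$ by the standard correspondence between vertex-cuts in $G$ and edge-cuts in the split graph. I do not anticipate a serious obstacle here; the only subtlety is justifying that no regular edge crosses $(S,T)$, which is precisely the purpose of the parameter $\wmax$ and of the non-adjacency assumption on $s$ and $t$.
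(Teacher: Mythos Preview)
Your proof is correct. Both you and the paper start from the same key observation---that no regular edge can cross a minimum $s^{\out}$-$t^{\inn}$ edge-cut, by the choice of $\wmax$---and you give a more explicit justification of this than the paper does. From there the two proofs diverge slightly in how they extract the vertex cut: the paper first \emph{normalizes} the cut $(S,T)$ by moving $s^{\inn}$ into $S$ if needed and, for each $v$ with $v^{\inn}\in T$ and $v^{\out}\in S$, moving $v^{\out}$ into $T$; after this, every vertex has a well-defined ``type'' (both copies in $S$, both in $T$, or split), and $(X,Y,Z)$ is read off directly. You instead leave $(S,T)$ untouched, extract $Y$ as the set of split vertices, and run a BFS in $G\setminus Y$ to determine the $s$-side $A$ and the remainder $B$. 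Both approaches are linear-time and equally valid; the paper's normalization avoids the BFS but requires arguing that the moves do not increase the cut value, while your BFS avoids the normalization but requires the path-lifting argument to show $t\in B$.
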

\begin{proof}
	 From the definition of the parameter $\wmax$, we can assume that every edge in $E_G(S,T)$ is a special edge.
	 If $s^{\inn}\in T$, then we move $s^{\inn}$ from $T$ to $S$. This move does not increase the capacity of the cut, since the only edge leaving $s^{\inn}$ in $G'$ is the special edge $(s^{\inn},s^{\out})$. Next, 
	  for every vertex $v\in V(G)$ with $v^{\inn}\in T$ and $v^{\out}\in S$,  we move $v^{\out}$ from $S$ to $T$. Note that this move does not increase the capacity of the cut, since the only edge entering $v^{\out}$ in $G'$ is the special edge $(v^{\inn},v^{\out})$. Once all such vertices are processed, we are guaranteed that, for every vertex $v\in V(G)$, if $v^{\inn}\in T$, then $v^{\out}\in T$ holds. We let $X$ contain all vertices $v\in V(G)$ with $v^{\inn},v^{\out}\in S$ and $Z$ contain all vertices $v\in V(G)$ with $v^{\inn},v^{\out}\in T$. All remaining vertices are added to $Y$. Notice that, for every vertex $v\in Y$, $v^{\inn}\in S$ and $v^{\out}\in T$ holds. Therefore, $w(Y)=\sum_{e\in E_{G'}(S,T)}c(e)$. As observed already, the value of the minimum $s^{\out}$-$t^{\inn}$ edge-cut in $G'$ is equal to the value of the minimum $s$-$t$ vertex-cut in $G$, so the observation follows.
\end{proof}

Lastly, observe that $|V(G')|=2|V(G)|$ and $|E(G')|\leq O(|E(G)|)$; moreover, there is an algorithm that, given $G$, computes $G'$ in time $O(|E(G')|)$.

\subsection{Properties of the Global Minimum Vertex-Cut}

In this subsection we establish some useful properties of a global minimum vertex-cut in undirected graphs.
Consider a vertex-weighted graph $G$, let $(L,S,R)$ be a global minimum vertex-cut in $G$, and let $x\in L$ be any vertex. The following simple claim shows that, in a sense, almost all vertices of $S$ must be neighbors of $x$ in $G$. In other words, the total weight of all vertices of $S$ that are not neighbors of $x$ is bounded by $w(L)$. 

The proof of the claim needs the definition of flows in undirected vertex-capacitated graphs. For convenience, we define such flows via their flow-path decompositions. Specifically, suppose we are given an undirected graph $G$ with weights $w(v)\geq 0$ on its vertices $v\in V$, and two special vertices $s$ and $t$. An \emph{$s$-$t$ flow} in $G$ consists of a collection $\pset$ of paths, each of which connects $s$ to $t$, and, for every path $P\in \pset$, a flow value $f(P)$. We require that, for every vertex $v\in V(G)\setminus\set{s,t}$, $\sum_{\stackrel{P\in \pset:}{v\in P}}f(P)\leq w(P)$ holds. We refer to the paths in $\pset$ as the \emph{flow-paths} of $f$. From the Max-Flow/Min-cut Theorem, the value of the maximum $s$-$t$ flow in a vertex-weighted graph $G$ is equal to the value of the minimum $s$-$t$ vertex-cut.

The following claim can be thought of as a generalization of Proposition $3.2$ in~\cite{LNP21} from unit vertex weights to general weights.

\begin{claim}\label{claim: neighbors in S}
	Let $G$ be a simple undirected graph with integral weights $w(v)> 0$ on its vertices $v\in V(G)$, and let  $(L,S,R)$ be a global minimum vertex-cut in $G$. Let $x\in L$ be any vertex, and denote by $S'=S\setminus N_G(x)$. Then $w(S')\leq w(L)$ must hold. 
\end{claim}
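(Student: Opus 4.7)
The plan is to construct an auxiliary vertex-cut that isolates $x$ from the rest of the graph, and then to invoke the minimality of $(L,S,R)$. Specifically, I would consider the candidate triple $(\{x\}, N_G(x), C)$, where $C := V(G) \setminus (\{x\} \cup N_G(x))$, and compare its value $w(N_G(x))$ to $w(S)$.

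First I would verify that this triple is indeed a valid vertex-cut of $G$. The side $\{x\}$ is nonempty, and by the definition of $N_G(x)$ there are no edges from $x$ to any vertex in $C$. The only nontrivial point is that $C \neq \emptyset$. For this, I would use the fact that $(L,S,R)$ is a vertex-cut with $x \in L$: there is no edge between $x$ and any vertex of $R$, so $R \cap N_G(x) = \emptyset$; combined with $x \notin R$, this gives $R \subseteq C$, and since $R \neq \emptyset$ we get $C \neq \emptyset$. Hence $(\{x\}, N_G(x), C)$ is a valid vertex-cut of value $w(N_G(x))$.

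By minimality of $(L,S,R)$, I can then conclude $w(N_G(x)) \geq w(S)$. Next I would decompose $N_G(x)$ according to its intersection with the three parts of the original cut: using $N_G(x) \cap R = \emptyset$ from the previous step, together with $N_G(x) \cap S = S \setminus S'$ which holds by the definition $S' = S \setminus N_G(x)$, we obtain
\[
w(N_G(x)) \;=\; w(N_G(x) \cap L) + w(S \setminus S').
\]
Combining this identity with the inequality $w(N_G(x)) \geq w(S) = w(S \setminus S') + w(S')$ and cancelling $w(S \setminus S')$ yields $w(N_G(x) \cap L) \geq w(S')$. Since $N_G(x) \cap L \subseteq L$ and all weights are nonnegative, $w(L) \geq w(N_G(x) \cap L) \geq w(S')$, as claimed.

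The argument is entirely structural, using nothing beyond minimality of the cut, the absence of $L$-$R$ edges, and the definition of $S'$. There is no real technical obstacle; the only subtlety is confirming that the auxiliary triple is genuinely a valid vertex-cut, i.e.\ that $C \neq \emptyset$, which I expect to be the step that requires the most care to articulate cleanly.
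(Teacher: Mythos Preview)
Your argument is correct and takes a genuinely different route from the paper. The paper proves the claim via Max-Flow/Min-Cut: it fixes $t\in R$, takes an $x$--$t$ flow of value $w(S)$, and splits the flow-paths according to whether the second vertex lies in $S$ or in $L$; the latter paths carry at most $w(L)$ units, forcing $w(S\cap N_G(x))\ge w(S)-w(L)$.

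Your approach bypasses flows entirely: you build the auxiliary cut $(\{x\},N_G(x),C)$, compare its value to $w(S)$ by global minimality, and decompose $N_G(x)$ over $L$ and $S$. This is more elementary---no duality, no flow-path decomposition---and in fact yields the marginally sharper bound $w(S')\le w(N_G(x)\cap L)\le w(L\setminus\{x\})$. The paper's flow-based proof fits more naturally with the surrounding machinery (flows are used throughout), but for this isolated claim your cut-comparison argument is cleaner and arguably preferable.
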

\begin{proof}	
	Let $t\in R$ be any vertex. From the Max-Flow/Min-Cut theorem, there is a flow $f$ in $G$ from $x$ to $t$, that respects all vertex capacities (that are equal to their weights), whose value is $\opt=w(S)$. Let $\pset$ be the corresponding collection of flow-paths. We partition $\pset$ into two sets: set $\pset'\subseteq \pset$ contains all paths $P$, whose second vertex lies in $S$, and $\pset'' = \pset\setminus\pset'$ contains all remaining flow-paths. Notice that, for every path $P\in \pset''$, the second vertex on $P$ must lie in $L$, and so $\sum_{P\in \pset''}f(P)\leq w(L)$ must hold. Therefore, $\sum_{P\in \pset'}f(P)\geq \opt-w(L)=w(S)-w(L)$. For each flow-path $P\in \pset'$, the second vertex on $P$ lies in $S\cap N_G(x)$, so $w(S\cap N_G(x))\geq w(S)-w(L)$, and $w(S')=w(S)-w(S\cap N_G(x))\leq w(L)$.
\end{proof}

We obtain the following immediate corollary of the claim.

\begin{corollary}\label{cor:  heavy neighbors in S}
	Let $G$ be a simple undirected graph with integral weights $0<w(v)\leq W$ on its vertices $v\in V(G)$, and let  $(L,S,R)$ be a global minimum vertex-cut in $G$. Let $x\in L$ be any vertex, let $\gamma>0$ be a parameter, and denote by $S'_{\gamma}$ the set of all vertices $v\in S\setminus N_G(x)$ with $w(v)\geq\frac{W}{\gamma}$. Then $|S'_{\gamma}|\leq |L|\cdot \gamma$ must hold. 
\end{corollary}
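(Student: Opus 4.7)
The plan is to derive this directly from \Cref{claim: neighbors in S} by a simple weight-counting argument, using the hypothesis that each vertex in $S'_\gamma$ is heavy (weight at least $W/\gamma$) while each vertex in $L$ has weight at most $W$.

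First, I would set $S' = S \setminus N_G(x)$ as in the statement of \Cref{claim: neighbors in S}, and observe that by construction $S'_\gamma \subseteq S'$. Applying the claim directly gives $w(S') \leq w(L)$, so in particular $w(S'_\gamma) \leq w(S') \leq w(L)$.

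Next, I would bound $w(S'_\gamma)$ from below and $w(L)$ from above. Since every $v \in S'_\gamma$ satisfies $w(v) \geq W/\gamma$, we have
\[
w(S'_\gamma) \;=\; \sum_{v \in S'_\gamma} w(v) \;\geq\; |S'_\gamma| \cdot \frac{W}{\gamma}.
\]
On the other hand, since every vertex $v \in L$ satisfies $w(v) \leq W$, we have $w(L) \leq |L| \cdot W$. Combining these with the inequality from the previous paragraph yields
\[
|S'_\gamma| \cdot \frac{W}{\gamma} \;\leq\; w(S'_\gamma) \;\leq\; w(L) \;\leq\; |L| \cdot W,
\]
and dividing through by $W/\gamma$ gives $|S'_\gamma| \leq |L|\cdot \gamma$, as required.

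There is no real obstacle here — the corollary is an immediate consequence of \Cref{claim: neighbors in S} together with the two trivial weight bounds. The only thing worth flagging is that the statement implicitly uses $W \geq \max_v w(v)$ (so that $w(L) \leq |L|\cdot W$), which is exactly the hypothesis $w(v) \leq W$ imposed in the corollary.
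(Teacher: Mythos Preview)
Your proof is correct and essentially identical to the paper's own argument: both apply \Cref{claim: neighbors in S} to get $w(S')\le w(L)$, bound $w(L)\le |L|\cdot W$ from the weight hypothesis, and then divide by the per-vertex lower bound $W/\gamma$ on $S'_\gamma$.
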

\begin{proof}	
	Let $S'=S\setminus N_G(x)$.
	Observe first that $w(L)\leq |L|\cdot W$ must hold, and, from \Cref{claim: neighbors in S}, $w(S')\leq w(L)\leq |L|\cdot W$ holds. Since $S'_{\gamma}\subseteq S'$, and since, for every vertex $v\in S'_{\gamma}$, $w(v)\geq \frac{W}{\gamma}$ holds, we get that:
	
	\[|S'_{\gamma}|\leq \frac{w(S')}{W/\gamma}\leq |L|\cdot \gamma.\]
\end{proof}

\subsection{Fast Algorithms for Flows and Cuts}

Our algorithm relies on the recent breakthrough almost linear time algorithm for Minimum Cost Maximum $s$-$t$ Flow \cite{CKLP22,BCP23}. The algorithm of \cite{BCP23} first computes a fractional solution to the problem, whose cost is close to the optimal one, and then rounds the resulting flow via the Link-Cut Tree data structure~\cite{ST83,KP15} in order to compute the optimal flow. In particular, the flow that they obtain is integral, provided that all edge capacities are integral as well (see Lemma $4.1$ in~\cite{BCP23}). This result is summarized in the following theorem (see Theorem $1.1$ and Lemma 4.1 in \cite{BCP23}).

\begin{theorem}[\cite{BCP23} (see also~\cite{CKLP22})]
	\label{thm: maxflow}
	There is a deterministic algorithm, that receives as input a directed $m$-edge graph $G$ with integral capacities $1\leq u(e)\leq U$ and  integral costs $1\leq c(e)\leq C$ on edges $e\in E(G)$, together with two vertices $s,t\in V(G)$. The algorithm computes a minimum-cost maximum $s$-$t$ flow $f$ in $G$, such that $f$ is integral. The running time of the algorithm is $O\left (m^{1+o(1)}\cdot \log U\cdot  \log C\right )$.
\end{theorem}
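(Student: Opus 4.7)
The plan is to follow the interior point method (IPM) approach of Chen, Kyng, Liu, Peng, Probst Gutenberg, and Sachdeva \cite{CKLP22}, combined with the integral rounding framework of van den Brand, Chen, and Peng \cite{BCP23}. I would begin by formulating the minimum-cost maximum $s$-$t$ flow problem as a linear program with an extra variable forcing flow value equal to some target $F$; by binary searching over $F$ (or by using standard one-shot techniques), solving this LP up to high enough accuracy reduces the flow problem to solving a convex optimization problem whose optimum can be found to sufficient precision in $m^{1+o(1)}$ time. The central tool is a robust IPM that, in each of $m^{1+o(1)}$ steps, updates the current fractional flow $f$ along an approximate min-ratio cycle in an undirected auxiliary graph whose edge lengths and gradients are determined by the current slack/residual capacities. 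Because $\log U$ and $\log C$ appear only through the bit complexity of the slacks, the accuracy $\epsilon \sim 1/\mathrm{poly}(mUC)$ is enough, and the total number of IPM steps stays $m^{1+o(1)}$ with $\log U \cdot \log C$ overhead.

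The main technical obstacle, and the reason the result is highly nontrivial, is implementing each IPM step in amortized $n^{o(1)}$ time. A direct solution of each min-ratio cycle instance would cost $\Omega(m)$ per step and give only $m^{2+o(1)}$ overall. To beat this, I would use the recursive low-stretch $\ell_1$-embedding data structure from \cite{CKLP22}: we recursively build a hierarchy of core-forests / low-stretch spanning trees on contractions of $G$, maintained under coordinate-wise multiplicative updates to edge lengths and gradients. For each IPM step, we query this data structure to obtain a tree cycle whose ratio of gradient to length is within a constant factor of optimal, push flow along it, and propagate the resulting length/gradient updates to the hierarchy. The analysis that this embedding is stable under the IPM's controlled updates, and that batched updates can be serviced in amortized $n^{o(1)}$ time, is the core technical ingredient and would be the hardest part to reproduce.

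Once the IPM terminates, we hold a fractional min-cost max-flow $\tilde f$ of near-optimal cost. To obtain an integral flow, I would invoke Lemma 4.1 of \cite{BCP23}: by first perturbing costs by tiny random amounts so that the min-cost max-flow is unique, the fractional solution $\tilde f$ is close (in $\ell_1$) to the integral optimum $f^*$. One then uses a Link-Cut Tree based augmenting-cycle procedure that repeatedly cancels cycles in $\tilde f - f^*$'s support, each cancellation pushing along a cycle in a spanning tree and hence costing $O(\log n)$. The integrality of $u(e)$ and $c(e)$ ensures that only $\tilde O(m)$ such cancellations are needed to fully round $\tilde f$ to an integral optimum, adding an $\tilde O(m)$ overhead.

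Summing up, the total running time is dominated by the IPM and is $O(m^{1+o(1)} \cdot \log U \cdot \log C)$, matching the claim. The principal difficulty is the stability analysis of the low-stretch tree hierarchy under IPM updates; once that black-box is in hand, the reduction from min-cost flow to a sequence of min-ratio cycle queries and the Link-Cut Tree rounding step are essentially standard, although the bookkeeping to turn a near-optimal fractional flow into an exact integral one requires the $\mathrm{poly}(mUC)$-accuracy IPM solution and careful perturbation.
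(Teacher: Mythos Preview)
This theorem is not proved in the paper at all; it is quoted as a black box from \cite{CKLP22,BCP23}. The paper's entire ``proof'' is the prefatory sentence pointing to Theorem~1.1 and Lemma~4.1 of \cite{BCP23}, plus the remark that the rounding via Link-Cut Trees yields an integral flow when capacities are integral. So there is nothing in the paper for your proposal to be compared against.

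That said, your sketch is a reasonable high-level summary of what actually happens inside \cite{CKLP22,BCP23}: the robust IPM that reduces each step to an approximate min-ratio cycle query, the dynamic low-stretch tree / $\ell_1$-oblivious routing data structure that answers those queries in amortized $n^{o(1)}$ time, and the Link-Cut-Tree cycle-cancelling to round the near-optimal fractional solution to an integral one. A couple of minor inaccuracies: the rounding in \cite{BCP23} does not require random cost perturbation to make the optimum unique (the paper you are summarizing is deterministic), and the $\log U\cdot\log C$ factors in the stated bound come from the number of IPM iterations / bit-complexity bookkeeping rather than being separate multiplicative overheads you tack on at the end. But for the purposes of this paper, none of this matters: the correct ``proof'' here is simply to cite the result.
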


It is well known that,  given a maximum $s$-$t$ flow in a flow network $G$, one can compute a minimum $s$-$t$ cut in $G$ in time $O(m)$. In order to do so, we compute a residual flow network $H$ of $G$ with respect to $f$, and then compute the set $S$ of all vertices of $H$ that are reachable from $s$ in $H$. Let $T=V(G)\setminus S$.
Then, by the Max-Flow / Min-Cut theorem~\cite{FF56}, $(S,T)$ is a minimum $s$-$t$ cut in $G$. By combining this standard algorithm with the algorithm from \Cref{thm: maxflow}, we obtain the following immediate corollary.

\begin{corollary}\label{cor: mincut}
	There is a deterministic algorithm, that receives as input a directed $m$-edge graph $G$ with integral capacities $1\leq u(e)\leq U$ on edges $e\in E(G)$, together with two distinct vertices $s,t\in V(G)$. The algorithm computes a minimum $s$-$t$ edge-cut $(S,T)$ in $G$. The running time of the algorithm is $O\left (m^{1+o(1)}\cdot \log U\right )$.
\end{corollary}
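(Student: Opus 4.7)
The plan is to reduce the minimum cut computation to a maximum flow computation followed by a reachability query in the residual network, which is the standard approach justified by the Max-Flow/Min-Cut theorem of Ford and Fulkerson. First I would apply \Cref{thm: maxflow} to the input graph $G$, where I assign a unit cost $c(e) = 1$ to every edge $e \in E(G)$, together with the given source $s$ and sink $t$. Since all costs equal $1$, we have $C = 1$, and the theorem returns an integral maximum $s$-$t$ flow $f$ in $G$ in time $O(m^{1+o(1)} \cdot \log U \cdot \log C) = O(m^{1+o(1)} \cdot \log U)$.

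Next, I would construct the residual flow network $H$ of $G$ with respect to $f$ in time $O(m)$: for each edge $e = (u,v) \in E(G)$, the forward residual edge $(u,v)$ is included in $H$ if $f(e) < u(e)$, and the backward residual edge $(v,u)$ is included in $H$ if $f(e) > 0$. Then I would perform a BFS or DFS from $s$ in $H$ in time $O(m)$, and let $S \subseteq V(G)$ denote the set of vertices reachable from $s$ in $H$, and set $T = V(G) \setminus S$. Since $f$ is a maximum $s$-$t$ flow, $t$ cannot be reachable from $s$ in $H$ (otherwise $f$ could be augmented along the residual path), so $s \in S$ and $t \in T$; hence $(S,T)$ is a valid $s$-$t$ edge-cut.

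To see that $(S,T)$ is a minimum $s$-$t$ edge-cut, observe that, by construction, every edge $(u,v) \in E_G(S,T)$ must be saturated by $f$ (i.e., $f(u,v) = u(u,v)$), since otherwise the forward residual edge $(u,v)$ would appear in $H$, contradicting $v \notin S$; similarly, every edge $(v,u) \in E_G(T,S)$ must carry zero flow. Therefore the net flow across the cut $(S,T)$ equals $\sum_{e \in E_G(S,T)} u(e) = u(E_G(S,T))$, which by flow conservation equals $\val(f)$. By the Max-Flow/Min-Cut theorem, $\val(f)$ equals the value of the minimum $s$-$t$ edge-cut, so $(S,T)$ is indeed a minimum $s$-$t$ edge-cut.

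The total running time is dominated by the call to \Cref{thm: maxflow}, which is $O(m^{1+o(1)} \cdot \log U)$; the construction of $H$ and the reachability search contribute only $O(m)$ additional time. There is no real obstacle here since this is a standard reduction; the only subtle point is ensuring that the cost parameter in \Cref{thm: maxflow} is handled appropriately, which is resolved by taking unit costs so that the $\log C$ factor disappears.
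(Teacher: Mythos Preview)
Your proposal is correct and takes essentially the same approach as the paper: compute a maximum $s$-$t$ flow via \Cref{thm: maxflow}, build the residual network, and let $S$ be the set of vertices reachable from $s$ in it, invoking the Max-Flow/Min-Cut theorem for correctness. You simply spell out more of the standard details (the saturation argument and the handling of the cost parameter) that the paper leaves implicit.
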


Assume now that we are given an undirected graph $G=(V,E)$ with integral weights $w(v)$ on its vertices $v\in V$, and two disjoint vertex subsets $S,T\subseteq V$. We can use the algorithm from
\Cref{cor: mincut} in order to compute a minimum $S$-$T$ vertex-cut in $G$, as follows. 
Observe first that, if there is an edge connecting a vertex of $S$ to a vertex of $T$ in $G$, or if $S\cap T\neq \emptyset$, then the minimum $S$-$T$ vertex-cut is undefined; we assume that this is not the case.
Let $\hat G$ be the graph obtained from $G$ by setting the weights of the vertices in $S\cup T$ to $\infty$, and then adding vertices $s$ and $t$ of unit weight, where $s$ connects with an edge to every vertex in $S$, and $t$ connects with an edge to every vertex in $T$. It is immediate to verify that, if $(X,Y,Z)$ is a minimum $s$-$t$ vertex-cut in $\hat G$, then $(X\setminus\set{s},Y,Z\setminus\set{t})$ is a minimum $S$-$T$ vertex-cut in $G$ and vice versa.
We then compute a split graph $\hat G'$ corresponding to $\hat G$, and
use the algorithm from \Cref{cor: mincut} to compute the minimum $s^{\out}$-$t^{\inn}$ edge-cut $(S,T)$ in $\hat G'$. Lastly, we use the algorithm from \Cref{obs: from split to regular} to compute a minimum $s$-$t$ vertex-cut $(X,Y,Z)$ in $\hat G$, and then output 
$(X\setminus\set{s},Y,Z\setminus\set{t})$ as a minimum $S$-$T$ vertex-cut in $G$. We summarize this algorithm in the following corollary.

\begin{corollary}\label{cor: min_vertex_cut}
	There is a deterministic algorithm, that receives as input an undirected $m$-edge graph $G$ with integral weights $1\leq w(v)\leq W$ on vertices $v\in V(G)$, together with two disjoint subsets $S,T\subseteq V(G)$ of vertices. The algorithm computes a minimum $S$-$T$ vertex-cut $(X,Y,Z)$ in $G$. The running time of the algorithm is $O\left (m^{1+o(1)}\cdot \log W\right )$.
\end{corollary}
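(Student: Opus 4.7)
The plan is to reduce the $S$-$T$ vertex-cut problem to a single $s$-$t$ edge-cut computation on an appropriately constructed split graph, and then invoke \Cref{cor: mincut} together with \Cref{obs: from split to regular}. I would closely follow the sketch already laid out in the paragraph immediately preceding the corollary.

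First I would handle the trivial infeasibility cases. In time $O(m)$ we can check whether $S\cap T\neq\emptyset$ or whether there is any edge $(u,v)\in E(G)$ with $u\in S$ and $v\in T$: in either case there is no $S$-$T$ vertex-cut, and we return the undefined answer. Otherwise, I would construct an auxiliary graph $\hat G$ by adding two new vertices $s,t$ of weight $1$, attaching $s$ by an edge to every vertex of $S$ and $t$ by an edge to every vertex of $T$, and setting the weight of every vertex in $S\cup T$ to a value $W^\infty$ large enough that no such vertex ever appears in a minimum $s$-$t$ vertex-cut (e.g., $W^\infty := n\cdot W+1$, so that the total weight of $V(G)$ is still $O(nW^\infty)=O(n^2W)$ and all weights remain integral and polynomially bounded). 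It is immediate to verify that $(X,Y,Z)$ is a minimum $s$-$t$ vertex-cut in $\hat G$ if and only if $(X\setminus\{s\},Y,Z\setminus\{t\})$ is a minimum $S$-$T$ vertex-cut in $G$, since any finite-weight cut in $\hat G$ must avoid the inflated vertices.

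Next, I would build the split graph $\hat G'$ corresponding to $\hat G$ as in \Cref{subsec: split graph}, in time $O(|E(\hat G')|)=O(m)$. By the discussion in \Cref{subsec: split graph}, the value of the minimum $s^{\out}$-$t^{\inn}$ edge-cut in $\hat G'$ equals the value of the minimum $s$-$t$ vertex-cut in $\hat G$. I would then invoke \Cref{cor: mincut} on $\hat G'$ with source $s^{\out}$ and sink $t^{\inn}$ to obtain a minimum $s^{\out}$-$t^{\inn}$ edge-cut $(\tilde S,\tilde T)$. The edge capacities in $\hat G'$ are integers bounded by $\wmax(\hat G)=O(n\cdot W^\infty)=\poly(n)\cdot W$, so $\log U=O(\log n+\log W)$, and the running time of this call is $O(m^{1+o(1)}\cdot\log W)$ since the $\log n$ factor can be absorbed into the $m^{o(1)}$ term.

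Finally, I would feed $(\tilde S,\tilde T)$ into the algorithm from \Cref{obs: from split to regular} to recover a minimum $s$-$t$ vertex-cut $(X,Y,Z)$ in $\hat G$ in time $O(|E(\hat G)|)=O(m)$, and then return $(X\setminus\{s\},Y,Z\setminus\{t\})$. Correctness follows from the equivalence between cuts in $\hat G$ and $S$-$T$ cuts in $G$ noted above, combined with the correctness guarantees of \Cref{cor: mincut} and \Cref{obs: from split to regular}. The total running time is dominated by the call to \Cref{cor: mincut}, giving the claimed $O(m^{1+o(1)}\cdot\log W)$ bound. There is no real obstacle here beyond bookkeeping; the only mildly subtle point is choosing $W^\infty$ to be a finite integer large enough to forbid cutting through $S\cup T$ yet small enough (polynomial in $n$ and $W$) that the $\log U$ factor arising from \Cref{cor: mincut} on $\hat G'$ does not exceed $O(\log W)$ after absorbing $\log n$ into $m^{o(1)}$.
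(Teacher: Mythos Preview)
Your proposal is correct and follows essentially the same approach as the paper: construct an auxiliary graph $\hat G$ with source and sink attached to $S$ and $T$ respectively, inflate the weights on $S\cup T$, pass to the split graph, apply \Cref{cor: mincut}, and recover the vertex-cut via \Cref{obs: from split to regular}. Your explicit choice of $W^\infty=n\cdot W+1$ and the observation that the resulting $\log n$ factor is absorbed into $m^{o(1)}$ are a slight refinement over the paper's informal use of ``$\infty$,'' but the argument is otherwise identical.
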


\subsection{Computing Minimum Cuts via Isolating Cuts}

The Isolating Cuts lemma was introduced independently in~\cite{LP20,AKT21_stoc}, and found many applications since (see, e.g.~\cite{LP21,CQ21,MN21,LNP21,AKT21_focs,LPS21, Zha22,AKT22_soda,CLP22,LNPS23,FPZ23,HHS24,HLRW24,ACD24}).
We use the following theorem, that generalizes the version of the Isolating Cuts lemma from~\cite{LNP21} for unweighted graphs to graphs with arbitrary vertex weights. The proof closely follows the arguments from \cite{LNP21}, and is provided in Section~\ref{sec: appendix_isolating} of Appendix for completeness.

%

\begin{restatable}[Isolating Cuts]{theorem}{isolating}
\label{thm: min cuts via isolating}
There is a deterministic algorithm, that, given as input an undirected $m$-edge graph $G$ with integral weights $0\leq w(v)\leq W$ on its vertices $v\in V(G)$, together with  a subset $T\subseteq V(G)$ of vertices of $G$, computes, for every vertex $v\in T$, the value of the minimum vertex-cut separating $v$ from $T\setminus\set{v}$. The running time of the algorithm is $O\left(m^{1+o(1)}\cdot\log W\right )$.
\end{restatable}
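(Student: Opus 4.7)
The plan is to reduce to a directed edge-cut problem via the split graph construction from \Cref{subsec: split graph}, and then apply an isolating-cuts argument with binary labels, following the high-level scheme of Li and Panigrahi. Concretely, I would construct the split graph $G'$ corresponding to $G$; by \Cref{obs: from split to regular} and the Max-Flow/Min-Cut theorem, for each $v\in T$, the value of the minimum vertex-cut separating $v$ from $T\setminus\set{v}$ in $G$ equals the value of the minimum edge-cut in $G'$ separating $v^{\out}$ from $\set{u^{\inn}\mid u\in T\setminus\set{v}}$. So the task reduces to computing, for each $v\in T$, this min-cut value in $G'$.

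Next, assign each terminal $v\in T$ a distinct binary label $(b^v_1,\ldots,b^v_L)$ of length $L=\ceil{\log |T|}+1$. For each bit index $i\in[L]$ and each $b\in\set{0,1}$, let $A^b_i\subseteq T$ be the set of terminals whose $i$-th label bit equals $b$. For each pair $(i,b)$, I would construct an auxiliary directed graph by adjoining to $G'$ a super-source $\sigma$ with infinite-capacity edges to each $v^{\out}$ for $v\in A^b_i$, and a super-sink $\tau$ with infinite-capacity incoming edges from each $v^{\inn}$ for $v\in A^{1-b}_i$, and then invoke \Cref{cor: mincut} to compute a minimum $\sigma$-$\tau$ edge-cut $(X^b_i,Y^b_i)$. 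This performs $2L=O(\log|T|)$ min-cut computations on a graph with $O(m)$ edges of capacity $O(\wmax)$, each costing $O(m^{1+o(1)}\log W)$.

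For each $v\in T$ and each index $i$, let $R^v_i$ be the connected component of $v^{\out}$ in the subgraph of $G'$ induced by $X^{b^v_i}_i\setminus\set{\sigma}$, and let $\eta^v_i$ be the total capacity of the edges of $G'$ leaving $R^v_i$. I would output $\min_i\eta^v_i$ as the desired value for $v$. The upper bound is immediate: for every $u\in T\setminus\set{v}$ the labels differ in some bit $i$, so $u\in A^{1-b^v_i}_i$ lies in $Y^{b^v_i}_i$, hence $u^{\inn}\notin R^v_i$, making the edges leaving $R^v_i$ a valid edge-cut in $G'$ separating $v^{\out}$ from $\set{u^{\inn}\mid u\in T\setminus\set{v}}$. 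The matching lower bound follows a standard submodularity/uncrossing argument: one exhibits a family $\set{S^\ast_v}_{v\in T}$ of canonical minimum isolating cuts whose source sides are pairwise disjoint, and then shows that for each $v$ there exists a bit $i$ at which the $(i,b^v_i)$ minimum cut, restricted to the component of $v^{\out}$, captures $\partial S^\ast_v$ exactly.

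The main obstacle I expect is formalizing the submodularity/uncrossing argument in the split-graph setting, since the published argument of \cite{LNP21} is phrased for undirected edge cuts, whereas $G'$ is directed and has an in/out asymmetry between the source and sink sides of each auxiliary instance. The key observation to invoke is that every useful cut in $G'$ (i.e., any cut of value less than $\wmax$) consists entirely of special edges $(v^{\inn},v^{\out})$ of capacity at most $W$, since the regular edges have enormous capacity $\wmax>4nW$; each such cut corresponds to a vertex-cut in $G$ of equal value. Thus the submodular structure of vertex-cut functions in $G$ transfers to the relevant cut family in $G'$, enabling the uncrossing argument to go through. Multiplying the number of min-cut calls by the cost per call, and adding $O(m\log|T|)$ bookkeeping for component and boundary computations, gives the claimed $O(m^{1+o(1)}\log W)$ running time.
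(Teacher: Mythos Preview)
Your proposal has a genuine gap in the isolating-cuts scheme: outputting $\min_i \eta^v_i$ is not correct.

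First, your upper-bound argument fails. You write ``for every $u\in T\setminus\set{v}$ the labels differ in some bit $i$,'' but this bit depends on $u$. For any \emph{fixed} $i$, the set $R^v_i$ may well contain $u^{\inn}$ (and $u^{\out}$) for terminals $u\neq v$ that share the $i$-th bit with $v$; hence the edges leaving $R^v_i$ need not separate $v^{\out}$ from all of $\set{u^{\inn}\mid u\in T\setminus\set v}$, so $\eta^v_i$ is not even an upper bound on the isolating-cut value for $v$. (A four-terminal path already exhibits this: with $T=\set{v_1,v_2,v_3,v_4}$ in order along a path and labels $00,01,10,11$, the bit-$1$ cut separates $\set{v_1,v_2}$ from $\set{v_3,v_4}$ but leaves $v_2$ on $v_1$'s side, so $\eta^{v_1}_1$ can be strictly smaller than the true answer.)

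Second, you are missing the essential \emph{second round} of min-cut computations. The paper's proof (following \cite{LP20,LNP21}) removes \emph{all} separator sets $Y_1,\dots,Y_r$ simultaneously, takes for each $v$ the connected component $U_v$ of $v$ in $G\setminus\bigcup_i Y_i$ (which now contains only $v$ among terminals), and then runs one \emph{additional} min-cut in a small graph $G_v$ built around $U_v$. Submodularity (\Cref{claim: single vertex separator}) shows that the source side $L_v$ of the true minimum isolating cut satisfies $L_v\subseteq\Lambda_i$ for \emph{every} $i$, hence $L_v\subseteq\bigcap_i\Lambda_i$ and thus $L_v\subseteq U_v$; it does not show that any single bit's component already realizes the optimum, so your claimed lower bound is unsupported. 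Because the $U_v$ are pairwise disjoint, $\sum_v|E(G_v)|=O(m)$, which is what keeps the final round within $O(m^{1+o(1)}\log W)$. Your scheme omits this final round, and also gives no running-time argument for computing all the $R^v_i$ and $\eta^v_i$, since per-bit components need not be disjoint across $v$. (Incidentally, the paper carries out the whole argument directly in $G$ with vertex cuts rather than passing to the split graph, which makes the submodularity step cleaner.)
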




\subsection{The Chernoff Bound and a Useful Inequality}

We use the following standard version of the Chernoff Bound (see. e.g., \cite{dubhashi2009concentration}).

\begin{lemma}[Chernoff Bound]
	\label{lem: Chernoff}
	Let $X_1,\ldots,X_n$ be independent randon variables taking values in $\set{0,1}$. Let $X=\sum_{1\le i\le n}X_i$, and let $\mu=\expect{X}$. Then for any $t>2e\mu$,
	\[\Pr\Big[X>t\Big]\le 2^{-t}.\]
	Additionally, for any $0\le \delta \le 1$,
	\[\Pr\Big[X<(1-\delta)\cdot\mu\Big]\le e^{-\frac{\delta^2\cdot\mu}{2}}.\]
\end{lemma}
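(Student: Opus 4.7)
The plan is to prove both tail bounds by the standard moment-generating-function (Chernoff) technique, applied separately to the upper and lower tails.

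First I would set up the generic MGF bound. For any $s\in\mathbb{R}$, Markov's inequality applied to the non-negative random variable $e^{sX}$ gives $\Pr[X\geq a]\leq e^{-sa}\cdot\mathbb{E}[e^{sX}]$ (for $s>0$), and symmetrically $\Pr[X\leq a]\leq e^{-sa}\cdot\mathbb{E}[e^{sX}]$ (for $s<0$). Since the $X_i$ are independent and each $X_i\in\{0,1\}$ with $\Pr[X_i=1]=p_i$, we have $\mathbb{E}[e^{sX_i}] = 1+p_i(e^s-1)\leq \exp(p_i(e^s-1))$, using $1+y\leq e^y$. Multiplying over $i$ and noting $\sum_i p_i=\mu$ yields the master bound
\[
\mathbb{E}[e^{sX}]\leq \exp\bigl(\mu(e^s-1)\bigr).
\]

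For the upper tail, I would optimize over $s>0$. For any $t>0$, combining the two displays above gives $\Pr[X>t]\leq \exp(\mu(e^s-1)-st)$. Choosing $s=\ln(t/\mu)>0$ (which is legitimate since $t>2e\mu>\mu$) makes the exponent equal to $t-\mu-t\ln(t/\mu)$, so
\[
\Pr[X>t]\leq \left(\frac{e\mu}{t}\right)^{t} e^{-\mu}\leq \left(\frac{e\mu}{t}\right)^{t}.
\]
Now the hypothesis $t>2e\mu$ gives $e\mu/t<1/2$, hence $(e\mu/t)^t<2^{-t}$, establishing the first inequality.

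For the lower tail, I would apply the MGF bound with $s<0$; set $s=\ln(1-\delta)\leq 0$ (using $0\leq\delta\leq 1$). Writing $a=(1-\delta)\mu$ in $\Pr[X\leq a]\leq e^{-sa}\mathbb{E}[e^{sX}]$ and substituting the master bound yields
\[
\Pr[X<(1-\delta)\mu]\leq \left(\frac{e^{-\delta}}{(1-\delta)^{1-\delta}}\right)^{\mu}.
\]
The last step is the analytic inequality $e^{-\delta}/(1-\delta)^{1-\delta}\leq e^{-\delta^2/2}$ for all $\delta\in[0,1]$, which I would prove by defining $g(\delta)=(1-\delta)\ln(1-\delta)+\delta-\delta^2/2$ and checking $g(0)=0$ together with $g'(\delta)=-\ln(1-\delta)-\delta\geq 0$ on $[0,1]$ (which itself follows from the Taylor expansion $-\ln(1-\delta)=\sum_{k\geq 1}\delta^k/k\geq\delta$). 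This delivers $\Pr[X<(1-\delta)\mu]\leq e^{-\delta^2\mu/2}$, completing the lower tail.

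There is no real obstacle here: both parts are textbook computations once one commits to the exponential-moment approach. The only marginally delicate point is the elementary inequality in the last paragraph, but that reduces to a one-variable monotonicity check and the series bound for $-\ln(1-\delta)$.
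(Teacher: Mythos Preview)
Your proof is correct and follows the standard moment-generating-function argument. The paper does not actually prove this lemma at all; it simply states it as a well-known bound and cites a reference (\cite{dubhashi2009concentration}), so there is no ``paper's own proof'' to compare against---your derivation is exactly the textbook computation one would find in such a reference.
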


We also use an inequality summarized in the following simple claim.

\begin{claim}\label{claim: simple math}
	For all $r\geq 2$, $\left(1-\frac 1 r \right )^{r-1}\geq \frac{1}{e^2}$ holds. 
\end{claim}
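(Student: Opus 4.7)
The plan is to bound $1 - 1/r$ from below by a simple exponential, so that the $(r-1)$-th power collapses to a clean expression. Specifically, I will use the elementary inequality
\[
1 - x \;\geq\; e^{-2x} \qquad \text{for all } x \in [0, 1/2].
\]
This inequality is easy to verify: both sides agree at $x=0$, both sides equal something in the right order at $x = 1/2$ (since $1/2 > e^{-1}$), and a single derivative computation shows the difference $(1-x) - e^{-2x}$ is non-negative on the interval (its derivative $-1 + 2e^{-2x}$ is positive near $0$ and negative near $1/2$, so the difference rises then falls, remaining non-negative throughout because its endpoint values are $0$ and $1/2 - e^{-1} > 0$).

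Now for $r \geq 2$ we have $1/r \in (0, 1/2]$, so applying the displayed inequality with $x = 1/r$ gives $1 - 1/r \geq e^{-2/r}$. Raising to the $(r-1)$-th power,
\[
\left(1 - \frac{1}{r}\right)^{r-1} \;\geq\; e^{-2(r-1)/r} \;=\; e^{-2 + 2/r} \;\geq\; e^{-2},
\]
which is exactly the claimed bound. There is no real obstacle here: the only thing to check is the one-variable inequality $1-x \geq e^{-2x}$ on $[0,1/2]$, which is handled by the derivative argument above (or by plotting, or by noting $-\ln(1-x) \leq 2x$ on this range, which follows from the Taylor bound $-\ln(1-x) = x + x^2/2 + x^3/3 + \cdots \leq x/(1-x) \leq 2x$ when $x \leq 1/2$). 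Thus the claim follows in two lines once this auxiliary inequality is in place.
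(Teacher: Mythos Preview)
Your proof is correct and follows essentially the same approach as the paper: both reduce to the inequality $\ln(1-x)\geq -2x$ on $[0,1/2]$ (the paper states it as $\ln(1-\eps)>-\eps-\eps^2\geq -2\eps$ via Taylor expansion, while you phrase it as $1-x\geq e^{-2x}$ and justify it by $-\ln(1-x)\leq x/(1-x)\leq 2x$), and then multiply through by $r-1$.
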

\begin{proof}
	From Taylor's expansion of function $\ln(1+x)$, it is easy to see that, for all $0\leq \eps\leq 1/2$, $\ln(1-\eps)>-\eps-\eps^2$ holds. Therefore: 
	
	\[\ln\left(1-\frac 1 r\right )\geq -\frac{1}{r}-\frac{1}{r^2}\geq -\frac{2}{r}.\] 
	
	We then get that:
	
	\[\ln\left (\left(1-\frac 1 r \right )^{r-1}\right )\geq -\frac{2(r-1)}{r}\geq -2,\]

	and:
	
	\[\left(1-\frac 1 r \right )^{r-1}\ge \frac{1}{e^2}.\]
\end{proof}

\subsection{The Degree Estimation Problem}

We will sometimes use a natural randomized algorithm in order to efficiently estimate the degrees of some vertices in a subgraph of a given graph $G$. Specifically, suppose we are given a graph $G$, two subsets $Z,Z'$ of its vertices, and a threshold $\tau$. Our goal is to compute a subset $A\subseteq Z'$ of vertices, such that, on the one hand, every vertex $v\in A$ has at least $\tau$ neighbors that lie in $Z$, while, on the other hand, every vertex $v\in Z'$ with $|N_G(v)\cap Z|\gg \tau$ is in $A$. We now define the Undirected Degree Estimation problem more formally and provide a simple and natural randomized algorithm for it.

\begin{definition}[Undirected Degree Estimation problem.]\label{def: deg est}
	In the \emph{Undirected Degree Estimation problem}, the input is a simple undirected $n$-vertex graph $G=(V,E)$ with integral weights $0\leq w(v)\leq W$ on its vertices $v\in V$, given in the  adjacency-list representation, two (not necessarily disjoint) subsets $Z,Z'\subseteq V(G)$ of vertices of $G$, a threshold $1\leq \tau \leq n$. The output to the problem is a subset $A\subseteq Z'$ of vertices.
	We say that an algorithm for the problem \emph{errs}, if either (i) for some vertex $v\in A$, $|N_G(v)\cap Z|< \tau$; or (ii) for some vertex $v'\in Z'\setminus A$, $|N_G(v)\cap Z|\geq 1000\tau\log n\cdot\log(\wmax(G))$.
\end{definition}

The following claim provides a simple algorithm for the Degree Estimation problem, using standard techniques. The proof is provided in Appendix~\ref{sec: appendix_degree}.
%

\begin{restatable}[Algorithm for the Undirected Degree Estimation Problem]{claim}{degree}
\label{claim: degree est}
There is a randomized algorithm for the Undirected Degree Estimation problem, whose running time is $O\left(\frac{n\cdot |Z|\cdot \log n\cdot\log(\wmax)}{\tau}\right )$. The probability that the algorithm errs is at most $\frac{1}{n^{10}\cdot \log^6(\wmax)}$, where $\wmax=\wmax(G)$.
\end{restatable}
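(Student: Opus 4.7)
The plan is to solve the Undirected Degree Estimation problem by a standard subsampling procedure. We set $p = c \cdot \log n \cdot \log\wmax/\tau$ for a sufficiently large constant $c$, and (after capping at $1$, which handles the case of small $\tau$ by a trivial deterministic scan) we build a random set $Z_s \subseteq Z$ by including each vertex independently with probability $p$. We then iterate over $u \in Z_s$, scanning its adjacency list in the given representation of $G$; for every $v \in Z'$ that is found as a neighbor of some $u \in Z_s$, we maintain a counter $c_v = |N_G(v) \cap Z_s|$. Finally, we output $A = \{ v \in Z' : c_v \geq \tau^\star \}$, where $\tau^\star$ is a threshold of order $p\tau \cdot c'$ lying strictly between $p\tau$ and $p \cdot 1000\tau\log n \log\wmax$ (any choice of $c'$ a modest constant above $1$ works). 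Vertices of $Z'$ never encountered during the scan have $c_v = 0 < \tau^\star$ and are correctly excluded, so we need not initialize counters for them explicitly.

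For correctness, fix a vertex $v \in Z'$ and let $d = |N_G(v) \cap Z|$. Then $c_v$ is a sum of $d$ independent Bernoulli random variables of parameter $p$, so $\mathbb{E}[c_v] = pd$. If $d < \tau$, then $\mathbb{E}[c_v] < p\tau$, and the upper-tail Chernoff bound of \Cref{lem: Chernoff} gives $\Pr[c_v \geq \tau^\star] \leq 2^{-\tau^\star}$, which is at most $1/(n^{11}\log^7 \wmax)$ once $c$ is chosen large enough. If instead $d \geq 1000 \tau \log n \log\wmax$, then $\mathbb{E}[c_v] \geq 1000 p\tau \log n \log\wmax$, which is much larger than $\tau^\star$, and the lower-tail Chernoff bound yields $\Pr[c_v < \tau^\star] \leq \exp(-\Omega(\mathbb{E}[c_v]))$, again at most $1/(n^{11}\log^7 \wmax)$. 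Taking the union bound over the at most $n$ vertices of $Z'$ gives the desired overall error probability of $1/(n^{10}\log^6\wmax)$.

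For the running time, the dominant cost is the scanning step, whose total work is $\sum_{u \in Z_s} \deg_G(u) \leq n \cdot |Z_s|$, since $\deg_G(u) \leq n$ in a simple graph on $n$ vertices. Because $\mathbb{E}[|Z_s|] = p|Z|$, the Chernoff bound of \Cref{lem: Chernoff} applied to $|Z_s|$ shows that $|Z_s| \leq O(p|Z|) + O(\log n)$ holds with probability at least $1 - 1/n^{10}$; moreover, we can abort and restart (or simply declare failure, which is absorbed by the error probability) if this bound is violated. This yields a total running time of
\[
O(p \cdot n \cdot |Z|) \;=\; O\!\left(\frac{n \cdot |Z| \cdot \log n \cdot \log\wmax}{\tau}\right),
\]
as required. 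Producing and iterating over $Z_s$ takes $O(|Z|)$ additional time, which is absorbed into the above bound.

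The main subtlety, rather than a genuine obstacle, is calibrating the constants so that the $1000 \log n \log\wmax$ multiplicative slack between the ``definitely in'' and ``definitely out'' thresholds lines up with what the Chernoff inequality provides. Once $c$ is chosen large enough that the lower-tail exponent $\Omega(pd)$ in the high-degree case (and the upper-tail exponent $\Omega(\tau^\star)$ in the low-degree case) both exceed $11\log n + 7\log\log\wmax$, the per-vertex error is small enough to afford the union bound over $Z'$. The remaining bookkeeping (using the adjacency-list representation to touch only neighbors of sampled vertices and avoiding explicit initialization of counters for vertices never visited) is routine and does not affect the asymptotic bounds.
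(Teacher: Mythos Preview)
Your proposal is correct and follows essentially the same approach as the paper: subsample $Z$ with probability $\Theta(\log n\log\wmax/\tau)$, scan the adjacency lists of sampled vertices to count hits in $Z'$, threshold the counts, and analyze via the two Chernoff tails in \Cref{lem: Chernoff}. The paper differs only in cosmetic details (it returns $A=\emptyset$ when $\tau>|Z|/(100\log n\log\wmax)$ rather than capping $p$ at $1$, and it fixes the concrete constants $p=10\log n\log\wmax/\tau$ and threshold $100\log n\log\wmax$), and it likewise aborts with $A=\emptyset$ when the sample is too large, absorbing that into the error probability.
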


We will also use the following directed version of the degree estimation problem.

\begin{definition}[Directed Heavy Degree Estimation Problem]\label{def: heavy weight est}	
	In the \emph{Directed Heavy Degree Estimation problem}, the input is a simple directed $n$-vertex graph $G=(V,E)$ with capacities $c(e)>0$ on edges $e\in E$, given in the adjacency-list representation, two (not necessarily disjoint)  subsets $Z,Z'\subseteq V(G)$ of vertices of $G$, a threshold $\tau \geq 1$, and two other parameters $c^*>0$ and $W$ that is greater than a large enough constant. For a vertex $v\in Z$, denote by $E_v$ the set of all edges $(v,u)\in E$ whose capacity is at least $c^*$, such that $u\in Z'$. The output of the problem is a subset $A\subseteq Z$ of vertices.
	We say that an algorithm for the problem \emph{errs}, if one of the following hold:
	\begin{itemize}
		\item either there is some vertex $u\in A$ with $|E_u|<\tau$; or
		\item there is a vertex $u'\in Z\setminus A$ with $|E_u|\geq 1000\tau\log n\cdot\log W$.
	\end{itemize} 
\end{definition}

The following claim provides an algorithm for the Directed Heavy Degree Estimation Problem, that is a simple adaptation of the algorithm from Claim~\ref{claim: degree est}. We provide a proof sketch of this claim in Appendix~\ref{sec: appendix_degree}.

\begin{restatable}[Algorithm for the Directed Heavy Degree Estimation Problem]{claim}{heavyweight}
\label{claim: heavy weight est}
There is a randomized algorithm for the Directed Heavy Degree Estimation problem, whose running time is $O\left(\frac{n\cdot |Z'|\cdot \log n\cdot\log W}{\tau}\right )$. The probability that the algorithm errs is at most $\frac{1}{n^{10}\cdot \log^6 W }$.
\end{restatable}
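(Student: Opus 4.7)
The plan is to adapt the algorithm from Claim~\ref{claim: degree est} by sampling from the ``other side'': instead of sampling vertices of $Z$ and enumerating their neighbors, we sample vertices of $Z'$ and enumerate their high-capacity incoming edges. Concretely, I would set $p := \min\{1, c_1 \log n \cdot \log W / \tau\}$ for a sufficiently large constant $c_1$, and independently sample each $u \in Z'$ into a set $Z'' \subseteq Z'$ with probability $p$. Using the modified adjacency-list representation of $G$ from Section~\ref{subsec: modified adj list}, for each $u \in Z''$, enumerate all edges $(v, u) \in E(G)$ whose capacity is at least $c^*$; for each such edge with $v \in Z$ (checked in $O(1)$ time via a lookup table on $Z$ that is precomputed once), increment a counter $c_v$. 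Finally, output $A := \{v \in Z : c_v \geq T\}$ for a threshold $T := c_2 \log n \cdot \log W$, where $c_2$ is a sufficiently large constant relative to $c_1$.

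For correctness, fix any $v \in Z$. Since $G$ is simple, the endpoints of the edges in $E_v$ are distinct vertices of $Z'$, so $c_v$ is the sum of $|E_v|$ independent indicators (one per edge in $E_v$, each indicating whether its endpoint in $Z'$ was sampled into $Z''$), with mean $\mu_v = p \cdot |E_v|$. If $|E_v| < \tau$, then $\mu_v < p\tau \leq c_1 \log n \cdot \log W$, and choosing $c_2$ sufficiently large relative to $c_1$ ensures $T > 2e\mu_v$; the first bound of Lemma~\ref{lem: Chernoff} then yields $\Pr[c_v \geq T] \leq 2^{-T} \leq 1/(n^{11} \log^7 W)$. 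If $|E_v| \geq 1000 \tau \log n \cdot \log W$, then $\mu_v \geq 1000 c_1 \log^2 n \cdot \log^2 W \geq 2T$, and the second Chernoff bound with $\delta = 1/2$ gives $\Pr[c_v < T] \leq \Pr[c_v \leq \mu_v/2] \leq e^{-\mu_v/8}$, again at most $1/(n^{11} \log^7 W)$. A union bound over the at most $n$ vertices of $Z$ yields the claimed error probability $1/(n^{10} \log^6 W)$.

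For the running time, the modified adjacency-list representation allows us to enumerate, for each $u \in Z''$, only the in-edges of capacity at least $c^*$ in time proportional to their number, by scanning the lists $\IN_i(u)$ for $i \geq \lceil \log c^* \rceil$. Beyond the $O(|Z'|)$ cost of building $Z''$ and the $O(|Z|)$ cost of building the lookup table for $Z$, the expected total work is
\[
O\!\left(\sum_{u \in Z'} p \cdot \bigl|\{(v,u) \in E(G) : c((v,u)) \geq c^*\}\bigr|\right) \;\leq\; O\!\left(p \cdot |Z'| \cdot n\right) \;=\; O\!\left(\frac{n \cdot |Z'| \cdot \log n \cdot \log W}{\tau}\right),
\]
where we used that every vertex of $V(G)$ has at most $n$ in-edges.

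The only subtle point is ensuring that enumeration of high-capacity in-edges at a sampled vertex $u$ takes time proportional to their count, rather than to the whole in-degree of $u$; this is precisely what the modified adjacency-list representation of Section~\ref{subsec: modified adj list} supplies. Everything else is a direct carry-over of the Chernoff analysis from the proof of Claim~\ref{claim: degree est}, so I expect no further obstacles.
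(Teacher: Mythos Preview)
Your approach is essentially the same as the paper's: sample $T\subseteq Z'$ with probability $\Theta(\log n\cdot\log W/\tau)$, enumerate incoming edges of sampled vertices to increment counters on $Z$, and threshold. Two minor remarks: (i) you state an \emph{expected} running time, whereas the claim asks for a worst-case bound---as in the proof of Claim~\ref{claim: degree est}, you should check that $|Z''|\le O(|Z'|\log n\log W/\tau)$ and abort (returning $A=\emptyset$) otherwise, folding this into the error probability; (ii) the modified adjacency-list representation is not assumed in the problem statement and is not needed, since every vertex has at most $n$ incoming edges in a simple graph, so scanning all of them already fits within the stated bound.
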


\subsection{The Heavy Vertex Problem}

We will also sometimes use an algorithm for the Heavy Vertex problem, that is very similar to the Directed Degree Estimation problem. The main difference is that  goal of the algorithm is to either (i) compute a vertex $v\in Z'$, such that there are at least $\tau$ vertices $u_1,\ldots,u_{\tau}\in Z$, where for all $1\leq j\le \tau$, an edge $(u_j,v)$ of capacity at least $c^*$ lies in the graph; or (ii) correctly establish that no such vertex exists. Like in the Directed Degree Estimation problem, in order to ensure that the algorithm is efficient, we will solve the problem approximately. We now define the Heavy Vertex problem formally.

\begin{definition}[Heavy Vertex problem.]\label{def: heavy vertex problem}
	In the \emph{Heavy Vertex problem}, the input is a simple directed $n$-vertex graph $G=(V,E)$  with capacities $c(e)>0$ on edges $e\in E$, given in the adjacency-list representation, two disjoint subsets $Z,Z'\subseteq V(G)$ of vertices of $G$, an integer $\tau\geq 1$, and two other parameters $c^*>0$, and $W>0$ that is greater than a sufficiently large constant.
	The output of the problem is a bit $b\in\set{0,1}$, and, if $b=1$, a vertex $v\in Z'$, together with $\tau$ distinct vertices $u_1,\ldots,u_{\tau}\in Z$, such that, for all $1\leq j\leq \tau$, there is an edge $(u_j,v)$ of capacity at least $c^*$ in $G$. We say that an algorithm for the problem \emph{errs}, if it returns $b=0$, and yet 
	there exists a vertex $v\in Z'$, and a collection $U\subseteq Z$ of vertices with $|U|\geq 1000\tau\log n\log W$, such that, for every vertex $u\in U$, there is an edge $(u,v)$ of capacity at least $c^*$ in $G$.
\end{definition}

For convenience, we denote the input to the Heavy Vertex problem by $(G,Z,Z',\tau,c^*,W)$.
The following simple claim provides an algorithm for the Heavy Vertex problem. The proof of the claim is deferred to Section~\ref{sec: appendix_heavy} of Appendix.

\begin{restatable}[Algorithm for the Heavy Vertex problem]{claim}{heavy}
	\label{claim: heavy}
	There is a randomized algorithm for the Heavy Vertex problem, whose running time, on input $(G,Z,Z',\tau,c^*,W)$, is $O\left(\frac{n\cdot |Z|\cdot \log n\cdot\log W}{\tau}\right )$. The probability that the algorithm errs is at most $1/\left(n^{10}\cdot \log^6 W \right)$.
\end{restatable}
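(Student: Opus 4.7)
The strategy will be a standard sampling argument, in the same spirit as Claims~\ref{claim: degree est} and~\ref{claim: heavy weight est}, but oriented toward discovering a single heavy vertex on the $Z'$ side. The plan is to sample each vertex of $Z$ independently with probability $p=\min\{1,\,C\log n\log W/\tau\}$ for a sufficiently large constant $C$, obtaining a subset $Z_s\subseteq Z$. I will build an indicator array for $Z'$ in $O(|Z'|)$ time. Then, for each $u\in Z_s$, I enumerate the heavy outgoing edges of $u$ (those of capacity $\geq c^*$) using the modified adjacency-list representation of \Cref{subsec: modified adj list}, which iterates over exactly the lists $\OUT_i(u)$ with $2^i\geq c^*$. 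For every such edge $(u,v)$ with $v\in Z'$ I increment a counter $c(v)$. After processing all of $Z_s$, I look for any $v\in Z'$ with $c(v)\geq \tau':=C'\log n\log W$ for a constant $C'$ chosen much smaller than $1000C$.

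For each candidate $v$ that crosses the threshold, I verify it directly by scanning $v$'s heavy incoming edges (again via the modified adjacency list) and collecting the first $\tau$ whose source lies in $Z$. If I succeed for some candidate, I output $b=1$ together with that $v$ and the $\tau$ collected in-neighbors $u_1,\ldots,u_\tau$; otherwise I output $b=0$. Correctness in the ``accept'' direction is immediate, since the in-neighbors are exhibited explicitly. The only error event is outputting $b=0$ when some $v^*\in Z'$ has $|E_{v^*}|\geq 1000\tau\log n\log W$ heavy in-edges from $Z$. Conditioned on this, $c(v^*)$ is a sum of independent Bernoullis with mean $\mu\geq 1000\, C\,(\log n\log W)^2$. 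The Chernoff bound (\Cref{lem: Chernoff}) with deviation $\delta=1-\tau'/\mu=1-O(1/(\log n\log W))$ yields $\Pr[c(v^*)<\tau']\leq e^{-\delta^2\mu/2}$, and choosing $C$ large enough drives this below $1/(n^{10}\log^6 W)$. When $c(v^*)\geq \tau'$, verification of $v^*$ succeeds since $v^*$ has at least $\tau$ heavy in-neighbors in $Z$, so either $v^*$ or some earlier certified candidate is output.

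For the running time, the sampling step costs $O(|Z|)$. Enumerating heavy out-edges of sampled vertices takes time $\sum_{u\in Z_s} d_h(u)$, where $d_h(u)$ is the heavy out-degree of $u$; its expectation is $p\cdot\sum_{u\in Z}d_h(u)\leq p\cdot n|Z|=O(n|Z|\log n\log W/\tau)$. A Hoeffding-type concentration argument (each $d_h(u)\leq n$) gives the same bound with high probability, or one can simply truncate execution once this budget is hit and declare $b=0$. Verification of a single candidate costs at most $d_h^{\mathrm{in}}(v)\leq n$. To keep total verification work inside the budget I verify candidates in some canonical order and stop at the first success; alternatively, the same truncation suffices.

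The main subtlety I anticipate, and the only nontrivial piece of the analysis, is controlling the verification cost in the (rare) event that many $v\in Z'$ accidentally exceed the $\tau'$ threshold from sampling. This is handled cleanly by the truncation argument above: either a genuine $v^*$ is among the first few candidates examined and we certify it well within budget, or no true $v^*$ exists, in which case outputting $b=0$ upon budget exhaustion is harmless. Combining these pieces matches the claimed $O\!\left(n|Z|\log n\log W/\tau\right)$ running time and $1/(n^{10}\log^6 W)$ error bound.
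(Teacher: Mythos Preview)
Your approach is essentially the paper's: subsample $Z$ at rate $\Theta(\log n\log W/\tau)$, count sampled heavy in-neighbors for each $v\in Z'$, and verify a vertex that crosses a $\Theta(\log n\log W)$ threshold by scanning its in-edges. The paper is a bit cleaner on the verification step: it picks a \emph{single} arbitrary vertex from the candidate set $A$ and scans its in-edges once (cost $O(n)$), relying on the two-sided Chernoff analysis of \Cref{claim: degree est} to guarantee that, with the stated error probability, every vertex that crosses the threshold genuinely has at least $\tau$ heavy in-neighbors in $Z$. Your truncation workaround for the verification cost is therefore unnecessary, and as written it has a small gap: the dichotomy ``either a genuine $v^*$ is among the first few candidates, or no true $v^*$ exists'' omits the case where $v^*$ exists but many false positives precede it in your canonical order, exhausting the budget before $v^*$ is reached; in that case outputting $b=0$ is an error. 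The missing sentence is exactly the upper-tail bound (first inequality of \Cref{lem: Chernoff}) showing that with probability $1-n^{-\Omega(\log W)}$ there are no false positives at all; once you add that, your argument and the paper's coincide. (Also, the modified adjacency list is not needed here: enumerating all out-edges of each sampled $u$ and all in-edges of the single verified $v$ already fits in the stated budget.)
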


\section{The Algorithm}\label{sec: the_algorithm}

In this section we provide our main result, namely an algorithm for the vertex-weighted global minimum vertex cut, proving \Cref{thm: main}. We defer some of the technical details to subsequent sections.

We assume that we are given a simple undirected $n$-vertex and $m$-edge graph $G$ with integral weights $0\leq w(v)\leq W$ on its vertices.
It will be convenient for us to assume that all vertex weights are strictly positive. In order to achieve this, we can modify the vertex weights as follows: for every vertex $v\in V(G)$, we let  $w'(v)=n^2\cdot w(v)+1$. Let $G'$ be the graph that is identical to $G$, except that the weight of every vertex $v$ is set to $w'(v)$, and let $W'=n^2\cdot W+1$, so for every vertex $v\in V(G)$, $1\leq w'(v)\leq W'$ holds. Note that, if $(L,S,R)$ is a global minimum vertex cut in $G'$, then it must also be a global minimum vertex-cut in $G$. Therefore, it is now enough to compute a global minimum vertex-cut in $G'$.
In order to avoid clutter, we will denote $G'$ by $G$, and the vertex weights $w'(v)$ by $w(v)$ for $v\in V(G)$. From now on we assume that, for every vertex $v\in V(G)$, $1\leq w(v)\leq W'$, where $W'=n^2\cdot W+1$. We define the values $\wmax'(G)$ and $\wmax(G)$ as in \Cref{subsec: split graph}, and denote them by $\wmax'$ and $\wmax$, respectively, throughout this section. Note that:

\begin{equation} \label{eq: wmax}
\wmax\leq O(nW')\leq O(n^3\cdot W).
\end{equation}

Let $\rho=\ceil{\log(W')}$. For all  $0\leq i\leq \rho$, we let $U_i\subseteq V$ denote the set of all vertices 
$v$ with $2^i\leq w(v)< 2^{i+1}$, so $(U_0,\ldots,U_{\rho})$ is a partition of $V(G)$.

Whenever we are given any vertex cut $(L,S,R)$ of $G$, we assume without loss of generality that $w(L)\leq w(R)$ holds. For all $0\leq i\leq \rho$, we then denote $L_i=L\cap U_i$, $S_i=S\cap U_i$, and $R_i=R\cap U_i$, so that $(L_i,S_i,R_i)$ is a partition of $U_i$.

We use a combination of several algorithms, that are designed to handle different edge densities and different tradeoffs between the values $|L_i|$ and $|S_i|$ for the optimal cut $(L,S,R)$, for $0\leq i\leq \rho$. Each such algorithm must return a pair $s$, $t$ of vertices of $G$, together with a value $c$ that is at least as large as the value of the minimu $s$-$t$ vertex-cut in $G$. We will ensure that, with high probability, at least one of the algorithms returns a value $c=\opt$. 
Taking the smallest value $c$ returned by any of these algorithms then yields the optimal solution value with high probability, and computing the minimum $s$-$t$ vertex cut for the corresponding pair $s,t$ of vertices provides an optimal global minimum cut in $G$.
We now describe each of the algorithms in turn.

\subsection{Algorithm $\alg_1$: when $S_i$ is Small for Some $i$}

Our first algorithm, $\alg_1$, is summarized in the following theorem.

\begin{theorem}\label{thm: alg 1}
	There is a randomized algorithm, that we denote by $\alg_1$, whose input is a simple undirected $n$-vertex and $m$-edge graph $G$ with integral weights $1\leq w(v)\leq W'$ on vertices $v\in V(G)$, and a parameter $0<\eps<1$. The algorithm returns two distinct vertices $x,y\in V(G)$ with $(x,y)\not\in E(G)$, and a value $c$ that is at least as large as the value of the minimum $x$-$y$ vertex-cut in $G$. Moreover, if there is a global minimum vertex-cut $(L,S,R)$ in $G$, and an integer $0\leq i\leq \ceil{\log W'}$, such that $L_i\neq\emptyset$ and $|S_i|\leq |L_i|\cdot n^{1-\eps}$, then, with probability at least $1-1/n^4$, $c=\opt$. The running time of the algorithm is $ O\left (mn^{1-\eps+o(1)}\cdot \log^2 W'\right )$.
\end{theorem}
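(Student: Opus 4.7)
My plan is to adapt the sampling paradigm of~\cite{LNP21} to the weighted setting: for each parameter choice, sample a random set $T \subseteq V(G)$, invoke the Isolating Cuts algorithm of \Cref{thm: min cuts via isolating} to obtain the minimum vertex cut value $c_v$ separating each $v \in T$ from $T \setminus \{v\}$, and record the pair $(v^*, y)$ with $y \in T \setminus \{v^*\}$ chosen to be a non-neighbor of $v^*$, together with $c = c_{v^*}$, where $v^*$ attains the smallest $c_v$. Across all parameter choices and trials I would output the pair achieving the overall minimum $c$, falling back to any fixed non-edge of $G$ if no trial produced a valid non-edge pair (one exists since $G$ is not a complete graph). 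Because every $c_v$ is the value of a vertex cut of $G$, we have $c_v \ge \opt \ge $ min $v^*$-$y$ vertex cut, so $c$ satisfies the required upper-bound property automatically.

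Since the hypothesis is stated per weight class, the sampling must also be per class. I would iterate over all $i \in \{0, \ldots, \lceil \log W' \rceil\}$ and all power-of-$2$ guesses $\hat M \in \{1, 2, 4, \ldots, n\}$ for $\max(|L_i|, |S_i|)$, performing $Q = \Theta(n^{1-\eps}\log n)$ independent trials per choice, in which each $v \in U_i$ is sampled with probability $1/\hat M$ and each $v \in V \setminus U_i$ with probability $1/n$. I call a trial ``good'' if $|T \cap L| = 1$, $T \cap S = \emptyset$, and $T \cap R \ne \emptyset$: under this event the unique $x \in T \cap L$ satisfies $T \setminus \{x\} \subseteq R$, so the global cut $(L, S, R)$ itself separates $x$ from $T \setminus \{x\}$, giving $c_x \le \opt$ and, combined with $c_x \ge \opt$, forcing $c_x = \opt$ and hence $c = \opt$ for that trial.

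To lower-bound the good-event probability for the correct parameters $(i^*, \hat M^*)$ with $\hat M^* \in [\max(|L_{i^*}|,|S_{i^*}|), 2\max(|L_{i^*}|,|S_{i^*}|)]$, I would exploit independence of the samplings on the disjoint vertex classes $L_{i^*}$, $L \setminus L_{i^*}$, $S_{i^*}$, $S \setminus S_{i^*}$, $R$ to factor the probability as a product of five terms. The dominant factor $\Pr[|T \cap L_{i^*}| = 1] \ge |L_{i^*}|/(e^2 \hat M^*)$ (using \Cref{claim: simple math}) is $\Omega(1/n^{1-\eps})$ by the hypothesis $|S_{i^*}| \le |L_{i^*}| n^{1-\eps}$; the factors $\Pr[T \cap (L \setminus L_{i^*}) = \emptyset]$, $\Pr[T \cap S_{i^*} = \emptyset]$, and $\Pr[T \cap (S \setminus S_{i^*}) = \emptyset]$ are each $\ge 1/e$ via the same $(1-1/k)^k \ge 1/e^2$-type estimate. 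Hence a single trial succeeds with probability $\Omega(1/n^{1-\eps})$, so $Q$ trials all fail with probability at most $e^{-\Theta(\log n)} \le 1/n^5$, and a union bound over the $O(\log W' \log n)$ parameter settings yields the desired $1 - 1/n^4$ success rate.

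The runtime is dominated by $O(\log W' \cdot \log n \cdot Q) = O(n^{1-\eps+o(1)}\log W')$ calls to \Cref{thm: min cuts via isolating} at $O(m^{1+o(1)}\log W')$ each, totalling $O(m n^{1-\eps+o(1)}\log^2 W')$. The main obstacle I anticipate is the fifth factor $\Pr[T \cap R \ne \emptyset]$: when $|R|$ is tiny (for instance concentrated in a single heavy vertex of some class $U_j$ where $|L_j|$ or $|S_j|$ dominates), sampling $V \setminus U_{i^*}$ at rate $1/n$ hits $R$ with probability only $\Theta(|R|/n)$, which can be as small as $1/n$. I would resolve this by adding a secondary outer loop over guesses $(j, \hat R)$ for a class containing an $R$-vertex and its sampling rate, sampling $U_j$ independently at rate $1/\hat R$; for the right guess this contributes an $\Omega(1)$ factor to $\Pr[T \cap R \ne \emptyset]$ while, by independence, preserving the other factors on $L \cup S$, and the extra polylogarithmic parameter count is absorbed into the $n^{o(1)}$ slack of the runtime bound.
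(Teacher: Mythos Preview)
Your overall strategy---iterate over weight classes $i$ and a size guess, sample $T$, apply \Cref{thm: min cuts via isolating}, repeat $\Theta(n^{1-\eps}\log n)$ times---is exactly the paper's. The paper, however, samples $T$ \emph{only from $U_i$}: each vertex of $U_i$ joins $T$ independently with probability $1/(2\lambda)$, where $\lambda$ is a power-of-two guess for $|L_i|+|S_i|$ (your $\hat M$ up to a constant). Because $T\subseteq U_i$, the paper's success event is just $|T\cap L_i|=1$ and $T\cap S_i=\emptyset$, and then $T\setminus\{v\}\subseteq R_i\subseteq R$ automatically; there is no need to reason about $L\setminus L_i$ or $S\setminus S_i$ at all. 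Your decision to also sample $V\setminus U_i$ at rate $1/n$ is what forces you to control the full sets $L$ and $S$ and then to patch $\Pr[T\cap R\ne\emptyset]$.

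That patch does not work. For the secondary sampling of $U_j$ at rate $1/\hat R$ to give $\Pr[T\cap R_j\ne\emptyset]=\Omega(1)$ you need $\hat R=O(|R_j|)$, and then the newly introduced factor $\Pr[T\cap(L_j\cup S_j)=\emptyset]=(1-1/\hat R)^{|L_j|+|S_j|}$ is $\Omega(1)$ only when $|L_j|+|S_j|=O(|R_j|)$. Nothing in the theorem's hypothesis guarantees a class $j$ with that property: for instance, $R$ could consist of a single heavy vertex whose weight class also contains $\Theta(n)$ vertices of $S$, and then no choice of $\hat R$ makes both factors constant simultaneously. Independence lets you multiply the factors, but it does not stop the new one from being tiny.
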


In the remainder of this subsection, we prove 
the above theorem. At a high level, Algorithm $\alg_1$ uses the paradigm of subsampling a set $T$ of vertices of $G$ and then applying the Isolating Cuts Lemma (\Cref{thm: min cuts via isolating}) to $T$; this paradigm was introduced by~\cite{LNP21} in the context of computing global minimum vertex-cut in unweighted graphs.
The following lemma is central to the proof of the theorem.

\begin{lemma}\label{lem: alg 1 i}
	There is a randomized algorithm, whose input is a simple undirected $n$-vertex and $m$-edge graph $G$ with integral weights $1\leq w(v)\leq W'$ on vertices $v\in V(G)$, two integral parameters $0\leq i\leq \ceil{\log W'}$ and  $0<\lambda\leq n$, and another parameter $0<\eps<1$. The algorithm returns two distinct vertices $x,y\in V(G)$ with $(x,y)\not\in E(G)$, and a value $c$ that is at least as large as the value of the minimum $x$-$y$ vertex-cut in $G$.
	 Moreover, if there is a global minimum vertex-cut $(L,S,R)$ in $G$ with $\lambda\leq |L_i|+|S_i|\leq 2\lambda$, such that $L_i\neq\emptyset$ and $|S_i|\leq |L_i|\cdot n^{1-\eps}$, then, with probability at least $1-1/n^4$, $c=\opt$ holds. The running time of the algorithm is $ O\left (mn^{1-\eps+o(1)}\cdot \log W'\right )$.
\end{lemma}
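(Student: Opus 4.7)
The plan is to run $N = \Theta(n^{1-\eps} \log n)$ independent trials of a simple subroutine and return, across trials, the pair $(x,y)$ with the smallest output value $c$. In each trial we sample $T \subseteq U_i$ by including each vertex of $U_i$ independently with probability $p = 1/\lambda$ (handling $\lambda = 1$ as a trivial brute-force special case, since then $|L_i|+|S_i|\le 2$), and invoke \Cref{thm: min cuts via isolating} on $T$ to compute, for each $v \in T$, the value $c_v$ of the minimum $v$-$(T\setminus\{v\})$ vertex-cut in $G$. The subroutine outputs the vertex $v^* \in T$ that minimizes $c_v$, any $y \in T\setminus\{v^*\}$ with $(v^*,y)\notin E(G)$, and the value $c_{v^*}$. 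The required inequality $c_{v^*} \geq \text{(min $v^*$-$y$ vertex-cut)}$ is immediate because any vertex-cut separating $v^*$ from $T\setminus\{v^*\}$ in particular separates $v^*$ from $y\in T\setminus\{v^*\}$.

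For the probability analysis, the hypothesis $\lambda \leq |L_i|+|S_i| \leq 2\lambda$ together with $|S_i| \leq |L_i| n^{1-\eps}$ gives $|L_i| \geq \lambda/(n^{1-\eps}+1)$. Using $|L_i|,|S_i|\leq 2\lambda$ and \Cref{claim: simple math}, I compute
\begin{align*}
\Pr\bigl[|T \cap L_i| = 1\bigr] &= |L_i|\cdot p\cdot (1-p)^{|L_i|-1} \geq \frac{|L_i|}{\lambda}\cdot \Bigl(1-\tfrac{1}{\lambda}\Bigr)^{2\lambda-1} = \Omega\!\Bigl(\frac{1}{n^{1-\eps}}\Bigr),\\
\Pr\bigl[T \cap S_i = \emptyset\bigr] &= (1-p)^{|S_i|} \geq \Bigl(1-\tfrac{1}{\lambda}\Bigr)^{2\lambda} = \Omega(1),
\end{align*}
and these two events are independent, since they depend on disjoint sets of indicator variables ($L_i\cap S_i=\emptyset$). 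Call the trial \emph{favorable} if both events occur and additionally $T \cap R_i \neq \emptyset$. In a favorable trial, letting $x$ be the unique vertex of $T\cap L_i$, we have $T\setminus\{x\}\subseteq R_i\subseteq R$, so the global minimum cut $(L,S,R)$ itself separates $x$ from every vertex of $T\setminus\{x\}$; this witnesses $c_x \leq \opt$, which combined with the universal lower bound $c_{v^*}\geq\opt$ (any $v$-$(T\setminus\{v\})$ vertex-cut is a vertex-cut of $G$) forces $c_{v^*} = \opt$, and any such $y\in T\setminus\{x\}\subseteq R$ is a non-neighbor of $x$ since $(L,R)$-edges are forbidden.

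It remains to ensure $T\cap R_i \neq \emptyset$ with constant probability. When $|R_i| = \Omega(\lambda)$ this follows directly from $\Pr[T\cap R_i = \emptyset] = (1-p)^{|R_i|}\leq e^{-\Omega(1)}$. For the complementary regime where $|R_i|$ is small, I augment $T$ by independently sampling each vertex of $V\setminus U_i$ at rate $\approx 1/n$, possibly iterating over $O(\log n)$ geometric scales of this rate (absorbed into the $\tilde O$), so that at least one vertex of $R\setminus R_i$ is caught with constant probability; a union bound then verifies that the auxiliary sample preserves $|T\cap L|=1$ and $T\cap S=\emptyset$ with constant probability. Thus a single trial is favorable with probability $\Omega(1/n^{1-\eps})$, and $N = \Theta(n^{1-\eps}\log n)$ independent trials drive the failure probability below $1/n^4$. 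Each trial costs $O(m^{1+o(1)}\log W')$ via \Cref{thm: min cuts via isolating}, giving a total running time of $O\bigl(m n^{1-\eps+o(1)}\log W'\bigr)$, as claimed (with an additional $\log W'$ factor absorbing the iteration over auxiliary scales).

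The hard part will be the last step: designing the auxiliary sampling so that $T\cap R\neq\emptyset$ holds with constant probability simultaneously across all regimes of $|R_i|$ and $|R\setminus R_i|$, while still preserving the previously guaranteed conditions $|T\cap L|=1$ and $T\cap S=\emptyset$. In particular, when both $|R_i|$ and $|R\setminus R_i|$ are small (so that $|R|$ itself is tiny), the vanilla sampling fails, and a more careful case analysis or iteration over guesses of $|R|$ is required; this is where the bulk of the technical work in the proof will lie.
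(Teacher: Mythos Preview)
Your core approach --- sample $T \subseteq U_i$ at rate $\Theta(1/\lambda)$, invoke \Cref{thm: min cuts via isolating}, take the minimizer, and repeat $\Theta(n^{1-\eps}\log n)$ times --- is exactly the paper's approach. (The paper uses $p=1/(2\lambda)$, which incidentally handles $\lambda=1$ without a special case, but this is cosmetic.)

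Where you diverge is in worrying about $T\cap R\neq\emptyset$. You are right that this is needed: the paper's written proof defines the good event as only $|T\cap L_i|=1$ and $T\cap S_i=\emptyset$ and tacitly assumes a second vertex is available for the role of $y$, which indeed fails when $R_i$ is empty. But you dramatically overestimate the difficulty of this edge case --- it is not ``the hard part'' or ``the bulk of the technical work'' --- and your proposed auxiliary sampling from $V\setminus U_i$ at geometric scales is both more complicated than necessary and, as you yourself note, still incomplete when $|R|$ is small.

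The one-line fix avoids the sampling stage entirely: whenever a trial produces $|T|\le 1$, take the lone sampled vertex $v$ (if any), draw $y$ from $V\setminus(\set{v}\cup N_G(v))$ with probability proportional to $w(y)$, and return the value of the minimum $v$-$y$ vertex-cut. By \Cref{claim: neighbors in S}, if $v\in L$ then $w(S\setminus N_G(v))\le w(L)\le w(R)$, so $\Pr[y\in R]\ge 1/3$ and the trial succeeds with constant probability regardless of $|R_i|$ or $|R|$. This costs one extra max-flow call and does not change the asymptotic running time; it is precisely the device the paper uses later in proving \Cref{lem: cut if vertex of L}.
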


We prove \Cref{lem: alg 1 i} below, after we complete the proof of \Cref{thm: alg 1} using it. Let $z=\ceil{\log W'}$ and let $z'=\ceil{\log n}$. Our algorithm performs $O(z\cdot z')$ iterations, as follows. For all $0\leq i\leq z$ and for all $0\leq j\leq z'$, we apply the algorithm from \Cref{lem: alg 1 i} to graph $G$, with parameter $i$ and $\lambda=2^j$, and parameter $\eps$ remaining unchanged; we denote this application of the algorithm from \Cref{lem: alg 1 i}  by $\aset_{i,j}$, and we denote by $(c_{i,j},x_{i,j},y_{i,j})$ its output. We let $i^*,j^*$ be the indices for which the value $c_{i^*,j^*}$ that Algorithm $\aset_{i,j}$ returned is the smallest, breaking ties arbitrarily. We then output $(c_{i^*,j^*},x_{i^*,j^*},y_{i^*,j^*})$. This completes the description of Algorithm $\alg_1$.

Since the running time of the algorithm from \Cref{lem: alg 1 i} is 
$ O(mn^{1-\eps+o(1)}\cdot \log W')$, the running time of $\alg_1$
is $ O(mn^{1-\eps+o(1)})\cdot O(z\cdot z')\leq O(mn^{1-\eps+o(1)}\cdot \log^2 W')$.
Moreover,  \Cref{lem: alg 1 i} ensures that the value of the minimum $x_{i^*,j^*}$--$y_{i^*,j^*}$ vertex-cut in $G$ is at most $c_{i^*,j^*}$.

Assume now that there is a global minimum vertex cut $(L,S,R)$ in $G$, and an integer $0\leq i\leq \ceil{\log W'}$, such that $L_i\neq\emptyset$ and $|S_i|\leq |L_i|\cdot n^{1-\eps}$. Let $0\leq j\leq z'$ be the unique integer with $2^j\leq |L_i|+|S_i|<2^{j+1}$.
We say that Algorithm $\aset_{i,j}$ is \emph{successful}, if its output 
$(c_{i,j},x_{i,j},y_{i,j})$ has the property that $c_{i,j}=\opt$. From \Cref{lem: alg 1 i}, with probability at least $1-1/n^4$, Algorithm $\aset_{i,j}$ is successful. In this case we are guaranteed that $c_{i^*,j^*}=\opt$ as well. 
In order to complete the proof of \Cref{thm: alg 1}, it is now enough to prove \Cref{lem: alg 1 i}.

\begin{proofof}{\Cref{lem: alg 1 i}}
The proof follows the high-level idea of~\cite{LNP21}, that was introduced in the context of computing global minimum vertex-cuts in unweighted graphs, of subsampling a subset $T$ of vertices of $G$, and then applying the Isolating Cuts Lemma (\Cref{thm: min cuts via isolating}) to it.

We start by describing an algorithm, that we denote by $\aset'$. Our final algorithm will execute $\aset'$ several times.
	Algorithm $\aset'$ starts by constructing a random set $T$ of vertices as follows: every vertex $v\in U_i$ is added to $T$ independently with probability $\frac{1}{2\lambda}$. We then use the algorithm from  \Cref{thm: min cuts via isolating} to compute, for every vertex $v\in T$, the value $c_v$ of the minimum vertex-cut separating $v$ from $T\setminus\set{v}$ in $G$. 
	We let $x\in T$ be the vertex for which the value $c_x$ is the smallest, breaking ties arbitrarily, and we let $y$ be an arbitrary vertex in $T\setminus \set{x}$. We then return $c_x,x$ and $y$ as the outcome of the algorithm $\aset'$. Note that the running time of Algorithm  $\aset'$ is $O(m^{1+o(1)})$.  
	It is easy to verify that 
	 there is an $x$-$y$ vertex-cut of value $c_x$ in $G$ (the cut separating $x$ from $T\setminus\set{x}$), and the value of the minimum $x$-$y$ vertex-cut in $G$ is at most $c_x$.
	We say that the algorithm $\aset$ is \emph{successful} if $c_x=\opt$; otherwise we say that it is \emph{unsuccessful}.
	We will use the following claim to analyze algorithm $\aset'$.
	
	\begin{claim}\label{claim: alg A'}
		Suppose that there is a global minimum vertex-cut $(L,S,R)$ in $G$ with $\lambda\leq |L_i|+|S_i|\leq 2\lambda$, such that $L_i\neq \emptyset$ and $|S_i|\leq |L_i|\cdot n^{1-\eps}$ holds. Then the probability that Algorithm $\aset'$ is successful is at least $\Omega\left(\frac{1}{n^{1-\eps}}\right )$.
	\end{claim}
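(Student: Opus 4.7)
The plan is to identify a single good event in the sampling of $T$ that forces $c_x=\opt$, and then to lower-bound its probability by $\Omega(1/n^{1-\eps})$. I would take the good event $\event$ to be the conjunction of (i) $|T\cap L_i|=1$, (ii) $T\cap S_i=\emptyset$, and (iii) $T\cap R_i\neq \emptyset$.

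Suppose $\event$ holds and let $x$ be the unique vertex of $T\cap L_i$. Since the algorithm samples only from $U_i$, we have $T\subseteq U_i$, so (i)--(iii) translate to $T\cap L=\{x\}$, $T\cap S=\emptyset$, and $\emptyset \neq T\setminus\{x\}\subseteq R$. The cut $(L,S,R)$ therefore separates $x$ from $T\setminus\{x\}$ with a middle set $S$ disjoint from $T$, witnessing $c_x\leq w(S)=\opt$. Conversely, the minimum cut achieving $c_x$ is itself a global vertex-cut of $G$ (its two sides are both non-empty by construction), so $c_x\geq\opt$ and hence $c_x=\opt$. Since $c_v\geq\opt$ for every $v\in T$, vertex $x$ attains the minimum and the algorithm outputs $c=\opt$.

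For the probability, the three events defining $\event$ depend on disjoint subsets of the independent vertex samples, so
\[
\Pr[\event] = |L_i|\,p\,(1-p)^{|L_i|-1} \cdot (1-p)^{|S_i|} \cdot \bigl(1-(1-p)^{|R_i|}\bigr),
\]
where $p=1/(2\lambda)$. Using $|S_i|\leq |L_i|\cdot n^{1-\eps}$ together with $|L_i|+|S_i|\geq \lambda$ yields $|L_i|\geq \lambda/(1+n^{1-\eps})$ and hence $|L_i|\,p\geq 1/(4n^{1-\eps})$. The upper bound $|L_i|+|S_i|\leq 2\lambda$, combined with \Cref{claim: simple math} applied to $r=2\lambda\geq 2$, gives $(1-p)^{|L_i|+|S_i|-1}\geq 1/e^2$. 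Multiplying these, the first two factors contribute at least $\Omega(1/n^{1-\eps})$.

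The main obstacle is the third factor $1-(1-p)^{|R_i|}$, which is only $\Omega(1)$ when $|R_i|=\Omega(\lambda)$. In this regime the argument is complete and yields $\Pr[\event]=\Omega(1/n^{1-\eps})$. For the complementary regime of small $|R_i|$, I expect to invoke the constraint $w(L)\leq w(R)$ together with the outer iteration over all weight classes in \Cref{thm: alg 1}: if $|R_i|$ is small then $R$'s weight must be concentrated in heavier classes, so a different choice of $i'$ with $|R_{i'}|$ large should still satisfy the hypothesis $|S_{i'}|\leq|L_{i'}|\cdot n^{1-\eps}$, allowing the theorem-level iteration to absorb the edge case.
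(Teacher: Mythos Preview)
Your approach matches the paper's: isolate a good sampling event and lower-bound its probability. The paper's event $\event$ consists only of your conditions (i) and (ii); it drops (iii) entirely and asserts that under (i)--(ii) one has $T\setminus\{v\}\subseteq R$ and hence $c_v=\opt$, tacitly assuming $T\setminus\{v\}\neq\emptyset$. Its probability computation is exactly your first two factors and goes through without obstruction, yielding $\Pr[\event]\ge |L_i|/(a\lambda)\ge\Omega(1/n^{1-\eps})$ directly.

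You are right that (iii) (equivalently $|T|\ge 2$) is needed for the implication ``$\event\Rightarrow$ success'' to hold, since the isolating-cuts routine on a singleton $T$ does not return $\opt$. But your proposed resolution---deferring the small-$|R_i|$ regime to the theorem-level iteration over weight classes---does not work. Take $L=\{\ell_1,\dots,\ell_m\}$ and $S=\{s_1,\dots,s_k\}$, all of unit weight with $m=\Theta(k^\eps)$, complete bipartite edges between $L$ and $S$, a Hamiltonian path through $L$, and a single vertex $r$ of weight $k$ adjacent to all of $S$. The claim's hypothesis holds at $i=0$ with $\lambda\approx k$, yet $R_0=\emptyset$; the only other nonempty weight class is that of $r$, where $L_{i'}=\emptyset$, so no alternate $i'$ satisfies the hypothesis. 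In this graph the unique global minimum cut is $(L,S,\{r\})$; since $r\notin U_0$, any $T\subseteq U_0$ with $|T|\ge 2$ either contains adjacent vertices (giving $c_v\ge 2nW$) or forces a separator strictly containing $S$ (giving $c_v>\opt$), so $\aset'$ is never successful regardless of the sampling outcome. The obstacle you flag is therefore genuine and cannot be absorbed at the theorem level; the paper's proof, as written, shares this gap, and closing it appears to require a tweak to the algorithm rather than to the analysis.
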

\begin{proof}
We say that the good Event $\event$ happens if $|T\cap L_i|=1$ and $|T\cap S_i|=\emptyset$. Assume first that Event $\event$ happened, and let $v$ be the unique vertex in $T\cap L_i$, so $T\setminus\set{v}\subseteq R$. Since $(L,S,R)$ is a minimum vertex-cut separating $v$ from $T\setminus\set{v}$, we get that $c_v=w(S)=\opt$ must hold. Moreover, for every vertex $u\in T\setminus\set{v}$, $c_u$ is at least as large as the value of some vertex-cut in $G$, so $c_u\geq \opt$ holds as well. Therefore, if we denote the outcome of the algorithm by $(c_x,x,y)$, then $c_x=\opt$ must hold. Since $c_x$ is the value of the minimum vertex-cut separating $x$ from $T\setminus\set{x}$, we get that there is an $x$-$y$ cut of value $\opt$. 
 We conclude that, if Event $\event$ happened, then the algorithm is successful. The following observation will then complete the proof of \Cref{claim: alg A'}.


\begin{observation}\label{obs exactly one good}
	$\prob{\event}\geq \Omega\left(\frac{1}{n^{1-\eps}}\right )$.
\end{observation}
\begin{proof}
	For every vertex $v\in L_i$, we define the event $\event(v)$ that $v\in T$, and $T$ contains no other vertices of $S_i\cup L_i$. Denote $|S_i\cup L_i|$ by $z$, and recall that $\lambda\leq z\leq 2\lambda$.  Then for all $v\in L_i$:
	
	\[
	\begin{split}
	\prob{\event(v)}&=\frac{1}{2\lambda}\cdot \left(1-\frac 1 {2\lambda} \right )^{z-1}\\
	&\geq \frac{1}{2\lambda}\cdot \left(1-\frac 1 {2\lambda} \right )^{2\lambda-1}\\
	&\geq \frac{1}{a\lambda},	\end{split}
	\]
	for some constant $a\geq 1$.
	For the first inequality, we have used the fact that
$\lambda\leq z\leq 2\lambda$ holds, and the second inequality follows from \Cref{claim: simple math}. 

	Recall that $|L_i|\geq \frac{|S_i|}{n^{1-\eps}}$, and so $|L_i|\geq \frac{|S_i\cup L_i|}{2n^{1-\eps}}\geq \frac{\lambda}{2n^{1-\eps}}$ must hold.
	Since Events $\event(v)$ for $v\in L_i$ are all disjoint, and since event $\event$ only happens if one of the events $\event_v$ for $v\in L_i$ happens, we get that: 
	
\[\prob{\event}=\sum_{v\in L_i}\prob{\event_i}\geq \frac{|L_i|}{a\lambda}\geq \Omega\left(\frac{1}{n^{1-\eps}}\right ). \]
\end{proof}
\end{proof}

We perform $\Theta(n^{1-\eps}\cdot \log n)$ executions of Algorithm $\aset'$, and return the triple $(c_x,x,y)$ with smallest value $c_x$ that any execution of the algorithm produced (if $x=y$ or $(x,y)\in E(G)$ holds, then we instead return an arbitrary pair of distinct vertices of $G$ that are not connected by an edge, and the value $c=\wmax(G)$). It is easy to verify that, if there is a global minimum vertex-cut $(L,S,R)$ in $G$ with $\lambda\leq |L_i|+|S_i|\leq 2\lambda$, such that $L_i\neq\emptyset$ and $|S_i|\leq |L_i|\cdot n^{1-\eps}$, then the probability that at least one of the executions of $\aset'$ is successful is at least $1-1/n^4$, and in this case, if $(c,x,y)$ is the output of the algorithm, then we are guaranteed that $c=\opt$ holds. The running time of the entire algorithm is $O\left(m\cdot n^{1-\eps+o(1)}\cdot \log W'\right )$.
\end{proofof}

\subsection{Algorithm $\alg_2$: when $G$ is not Dense and $S$ is Large}

Our second algorithm, $\alg_2$, will be used in the case where the input graph $G$ is not very dense; we summarize it in the following theorem.

\begin{theorem}\label{thm: alg 2}
	There is a randomized algorithm, that we denote by $\alg_2$, whose input is a simple undirected $n$-vertex and $m$-edge graph $G$ with integral weights $1\leq w(v)\leq W'$ on vertices $v\in V(G)$, and a parameter $0<\eps<1$, such that $m\leq n^{2-2\eps}$ holds. 
	The algorithm returns two distinct vertices $x,y\in V(G)$ with $(x,y)\not\in E(G)$, and a value $c$ that is at least as large as the value of the minimum $x$-$y$ vertex-cut in $G$.
	Moreover, if there is a global minimum vertex-cut $(L,S,R)$ in $G$ with $|S|\geq n^{1-\eps}\cdot |L|$, then, with probability at least $1-1/n^4$, $c=\opt$ holds. The running time of the algorithm is $O\left (mn^{1-\eps+o(1)}\cdot \log W'\right )$.
\end{theorem}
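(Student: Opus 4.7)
The plan is to exploit the two structural facts sketched in the introduction: when $G$ is sparse ($m \le n^{2-2\eps}$), the set of ``high-degree'' vertices is small, and when $|S| \ge n^{1-\eps}|L|$ at least one vertex of $L$ must be high-degree. Concretely, I would first set $T := \{\, v \in V(G) : \deg_G(v) \ge n^{1-\eps} \,\}$. Since $\sum_v \deg_G(v) = 2m \le 2n^{2-2\eps}$, we immediately obtain $|T| \le 2n^{1-\eps}$. The key observation, provable in two lines, is that $T \cap L \neq \emptyset$ under the hypothesis $|S| \ge n^{1-\eps}|L|$: every $v \in S$ must have at least one neighbor in $L$, since otherwise moving $v$ from $S$ to $R$ would not introduce any $L$-to-$R$ edge and would yield a vertex-cut of strictly smaller weight, contradicting minimality of $(L,S,R)$. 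Summing degrees over $L$ gives $\sum_{u \in L}\deg_G(u) \ge |S| \ge n^{1-\eps}|L|$, so some $u \in L$ has degree at least $n^{1-\eps}$ and hence lies in $T$.

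Next, for each $x \in T$, I would sample a ``partner'' $y_x$ from $V(G) \setminus (\{x\} \cup N_G(x))$ with probability proportional to the vertex weight $w(\cdot)$; by construction this guarantees $(x, y_x) \notin E(G)$. The crucial probabilistic estimate is that for $x \in L$, $\Pr[y_x \in R] \ge 1/3$. This uses \Cref{claim: neighbors in S}: when $x \in L$, no edge leaves $L$ toward $R$, so all of $R$ sits in the sampling set; the ``bad'' weight inside the sampling set is bounded by $w(L \setminus \{x\}) + w(S \setminus N_G(x)) \le 2 w(L) \le 2 w(R)$, where the middle inequality uses $w(S \setminus N_G(x)) \le w(L)$ from \Cref{claim: neighbors in S} and the last uses our convention $w(L) \le w(R)$. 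Hence the $R$-weight is at least a $1/3$-fraction of the total sampling weight.

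With these pieces in place, the algorithm, for each $x \in T$, samples $y_x$ as above and invokes \Cref{cor: min_vertex_cut} to compute a minimum $x$-$y_x$ vertex-cut of value $c_x$ in time $O(m^{1+o(1)} \log W')$; it then returns the triple $(x^*, y_{x^*}, c^*)$ achieving the smallest $c_x$. To drive the failure probability down to $1/n^4$, the entire sampling-and-cut step is repeated $O(\log n)$ times independently, and the globally best triple is returned. Every returned $c^*$ is the value of a genuine $x^*$-$y_{x^*}$ vertex-cut in $G$, hence $c^* \ge \opt$; on the high-probability event that in some trial the guaranteed vertex $x^0 \in T \cap L$ satisfies $y_{x^0} \in R$, the cut $(L, S, R)$ itself separates $x^0$ from $y_{x^0}$, so the minimum $x^0$-$y_{x^0}$ vertex-cut has weight exactly $\opt$ and hence $c^* = \opt$. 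The total running time is $|T| \cdot O(\log n) \cdot O(m^{1+o(1)} \log W') = O(m n^{1-\eps+o(1)} \log W')$, matching the claim.

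The main subtle step will be the choice of \emph{weighted} (rather than uniform) sampling of $y_x$: this is precisely what converts the weight-based bound of \Cref{claim: neighbors in S} into a constant-probability guarantee on $y_x \in R$. A uniform sampler on $V(G) \setminus (\{x\} \cup N_G(x))$ would be insufficient in regimes where $R$ has small cardinality but large weight. Degenerate corner cases (e.g.\ $x$ adjacent to all other vertices, or $T$ empty) can be discarded: in either case the hypothesis $|S| \ge n^{1-\eps}|L|$ forces some \emph{other} $x^0 \in T \cap L$ whose sampling set is non-empty and contains $R$.
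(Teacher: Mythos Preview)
Your proposal is correct and follows essentially the same approach as the paper: define $T$ as the high-degree vertices, show $T\cap L\neq\emptyset$ via the pigeonhole argument on edges from $S$ into $L$, and for each $x\in T$ sample a partner $y_x$ from $V(G)\setminus(\{x\}\cup N_G(x))$ with probability proportional to weight, using \Cref{claim: neighbors in S} to get $\Pr[y_x\in R]\ge 1/3$ when $x\in L$, then amplify with $O(\log n)$ repetitions. The paper organizes the $O(\log n)$ repetitions inside a per-vertex subroutine (its \Cref{lem: cut if vertex of L}) rather than repeating the full loop over $T$, but this is an immaterial difference.
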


The remainder of this subsection is dedicated to the proof of \Cref{thm: alg 2}. 

Let $T$ be the set of all vertices $v\in V(G)$ with $\deg_G(v)\geq n^{1-\eps}$. Since $\sum_{v\in V(G)}\deg_G(v)=2m\leq 2n^{2-2\eps}$, we get that $|T|\leq \frac{2n^{2-2\eps}}{n^{1-\eps}}\leq 2n^{1-\eps}$. We use the following simple observation.

\begin{observation}\label{obs: high degree in L}
	Suppose there is a global minimum vertex-cut $(L,S,R)$ in $G$ with $|S|\geq n^{1-\eps}\cdot |L|$. Then $L\cap T\neq \emptyset$ must hold.
\end{observation}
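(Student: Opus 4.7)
The plan is to prove that under the hypothesis $|S| \geq n^{1-\eps}\cdot |L|$, the set $L$ must contain at least one high-degree vertex, namely some vertex of degree at least $n^{1-\eps}$, which by definition lies in $T$. The key tool is a simple pigeonhole/averaging argument applied to the bipartite subgraph of edges between $L$ and $S$.

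The first step is to show that \emph{every} vertex $s \in S$ has at least one neighbor in $L$. This is a standard minimality argument: if some $s \in S$ had no neighbor in $L$, then $(L, S \setminus \{s\}, R \cup \{s\})$ would still be a valid vertex-cut in $G$ (the sets $L$ and $R \cup \{s\}$ are non-empty, and no edge connects them, since the only edges incident to $s$ go to $S$ or $R$). Because of the standing assumption in Section~\ref{sec: the_algorithm} that all vertex weights are strictly positive (indeed, $w(s) \geq 1$ after the weight-shifting transformation), this new cut has strictly smaller weight, contradicting the minimality of $(L,S,R)$. By a symmetric argument, every $s \in S$ also has at least one neighbor in $R$, though only the first direction is needed here.

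Given that each vertex of $S$ has a neighbor in $L$, I would conclude that $|E_G(L,S)| \geq |S|$. Double-counting the edges incident to $L$ gives
\[
\sum_{v \in L} \deg_G(v) \;\geq\; |E_G(L,S)| \;\geq\; |S| \;\geq\; n^{1-\eps}\cdot |L|,
\]
so by averaging at least one vertex $v \in L$ has $\deg_G(v) \geq n^{1-\eps}$. By the definition of $T$, this $v$ lies in $T$, and hence $L \cap T \neq \emptyset$.

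There is no substantive obstacle; the only subtle point is making sure the minimality step produces a genuine contradiction, which relies on the fact that after the preprocessing step that sets $w'(v) = n^2 w(v) + 1$ we have $w(v) \geq 1$ for all $v$, so removing $s$ from $S$ strictly decreases the cut weight.
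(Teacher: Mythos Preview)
Your proof is correct and follows essentially the same approach as the paper: both argue by minimality that every vertex of $S$ has a neighbor in $L$, then average over $L$ to find a vertex of degree at least $|S|/|L| \geq n^{1-\eps}$. Your remark about needing strictly positive weights for the minimality step is exactly the point the paper relies on implicitly.
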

\begin{proof}
	Let  $(L,S,R)$ be a global minimum vertex-cut in $G$ with $|S|\geq n^{1-\eps}\cdot |L|$. We claim that, for every vertex $v\in S$, there must be an edge connecting $v$ to some vertex of $L$. Indeed, assume otherwise. Let $v\in S$ be any vertex that has no neighbors in $L$. Consider the partition $(L,S',R')$ of vertices of $G$ with $S'=S\setminus \set{v}$ and $R'=R\cup \set{v}$. It is easy to verify that $(L,S',R')$ is a valid vertex-cut in $G$, and moreover that $w(S')<w(S)$,
contradicting the assumption that $(L,S,R)$ is a global minimum vertex-cut in $G$.
	
We conclude that the total number of edges in $G$ connecting the vertices of $L$ to the vertices of $S$ is at least $|S|$, and so there must be some vertex $u\in L$ with $\deg_G(u)\geq\frac{|S|}{|L|}\geq n^{1-\eps}$, so $u\in T$ must hold.
\end{proof}

The following lemma provides a central tool for the proof of \Cref{thm: alg 2}.

\begin{lemma}\label{lem: cut if vertex of L}
	There is a randomized algorithm, that, given a simple $n$-vertex and $m$-edge graph $G$ with integral weights $1\leq w(v)\leq W'$ on vertices $v\in V(G)$, and a vertex $x\in V(G)$, returns a vertex $y\in V(G)$, and the value $c$ of the minimum $x$-$y$ vertex-cut in $G$. Moreover, if there is a global minimum vertex-cut $(L,S,R)$ in $G$ with $x\in L$, then with probability at least $1-1/n^4$, $c=\opt$ holds. The running time of the algorithm is $O\left (m^{1+o(1)}\cdot \log W'\right )$.
\end{lemma}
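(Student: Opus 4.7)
The plan is to sample a candidate vertex $y$ from $V'=V(G)\setminus(\{x\}\cup N_G(x))$ with probability proportional to its weight, use \Cref{claim: neighbors in S} to argue that if $x\in L$ then $y\in R$ with constant probability, and boost this to $1-1/n^4$ by $O(\log n)$ independent repetitions. For each sampled $y$ we invoke \Cref{cor: min_vertex_cut} to compute the value of the minimum $x$--$y$ vertex-cut, and return the pair that achieves the smallest such value.

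More concretely, the algorithm first computes $V'$ in $O(n+m)$ time. If $V'=\emptyset$ then $x$ is adjacent to every other vertex, which is incompatible with $x\in L$ for any vertex-cut $(L,S,R)$ (the vertices of $R$ would have to lie outside $N_G(x)$), so the hypothesis of the lemma is vacuously false; in this case we output an arbitrary $y\neq x$ together with a trivial value (e.g.\ $c=\wmax$). Otherwise we set $k=\lceil 100\log n\rceil$. For $i=1,\dots,k$ we sample $y_i\in V'$ independently, each with probability $w(y_i)/w(V')$, and call the algorithm from \Cref{cor: min_vertex_cut} on the vertex sets $\{x\}$ and $\{y_i\}$ to obtain the value $c_i$ of a minimum $x$--$y_i$ vertex-cut. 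We then return $(y_{i^*},c_{i^*})$ where $c_{i^*}=\min_i c_i$.

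For correctness, suppose $(L,S,R)$ is a global minimum vertex-cut with $x\in L$. Since there is no edge of $G$ between $L$ and $R$, we have $R\subseteq V'$. Moreover, $w(L\cap V')\le w(L)\le w(R)$ by our standing convention that $w(L)\le w(R)$, and \Cref{claim: neighbors in S} yields $w(S\cap V')=w(S\setminus N_G(x))\le w(L)\le w(R)$. Summing gives $w(V')\le 3w(R)$, so $\Pr[y_i\in R]\ge 1/3$ in each trial. Hence the probability that no $y_i$ lands in $R$ is at most $(2/3)^k\le 1/n^4$. Whenever some $y_i\in R$, the triple $(L,S,R)$ is a valid $x$--$y_i$ vertex-cut of value $\opt$, so $c_i\le\opt$; conversely every $c_j\ge\opt$ because $\opt$ is the global minimum. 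Thus $c_{i^*}=\opt$ with the required probability, and the value returned is always the correct value of the minimum $x$--$y_{i^*}$ vertex-cut by construction.

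The running time of each of the $O(\log n)$ iterations is dominated by the call to \Cref{cor: min_vertex_cut}, which is $O(m^{1+o(1)}\log W')$, so the total running time is $O(m^{1+o(1)}\log W')$ after absorbing the $\log n$ factor into $m^{o(1)}$. There is no real obstacle here: the proof is a clean application of \Cref{claim: neighbors in S} combined with the almost-linear-time min-vertex-cut subroutine, and the only mild subtlety is that weighted (rather than uniform) sampling is needed so that the three weight buckets $L\cap V'$, $S\cap V'$, and $R$ can each be bounded against $w(R)$, yielding the $\Omega(1)$ success probability per trial.
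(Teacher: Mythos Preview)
Your proof is correct and follows essentially the same approach as the paper: both sample $y$ from $V(G)\setminus(\{x\}\cup N_G(x))$ with probability proportional to weight, use \Cref{claim: neighbors in S} to bound $w(S\setminus N_G(x))\le w(L)\le w(R)$ and conclude $\Pr[y\in R]\ge 1/3$, then boost with $\Theta(\log n)$ repetitions and calls to \Cref{cor: min_vertex_cut}. The only cosmetic difference is that you explicitly handle the degenerate case $V'=\emptyset$, which the paper omits.
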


We prove \Cref{lem: cut if vertex of L} below, after we complete the proof of \Cref{thm: alg 2} using it. Our algorithm starts by computing the set $T$ of vertices of $G$, whose degrees in $G$ are at least $n^{1-\eps}$. Clearly, $T$ can be computed in time $O(m)$, and, as observed already, $|T|\leq 2n^{1-\eps}$ must hold. Next, we apply the algorithm from \Cref{lem: cut if vertex of L} to every vertex $x\in T$ in turn, and we let $(c_x,y_x)$ be the outcome of this application of the algorithm from the lemma. We then let $x'\in T$ be the vertex for which $c_{x'}$ is minimized, and return $(c_{x'},x',y_{x'})$ as the outcome of the algorithm. \Cref{lem: cut if vertex of L} guarantees that $c_{x'}$ is the value of the minimum $x'$-$y_{x'}$ vertex cut in $G$. Assume now that 
there is a global minimum vertex cut $(L,S,R)$ in $G$ with $|S|\geq n^{1-\eps}\cdot |L|$.
Then, from \Cref{obs: high degree in L}, $L\cap T\neq \emptyset$ must hold. Let $v$ be any vertex in $L\cap T$. Then, when the algorithm from \Cref{lem: cut if vertex of L} was applied to vertex $v$, with probability at least $1-1/n^4$, the value $c_{v}$ of the minimum $v$-$y_v$ vertex cut that it returned is equal to $\opt$. In this case, we are guaranteed that the value $c$ that our algorithm returns is equal to $\opt$ as well.
Since $|T|\leq n^{1-\eps}$, and since the running time of the algorithm from \Cref{lem: cut if vertex of L} is $O\left (m^{1+o(1)}\cdot \log W' \right )$, we get that the running time of our algorithm is $O\left (mn^{1-\eps+o(1)}\cdot \log W'\right )$.

In order to complete the proof of \Cref{thm: alg 2}, it now remains to prove \Cref{lem: cut if vertex of L}.

\begin{proofof}{\Cref{lem: cut if vertex of L}}
	Let $Z=V(G)\setminus\left(\set{x}\cup N_G(x)\right )$.  Let $\aset''$ be an algorithm that selects a vertex $y\in Z$ at random, where the probability to select a vertex $v\in V(G)$ is $\frac{w(v)}{w(Z)}$. It then uses the algorithm from \Cref{cor: min_vertex_cut} to compute the value $c$ of minimum $x$-$y$ vertex cut in $G$, in time $O\left (m^{1+o(1)}\cdot \log W'\right )$, and then returns vertex $y$ and the value $c$. 
	We use the following observation.

	\begin{observation}\label{obs: alg2 success}
		Suppose there is a global minimum vertex cut $(L,S,R)$ in $G$ with $x\in L$. Let $(c,y)$ be the outcome of the algorithm $\aset''$. Then with probability at least $\frac{1}{3}$, $y\in R$ and $c=\opt$ holds.
	\end{observation}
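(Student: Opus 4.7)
\textbf{Proof proposal for \Cref{obs: alg2 success}.} The plan is to show that $R \subseteq Z$ and that $w(R)$ constitutes at least a $1/3$ fraction of $w(Z)$; then the sampling rule of $\aset''$ will place $y$ in $R$ with probability at least $1/3$, and whenever $y\in R$ the triple $(L,S,R)$ certifies $c=\opt$.

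First I would observe the containment $R\subseteq Z$. Since $(L,S,R)$ is a vertex-cut and $x\in L$, there is no edge between $x$ and any vertex of $R$, so $N_G(x)\cap R=\emptyset$, and $x\notin R$. Hence $R\subseteq V(G)\setminus(\{x\}\cup N_G(x))=Z$, and in particular $w(Z\cap R)=w(R)$. Next, I would decompose $w(Z)=w(Z\cap L)+w(Z\cap S)+w(Z\cap R)$ and bound each term. Clearly $w(Z\cap L)\le w(L)-w(x)\le w(L)$, since $x\notin Z$. For the middle term, by \Cref{claim: neighbors in S} applied to the vertex $x\in L$, we have $w(S\setminus N_G(x))\le w(L)$, so $w(Z\cap S)\le w(L)$. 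Combining these with the WLOG assumption $w(L)\le w(R)$ yields
\[
w(Z)\;\le\; 2w(L)+w(R)\;\le\; 3w(R)\;=\;3w(Z\cap R).
\]
Since $\aset''$ samples $y\in Z$ with probability $w(y)/w(Z)$, this gives
\[
\Pr[y\in R]\;=\;\frac{w(Z\cap R)}{w(Z)}\;\ge\;\frac{1}{3}.
\]

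Finally, conditioned on $y\in R$, I would argue that $c=\opt$. On the one hand, $(L,S,R)$ itself is an $x$-$y$ vertex-cut (with $x\in L$, $y\in R$, and no $L$-$R$ edges), so the minimum $x$-$y$ vertex-cut value is at most $w(S)=\opt$. On the other hand, any $x$-$y$ vertex-cut is in particular a global vertex-cut of $G$, so its value is at least $\opt$. Hence the value $c$ returned by the minimum-vertex-cut subroutine from \Cref{cor: min_vertex_cut} equals $\opt$. Note that $y$ is necessarily distinct from $x$ and non-adjacent to $x$ since $y\in Z$, so the minimum $x$-$y$ vertex-cut is well-defined.

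I do not expect any single step to be a serious obstacle; the only subtle point is making sure the decomposition of $w(Z)$ uses all three ingredients (the trivial bound on $w(Z\cap L)$, the key bound on $w(Z\cap S)$ coming from \Cref{claim: neighbors in S}, and the WLOG inequality $w(L)\le w(R)$) to get the factor of $3$, so that sampling proportional to weight (rather than uniformly) is exactly what is needed to convert the weight bound into a probability bound.
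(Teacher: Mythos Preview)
Your proposal is correct and follows essentially the same argument as the paper: bound $w(Z)\le w(L)+w(S\setminus N_G(x))+w(R)\le 2w(L)+w(R)\le 3w(R)$ via \Cref{claim: neighbors in S} and the convention $w(L)\le w(R)$, then conclude $\Pr[y\in R]\ge 1/3$ from the weighted sampling rule. Your version is slightly more explicit (spelling out $R\subseteq Z$ and the well-definedness of the $x$-$y$ cut), but the substance is identical.
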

\begin{proof}
	Consider the global minimum cut $(L,S,R)$, and assume that $x\in L$. Let $S'=S\setminus N_G(x)$. Then from \Cref{claim: neighbors in S},  $w(S')\leq w(L)$ must hold. 
	
	We conclude that $w(Z)\leq w(L)+w(R)+w(S')\leq 2w(L)+w(R)\leq 3w(R)$, since we always assume by default that, in any vertex cut $(L,S,R)$, $w(L)\leq w(R)$ holds. Therefore,  the probability that Algorithm $\aset''$ chooses a vertex $y\in R$ is $\frac{w(R)}{w(Z)}\geq \frac{1}{3}$. Clearly, if $y\in R$ holds, then the value $c$ of the minimum $x$-$y$ vertex cut in $G$ is equal to $\opt$.	
\end{proof}

We are now ready to complete the proof of \Cref{lem: cut if vertex of L}. We execute Algorithm $\aset''$ $\Theta(\log n)$ times, and output the smallest value $c$ that the algorithm returned, together with the corresponding vertex $y$. It is immediate to verify that $c$ is indeed the value of the minimum $x$-$y$ vertex cut in $G$. Assume now that there is a global minimum vertex cut $(L,S,R)$ in $G$ with $x\in L$. We say that an application of Algorithm $\aset''$ is \emph{successful}, if it returns a value $c'=\opt$. From \Cref{obs: alg2 success}, the probability that a single application of the algorithm is successful is at least $1/3$. It is then easy to verify that the probability that every application of Algorithm $\aset''$ was unsuccessful is at most $1/n^4$. If at least one application of the algorithm was successful, then we are guaranteed that the value $c$ returned by our algorithm is equal to $\opt$. Since the running time of Algorithm $\aset''$ is $O(m^{1+o(1)}\cdot \log W')$, the total running time of the entire algorithm is $O(m^{1+o(1)}\cdot \log W')$.
\end{proofof}

\subsection{Summary: an Algorithm for Non-Dense Graphs}

By combining Algorithms $\alg_1$ and $\alg_2$, we obtain an algorithm for global minimum cut on graphs that are not very dense, that is summarized in the following corollary.

\begin{corollary}\label{cor: alg non dense}
	There is a randomized algorithm, whose input is a simple undirected $n$-vertex and $m$-edge graph $G$ with integral weights $1\leq w(v)\leq W'$ on vertices $v\in V(G)$, and a parameter $0<\eps\leq 1/2$, such that $m\leq n^{2-2\eps}$ holds. 
	The algorithm returns two distinct vertices $x,y\in V(G)$ with $(x,y)\not\in E(G)$, and a value $c$ that is at least as large as the value of the minimum $x$-$y$ vertex-cut in $G$.
	Moreover, with probability at least $1-1/n^4$, $c=\opt$ holds. The running time of the algorithm is $O\left (mn^{1-\eps+o(1)}\cdot \log W'\right )$.
\end{corollary}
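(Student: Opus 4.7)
\begin{proofof}{\Cref{cor: alg non dense} (sketch of plan)}
The plan is simply to run both $\alg_1$ and $\alg_2$ on $G$ with the given parameter $\eps$, and output whichever returned triple $(c,x,y)$ has the smaller value of $c$. Each algorithm, by \Cref{thm: alg 1} and \Cref{thm: alg 2}, returns a pair $x,y\in V(G)$ with $(x,y)\notin E(G)$ together with a value $c$ that upper bounds the minimum $x$-$y$ vertex-cut in $G$, so the same holds for the minimum of the two outputs. Since any minimum $x$-$y$ vertex-cut is, in particular, a global vertex-cut, each $c_j\geq \opt$, and hence $\min\{c_1,c_2\}\geq \opt$ as well.

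To control the success probability, fix any global minimum vertex-cut $(L,S,R)$ and consider the partition $\{U_i\}_{0\le i\le \ceil{\log W'}}$ of $V(G)$ into weight classes. I will split into two cases that together cover every possible optimal cut. In the first case, there exists some $i$ with $L_i\ne\emptyset$ and $|S_i|\le |L_i|\cdot n^{1-\eps}$; then by \Cref{thm: alg 1}, $\alg_1$ returns $c_1=\opt$ with probability at least $1-1/n^4$. In the complementary case, for every $i$ with $L_i\ne\emptyset$ we have $|S_i|>|L_i|\cdot n^{1-\eps}$; summing over those indices (and using $L=\bigcup_i L_i$) gives
\[
|S|\;\ge\;\sum_{i:\,L_i\ne\emptyset}|S_i|\;>\;n^{1-\eps}\sum_{i:\,L_i\ne\emptyset}|L_i|\;=\;n^{1-\eps}\cdot |L|.
\]
Since we also assume $m\le n^{2-2\eps}$, the hypothesis of \Cref{thm: alg 2} is satisfied, and $\alg_2$ returns $c_2=\opt$ with probability at least $1-1/n^4$. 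In either case, with probability at least $1-1/n^4$, at least one of $c_1,c_2$ equals $\opt$, so the minimum equals $\opt$.

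For the running time, \Cref{thm: alg 1} takes time $O(mn^{1-\eps+o(1)}\cdot\log^2 W')$ and \Cref{thm: alg 2} takes time $O(mn^{1-\eps+o(1)}\cdot \log W')$; their sum is $O(mn^{1-\eps+o(1)}\cdot\log^{O(1)}W')$, which is absorbed into the stated bound (the extra $\log$ factor is swallowed by the $n^{o(1)}$). There is essentially no obstacle in this proof beyond the case analysis above: all the heavy lifting has already been carried out in establishing \Cref{thm: alg 1} and \Cref{thm: alg 2}, and the key observation is just that the conditions ``some weight class has $|S_i|\le |L_i| n^{1-\eps}$'' and ``$|S|\ge n^{1-\eps}|L|$ globally'' are complementary in the sense required, so one of the two subroutines is guaranteed to apply to any fixed optimal cut.
\end{proofof}
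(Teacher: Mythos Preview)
Your proposal is correct and follows essentially the same approach as the paper: run both $\alg_1$ and $\alg_2$ and return the smaller output, then do a case split on whether some weight class $i$ satisfies $L_i\neq\emptyset$ and $|S_i|\le |L_i|\cdot n^{1-\eps}$ (equivalently, on whether $|S|\ge n^{1-\eps}|L|$), invoking \Cref{thm: alg 1} or \Cref{thm: alg 2} respectively. The only cosmetic difference is that the paper phrases the dichotomy as ``$|S|\ge n^{1-\eps}|L|$ vs.\ $|S|<n^{1-\eps}|L|$'' and then derives the existence of a good index $i$ in the second case, whereas you state the dichotomy directly in terms of the indices and then derive $|S|>n^{1-\eps}|L|$ in the complementary case; the two are equivalent.
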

\begin{proof}
Assume that we are given as input  an undirected $n$-vertex and $m$-edge graph $G$ with integral weights $1\leq w(v)\leq W'$ on vertices $v\in V(G)$, and a parameter $0<\eps\leq \half$, such that $m\leq n^{2-2\eps}$ holds. Consider the following algorithm, that we denote by $\alg^*$: the algorithm starts by applying Algorithm $\alg_1$ from \Cref{thm: alg 1} to $G$, and denotes by $(c,x,y)$ its outcome. It then applies Algorithm $\alg_2$ from \Cref{thm: alg 2} to $G$, and denotes by $(c',x',y')$ its outcome. If $c'\leq c$, it returns  value $c^*=c'$, and vertices $x^*=x'$ and $y^*=y'$; otherwise, it returns value $c^*=c$ and vertices $x^*=x$ and $y^*=y$. Since the running times of both Algorithms $\alg_1$ and $\alg_2$ are bounded by $O(mn^{1-\eps+o(1)}\cdot \log W')$, we get that the running time of Algorithm $\aset^*$ is $O(mn^{1-\eps+o(1)}\cdot \log W')$. It is easy to verify that the value of the minimum $x^*$-$y^*$ cut in $G$ is indeed at most $c^*$. 

Consider now any global minimum cut $(L,S,R)$ in $G$. Assume first that $|S|\geq n^{1-\eps}\cdot |L|$ holds. Then, from \Cref{thm: alg 2}, with probability at least $1-1/n^4$, $c'=\opt$ holds, and in this case, we are guaranteed that $c^*=\opt$.

Assume now that $|S|<n^{1-\eps}\cdot |L|$. We claim that, in this case, there must be an integer $0\leq i\leq \ceil{\log W'}$, such that $L_i\neq\emptyset$ and $|S_i|\leq |L_i|\cdot n^{1-\eps}$. Indeed, assume otherwise. Then for all $0\leq i\leq \ceil{\log W'}$, $|S_i|\geq|L_i|\cdot n^{1-\eps}$ must hold, and so:

\[|S|=\sum_{i=0}^{\ceil{\log W'}}|S_i| \geq  n^{1-\eps}\cdot \sum_{i=0}^{\ceil{\log W'}}|L_i|= |L|\cdot n^{1-\eps}, \]
a contradiction.
\Cref{thm: alg 1} then guarantees that, in this case,  with probability at least $1-1/n^4$, $c=\opt$ holds, and so we are guaranteed that $c^*=\opt$.

In either case, we get that, with probability at least $1-1/n^4$, $c^*=\opt$ holds.
\end{proof}

\subsection{Algorithm $\alg_3$: when $G$ is Dense and $L$ is Small}

Our final and the most involved algorithm, $\alg_3$, will only be used in the case where the input graph $G$ is dense; we summarize it in the following theorem. The theorem statement uses the parameter $\wmax=\wmax(G)$, defined in \Cref{subsec: split graph}.

\begin{theorem}\label{thm: alg 3}
	There is a randomized algorithm, whose input is a simple undirected $n$-vertex graph $G$ with integral weights $1\leq w(v)\leq W'$ on vertices $v\in V(G)$. 
	The algorithm returns two distinct vertices $x,y\in V(G)$ with $(x,y)\not\in E(G)$, and a value $c$ that is at least as large as the value of the minimum $x$-$y$ vertex-cut in $G$.
Moreover, if there is a global minimum vertex-cut $(L,S,R)$ in $G$, such that, for all $0\leq i\leq \ceil{\log W'}$, $|S_i|\geq n^{44/45}\cdot |L_i|$ holds, then, with probability at least $\frac{1}{2^{11}\cdot \log^3n\cdot \log W'}$, $c=\opt$ holds. The running time of the algorithm is 
$O\left (n^{2.96}\cdot (\log(\wmax))^{O(1)}\right )$. 
\end{theorem}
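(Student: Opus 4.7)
The plan is to realize \Cref{thm: alg 3} through the terminal-sampling paradigm outlined in \Cref{sec: results}, reducing the theorem to the main technical subroutine whose construction is deferred to \Cref{sec: main subroutine}. Recall that, given a terminal set $T \subseteq V(G)$ with a designated $x \in T$ and the augmented split graph $G'_x$ (obtained from the split graph $G'$ of $G$ by attaching a sink vertex $t$ with regular edges from $T_x = T \setminus (\{x\} \cup N_G(x))$ into $t$), that subroutine returns a value $c_x$ that is always an upper bound on the maximum $x^{\out}$-$t$ flow in $G'_x$, and that equals this maximum flow with probability at least $1/2$ when $(x,T)$ is a good pair.

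First I would build the split graph $G'$ of $G$ once via \Cref{subsec: split graph}. To assemble a good terminal set I iterate over two random guesses: an index $i^\star \in \{0, 1, \ldots, \ceil{\log W'}\}$ indicating a weight bucket with $L_{i^\star} \neq \emptyset$, and an integer $j \in \{0, \ldots, \ceil{\log n}\}$ with $|L_{i^\star}| \in [2^j, 2^{j+1})$. For each guess pair $(i^\star, j)$, I construct $T$ by independently adding every vertex $v \in U_{i^\star}$ with probability $\Theta(1/2^j)$ and every vertex in $V(G)\setminus U_{i^\star}$ with a suitably smaller probability so that $T_x \neq \emptyset$ with constant probability, without polluting $T$ with too many vertices of $S \setminus N_G(x)$. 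Since inside $U_{i^\star}$ all weights lie within a factor $2$ of $2^{i^\star}$, uniform sampling inside the bucket is, up to a constant, equivalent to sampling proportional to weight. Combining this with \Cref{claim: neighbors in S}, which bounds $w(S \setminus N_G(x)) \leq w(L)$, and with the hypothesis $|S_{i^\star}| \geq n^{44/45} |L_{i^\star}|$, I would argue that, conditioned on the correct guess $(i^\star, j)$, the sampled $T$ is a good set of terminals with probability $\Omega(1/\log^2 n)$. An additional $1/(\log W' \cdot \log n)$ factor accounts for the two guesses, yielding probability $\Omega(1/(\log^3 n \cdot \log W'))$ that some iterate produces a good $T$.

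Given $T$, I invoke the main subroutine on every $x \in T$, obtaining $\{c_x\}_{x \in T}$, and output $(x^\star, y^\star, c_{x^\star})$ with $x^\star = \arg\min_{x \in T} c_x$ and $y^\star$ any vertex of $T_{x^\star}$. The first guarantee of the theorem is immediate: $y^\star \in T_{x^\star}$ forces $(x^\star, y^\star) \notin E(G)$, and the minimum $x^\star$-$y^\star$ vertex-cut is at most the minimum $x^\star$-$T_{x^\star}$ vertex-cut, which is at most $c_{x^\star}$. For the accuracy guarantee, when $T$ is good with unique $\hat x \in T \cap L$, the minimum $\hat x$-$T_{\hat x}$ cut equals $\opt$, the subroutine returns $c_{\hat x} = \opt$ with probability at least $1/2$, and every other $c_x$ is automatically at least $\opt$ because it upper-bounds a vertex-cut separating the non-adjacent pair $x, y$ for any $y \in T_x$. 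In this joint event $c_{x^\star} = \opt$; multiplying the sampling success probability by $1/2$ and absorbing additional constants delivers the stated $1/(2^{11} \log^3 n \cdot \log W')$.

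The main obstacle is meeting the running-time bound of $O(n^{2.96+o(1)}\cdot (\log \wmax)^{O(1)})$. A naive per-terminal invocation of the subroutine is fatal, since $|T|$ can be as large as $\Theta(n)$ and the subroutine is too expensive to run in full for each terminal. To meet the target I would pipeline the $z = O(\polylog n)$ phases of the subroutine across all terminals simultaneously, as previewed in the overview: in iteration $i$, for every $x \in T$ the subroutine's execution is simulated up to the point where it would issue its $i$th query to the subgraph oracle, all of these queries are then answered in a single bulk call to the oracle in time $O(n^{2.96+o(1)})$ via the triangle-listing reduction, and in the next iteration each simulation continues from where it paused using the batched responses. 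The remaining per-terminal work inside a phase (max-flow and min-cost max-flow computations on sparsified graphs via \Cref{thm: maxflow}, local cut computations via \Cref{cor: mincut}) is controlled by the sparsification guarantees of the subroutine and charged against the same $O(n^{2.96+o(1)})$ budget. Multiplying by the $O(\polylog n)$ phases, the two guess loops, and the $(\log \wmax)^{O(1)}$ overhead from the flow algorithms yields the claimed overall running time.
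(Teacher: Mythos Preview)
Your outline matches the paper's architecture almost exactly: sample a terminal set $T$, run the main subroutine of \Cref{thm: main subroutine} on every $s\in T$, pipeline the $z=O(\log n\log\wmax)$ phases across all terminals so that the $i$th subgraph-oracle query from each terminal is batched and answered in bulk via \Cref{lem: oracle in bulk}, and finally output $(x^\star,y^\star,c_{x^\star})$ with $x^\star=\arg\min_sc_s$ and $y^\star\in T_{x^\star}$. The correctness and running-time skeleton you describe are what the paper does.

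The one place where your proposal diverges, and where it has a genuine gap, is the terminal sampling. You guess a weight bucket $i^\star$ with $L_{i^\star}\neq\emptyset$ and then sample inside $U_{i^\star}$ uniformly at rate $\Theta(1/|L_{i^\star}|)$. The trouble is controlling $T\cap(S\setminus N_G(x))$ inside that bucket. From \Cref{claim: neighbors in S} one has $w(S\setminus N_G(x))\le w(L)$, hence
\[
|S_{i^\star}\setminus N_G(x)|\ \le\ \frac{w(S\setminus N_G(x))}{2^{i^\star}}\ \le\ \frac{w(L)}{2^{i^\star}},
\]
but $w(L)$ may be dominated by buckets other than $i^\star$, so $w(L)/2^{i^\star}$ can vastly exceed $|L_{i^\star}|$. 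Sampling at rate $\Theta(1/|L_{i^\star}|)$ then gives an expected $w(L)/(2^{i^\star}|L_{i^\star}|)$ hits in $S_{i^\star}\setminus N_G(x)$, and there is no uniform bound on this ratio. Your vague ``suitably smaller probability'' outside $U_{i^\star}$ does not help, because the problematic vertices are \emph{inside} $U_{i^\star}$. The same issue resurfaces for the event $T\cap R\neq\emptyset$: nothing in the hypothesis forces $|R_{i^\star}|$ to be comparable to $|L_{i^\star}|$, so uniform sampling inside a single bucket need not hit $R$ at all.

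The paper resolves this by guessing $w(L)$ rather than $|L_{i^\star}|$: it picks $i\in[1,\lceil\log(nW')\rceil]$ uniformly and includes each $v\in V(G)$ in $T$ with probability $w(v)/2^{i+1}$. When $2^{i}\approx w(L)$, the expected total weight sampled from $(L\cup(S\setminus N_G(x)))\setminus\{x\}$ is at most $2w(L)/2^{i+1}\le 1$, giving the event $\{|T\cap L|=1,\ T\cap S\subseteq N_G(x)\}$ constant probability, and $w(R)\ge w(L)$ gives $T\cap R\neq\emptyset$ with constant probability as well (\Cref{claim: good set of terminals}). This yields $\Pr[T\text{ good}]\ge 1/(2^{11}\log n\log W')$, which is actually stronger than the $1/(2^{11}\log^3 n\log W')$ stated in \Cref{thm: alg 3}. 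Replacing your bucket-uniform scheme with this weight-proportional sampling makes the rest of your outline go through verbatim.
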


The majority of the remainder of the paper is dedicated to the proof of \Cref{thm: alg 3}.
We now complete \Cref{thm: main} using it.

\subsection{Completing the Proof of~\Cref{thm: main}}
\label{subsec: finishing the alg}

Recall that we are given as input simple undirected graph $G$ with integral  weights $1\leq w(v) \leq W'$ on its vertices, where $W'\leq O(n^2\cdot W)$. 
Our goal is to compute a global minimum vertex-cut in $G$, whose value is denoted by $\opt$. We start by presenting an algorithm with running time $O\left (mn^{0.99+o(1)}\cdot (\log W)^{O(1)}\right )$, followed by an algorithm with running time $O\left(m^{3/2+o(1)}\cdot (\log W)^{O(1)}\right )$.

\subsubsection*{Algorithm with Running Time $O\left (mn^{0.99+o(1)}\cdot (\log W)^{O(1)}\right )$}

Let $\eps'=\frac{1}{100}$. We consider two cases.
The first case happens if $m\leq n^{2-2\eps'}=n^{1.98}$ holds. In this case, we apply the algorithm from \Cref{cor: alg non dense}, and we denote by $(c,x,y)$ its output. We then compute a minimum $x$-$y$ vertex-cut $(X,C,Y)$ in $G$ using the algorithm from \Cref{cor: min_vertex_cut} in time $O(m^{1+o(1)}\cdot \log W')$, and return the cut $(X,C,Y)$ as the algorithm's output. Notice that, from \Cref{cor: alg non dense}, with probability at least $1-1/n^4$, $(X,C,Y)$ is indeed a global minimum vertex-cut in $G$. The running time of the algorithm from \Cref{cor: alg non dense} is $O(mn^{1-\eps'+o(1)}\cdot \log W')=O(mn^{0.99+o(1)}\cdot \log W')$, and so the total running time of the algorithm in this case is $O(mn^{0.99+o(1)}\cdot \log W')$.

It now remains to consider the second case, where $m> n^{1.98}$ holds. We let $\eps=1/45$.
Consider the following algorithm, that we denote by $\alg^{**}$: the algorithm starts by applying Algorithm $\alg_1$ from \Cref{thm: alg 1} to $G$, and denotes by $(c_1,x_1,y_1)$ its outcome. It then
performs $\hat z=\ceil{2^{18}\cdot \log W'\cdot \log^4n}$ iterations, and in every iteration it
 applies Algorithm $\alg_3$ from \Cref{thm: alg 3} to $G$. We denote by $(c_3,x_3,y_3)$ the outcome of the iteration in which the value $c$ returned by the algorithm was the smallest. If $c_3\leq c_1$, then we use the algorithm from \Cref{cor: min_vertex_cut} 
 to compute a minimum $x_3$-$y_3$ vertex-cut $(X_3,C_3,Y_3)$ in $G$, and return this cut as the algorithm's outcome. 
 Otherwise, we use the algorithm from \Cref{cor: min_vertex_cut}  to  compute a minimum $x_1$-$y_1$ vertex-cut $(X_1,C_1,Y_1)$, and return this cut.

Recall that the running time of Algorithms $\alg_1$ from \Cref{thm: alg 1} is $O\left (mn^{1-\eps+o(1)}\cdot \log W'\right )\leq O(mn^{44/45+o(1)}\cdot \log W')$, and the running time of Algorithm $\alg_3$ from \Cref{thm: alg 3} is:

$$O\left (n^{2.96}\cdot (\log(\wmax))^{O(1)}\right )\leq O\left(n^{2.96+o(1)}\cdot (\log W')^{O(1)}\right ),$$

since $\wmax\leq O(n\cdot W')$ from Inequality \ref{eq: bound on wmax}.

Since the running time of the algorithm from \Cref{cor: min_vertex_cut} is $O(m^{1+o(1)}\cdot \log W')$, we get that the total running time of algorithm $\aset^{**}$ is bounded by:

\[
\begin{split}
\hat z\cdot O\left(n^{2.96+o(1)}\cdot (\log W')^{O(1)}\right )&+O(mn^{44/45+o(1)}\cdot \log W')\\
&\leq O\left(n^{2.96+o(1)}\cdot (\log W')^{O(1)}\right )+O(mn^{44/45+o(1)}\cdot \log W')\\
&\le O\left(mn^{0.99+o(1)}\cdot (\log W')^{O(1)}\right )+O(mn^{44/45+o(1)}\cdot \log W')\\
&\leq O\left (mn^{0.99+o(1)}\cdot (\log W')^{O(1)}\right ).
\end{split}
\]

since $\hat z=O\left(\log W'\cdot\log^4 n\right )$ and $m>n^{1.98}$.

Consider now any global minimum vertex-cut $(L,S,R)$ in $G$.
Assume first that, for some integer $0\leq i\leq \ceil{\log W'}$,  $L_i\neq\emptyset$ and $|S_i|\leq |L_i|\cdot n^{44/45}= |L_i|\cdot n^{1-\eps}$ hold. From
\Cref{thm: alg 1}, in this case, with probability at least $1-1/n^4$, $c_1=\opt$ holds, and so our algorithm returns a vertex cut of value $\opt$.

Otherwise, we are guaranteed that for all $0\leq i\leq \ceil{\log W'}$, $|S_i|\geq n^{44/45}\cdot |L_i|$ holds. 
We say that the application of the algorithm from \Cref{thm: alg 3} is successful, if the estimate $c$ that it returns is equal to $\opt$. From  \Cref{thm: alg 3}, the probability that a single application of the algorithm is successful is at least $\frac{1}{2^{11}\cdot \log^3 n\cdot \log W'}$. Since we execute the algorithm from \Cref{thm: alg 3} $\hat z=\ceil{2^{18}\log W'\cdot \log^4n}$ times, the probability that every application of the theorem is unsuccessful is at most $\left(1-\frac{1}{2^{11}\cdot \log^3 n\cdot \log W'}\right )^{\hat z}\leq \frac{1}{n}$. If at least one application of the algorithm is successful, then our algorithm is guaranteed to return a vertex cut of value $\opt$. Note that in either case, our algorithm returns global a minimum vertex-cut with probability at least $1-1/n$.

The total running time of the algorithm is: 

\[ O\left (mn^{0.99+o(1)}\cdot \log W'\right ) \leq O\left (mn^{0.99+o(1)}\cdot \log W \right ),\]

since $W'=O(n^2\cdot W)$.

\subsubsection*{Algorithm with Running Time $O\left(m^{3/2+o(1)}\cdot (\log W)^{O(1)}\right )$}

Note that we can assume that the input graph $G$ is connected, and that it is not a tree, since otherwise the problem is trivial. 
We can also assume that $m<n^2$, since otherwise, the $\tO(mn)$-time algorithm of~\cite{HRG00} achieves the desired bound of $O\left(m^{3/2+o(1)}\cdot (\log W)^{O(1)}\right )$ on the  running time. 
Therefore, we can denote $m=n^x$, where $1\leq x< 2$. Let $\eps=1-x/2$, so that $m= n^{2-2\eps}$ and $\eps>0$. We apply the algorithm from Corollary~\ref{cor: alg non dense} to graph $G$, which must return a global minimum vertex cut with probability at least $1-1/n^4$. The running time of the algorithm is $O\left (mn^{1-\varepsilon+o(1)}\cdot(\log W')^{O(1)}\right ) \leq O\left(m^{3/2+o(1)}\cdot (\log W)^{O(1)}\right )$.

In order to complete the proof of \Cref{thm: main}, it is now enough to prove \Cref{thm: alg 3}, which we do in the remainder of the paper. We start with a high-level overview of the proof.
\subsection{Proof of \Cref{thm: alg 3} -- High Level Overview}
\label{subsec: alg 3 setup}

Throughout the proof, we will use the following simple claim.

\begin{claim}\label{claim: bounds on S and L}
	Let $G$ be a simple undirected $n$-vertex graph with integral weights $1\leq w(v)\leq W'$ on vertices $v\in V(G)$, and let $0<\eps<1$ be a parameter. 
	Let $(L,S,R)$ be any vertex cut  in $G$, and assume that, for all $0\leq i\leq \ceil{\log W'}$, $|S_i|\geq n^{1-\eps}\cdot |L_i|$ holds. Then $|L|\leq \frac{|S|}{n^{1-\eps}}\leq n^{\eps}$ holds. 
\end{claim}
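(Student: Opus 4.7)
The plan is to derive both inequalities by a direct summation argument across the weight classes, using only the hypothesis of the claim together with the trivial bound $|S|\le n$.

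First, I would recall that $(L_i,S_i,R_i)$ for $0\leq i\leq \ceil{\log W'}$ is, by construction, a partition of $U_i$, and that $(U_0,\ldots,U_{\ceil{\log W'}})$ is a partition of $V(G)$. Consequently $L=\bigsqcup_i L_i$ and $S=\bigsqcup_i S_i$, so $|L|=\sum_i|L_i|$ and $|S|=\sum_i|S_i|$.

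Next, I would sum the hypothesis $|S_i|\geq n^{1-\eps}\cdot |L_i|$ over all $0\leq i\leq \ceil{\log W'}$; this yields
\[
|S|=\sum_i |S_i|\;\geq\; n^{1-\eps}\sum_i |L_i|\;=\; n^{1-\eps}\cdot |L|,
\]
which rearranges to $|L|\leq |S|/n^{1-\eps}$, establishing the first inequality.

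For the second inequality, I would use the trivial bound $|S|\leq |V(G)|=n$, which, combined with $|L|\leq |S|/n^{1-\eps}$, gives $|L|\leq n/n^{1-\eps}=n^{\eps}$. There is no real obstacle here — the statement is purely a bookkeeping consequence of the per-class lower bound and the fact that $S$ is a subset of the vertex set of an $n$-vertex graph.
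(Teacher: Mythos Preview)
Your proof is correct and is essentially identical to the paper's own argument: both sum the per-class inequality over $i$ to get $|S|\ge n^{1-\eps}|L|$ and then use $|S|\le n$.
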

\begin{proof}
Since, for all $0\leq i\leq \ceil{\log W'}$, $|S_i|\geq n^{1-\eps}\cdot |L_i|$ holds, we get that:
	
	\[|S|=\sum_{i=0}^{\ceil{\log W'}}|S_i|\geq \sum_{i=0}^{\ceil{\log W'}}|L_i|\cdot n^{1-\eps} \geq |L|\cdot n^{1-\eps}.\] 
	
	Therefore, $|L|\leq \frac{|S|}{n^{1-\eps}}\leq n^{\eps}$.
%
%
%
\end{proof}

We assume that we are given a simple undirected $n$-vertex graph $G$ with integral weights $1\leq w(v)\leq W'$ on vertices $v\in V(G)$. Throughout, we set $\eps=1/45$.
We say that a global minimum vertex-cut $(L,S,R)$ is \emph{acceptable}, if, for all $0\leq i\leq \ceil{\log W'}$, $|S_i|\geq n^{1-\eps}\cdot |L_i|$ holds. 
We define a specific global minimum vertex-cut $(L,S,R)$ as follows: if an acceptable global minimum vertex-cut exists in $G$, then we let $(L,S,R)$ be any such cut; otherwise, we let $(L,S,R)$ be an arbitrary vertex cut. The cut $(L,S,R)$ remains fixed throughout this section. We denote by $\opt=w(S)$ the value of the global minimum vertex-cut.

\subsubsection{A Good Set of Terminals and a Good Pair}

 We use the notion of a good set of terminals, that is defined next.

\begin{definition}[A good set of terminals]\label{def: good set of terminals}
Let $T\subseteq V(G)$ be any subset of vertices. We say that $T$ is a \emph{good set of terminals} with respect to a global minimum cut $(L,S,R)$ if $|T\cap L|=1$ and $T\cap R\neq \emptyset$ hold, and, additionally, if we denote by $x$ the unique vertex in $T\cap L$, then $T\cap S\subseteq N_G(x)$.
\end{definition}

Since we have fixed the global minimum cut $(L,S,R)$, if vertex set $T$ is a good set of terminals with respect to this cut, then we simply say that $T$ is a good set of terminals.

Next, we provide a randomized algorithm, that we refer to as $\algterm$, that selects a subset $T$ of vertices of $G$ so that, if the cut $(L,S,R)$ is acceptable, then, with a high enough probability, set $T$ of terminals is good. 

The notion of a good set of terminals was implicitly introduced by~\cite{LNP21} in their algorithm for computing a global minimum vertex-cut in unweighted graphs. They also provide a simple randomized algorithm that, with a high enough probability, computes a good set of terminals in this setting. Our algorithm $\algterm$ is a natural extension of their algorithm to vertex-weighted graphs.

We now describe Algorithm $\algterm$ formally. The algorithm starts by selecting an integer $i\in[1,\ceil{\log (n\cdot W')}]$ uniformly at random. Next, we let $T$ be a random subset of vertices of $G$, obtained by adding every vertex $v\in V(G)$ to $T$ independently with probability $\frac{w(v)}{2^{i+1}}$. 
We say that a good event $\event$ happens if the resulting set $T$ of vertices is a good set of terminals with respect to $(L,S,R)$. If the event $\event$ happens, then we say that Algorithm \algterm is \emph{successful}. In the following claim, we show that, if the cut $(L,S,R)$ is acceptable, then algorithm \algterm is successful with a reasonably high probability.

\begin{claim}\label{claim: good set of terminals}
If the cut $(L,S,R)$ is acceptable, then	$\prob{\event}\geq \frac{1}{2^{11}\log n \cdot \log W'}$.
\end{claim}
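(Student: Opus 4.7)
My plan is to identify one specific ``good'' value $i^*$ of the sampling parameter $i$, show that conditioned on $i=i^*$ the event $\event$ has constant probability, and combine this with $\prob{i=i^*}\geq 1/\ceil{\log(nW')}$. Since $L\neq\emptyset$ and all weights are at least $1$, $w(L)\geq 1$; and by the convention $w(L)\leq w(R)$ we have $w(L)\leq nW'/2$. Hence I can pick $i^*\in\set{1,\dots,\ceil{\log(nW')}}$ such that $\lambda:=2^{i^*+1}\in[4w(L),8w(L)]$.

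Conditioned on $i=i^*$, each vertex $v$ is independently placed into $T$ with probability $p_v=w(v)/\lambda$. I would decompose $\event$ as the disjoint union, over $x\in L$, of the events $A_x=\set{T\cap L=\set{x}}\cap\set{T\cap(S\setminus N_G(x))=\emptyset}\cap\set{T\cap R\neq\emptyset}$, so that by independence of the sampling across the pairwise disjoint sets $L$, $S\setminus N_G(x)$ and $R$,
\[
\prob{A_x\mid i=i^*}=p_x\cdot\prod_{v\in L\setminus\set{x}}(1-p_v)\cdot\prod_{v\in S\setminus N_G(x)}(1-p_v)\cdot\prob{T\cap R\neq\emptyset}.
\]
The key input is \Cref{claim: neighbors in S}, which gives $w(S\setminus N_G(x))\leq w(L)\leq\lambda/4$; combined with $w(v)\leq w(L)\leq\lambda/4$ for $v\in L$, this ensures $p_v\leq 1/4$ for every $v\in L\cup(S\setminus N_G(x))$. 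Using the standard inequality $1-p\geq e^{-2p}$ for $p\leq 1/2$ together with $\sum_{v\in L\cup(S\setminus N_G(x))}p_v\leq 2w(L)/\lambda\leq 1/2$, the two middle products are bounded below by $e^{-1}$.

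For the factor $\prob{T\cap R\neq\emptyset}$ I would split into two cases: either some $v\in R$ has $w(v)\geq\lambda$ (and is therefore sampled with probability $1$), or $\prob{T\cap R\neq\emptyset}\geq 1-e^{-w(R)/\lambda}\geq 1-e^{-1/8}$ since $w(R)\geq w(L)\geq\lambda/8$; either way this probability is at least some absolute constant such as $1/10$. Summing $\prob{A_x\mid i=i^*}\geq p_x\cdot e^{-1}/10$ over $x\in L$ yields $\prob{\event\mid i=i^*}\geq(w(L)/\lambda)\cdot(10e)^{-1}\geq 1/(80e)$. Multiplying by $1/\ceil{\log(nW')}$ together with the crude bound $\ceil{\log(nW')}\leq 4\log n\cdot\log W'$ (valid for, say, $n,W'\geq 4$) then produces the claimed $1/(2^{11}\log n\cdot\log W')$ with ample slack. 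The main subtlety is the choice of $\lambda$: it must be large enough that \Cref{claim: neighbors in S} forces every vertex of $L\cup(S\setminus N_G(x))$ into the ``small sampling probability'' regime (so the middle products can be bounded), yet small enough to leave constant total expected mass on $L$ (so $\sum_{x\in L}p_x=\Omega(1)$); once this balance is struck, the remaining bookkeeping is routine.
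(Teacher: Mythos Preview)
Your proposal is correct and follows essentially the same approach as the paper: condition on a single well-chosen value of $i$, decompose $\event$ over the choice of the unique vertex $x\in L\cap T$, invoke \Cref{claim: neighbors in S} to control $S\setminus N_G(x)$, and use $w(R)\geq w(L)$ for the $R$-part. The only cosmetic differences are that the paper picks $i^*$ with $2^{i^*}\in(w(L),2w(L)]$ rather than your $\lambda\in[4w(L),8w(L)]$, and it bounds $\prob{T\cap R\neq\emptyset}$ by the slick comparison $\prob{T\cap R\neq\emptyset}\geq\prob{T\cap L\neq\emptyset}\geq\prob{\event_1}$ instead of your direct case split; neither changes the structure of the argument.
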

\begin{proof}
	Assume that the cut $(L,S,R)$ is acceptable.
	Let $1\leq i^*\leq \ceil{\log (n\cdot W')}$ be the unique integer for which $2^{i^*-1}\leq w(L)<2^{i^*}$ holds. We say that a good event $\event'$ happens if $i=i^*$. Clearly, $\prob{\event'}\geq \frac{1}{\ceil{\log (n\cdot W')}}\geq \frac{1}{2\log n\cdot \log W'}$.
	
	For every vertex $v\in L$, we say that a good event $\event(v)$ happens, if all of the following hold: (i) $v\in T$; (ii) $v$ is the unique vertex of $L$ that lies in $T$; and (iii) $T\cap S\subseteq N_G(v)$. 
	Let $S'_v=S\setminus N_G(v)$, and observe that from \Cref{claim: neighbors in S}, $w(S'_v)\leq w(L)$ must hold. Let $X=(L\cup S'_v)\setminus \set{v}$. Then Event $\event(v)$ happens if and only if $v$ is added to $T$, but none of the vertices of $X$ are added to $T$. From our discussion so far, $w(X)\leq 2w(L)$, and for every vertex $x\in X$, $w(x)\leq w(L)\leq 2^{i^*}$. Altogether, we get that:
	
	\[\begin{split}
	\prob{\event(v)\mid \event'}&=\frac{w(v)}{2^{i+1}}\cdot \prod_{x\in X}\left(1-\frac {w(x)} {2^{i+1}} \right )\\
	&= \frac{w(v)}{2^{i+1}}\cdot e^{\sum_{x\in X}\ln\left(1-\frac {w(x)} {2^{i+1}} \right )}\\
	&\geq\frac{w(v)}{16\cdot 2^{i}}.	\end{split}
	\]

For the last inequlatiy, we used the fact that, from Taylor's expansion of function $\ln(1-\eps)$,  for all $0\leq \eps\leq 1/2$, $\ln(1-\eps)>-\eps-\eps^2\geq -2\eps$ holds. Therefore, for all $x\in X$, $\ln\left(1-\frac {w(x)} {2^{i+1}} \right )\geq -\frac{w(x)}{2^{i}}$, and $\sum_{x\in X}\ln\left(1-\frac {w(x)} {2^{i+1}} \right )\geq -\frac{\sum_{x\in X}w(x)}{2^{i}}= -\frac{w(X)}{2^{i}}\geq -2$, since $w(X)\leq 2w(L)\leq 2^{i+1}$.

Let $\event_1$ be the good event that Event $\event(v)$ happened for any vertex $v\in L$.
Notice that the events $\event(v)$ are disjoint for vertices $v\in L$. Therefore:

\[\prob{\event_1\mid \event'}=\sum_{v\in L}\prob{\event(v)\mid\event'}
\geq \frac{w(L)}{16 \cdot 2^i}\geq \frac{1}{32},
\]

since $w(L)\geq 2^{i-1}$.
Next, we let $\event_2$ be the good event that $R\cap T\neq \emptyset$. Since we assumed that $w(R)\geq w(L)$, it is immediate to verify that:

\[\prob{\event_2\mid \event'}\geq \prob{T\cap L\neq \emptyset\mid \event'}\geq \prob{\event_1\mid \event'}\geq \frac{1}{32}.\]

Since Events $\event_1$ and $\event_2$ are independent, we get that:

\[\prob{\event_1\band \event_2\mid \event'}=\prob{\event_1\mid \event'}\cdot \prob{\event_2\mid \event'}\geq \frac{1}{2^{10}}.\]
Finally, we get that:

\[\prob{\event}\geq \prob{\event_1\band\event_2\mid \event'}\cdot \prob{\event'}\geq \frac{1}{2^{11}\cdot \log n\cdot \log W'}.\]

\end{proof}

Given a set $T$ of vertices of $G$, and a vertex $x\in T$, we denote by  $T_x=T\setminus\left(\set{x}\cup N_G(x)\right )$. Next, we define a notion of a good pair.

\begin{definition}[A good pair]
	Let $T\subseteq V(G)$ be a set of vertices, let $x\in T$ be a vertex, and let $(L,S,R)$ be a global minimum vertex-cut. We say that $(x,T)$ is a \emph{good pair} with respect to the cut $(L,S,R)$, if $T$ is a good set of terminals with respect to $(L,S,R)$, and $x$ is the unique vertex of $T\cap L$. 
\end{definition}	
	
Notice that, if $(x,T)$ is a good pair with respect to $(L,S,R)$, then $T\cap S\subseteq N_G(x)$ and $T\cap R\neq \emptyset$ must hold; in particular, $T_x\subseteq R$. Therefore, the value of the minimum $x$-$T_x$ vertex-cut in $G$ is $\opt$ (since the set $S$ of vertices disconnects $x$ from $T_x$ in $G$, and $S$ is disjoint from $x$ and from $T_x$). Clearly, if $T$ is a good set of terminals with respect to $(L,S,R)$, then there is exactly one vertex $x\in T$, such that $(x,T)$ is a good pair with respect to $(L,S,R)$.

The main technical subroutine of our algorithm, that we formally describe below, receives as input a set $T$ of vertices of $G$, and a vertex $x\in T$. The subroutine outputs a value $c_x$ that is at least as high as the value of the minimum $x$-$T_x$ cut in $G$. If, additionally, $(L,S,R)$ is an acceptable cut, and $(x,T)$ is a good pair with respect to $(L,S,R)$, then the subroutine guarantees that, with a reasonably high probability, $c_x$ is equal to the value of the minimum $x$-$T_x$ vertex-cut in $G$, and hence to $\opt$. One subtlety in the description of the subroutine is that it needs an access to a \emph{subgraph oracle}, that we define below. We will ensure that the subroutine only accesses the oracle $O(\log n \log (\wmax))$ times over its execution. We will first provide the statement of the theorem summarizing the subroutine, assuming that it has access to the subgraph oracle, and then show an algorithm that, in a sense, implements this oracle. 
 We now proceed to define a subgraph oracle.

\subsubsection{A Subgraph Oracle}

We now define a subgraph oracle, that receives as input an undirected graph $G$ with integral weights on its vertices. The definition uses the parameter $\wmax=\wmax(G)$ defined in \Cref{subsec: modified adj list}.

\begin{definition}[A Subgraph Oracle]
	A $\delta$-subgraph oracle,  for a parameter $0<\delta\leq 1/2$, receives as input a simple undirected $n$-vertex graph $G$ with integral weights $1\leq w(v)\leq W'$ on its vertices $v\in V(G)$ in the adjacency list representation, where $n$ is sufficiently large, so $\delta\geq\frac{1}{\sqrt{\log n}}$, and a set $Z\subseteq V(G)$ of vertices. The oracle must return a partition $(Y^h,Y^{\ell})$ of $V(G)$, and a collection $E'$ of at most $n^{2-\delta}\cdot \log^2(\wmax)$ edges of $G$, such that 
	$E'=E_G(Y^{\ell},Z)$ holds. We say that the oracle \emph{errs}, if some vertex $v\in Y^{h}$ has fewer than $\frac{n^{1-\delta}}{1000\log n}$ neighbors in $Z$, or some vertex $v'\in Y^{\ell}$ has more than $n^{1-\delta}\cdot \log^2(\wmax)$ such neighbors. We require that the probability that the oracle errs is at most $\frac{1}{n^5\cdot \log^4(\wmax)}$. 
\end{definition}

\subsubsection{The Main Technical Subroutine}
The following theorem summarizes our main technical subroutine. The theorem also uses the notion of the split graph and the parameter 
$\wmax=\wmax(G)$, both of which are defined in \Cref{subsec: split graph}.

\begin{theorem}\label{thm: main subroutine}
	There is a randomized algorithm, whose input consists of a simple undirected $n$-vertex graph $G$ with integral weights $1\leq w(v)\leq W'$ on vertices $v\in V(G)$ in the adjacency-list representation, a subset $T\subseteq V(G)$ of its vertices, and a vertex $s\in T$. Additionally, the algorithm is given two copies of the split graph $G'$ corresponding to $G$ in the modified adjacency-list representation. Lastly, it is given an access to the $\delta$-subgraph oracle for $G$, for $\delta=\frac{4}{45}-\frac{\log(4000\log n)}{\log n}=\frac{4}{45}-o(1)$. The algorithm returns a value $c_s$, that is at least as high as the value of the minimum $s$-$T_s$ vertex-cut in $G$. Moreover, if there is a global minimum vertex-cut $(L,S,R)$ in $G$, such that $|S|\geq n^{44/45}\cdot |L|$ holds, and $(s,T)$ is a good pair with respect to cut $(L,S,R)$, then, with probability at least $\frac{1}{2}$, $c_s$ is equal to the value of the minimum $s$-$T_s$ vertex-cut in $G$. The running time of the algorithm is $O\left(n^{86/45+o(1)}\cdot (\log(\wmax))^{O(1)}\right )$, and it accesses the $\delta$-subgraph oracle at most $8\log n\cdot \log (\wmax)$ times.
\end{theorem}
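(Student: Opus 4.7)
The plan is to maintain an $s$-$t$ flow $f$ in an augmented version of the split graph, starting from $f=0$ and iteratively reducing its deficit over $z = O(\log n \log \wmax)$ scaled phases. I would first set $s = x^{\out}$ (writing $x$ for the input vertex) and form $G'_x$ by adding a destination $t$ joined from each $y^{\inn}$, $y\in T_x$, by a regular edge of capacity $\wmax$, so that $\opt_x$, the value of the maximum $s$-$t$ flow in $G'_x$, equals the value of the minimum $s$-$T_s$ vertex cut in $G$. The two working copies of $G'_x$ in the modified adjacency-list representation will respectively carry the current graph $G''$ (which will receive shortcut edges $(v,t)$) and the residual network $H$ of $G''$ with respect to $f$. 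I choose a scaling parameter $\gamma$ with $n^{7/9}\le\gamma\le 2n^{7/9}$ (an integral power of $2$) and define $M_i = \wmax'/(2^i\gamma)$, so $M_{i+1}=M_i/2$. Setting $\eps = 1/45$, this choice satisfies $n^{14\eps}\le\gamma\le 2n^{1-10\eps}$.

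The preprocessing step computes the initial flow $f_1$: restrict to vertices of weight $\geq M_1$, round each special capacity down to a multiple of $M_1$, then sparsify the resulting network using two observations --- (i) there is an optimal flow that does not re-enter $N^+(s)$, so those back-edges can be dropped; (ii) by \Cref{claim: neighbors in S} applied at scale $M_1$, any vertex of $V(G)$ with more than $3|L|\gamma$ heavy non-neighbors of $x$ in $U_1$ cannot lie in $L$, so adding a shortcut $(v^{\out},t)$ to $G''$ is valid and all other outgoing edges of $v^{\out}$ can be ignored. This uses the $\delta$-subgraph oracle once to detect such vertices and brings the sparsifier size to $\tilde{O}(n^{2-4\eps})$. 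Running \Cref{thm: maxflow} on this sparsifier yields $f_1$ that is $M_1$-integral, has deficit $\le 4nM_1$, and leaves residual capacity $\geq M_1$ on every heavy special edge.

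The heart of the argument is phase $i$, which halves granularity from $M_i$ to $M_{i+1}$. I would break it into three steps, each proved separately in Sections~\ref{sec: step 1},~\ref{sec: step 2}, and~\ref{sec: step 3}. In Step 1, an extended local-flow-augmentation procedure explores $H$ from $s$ along edges of residual capacity $\geq \gamma M_i$, growing the explored set $X$ both via such high-capacity edges and whenever a vertex $a\notin X$ accumulates, for some integer $k$, at least $2^k$ in-edges of capacity $\geq M_i\gamma/2^k$ from $X$. Each iteration either (a) stops once $|X\cap V^{\out}|$ is large, at which point a uniformly random $v \in X\cap V^{\out}$ receives a shortcut $(v,t)$ (valid with high probability by a counting argument analogous to that of \cite{LNP21}, extended to the weighted setting), and $f$ is augmented by $\gamma M_i$ units via an $s$-$v$ flow computed inside $H[X]$; or (b) halts with $J = H[X]$ having the desired small $V^{\out}\cap V(J)$ and small boundary. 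Step 2 uses $J$ to build a partition $(Q,Q')$ of $U_i$ by repeatedly solving min-cost max-flow in a contracted graph $\tilde H$ (obtained by collapsing $V(H)\setminus V(J)$ to $t'$), assigning cost $1$ only to special edges of vertices in the current $Q$ and peeling off ``bad batches'' where these edges remain far from saturated. If $|Q|>n^{1-4\eps}$ after this process, a final min-cut computation on $\tilde H$ (via \Cref{cor: mincut}) produces the required small-size, near-saturated $s$-$t$ cut $(X^*,Y^*)$ in $H$. Step 3 sparsifies the updated residual network $\hat H$: short-cut any vertex whose $Q'$-neighborhood (detected via the $\delta$-subgraph oracle, the sole call per phase) exceeds $2n^{1-4\eps}$, delete back-edges of $(X^*,Y^*)$ (whose total flow loss is negligible because every flow-path using one must use at least two $(X^*,Y^*)$-crossings), and round residual capacities down to multiples of $M_{i+1}$; running \Cref{thm: maxflow} then produces $f_{i+1}$.

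After $z$ phases, $f_z$ has deficit $<1$, and since $G'$ has integral capacities, $\lceil\val(f_z)\rceil = \opt_x$. I would then assemble $\hat G\subseteq G''$ from all edges with $f_z(e)>0$ (whose count is bounded by the total work, hence $O(n^{2-4\eps+o(1)})$) and run \Cref{thm: maxflow} on $\hat G$ to read off $c_s = \opt_x$ exactly. The main obstacles are: (i) arguing that every shortcut added during Steps 1 and 3 across all phases is simultaneously valid with probability $\geq 1/2$ --- this requires that $|X\cap V^{\out}|$ always exceeds a large constant multiple of $|L|$ at sampling time, combined with a union bound over $O(\opt_x/(\gamma M_i))\cdot z$ samples; (ii) in Step 1, handling the weighted residual network where vertices of $V^{\out}$ may have many in-edges, which is circumvented by the multi-threshold exploration rule, whose correctness requires showing that every vertex added to $X$ admits an $s$-$v$ flow of value $\gamma M_i$ inside $H[X]$; and (iii) bounding the total work per phase at $O(n^{2-4\eps+o(1)})$ despite the multiple min-cost max-flow invocations in Step 2, which needs a careful amortization using $|V(J)\cap V^{\out}|$ and the shrinking size of $Q$. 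Summing over $z = O(\log n\log\wmax)$ phases gives the claimed $O(n^{2-4\eps+o(1)}\cdot(\log\wmax)^{O(1)}) = O(n^{86/45+o(1)}\cdot(\log\wmax)^{O(1)})$ running time, with at most $8\log n\log\wmax$ oracle calls (at most one per phase plus preprocessing, with a factor of $2$ safety).
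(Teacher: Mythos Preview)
Your proposal is correct and follows essentially the same approach as the paper: the same scaling framework with $\eps=1/45$, $\gamma\approx n^{7/9}$, $M_i=\wmax'/(2^i\gamma)$, the same preprocessing via the subgraph oracle, and the same three-step phase structure (extended local-flow exploration with the multi-threshold rule for $V^{\out}$ vertices, min-cost max-flow iterations in the contracted graph $\tilde H$ to extract bad batches and build $(Q,Q')$ plus the final cut when $|Q|>n^{1-4\eps}$, then oracle-assisted sparsification and a max-flow call). A few details differ slightly from the paper --- e.g., the cost-$1$ edges in Step~2 are the edges $(v^{\inn},t')$ in $\tilde H$ rather than special edges, the per-iteration augmentation is $\gamma M_i/2$ rather than $\gamma M_i$, and Step~3 also needs a separate treatment (via Directed Heavy Degree Estimation) of the backward regular edges in $V^{\inn}\setminus V(J)$ --- but these do not change the argument.
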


We note that the running time of the algorithm from \Cref{thm: main subroutine} may be faster than $|E(G)|$, and that it might modify the adjacency-list representations of the graph $G'$ that it is given as input. Specifically, the algorithm gradually modifies the graph $G'$ in order to ``simplify'' it, and also maintains an $s$-$T_s$ flow $f$ in the resulting modified graph, that it gradually augments.
One copy of the graph will be used to maintain the modified graph $G'$,  while the second copy will be used to maintain the residual flow network of $G'$ with respect to the flow that the algorithm maintains.
We prove \Cref{thm: main subroutine} in the remainder of the paper, after we complete the proof of \Cref{thm: alg 3} using it.
We start by providing an algorithm that we will use to implement the oracle.

\subsubsection{Implementing the Oracle}

Recall that our algorithm will apply the subroutine from \Cref{thm: main subroutine} to every vertex $s\in T$. Each such execution of the subroutine will make at most $8\log n\cdot \log(\wmax)$ calls to the oracle. In order to make our algorithm efficient, we will design an oracle that responds to several $\delta$-subgraph queries in bulk. Intuitively, for all $1\leq i\leq \floor{8\log n\cdot \log(\wmax)}$, we will run the subroutine from \Cref{thm: main subroutine} for each vertex $s\in T$, until its $i$th attempt to access the oracle. We will then simultaneously compute responses to all resulting $|T|$ subgraph queries. We summarize our algorithm for implementing the oracle in the following lemma. The lemma statement uses the matrix multiplication exponent $\omega$, whose current bound is $\omega\leq 2.371552$~\cite{WXXZ24}.

\begin{lemma}\label{lem: oracle in bulk}
	There is a deterministic algorithm, that is given as input an undirected $n$-vertex graph $G$ with weights $1\leq w(v)\leq W'$ on the vertices $v\in V(G)$, a parameter $\frac{1}{\sqrt{\log n}}<\delta\leq \half$, and $q\leq n$ queries $(Z_1,\ldots,Z_q)$ to the $\delta$-subgraph oracle. The algorithm computes responses $(Y^h_1,Y^{\ell}_1,E'_1),\ldots,(Y^h_q,Y^{\ell}_q,E'_q)$ to the queries, such that the probability that the algorithm errs in its response to any query is at most $\frac{1}{n^9\cdot\log^4(\wmax)}$. The running time of the algorithm is $O\left (n^{3-\delta\cdot 2(3-\omega)/(5-\omega)+o(1)}\cdot \poly\log(\wmax)\right )\leq O\left (n^{3-0.478\delta+o(1)}\cdot \poly\log(\wmax)\right )$, where $\omega$ is the matrix multiplication exponent.
\end{lemma}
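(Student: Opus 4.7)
\textbf{Proof proposal for Lemma~\ref{lem: oracle in bulk}.} The plan is to split the work into two phases. Phase~1 computes, for every query $i\in[q]$, the partition $(Y^h_i,Y^\ell_i)$ of $V(G)$. Phase~2 lists, in bulk, the edge sets $E'_i=E_G(Y^\ell_i,Z_i)$ by encoding the problem as a triangle-listing instance and invoking the algorithm of~\cite{bjorklund2014listing}.

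For Phase~1, I would form the $n\times n$ adjacency matrix $A$ of $G$ and the $n\times q$ incidence matrix $M$ defined by $M_{z,i}=1$ iff $z\in Z_i$, and compute the product $AM$ in $O(n^\omega)$ time (using that $q\le n$). The entry $(AM)_{v,i}$ equals $|N_G(v)\cap Z_i|$ exactly, so I can set $Y^h_i=\{v:(AM)_{v,i}\ge\tau\}$ and $Y^\ell_i=V(G)\setminus Y^h_i$ for any threshold $\tau$ in the allowed window, e.g.\ $\tau=n^{1-\delta}/(1000\log n)$. This makes Phase~1 deterministic and error-free, so the required failure bound is satisfied trivially. (If one preferred to replace the matrix product by the randomized routine of Claim~\ref{claim: degree est} applied once per query, a union bound over the $q\le n$ queries would recover the claimed $1/(n^9\log^4\wmax)$ failure probability.)

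For Phase~2, I would build a tripartite auxiliary graph $H$ on parts $\mathcal{A}=[q]$, $\mathcal{B}=V(G)$, $\mathcal{C}=V(G)$, with edges: $\{i,z\}\in\mathcal{A}\times\mathcal{B}$ iff $z\in Z_i$; $\{z,v\}\in\mathcal{B}\times\mathcal{C}$ iff $(z,v)\in E(G)$; and $\{i,v\}\in\mathcal{A}\times\mathcal{C}$ iff $v\in Y^\ell_i$. Since a triangle of $H$ uses one vertex per part, triangles are in bijection with triples $(i,z,v)$ satisfying $z\in Z_i$, $v\in Y^\ell_i$, $(z,v)\in E(G)$, i.e.\ with pairs $(i,e)$ where $e\in E'_i$. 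Because Phase~1 is correct, every $v\in Y^\ell_i$ has at most $n^{1-\delta}\log^2\wmax$ neighbors in $Z_i$, so the total number of triangles is
\[
T\;\le\;\sum_{i=1}^{q}|E'_i|\;\le\;q\cdot n\cdot n^{1-\delta}\log^2\wmax\;\le\;n^{3-\delta}\log^2\wmax.
\]
Applying \cite{bjorklund2014listing} to the $O(n)$-vertex graph $H$ lists all $T$ triangles in time $\tilde O\!\left(n^\omega+n^{3(\omega-1)/(5-\omega)}\cdot T^{2(3-\omega)/(5-\omega)}\right)$. Substituting the bound on $T$, the second exponent is
\[
\frac{3(\omega-1)+2(3-\omega)(3-\delta)}{5-\omega}\;=\;3-\frac{2\delta(3-\omega)}{5-\omega},
\]
so Phase~2 runs in the desired $O\!\left(n^{3-\delta\cdot 2(3-\omega)/(5-\omega)+o(1)}\cdot\poly\log\wmax\right)$ time; the $n^\omega$ summand and the $O(n^\omega)$ cost of Phase~1 are absorbed, since for $\delta\le 1/2$ we have $\omega\le 3-2\delta(3-\omega)/(5-\omega)$. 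Finally I would scan the listed triangles, appending each $(z,v)$ to $E'_i$, and deduplicate in linear time to handle the at-most-two-fold overcounting that occurs when an edge has both endpoints in $Y^\ell_i\cap Z_i$.

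The main obstacle is the exponent arithmetic in the triangle-listing bound once $T$ is substituted, together with confirming the clean bijection between triangles of $H$ and pairs $(i,e)$ with $e\in E'_i$; both are routine, which is why the real content of the lemma is the reduction itself---recasting many simultaneous ``list all edges from a vertex subset to a target set'' queries as one triangle-listing instance is what lets us amortise across the $q\le n$ queries and exploit fast matrix multiplication.
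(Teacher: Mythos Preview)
Your proposal is correct and follows essentially the same route as the paper: encode the bulk edge-listing task as triangle listing in a tripartite auxiliary graph on two copies of $V(G)$ and one copy of $[q]$, bound the triangle count by $n^{3-\delta}\poly\log\wmax$, and invoke \cite{bjorklund2014listing} with the same exponent arithmetic. The only difference is Phase~1, where the paper applies the randomized estimator of Claim~\ref{claim: degree est} once per query (whence the stated error probability), whereas you compute the exact counts $|N_G(v)\cap Z_i|$ via an $O(n^\omega)$ matrix product; your variant is cleaner and actually makes the procedure deterministic, matching the lemma's wording.
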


\begin{proof}
	We start by computing, for all $1\leq i\leq q$, the partition $(Y^h_i,Y^{\ell}_i)$ of $V(G)$, as follows. Fix an index $1\leq i\leq q$, and set $\tau=\frac{n^{1-\delta}}{1000\log n}$.
	We apply the algorithm for the Undirected Degree Estimation problem from 
	\Cref{claim: degree est} to graph $G$, the sets $Z=Z_i$ and $Z'=V(G)$ of its vertices, and threhsold $\tau$. Let $A\subseteq V(G)$ be the subset of vertices that the algorithm returns. We then set $Y_i^h=A$ and $Y_i^{\ell}=V(G)\setminus A$. Let $\event_i$ be the bad event that the algorithm from \Cref{claim: degree est} erred, and recall that $\prob{\event_i}\le \frac{1}{n^{10}\log^4(\wmax)}$. Note that, if the event $\event_i$ did not happen, then we are guaranteed that (i) every vertex $v\in Y_i^{h}$ has at least $\tau=\frac{n^{1-\delta}}{1000\log n}$ neigbhors in $Z_i$; and (ii) every vertex $v'\in Y_i^{\ell}$ has at most $1000\tau\log n\log (\wmax)=n^{1-\delta}\cdot \log(\wmax)$ neighbors in $Z$.
Recall that the running time of the algorithm from \Cref{claim: degree est} is:

\[O\left(\frac{n\cdot |Z_i|\cdot \log n\cdot\log(\wmax)}{\tau}\right )\leq O\left(n^{1+\delta}\cdot \log^2n \cdot\log(\wmax) \right )\leq  O\left(n^{2-\delta}\cdot \log^2n \cdot\log(\wmax) \right ),
\]

since $\delta\leq 1/2$,
 and so the time required to execute this step for all $1\leq i\leq q$ is $O\left(n^{3-\delta}\cdot \log^2n\cdot\log(\wmax)\right )$. We let $\event$ be the bad event that $\event_i$ happened for any $1\leq i\leq q$. By using the Union Bound, and since $q\leq n$, we get that $\prob{\event}\leq \frac{1}{n^9\log^4(\wmax)}$.

Next, we construct a tripartite graph $H=(A\cup B\cup C,E')$, as follows. We let $A=\set{a_v\mid v\in V(G)}$, $B=\set{b_v\mid v\in V(G)}$, and $C=\set{c_1,\ldots,c_{q}}$. For every edge $e=(u,v)\in E(G)$, we add the edges $(a_v,b_u)$ and $(a_u,b_v)$ to $H$. Additionally, for every index $1\leq i\leq q$, for every vertex $v\in Z_i$, we add the edge $(a_v,c_i)$, and for every vertex $u\in Y^{\ell}_i$, we add the edge $(c_i,b_u)$ to $H$.

For a triple $\Pi=(a_v,b_u,c_i)$ of vertices of $H$ with $a_v\in A$, $b_u\in B$, and $c_i\in C$, we say that $\Pi$ is a \emph{good triple}, if $(v,u)\in E_G(Z_i,Y^{\ell}_i)$. Notice that, if Event $\event$ did not happen, then $|E_G(Z_i,Y^{\ell}_i)|\leq n^{2-\delta}\cdot\log(\wmax)$ must hold. Therefore, if Event $\event$ did not happen, then the total number of good triples in $H$ is bounded by $n^{2-\delta}\cdot q\cdot\log(\wmax)\leq n^{3-\delta}\cdot\log(\wmax)$. Moreover, every good triple defines a triangle in $H$ and vice versa. Therefore, in order to compute, for every index $1\leq i\leq q$, the set $E_G(Z_i,Y^{\ell}_i)$ of edges, it is sufficient to compute all triangles in $H$. We use the following algorithm of 
\cite{bjorklund2014listing} in order to do so.

\begin{theorem}[Theorem 1 in \cite{bjorklund2014listing}]\label{thm: triangle counting}
There is a deterministic algorithm that lists all triangles in an $n$-vertex graph in time:  

\[O\left(n^{\omega+o(1)}+n^{3(\omega-1)/(5-\omega)+o(1)}\cdot t^{2(3-\omega)/(5-\omega)}\right ),\] 

where $\omega$ is the matrix multiplication exponent, and $t$ is the total number of triangles in $G$.
\end{theorem}

By substituting $t\leq n^{3-\delta}\cdot\log(\wmax)$, we get that the running time of the algorithm from \Cref{thm: triangle counting} on graph $H$ is bounded by:

\[
\begin{split}
&O\left(\left(n^{\omega}+n^{3(\omega-1)/(5-\omega)}\cdot n^{2(3-\omega)\cdot (3-\delta)/(5-\omega)}\right )\cdot n^{o(1)}\cdot \poly\log(\wmax)\right )\\
&\quad\quad\quad\quad\quad\quad\quad\quad\quad\leq 
 O\left (n^{3-\delta\cdot 2(3-\omega)/(5-\omega)}\right )\cdot n^{o(1)}\cdot \poly\log(\wmax).
\end{split}
\]

Finally, recall that the running time of the first step, that performed $q$ applications of the algorithm from \Cref{claim: degree est} for the Degree Estimation problem is $O(n^{3-\delta}\cdot\log^2n\cdot\log(\wmax))$, and that the time required to compute the graph $H$, given the graph $G$ and the sets $\set{Z_i,Y_i^{\ell}}_{1\leq i\leq q}$ of vertices is bounded by $O(n^2)$. Therefore, the total running time of the algorithm is 
$O\left ( n^{3-\delta\cdot 2(3-\omega)/(5-\omega)}\cdot n^{o(1)}\cdot \poly\log(\wmax)\right )$, or $O\left (n^{3-0.478\delta+o(1)}\cdot \poly\log(\wmax)\right )$ using the current bound $\omega\leq 2.371552$~\cite{WXXZ24}.
\end{proof}

\subsubsection{Completing the Proof of \Cref{thm: alg 3}}

We are now ready to complete the proof of \Cref{thm: alg 3}. We assume that we are given as input a simple undirected $n$-vertex and $m$-edge graph $G$ with integral weights $1\leq w(v)\leq W'$ on vertices $v\in V(G)$. We fix a global minimum vertex-cut $(L,S,R)$ in $G$ as follows.  If there exists an acceptable global minimum vertex cut $(L,S,R)$ (that is, for all $0\leq i\leq \ceil{\log W'}$, $|S_i|\geq n^{44/45}\cdot |L_i|$ holds), then we fix this cut. Otherwise, we let $(L,S,R)$ be an arbitrary global minimum vertex-cut. 

We start by computing a split graph $G'$ of $G$, and two copies of $G'$  in  modified the adjacency-list representation. We also compute a matrix representation of $G$. All this can be done in time $O(m\log (\wmax))\leq O(n^2\log(\wmax))$. Next, we apply algorithm \algterm to compute a set $T\subseteq V(G)$ of vertices that we call terminals from now on. It is easy to verify that the running time of Algorithm \algterm is $O(n)$. We let $\event_T$ be the good event that $T$ is a good set of terminals with respect to the cut $(L,S,R)$. From \Cref{claim: good set of terminals}, if $(L,S,R)$ is an acceptable cut, then 
$\prob{\event_T}\geq \frac{1}{2^{11}\cdot \log n \cdot \log W'}$.

Our algorithm consists of at most $z=\floor{8\log n\cdot \log \wmax}$ phases. For all $1\leq i\leq z$, Phase $i$ has $|T|$ iterations, each of which processes a different vertex $s\in T$. When vertex $s\in T$ is processed in the $i$th phase, we apply the algorithm from \Cref{thm: main subroutine} to graph $G$, set $T$ of terminals and vertex $s$. We supply the algorithm with the 2 copies of the adjacency-list representation of $G'$. We execute the algorithm until its $i$th call to the $\delta$-subgraph oracle, and we denote by $Z_{i,s}$ the resulting query to the oracle. We also record all random choices made by the algorithm, and, when the algorithm from  \Cref{thm: main subroutine} is executed with vertex $s$ as an input in Phase $(i+1)$, we repeat all the recorded random choices, so that the first $i$ calls to the $\delta$-subgraph oracle are identical to the execution of the algorithm in Phase $i$. Lastly, once the algorithm from 
\Cref{thm: main subroutine} performs the $i$th call to the $\delta$-subgraph oracle, we terminate it, and we undo all the modifications that it made to the two copies of graph $G'$ in the modified adjacency-list representation. We then continue to process the next vertex of $T$.

Once all vertices of $T$ are processed, we execute the algorithm from \Cref{lem: oracle in bulk} in order to compute responses to all queries $\set{Z_{i,s}\mid s\in T}$. We then continue to Phase $(i+1)$. Note that, at the beginning of Phase $(i+1)$, for all $s\in T$, we have computed the responses to the first $i$ queries to the $\delta$-subgraph oracle that the algorithm from \Cref{thm: main subroutine} asks when applied to $s$.

Let $\event'_T$ be the bad event that the algorithm from \Cref{lem: oracle in bulk} errs in any of the iterations. Then, from \Cref{lem: oracle in bulk} and the Union Bound, $\prob{\event'_T}\leq \frac{8\log n\cdot \log (\wmax)}{n^9\cdot\log^4(\wmax)}\leq \frac{8}{n^9\log^2(\wmax)}$, since $\wmax\geq n$ by definition.

After the last phase terminates, we obtain, for every vertex $s\in T$, the value $c_s$ that the algorithm from \Cref{thm: main subroutine} outputs. Recall that $c_s$ is at least as high as the value of the minimum $s$-$T_s$ vertex cut in $G$. We let $x\in T$ be the vertex minimizing the value $c_x$, and we let $y$ be any vertex of $T_x$. We compute the minimum $x$-$y$ vertex cut in $G$ using the algorithm from \Cref{cor: min_vertex_cut}, whose running time is ${O}\left (m^{1+o(1)}\cdot \log(\wmax)\right )\leq {O}\left (n^{2+o(1)}\cdot \log(\wmax)\right )$, and return the value $c^*_x$ of the cut, and the vertices $x$ and $y$. Clearly, $c^*_x\ge \opt$ must hold. 

Assume now that the cut $(L,S,R)$ is acceptable, so for all $0\leq i\leq \ceil{\log W'}$, $|S_i|\geq n^{44/45}\cdot |L_i|$ holds. Assume further that Event $\event_T$ happened. Let $s$ be the unique vertex in $T\cap L$, and let $\event''_s$ be the good event that the algorithm from \Cref{thm: main subroutine}, when applied to $s$, returned a value $c_s$ that is equal to the value of the minimum $s$-$T_s$ vertex cut in $G$. 
Then, from \Cref{thm: main subroutine}, $\prob{\event''_s\mid \event_T\band \neg\event'_T}\geq \frac{1}{2}$. From our discussion, if Events $\event_T$ and $\event''_s$ happen, and Event $\event'_T$ does not happen, then the algorithm is guaranteed to return an estimate $c^*_x=\opt$. Therefore, the probability that $c^*_x=\opt$ is at least:

\[\begin{split}
\prob{\event_T\band\event''_s\band\neg\event'_T}&=\prob{\event''_s\mid \event_T\band \neg\event'_T}\cdot \prob{\neg \event'_T\mid \event_T}\cdot \prob{\event_T}\\
&=\prob{\event''_s\mid \event_T\band \neg\event'_T}\cdot \prob{\neg \event'_T}\cdot \prob{\event_T}\\
&\geq \frac{1}{2}\cdot \left (1-\frac{8}{n^{9}\cdot\log^2(\wmax)}\right )\cdot \frac{1}{2^{11}\cdot \log n \cdot \log (W')}\\
&\geq \frac{1}{2^{13}\cdot \log n\cdot \log (W')}.
\end{split}\] 


We now bound the running time of the algorithm. Recall that the running time of the algorithm from \Cref{thm: main subroutine} is $O\left (n^{86/45+o(1)}\cdot (\log(\wmax))^{O(1)}\right )$. We execute this algorithm for $|T|\leq n$ terminals, and for each terminal $s\in T$, we excute the algorithm $O(\log n\cdot \log \wmax)$ times (though some of these executions are incomplete). Therefore, the total time that we spend on executing the algorithm from \Cref{thm: main subroutine} is 
$O\left (n^{3-4/45+o(1)}\cdot (\log(\wmax))^{O(1)}\right )$.
Additionally, we perform $O(\log n\cdot \log \wmax)$ calls to the algorithm from \Cref{lem: oracle in bulk} for the $\delta$-subgraph oracle, where $\delta=\frac{4}{45}-o(1)$. From \Cref{lem: oracle in bulk}, the total running time of all such calls is bounded by:

\[O\left (n^{3-0.478\delta+o(1)}\cdot \poly\log(\wmax)\right )\leq O\left ( n^{2.96}\cdot (\log(\wmax))^{O(1)}\right ).\]

The running time of Algorithm $\algterm$ is bounded by $O(n\log(\wmax))$, and so the total running time of the entire algorithm is  bounded by $O\left ( n^{2.96}\cdot (\log(\wmax))^{O(1)}\right )$.

In order to complete the proof of \Cref{thm: alg 3} and of \Cref{thm: main}, it is now enough to prove {\Cref{thm: main subroutine}, which we do in the remainder of the paper.

\section{Main Subroutine: the Proof of \Cref{thm: main subroutine}}
	\label{sec: main subroutine}
	
	
We assume that we are given as input a simple undirected $n$-vertex graph $G$ with integral weights $1\leq w(v)\leq W'$ on vertices $v\in V(G)$, a subset $T\subseteq V(G)$ of vertices that we call terminals, and a vertex $s\in T$. Additionally, the algorithm is given two copies of the split graph $G'$ corresponding to $G$ in the modified adjacency-list representation. 
For every edge $e\in E(G')$, its capacity in $G'$ is denoted by $c(e)$. Recall that we denoted by $\wmax'$ the smallest integral power of $2$ with $\wmax'>\max_{v\in V(G)}\set{w(v)}$, and by $\wmax$ the smallest integral power of $2$ with $\wmax\geq 2n\cdot \wmax'$. For every edge $e\in E(G')$, if $e$ is a special edge then $c(e)< \wmax'$, and if it is a regular edge, then $c(e)=\wmax$ hold.

For the sake of the analysis, we fix a minimum vertex-cut $(L,S,R)$ in $G$, as follows. 
If there exists a minimum vertex-cut $(L',S',R')$ with $|S'|\geq n^{44/45}\cdot |L'|$, such that $(s,T)$ is a good pair with respect to $(L',S',R')$, then we set $(L,S,R)=(L',S',R')$. Otherwise, we let  $(L,S,R)$ be any global minimum vertex-cut in $G$. For brevity, if $(s,T)$ is a good pair with respect to this fixed cut $(L,S,R)$, we will simply say that $(s,T)$ is a \emph{good pair}; in particular, $|S|\ge n^{44/45}\cdot |L|$ must hold in this case.
Recall that we have denoted by $T_s=T\setminus\left(\set{s}\cup N_G(s)\right )$, and, if $(s,T)$ is a good pair, then $T_s\subseteq R$ holds.

We start by modifying the graph $G'$ as follows. We add a new destination vertex $t$ to it, and, for every vertex $v\in T_s$, we add the edge $(v^{\inn},t)$ of capacity $\wmax$ to the graph. These newly added edges are considered regular edges. Notice that this modification can be done in time $O(n)$, and that the value of the 
minimum $s^{\out}$-$t$ edge-cut in $G'$ is equal to the value of the minimum $s$-$T_s$ vertex-cut in $G$. Moreover, if $(s,T)$ is a good pair, then the set $E_S=\set{(v^{\inn},v^{\out})\mid v\in S}$ of edges of $G'$ is the minimum $s^{\out}$-$t$ edge-cut in $G'$, and $c(E_S)=\opt$, the value of the global minimum vertex-cut in $G$.

We denote by $\opt_s$ the value of the minimum $s^{\out}$-$t$ edge-cut in this modified graph $G'$.
Our goal now is to output a value $c_s\geq \opt_s$. Additionally, if $(s,T)$ is a good pair, then we need to ensure that, with probability at least $\half$, $c_s=\opt_s$ holds.

Over the course of the algorithm, we may modify the graph $G'$ by performing \emph{shortcut operations}, that we define next.

\paragraph{Shortcut Operations.}
In a shortcut operation, we add an edge $(v,t)$ of capacity $\wmax$ to graph $G'$, for some vertex $v\in V(G')$. All such newly added edges are considered regular edges. If $(s,T)$ is a good pair, then the shortcut operation in which an edge $(v,t)$  is inserted into $G'$ is a \emph{valid shortcut operation}, and edge $(v,t)$ is a \emph{valid shortcut edge}, if and only if either (i) $v$ is a copy of a vertex $u\in R$; or (ii) $v$ is an out-copy of a vertex $u'\in S$.  If the pair $(s,T)$ is not good, then any such shortcut operation is a valid shortcut operation, and any edge $(v,t)$ is a valid shortcut edge.  We will use the following claim that summarizes the properties of the graph obtained from $G'$ via a sequence of shortcut operations.

\begin{claim}\label{claim: shortcut}
	Let $G''$ be a graph that is obtained from $G'$ via a sequence of shortcut operations. Then the value of the minimum $s^{\out}$-$t$ edge-cut in $G''$ is at least $\opt_s$. Moreover, if $(s,T)$ is a good pair, and all shortcut operations in the sequence are valid, then the value of the minimum $s^{\out}$-$t$ edge-cut in $G''$ is exactly $\opt_s=\opt$.
\end{claim}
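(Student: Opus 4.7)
The plan is to handle the two parts separately, with part (1) being essentially immediate and part (2) requiring the construction of an explicit $s^{\out}$--$t$ cut in $G''$ of value $\opt$.

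For part (1), I would observe that a shortcut operation only adds edges to the graph and does not introduce any new vertices. Hence every partition $(X,Y)$ of $V(G'')$ with $s^{\out}\in X, t\in Y$ is also a valid partition of $V(G')$, and the total capacity of edges crossing it in $G''$ is at least its total capacity in $G'$ (since additional shortcut edges can only contribute non-negatively). Taking the minimum over all such partitions gives that the minimum $s^{\out}$--$t$ edge-cut in $G''$ has value at least $\opt_s$.

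For part (2), assume $(s,T)$ is a good pair and all shortcut operations are valid. First note that in this case $\opt_s=\opt$, since the edge set $E_S=\{(v^{\inn},v^{\out})\mid v\in S\}$ already forms an $s^{\out}$--$t$ cut in $G'$ of capacity $w(S)=\opt$, while on the other hand any $s^{\out}$--$t$ edge cut in $G'$ induces an $s$--$T_s$ vertex cut in $G$, whose value is at least $\opt$ because $G\setminus S$ is already disconnected. To prove the upper bound in $G''$, I would exhibit the explicit cut $X=L^*\cup S^{\inn}$ and $Y=V(G'')\setminus X=R^*\cup S^{\out}\cup\{t\}$ and show $c(E_{G''}(X,Y))=\opt$.

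The key verification is a case analysis of which edges cross $(X,Y)$. For the special edges $(v^{\inn},v^{\out})$, only vertices in $S$ contribute (for $v\in L$ both copies lie in $X$, for $v\in R$ both lie in $Y$, for $v\in S$ exactly one copy lies on each side), giving total capacity $w(S)=\opt$. For the regular edges $(u^{\out},v^{\inn})$, a crossing would require $u\in L$ and $v\in R$, but the global cut property of $(L,S,R)$ forbids any such edge in $G$. Finally, for each shortcut edge $(v,t)$ added by a valid shortcut operation, $v\in S^{\out}\cup R^*\subseteq Y$ and $t\in Y$, so no shortcut edge crosses $(X,Y)$. Consequently $(X,Y)$ is an $s^{\out}$--$t$ cut in $G''$ of value exactly $\opt$, which combined with part (1) yields the desired equality. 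The only mildly delicate step is tracking which vertex copies end up on which side of the cut to confirm that no regular or shortcut edge crosses it; this is a straightforward but bookkeeping-heavy check using the validity definition and the fact that $s\in L$.
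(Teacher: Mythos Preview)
Your proposal is correct and follows essentially the same approach as the paper: both exhibit the explicit cut $X=L^*\cup S^{\inn}$, $Y=V(G'')\setminus X$ and verify that only the special edges $\{(v^{\inn},v^{\out})\mid v\in S\}$ cross it, with valid shortcut edges lying entirely inside $Y$. The only cosmetic difference is that for part~(1) the paper phrases the monotonicity via max-flow (adding edges cannot decrease the maximum flow, hence cannot decrease the minimum cut), whereas you argue directly on cuts; the two are equivalent. One small bookkeeping point: your case analysis of regular edges covers only edges of the form $(u^{\out},v^{\inn})$, but $G'$ also contains the edges $(v^{\inn},t)$ for $v\in T_s$ added when $t$ was introduced; since $(s,T)$ is a good pair, $T_s\subseteq R$ and so these edges also lie within $Y$, so the argument goes through unchanged.
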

\begin{proof}
From the Max-Flow/Min-Cut theorem, the value of the minimum $s^{\out}$-$t$ edge-cut in $G'$ is equal to the value of the maximum $s^{\out}$-$t$ flow in $G'$. The insertion of edges into $G'$ may not decrease the value of the maximum $s^{\out}$-$t$ flow. Therefore, the value of the minimum $s^{\out}$-$t$ edge-cut in $G''$ is at least $\opt_s$.

	Assume now that $(s,T)$ is a good pair, and that all shortcut opreations in the sequence are valid. Let $E_S=\set{(v^{\inn},v^{\out})\mid  v\in S}$ be the set of special edges in graph $G'$ that represent the vertices of $S$.
	Let $X=L^*\cup S^{\inn}$ and let $Y=V(G')\setminus X$. Note that $(X,Y)$ is an $s^{\out}$-$t$ cut in $G'$. Clearly, $E_{G'}(X,Y)=E_S$, and so the edges of $E_S$ separate the vertices of $X$ from the vertices of $Y$ in $G'$. Moreover, since $(L,S,R)$ is a  minimum vertex-cut in $G$,  $\sum_{e\in E_S}c(e)=\sum_{v\in S}w(v)=\opt_s=\opt$. 
	Since $s^{\out}\in X$ and $t\in Y$, and since the valid shortcut edges that we add all connect vertices of $Y$ to $t$, set $E_S$ of edges also separates $X$ from $Y$ in $G''$. Therefore, the value of the minimum $s^{\out}$-$t$ edge-cut in $G''$ is $\sum_{e\in E_S}c(e)=\opt_s=\opt$.
\end{proof}

For convenience, we denote by $G''$ the graph that our algorithm maintains by starting from $G'$, and then gradually adding shortcut edges to it, while graph $G'$ remains fixed throughout the algorithm. We use the first copy of the modified adjacency-list representation of $G'$ that our algorithm receives as input in order to maintain $G''$. 
We mark in the modified adjacency-list representation of $G''$ every vertex $v$ for which a shortcut edge $(v,t)$ was added to $G''$.
Our algorithm also maintains an $s^{\out}$-$t$ flow $f$ in $G''$, by starting with $f=0$, and then gradually augmenting it. 
Specifically, we maintain a set $E_f=\set{e\in E(G'')\mid f(e)>0}$ of edges  in an efficient search structure (such as a Binary Search Tree), and, for every edge $e\in E_f$, we list the current flow value $f(e)$. For convenience, we denote this data structure by $\DS_f$.
Lastly, we maintain a graph $H=G''_f$, the residual flow network of $G''$ with respect to $f$. Initially, when $G''=G'$ and $f=0$, graph $H$ is identical to $G'$. We use the second copy of the modified adjacency-list representation of $G'$ in order to maintain $H$. 
For every edge $e\in E(H)$, its capacity in $H$ is denoted by $\hat c(e)$; we sometimes refer to $\hat c(e)$ as the \emph{residual capacity} of $e$.
Recall that, for an edge $e=(x,y)\in E(G'')$, if $f(e)<c(e)$, then a copy of $e$ (that we refer to as its \emph{forward copy}) is present in $H$, and $\hat c(e)=c(e)-f(e)$. Additionally, if $f(e)>0$, then an anti-parallel edge $(y,x)$ (that we refer to as the \emph{backward copy of $e$}) is present in $H$, with $\hat c(y,x)=f(e)$. If edge $e\in E(G'')$ is special, then all its copies in $H$ are special edges as well; otherwise, the copies of $e$ in $H$ are regular edges.
We emphasize that $\opt_s$ is the value of the maximum $s^{\out}$-$t$ flow in the initial graph $G'$; the value of the $s^{\out}$-$t$ flow in $G''$ may be higher.
Next, we define several parameters that are used throughout the algorithm.

\paragraph{Parameters.}
Throughout the algorithm, we use a parameter $\eps=\frac{1}{45}$.
Recall that, if $(s,T)$ is a good pair, then $|S|\geq n^{44/45}\cdot |L|=n^{1-\eps}\cdot |L|$ must hold, and so:

\begin{equation}\label{eq: small L}
|L|\leq \frac{|S|}{n^{1-\eps}}\leq n^{\eps}
\end{equation}

We assume that $n$ is greater than a sufficiently large constant, so that, for example:

\begin{equation}\label{eq: eps n} 
\eps>\frac{1}{\sqrt{\log n}}
\end{equation}
holds. Additionally, we assume that: 

\begin{equation}\label{eq: neps large}
n^{\eps/10}>2^{20}\log^2 n \cdot \log^2 (\wmax).
\end{equation}

Notice that, if either of these assumptions does not hold, then we can use the algorithm from \Cref{thm: maxflow}, in order to compute the value $\opt_s$ of the maximum $s^{\out}$-$t$ flow in $G'$, which is equal to the value of the minimum $s$-$T_s$ vertex-cut in $G$, as observed above. The running time of the algorithm in this case is $O\left (n^{2+o(1)}\cdot \log(\wmax)\right )\leq O\left(n \cdot (\log(\wmax))^{O(1)}\right )$, as required. Therefore, we assume from now on that $n$ is large enough, and that, in particular, Inequalities \ref{eq: eps n}  and \ref{eq: neps large} hold.
Our second main parameter is $\gamma$, defined to be the smallest integral power of $2$, so that $\gamma\geq n^{7/9}$ holds. It is immediate to verify that:

\begin{equation}\label{eq: gamma}
n^{7/9}\leq \gamma\leq 2n^{7/9}
\end{equation}

and so:

\begin{equation}\label{eq: gamma2}
n^{2/3+5\eps}\leq \gamma\leq 2n^{1-10\eps},
\end{equation}

and additionally:

\begin{equation}\label{eq: gamma3}
\gamma\geq n^{14\eps}.
\end{equation}

Our last main parameter is $\delta=4\eps-\frac{\log(4000\log n)}{\log n}=\frac{4}{45}-\frac{\log(4000\log n)}{\log n}=\frac{4}{45}-o(1)$. 
This parameter is chosen so that:

\begin{equation}\label{eq: delta}
n^{\delta}=\frac{n^{4\eps}}{4000\log n}
\end{equation}

holds,
and it will be used as the parameter for the $\delta$-subgraph oracle. Notice that from our assumptions:

\begin{equation}\label{ineq: delta2}
\delta\geq\frac{1}{\sqrt{\log n}}
\end{equation}

holds, as required by the algorithm for the subgraph oracle from \Cref{lem: oracle in bulk}.

Our algorithm consists of at most $z=\ceil{\log \left(\frac{8\wmax'\cdot n^2}{\gamma}\right )}\leq 4\log n\cdot \log (\wmax')$ phases. 
For all $0\leq i\leq z$, we let $M_i=\frac{\wmax'}{2^i\cdot \gamma}$. Notice that $M_0=\frac{\wmax'}{\gamma}$, and $M_z\leq \frac{1}{8n^2}$, and for all $i$, $M_i$ is an integral power of $2$.

Our algorithm may terminate at any time with a ``FAIL''; in this case we output $c_s=\wmax$. For all $i\geq 1$, as long as the algorithm did not terminate with a ``FAIL'', at the beginning of the $i$th phase, it computes an $s^{\out}$-$t$ flow $f_{i-1}$ in the current graph $G''$, such that the following invariants hold:

\begin{properties}{I}
	\item With probability at least $1-\frac{2i}{n^{\eps}\cdot \log^2(\wmax)}$, all shortcut operation performed before the start of Phase $i$ are valid; \label{inv: shortcut}

	\item If $(s,T)$ is a good pair, then the probability that the algorithm terminates prior to the beginning of Phase $i$ with a ``FAIL'' is at most $\frac{10i^2-2i}{n^{\eps}\cdot \log^2(\wmax)}$. \label{inv: fail prob}

	\item Flow $f_{i-1}$ is $M_{i-1}$-integral, and $\val(f_{i-1})\geq \opt_s-4nM_{i-1}$;  \label{inv: flow} 
 
\item The total number of edges $e\in E(G'')$ with $f_{i-1}(e)>0$ is at most $O\left(i\cdot n^{2-4\eps+o(1)}\cdot \log^4(\wmax)\right )$; and \label{inv: few edges carry flow}

	\item  For every vertex $v\in V(G)$ with $w(v)\geq M_{i-1}$, the flow $f_{i-1}$ on the corresponding special edge $(v^{\inn},v^{\out})$ is at most $w(v)-M_{i-1}$. \label{inv: residual cap large}
\end{properties}

At the end of the last phase, the algorithm produces a flow $f_z$, for which Invariants \ref{inv: shortcut}--\ref{inv: residual cap large} hold as well. For consistency of notation, we will sometimes view the end of Phase $z$ as the beginning of Phase $(z+1)$, even though the algorithm does not execute Phase $(z+1)$.

In addition to data structure $\DS_f$ for the current flow $f$, our  algorithm also maintains the current graph $G''$ and the residual flow network $H$ of $G''$ with respect to $f$, both in the modified adjacency-list representation. We note that shortcut operations may be performed over the course of each phase, and during the preprocessing step.
In every phase, we make at most one call to the $\delta$-subgraph oracle, and we make an additional call to the oracle in the preprocessing step. 
We ensure that the running time of every phase, and of the preprocessing step, is bounded by $O\left(n^{2-4\eps+o(1)}\cdot (\log(\wmax))^{O(1)}\right )$.

\paragraph{Bad events $\event_i$.}
For all $1\leq i\leq z+1$, we let $\event_i$ be the bad event that either (i) at least one shortcut operation executed before the beginning of Phase $i$ was invalid; or (ii) the algorithm terminated prior to the beginning of Phase $i$ with a ``FAIL''.
 From the above invariants, if $(s,T)$ is a good pair, then, for all $1\leq i\leq z+1$:

\begin{equation}\label{eq: bad event i}
\prob{\event_i}\leq \frac{10i^2}{n^{\eps}\cdot \log^2(\wmax)}.
\end{equation}

We start by providing an algorithm for the preprocessing step, whose purpose is to compute the initial flow $f_0$, that will serve as input to the first phase, together with the corresponding residual flow network.

\subsection{The Preprocessing Step}
\label{subsec: preprocessing}

Our algorithm starts with $G''=G'$.
The purpose of the preprocessing step is to compute an initial $s^{\out}$-$t$ flow $f_0$ in $G''$ (after possibly performing some shortcut operations in $G''$), so that Invariants \ref{inv: shortcut}--\ref{inv: residual cap large} hold at the  beginning of Phase 1. We will also compute the residual flow network $H=G''_{f_0}$ in the modified adjacency list representation, and update the data structure $\DS_f$ to represent the flow $f=f_0$. Flow $f_0$ and graphs $G''$ and $H$ will then serve as input to the first phase.

Let $\hat U_0=\set{v\in V(G)\mid w(v)\geq  M_0}$, and let $Z=\hat U_0\setminus \left(\set{s}\cup N_G(s)\right )$. Note that we can compute $\hat U_0$ and $Z_0$ in time $O(n)$ using the adjacency-list representation of $G$.

We start with intuition.
Assume that $(s,T)$ is a good pair, so $s\in L$ holds. Then, from \Cref{cor:  heavy neighbors in S}, $|Z\cap S|\leq |L|\cdot \gamma$, and so at most $|L|\cdot \gamma+|L|\leq 2|L|\cdot \gamma$ vertices of $Z$ may lie in $S\cup L$.
Consider now any vertex $v\in \hat U_0$. We say that $v$ is a \emph{bad vertex}, if $|N_G(v)\cap Z|>2|L|\cdot\gamma$, and otherwise we say that it is a \emph{good vertex}. Notice that, if $v$ is a bad vertex, then it must have a neighbor $u\in N_G(v)$ that lies in $Z\setminus (S\cup L)\subseteq R$, and therefore $v$ may not lie in $L$. Adding the edge $(v^{\out},t)$ to graph $G''$ is then a valid shortcut operation.

We will construct a sparsified subgraph $\hat G\subseteq G''$, as follows. We start by letting $\hat G$ be obtained from $G''$ by deleting from it all vertices in $\set{v^{\inn},v^{\out}\mid v\in V(G)\setminus \left(\hat U_0\cup \set{s}\right )}$. We then decrease the capacity of every special edge $e$ in the resulting graph by at least $M_0$ and at most $2M_0$ units, so that the resulting new capacity $c'(e)$ satisfies $\max\set{c(e)-2M_0,0}\leq c'(e)\leq c(e)-M_0$, and it is an integral multiple of $M_0$ (which may be equal to $0$). 
Next, for every vertex $v\in \hat U_0$, if $v$ is a bad vertex, then we add the shortcut edge $(v^{\out},t)$ to $\hat G$ (and to $G''$), and we delete all other edges leaving $v^{\out}$ from $\hat G$. If $v$ is a good vertex, then we delete all edges connecting $v^{\out}$ to the in-copies of the vertices in  $N_G(s)$. Consider now the resulting graph $\hat G$. On the one hand, this graph is quite sparse: it only contains copies of vertices $v\in \hat U_0\cup\set{s}$ (in addition to the vertex $t$), and, for each such vertex $v$, its in-copy has $1$ outgoing edge, and its out-copy has at most $2|L|\cdot \gamma$ outgoing edges in $\hat G$, so $|E(\hat G)|\leq O(n\cdot |L|\cdot \gamma)$ holds. On the other hand, it is not hard to see (and we prove it formally below) that there is an $s^{\out}$-$t$ flow in $\hat G$ of value at least $\opt_s-4n\cdot M_0$. Since $M_0$ and $\wmax'$ are both powers of $2$, we can ensure that this flow is $M_0$-integral, and it must satisfy Invariant \ref{inv: residual cap large}, since we decreased the capacity of every special edge in $\hat G$ by at least $M_0$ flow units.
 By computing the maximum $s^{\out}$-$t$ flow in graph $\hat G$ we  then obtain the desired flow $f_0$. 

Our algorithm follows the above high-level plan, except for minor changes that are required in order to ensure that it is efficient. There are two main difficulties with the plan we have described: (i) computing the set of all bad vertices efficiently; and (ii) constructing the graph $\hat G$ efficiently. We overcome both these challenges by employing the subgraph oracle.

We now describe the preprocessing step formally. Recall that we denoted by $\hat U_0\subseteq V(G)$ the set of all vertices $v\in V(G)$ with $w(v)\geq M_0=\frac{\wmax'}{\gamma}$, and we denoted by $Z=\hat U_0\setminus  \left(\set{s}\cup N_G(s)\right )$. 
We start by querying the $\delta$-subgraph oracle with the graph $G$ and the set $Z$ of its vertices. Let $(Y^{h},Y^{\ell},E')$ denote the response of the  oracle. Recall that $|E'|\leq n^{2-\delta}\cdot \log^2(\wmax)\leq O\left(n^{2-4\eps}\cdot \log n\cdot \log^2(\wmax)\right )$ holds (since $n^{\delta}=\frac{n^{4\eps}}{4000\log n}$ from Equation \ref{eq: delta}), and that $E'=E_G(Y^{\ell},Z)$.
Let $\event'_0$ be the bad event that the oracle erred. From the definition of the $\delta$-subgraph oracle, the probability that it errs is at most $\frac{1}{n^{5}\cdot\log^4(\wmax)}$, so $\prob{\event'_0}\leq \frac{1}{n^{5}\cdot\log^4(\wmax)}$.
We use the following observation.

\begin{observation}\label{obs: no error degree est}
If Event $\event'_0$ did not happen,  and if $(s,T)$ is a good pair, then $Y^h\subseteq S\cup R$ must hold.
\end{observation}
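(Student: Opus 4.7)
The plan is to argue the contrapositive: I would show that no vertex of $L$ can belong to $Y^h$, hence $Y^h \subseteq V(G)\setminus L = S \cup R$. Fix an arbitrary $v\in L$ and decompose $N_G(v)\cap Z$ according to the partition $(L,S,R)$.

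Two of the three pieces are immediate. Since $(L,S,R)$ is a vertex-cut of $G$, no edge joins $L$ to $R$, so $N_G(v)\cap R=\emptyset$. And since $(s,T)$ is assumed to be a good pair, \Cref{eq: small L} gives $|L|\leq n^\eps$, whence $|N_G(v)\cap L|< n^\eps$.

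The main step is bounding the $S$-piece, $|N_G(v)\cap S\cap Z|$. I would invoke \Cref{cor:  heavy neighbors in S} with $x=s\in L$, weight upper bound $W:=\wmax'$, and parameter $\gamma$. By construction of $Z$, every $u\in Z\cap S$ satisfies $w(u)\geq M_0=\wmax'/\gamma$ and $u\notin N_G(s)$, so $Z\cap S$ is contained in the set $S'_\gamma$ of the corollary. The corollary then yields $|Z\cap S|\leq |L|\cdot\gamma$, and combining this with $|L|\leq n^\eps$ and $\gamma\leq 2n^{1-10\eps}$ from \Cref{eq: gamma2}, I would conclude $|N_G(v)\cap Z|\leq 4n^{1-9\eps}$ for $n$ sufficiently large.

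To close the argument, I would compare this bound against the oracle's correctness threshold. The definition of $\delta$ in \Cref{eq: delta} was tailored precisely so that $n^{1-\delta}/(1000\log n)=4n^{1-4\eps}$, which strictly exceeds $4n^{1-9\eps}$. Since $\event'_0$ did not occur, the non-error guarantee of the $\delta$-subgraph oracle forces $v\in Y^\ell$, as required. There is really no obstacle beyond choosing the corollary's weight parameter $W=\wmax'$ so that its threshold $W/\gamma$ matches the weight cutoff defining $Z$; everything else is routine.
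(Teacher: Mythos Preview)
Your proposal is correct and essentially mirrors the paper's proof. The paper argues directly---taking $v\in Y^h$, using the same application of \Cref{cor:  heavy neighbors in S} to bound $|Z\cap(S\cup L)|\le 2|L|\gamma$, and concluding that $v$ must have a neighbor in $R$ and hence $v\in S\cup R$---while you phrase it as the contrapositive, but the ingredients (the oracle threshold $4n^{1-4\eps}$, the corollary bound $|Z\cap S|\le|L|\gamma$, the bound $|L|\le n^\eps$, and the cut property $N_G(v)\cap R=\emptyset$ for $v\in L$) and the numerical comparison are identical.
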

\begin{proof}
	Assume that Event $\event'_0$ did not happen, and that $(s,T)$ is a good pair. Consider some vertex $v\in Y^h$.
	Since we have assumed that the oracle did not err: 
	
	\[\begin{split} 
	|N_G(v)\cap Z|&\geq \frac{n^{1-\delta}}{1000\log n}\\
	&=4n^{1-4\eps}\\
	&> 2n^{\eps}\cdot \gamma\\
	&\geq 2|L|\cdot \gamma
	\end{split}
	\]

since $n^{\delta}=\frac{n^{4\eps}}{4000\log n}$ from Equation \ref{eq: delta}, $\gamma< 2n^{1-5\eps}$ from Inequality \ref{eq: gamma2}, and $|L|\leq n^{\eps}$ from  Inequality \ref{eq: small L}.
	
Observe that, from \Cref{cor:  heavy neighbors in S}, since all vertex weights in $G$ are bounded by $\wmax'$ and $M_0=\frac{\wmax'}{\gamma}$,
$|Z\cap S|\leq |L|\cdot \gamma$, and so at most $|L|\cdot \gamma+|L|< 2|L|\cdot \gamma$ vertices of $Z$ may lie in $S\cup L$. Since $v$ has at least $2|L|\cdot \gamma$ neighbors in $Z$, there is some vertex $u\in Z\setminus (S\cup L)$, that is a neighbor of $v$ in $G$. But then $u\in R$ must hold, and so $v\in S\cup R$. We conclude that $Y^h\subseteq S\cup R$.
\end{proof}

We perform the following sequence of shortcut operations: for every vertex $v\in Y^h$, we add the edge $(v^{\out},t)$ to $G''$, and we modify the adjacency-list representation of $G''$ accordingly. From \Cref{obs: no error degree est}, if bad event $\event'_0$ did not happen, then all shortcut operations that we perform are valid. From now on the algorithm for the preprocessing step may not terminate with a ``FAIL'', and it will not perform any additional shortcut operations. Since $\prob{\event'_0}\leq \frac{1}{n^5\cdot \log^4(\wmax)}$, this ensures that Invariants \ref{inv: shortcut} and \ref{inv: fail prob} will hold at the beginning of Phase 1.

Next, we construct a subgraph $\hat G\subseteq G''$. We will show that $|E(\hat G)|$ is sufficiently small, and we will provide an algorithm that constructs $\hat G$ efficiently. We will also show that the value of the maximum $s^{\out}$-$t$ flow in $\hat G$ is at least $\opt_s-4n\cdot M'$. Lastly, we will compute a maximum $s^{\out}$-$t$ flow in $\hat G$ that is $M_0$-integral, obtaining the desired flow $f_0$, and the corresponding residual flow network $H$. We start by defining the graph $\hat G$.

\paragraph{Definition of $\hat G$.}
We start by setting $\hat G=G''$. We then delete from $\hat G$ the copies of all vertices in $V(G)\setminus \left(\hat U_0\cup \set{s}\right)$, and all edges that are incident to them. We reduce the capacity of each remaining special edge $e$ by at least $M_0$ and at most $2M_0$, so that the resulting new capacity $c'(e)$ satisfies $\max\set{c(e)-2M_0,0}\leq c'(e)\leq c(e)-M_0$, and it is an integral multiple of $M_0$ (which may be equal to $0$). We denote by $\edel_1\subseteq E(G'')$ all edges that were deleted from $\hat G$ in this step. If $v$ is a vertex of $\hat U_0\cup \set{s}$, then we imagine splitting the special edge $e_v=(v^{\inn},v^{\out})$ of $G''$ into two copies. The first copy, denoted by $e'_v$, has capacity  $c(e_v)-c'(e_v)$ and is added to $\edel_1$; the second copy, denoted by $e''_v$, has capacity $c'(e_v)$, and is viewed as the special edge representing $v$ in graph $\hat G$.

 Next, for every vertex $v\in Y^h\setminus\set{s}$, we delete all edges leaving $v^{\out}$ from the resulting graph, except for the edge $(v^{\out},t)$. We denote the set of all edges deleted in this step by $\edel_2$. Lastly, for every vertex $v\in \hat U_0\setminus \set{s}$, we delete all edges connecting $v^{\out}$ to the vertices in the set $\set{u^{\inn}\mid u\in N_G(s)\cup\set{s}}$, and we denote by $\edel_3$ the set of all edges deleted in this step. This completes the definition of the graph $\hat G$. Let $\edel=\edel_1\cup \edel_2\cup \edel_3$. Then $E(\hat G)=E(G'')\setminus \edel$.

\paragraph{Efficient Construction of $\hat G$.}
We now provide an efficient algorithm for constructing the graph $\hat G$.
 First, for every vertex $v\in \hat U_0\cup \set{s}$, we include vertices $v^{\inn},v^{\out}$, and the special edge $e''_v=(v^{\inn},v^{\out})$  of capacity $c'(e_v)$ as defined above. We also add vertex $t$, and all regular edges that are incident to $t$ or to $s^{\out}$ in $G''$, whose other endpoint was already added to $\hat G$. Lastly, for every edge $e=(u,v)\in E'$ (here, $E'$ is the set of edges returned by the subgraph oracle), if $u\not\in Y^h$, then we add the regular edge $(u^{\out},v^{\inn})$ to $\hat G$, and, if $v\not\in Y^h$, then we add the regular edge $(v^{\out},u^{\inn})$ to $\hat G$. The capacities of all these edges are the same as in $G''$. It is easy to verify that the graph that we obtain is indeed $\hat G$, that $|E(\hat G)|\leq 
O(n)+O(|E'|)\leq  O\left(n^{2-4\eps}\cdot \log n\cdot \log^2(\wmax)\right )$, and that $\hat G$ can be computed in time $O(n)+O(|E'|)\leq  O\left(n^{2-4\eps}\cdot \log n\cdot \log^2(\wmax)\right )$.
Notice that the capacity of every edge in $\hat G$ is an integral multiple of $M_0$.

\paragraph{Maximum Flow in $\hat G$.}

We compute the maximum $s^{\out}$-$t$ flow in $\hat G$ , using the algorithm from \Cref{thm: maxflow}, where the resulting flow, denoted by $f_0$, is given via the edge-representation, and it is $M_0$-integral (since all edge capacities in $\hat G$ are $M_0$-integral).
The running time of the algorithm is:

\[O\left (|E(\hat G)|^{1+o(1)} \cdot \log^2(\wmax)\right )\leq O\left(n^{2-4\eps+o(1)}\cdot \log^5(\wmax)\right ).\]

We then update the modified adjacency-list representation of $H$, so that it becomes a valid modified adjecency-list representation of the residual flow network of $G''$ with respect to $f_0$, and the data structure $\DS_f$ to represent the flow $f_0$. It is easy to verfy that both can be done in time $O\left(n^{2-4\eps+o(1)}\cdot \log^4(\wmax)\right )$. 
This completes the description of the algorithm for the preprocessing step. We now complete its analysis.

\paragraph{Running time analysis.}
The sets $\hat U_0$ and $Z$ of vertices of $G$ can be computed in time $O(n)$ using the adjacency list representation of graph $G$. The time that is required to compute the graph $\hat G$, to compute the flow $f_0$, and to update the residual flow network $H$ is bounded by $O\left(n^{2-4\eps+o(1)}\cdot \log^5(\wmax)\right )$. Therefore, the total running time of the preprocessing step is: 

$$O\left(n^{2-4\eps+o(1)}\cdot \log^5(\wmax)\right ).$$

We have already established that, at the beginning of Phase 1, Invariants \ref{inv: shortcut} and \ref{inv: fail prob} hold. 
Invariant \ref{inv: residual cap large} follows immediately from the definition of the flow network $\hat G$, since for every vertex $v\in M_0$, the capacity of its corresponding edge $(v^{\inn},v^{\out})$ in $\hat G$ is set to be at most $w(v)-M_0$.
Notice that the total number of edges $e\in E(G'')$ with $f_0(e)>0$ is bounded by $|E(\hat G)|\leq O\left( n^{2-4\eps+o(1)}\cdot \log^4(\wmax)\right )$, so Invariant  \ref{inv: few edges carry flow} holds as well.
We use the following claim to prove that Invariant \ref{inv: flow} also holds.

\begin{claim}\label{claim: preproc: flow value}
	$\val(f_0)\ge \opt_s-4n\cdot M_0$.
\end{claim}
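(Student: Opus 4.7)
The plan is to exhibit a feasible $s^{\out}$-$t$ flow $f'$ in $\hat G$ with $\val(f') \ge \opt_s - 3nM_0$; since $f_0$ is defined as a maximum $s^{\out}$-$t$ flow in $\hat G$, this immediately gives $\val(f_0)\ge\val(f')\ge \opt_s - 4nM_0$. The starting point is a maximum $s^{\out}$-$t$ flow $f^*$ in $G''$, which by \Cref{claim: shortcut} satisfies $\val(f^*)\ge\opt_s$. I will construct $f'$ by manipulating a flow-path decomposition of $f^*$, addressing each of the four modifications used to carve $\hat G$ out of $G''$ in turn.

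The four stages proceed as follows. \emph{Stage 1 (vertex deletions).} Discard every path visiting a copy of some $v\in V(G)\setminus(\hat U_0\cup\set s)$; since the flow through any such $v$ is bounded by $w(v)<M_0$, the total loss is at most $\sum_{v\notin\hat U_0\cup\set s} w(v)<nM_0$. \emph{Stage 2 (Step-$4$ deletions).} For each surviving path $P$ that visits at least one in-copy, identify the \emph{last} vertex $u_j$ on $P$ with $u_j\in N_G(s)$ and replace the prefix $s^{\out}\leadsto u_j^{\out}$ of $P$ by $s^{\out}\to u_j^{\inn}\to u_j^{\out}$; such $u_j$ exists because the first real vertex on any non-trivial $s^{\out}$-$t$ path in $G''$ is a neighbor of $s$, and it lies in $\hat U_0$ by Stage~1. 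The flow value is preserved. \emph{Stage 3 (Step-$3$ deletions).} For each path that crosses a deleted edge $(v^{\out},u^{\inn})$ with $v\in Y^h\setminus\set s$, truncate at $v^{\out}$ and append the shortcut edge $(v^{\out},t)$ added during the preprocessing step; no flow is lost, since the total flow rerouted onto any such shortcut is bounded by $w(v)\le\wmax'\ll\wmax$. \emph{Stage 4 (capacity reductions).} For each $v\in\hat U_0\cup\set s$, cancel flow on $(v^{\inn},v^{\out})$ until the new capacity is met; since each special-edge capacity was decreased by at most $2M_0$ and there are at most $n$ such edges, the total loss is at most $2nM_0$. Summing the losses yields $\val(f')\ge \opt_s-nM_0-2nM_0=\opt_s-3nM_0$.

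The main obstacle is Stage 2, which needs to be justified on three counts: legality of the rewritten path in $\hat G$, capacity of the new edge, and elimination of all Step-$4$ edges. Legality: $u_j\in\hat U_0\cap N_G(s)$, so both $u_j^{\inn}$ and the regular edge $(s^{\out},u_j^{\inn})$ are present in $\hat G$. Capacity: the total flow rerouted onto $(s^{\out},u_j^{\inn})$ is bounded by the capacity $w(u_j)\le\wmax'$ of the downstream special edge $(u_j^{\inn},u_j^{\out})$, which is far below $\wmax$, the capacity of $(s^{\out},u_j^{\inn})$. Elimination: because $u_j$ is the \emph{last} vertex of $N_G(s)$ on $P$, no intermediate in-copy on the rewritten path belongs to $\set{u^{\inn}:u\in N_G(s)\cup\set s}$ other than $u_j^{\inn}$ itself, which is reached only from $s^{\out}$; hence no edge from $v^{\out}$ (with $v\in\hat U_0\setminus\set s$) to $u^{\inn}$ (with $u\in N_G(s)\cup\set s$) survives on the new path. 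A minor technicality is the standard assumption that $f^*$ sends no flow into $s^{\inn}$, which holds without loss of generality by canceling any such flow along cycles.
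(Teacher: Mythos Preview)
Your proof is correct and follows essentially the same approach as the paper's. The paper also starts from a maximum $s^{\out}$-$t$ flow in $G''$, shortcuts each flow-path so that $s^{\out}$ connects directly to the last $N_G(s)$-neighbor on the path (your Stage~2), truncates at $Y^h$-vertices and routes via the shortcut edge $(v^{\out},t)$ (your Stage~3), and bounds the loss from vertex deletions and capacity reductions together via the edge set $\edel_1$, obtaining $2nM_0$ where you obtain $3nM_0$ by handling these separately in Stages~1 and~4.
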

\begin{proof}
	Consider the graph $G''$ that is obtained at the end of the preprocessing step. For convenience, we take again the view that, for every vertex $v\in  \hat U_0$, its corresponding special edge $e_v=(v^{\inn},v^{\out})$ in $G''$ is replaced by two parallel copies: copy $e'_v$ of capacity  $c(e_v)-c'(e_v)\leq 2M_0$, and copy $e''_v$ of capacity $c'(e_v)$. This splitting of special edges does not change the value of the maximum $s^{\out}$-$t$ flow in $G''$.

	From \Cref{claim: shortcut}, there is an $s^{\out}$-$t$ flow $f$ in $G''$ of value at least $\opt_s$. Consider a flow-path decomposition of $f$, and let $\pset$ be the resulting collection of paths that carry non-zero flow. 
	We can assume w.l.o.g. that every path $P\in \pset$ contains at most one vertex in set $B=\set{v^{\inn}\mid v\in N_G(s)}$, and that this vertex is the second vertex on the path. Indeed, if this is not the case, then we can ``shortcut'' the path $P$ by connecting $s^{\out}$ directly to the last vertex of $B$ on the path $P$. Therefore, in particular, the paths in $\pset$ may not contain edges of $\edel_3$.
	
	We partition $\pset$ into two subsets: set $\pset_1$ contains all paths $P$ with $P\subseteq \hat G$, and $\pset_2$ contains all remaining paths.
	
	Consider now some path $P\in \pset_2$. Since $P\not\subseteq \hat G$, at least one edge of $P$ must lie in $\edel$. Let $e(P)$ be the first such edge on $P$. As discussed, already, $e(P)\not \in \edel_3$, so either $e(P)\in \edel_1$, or $e(P)\in \edel_2$ must hold. We further partition $\pset_2$ into two subsets $\pset_2^1$ and $\pset_2^2$, where a path $P\in \pset_2$ belongs to the former set if $e(P)\in \edel_1$, and to the latter set otherwise.
	
	Consider first a path $P\in \pset_2^2$, and recall that the corresponding edge $e(P)$ must originate at a vertex $v^{\out}$ for some $v\in Y^h$. Moreover, our algorithm inserted the edge $(v^{\out},t)$ of capacity $\wmax$ into $G''$, and this edge lies in $\hat G$ as well. We replace the path $P$ with a new path $P'\subseteq \hat G$, that follows $P$ from $s^{\out}$ to $v^{\out}$, and then uses the edge $(v^{\out},t)$. We send $f(P')=f(P)$ flow on path $P'$. By using the flow-paths in $\pset_1$ and in $\set{P'\mid P\in\pset_2^2}$, we obtain a valid $s^{\out}$-$t$ flow in $\hat G$, whose value is $\val(f)-\sum_{P\in \pset_2^1}f(P)$.
	
	It is easy to verify that $|\edel_1|\leq n$, since the set $\edel_1$ contains one special edge for each vertex $v\in V(G)$. Moreover, the capacity of every edges in $\edel_1$ is bounded by $2M_0$. Therefore, $\sum_{P\in \pset_2^1}f(P)\leq 2n\cdot M_0$, and so, from the above discussion, the value of the maximum $s^{\out}$-$t$ flow in $\hat G$ is at least $\val(f)-\sum_{P\in \pset_2^1}f(P)\geq \val(f)-2n\cdot M_0$.
\end{proof}

\subsection{Finishing the Algorithm}
\label{subsec: finishing}

So far we have provided an algorithm for the preprocessing step, that computes the initial flow $f_0$, and the corresponding residual flow network $H$, for which Invariants \ref{inv: shortcut}--\ref{inv: residual cap large} hold. The running time of the preprocessing step is $O\left(n^{2-4\eps+o(1)}\cdot \log^4(\wmax)\right )$.

The algorithm for implementing each phase is the most invovled technical part of our paper, and we provide it below. We assume for now that there exists an algortihm for implementing each phase, with the following properties:

\begin{itemize}
	\item At the end of each phase, Invariants \ref{inv: shortcut}--\ref{inv: residual cap large} hold; 
	
	\item The algorithm performs at most one call to the $\delta$-subgraph oracle in every phase; and
	
	\item The running time of the algorithm for a single phase is bounded by $O\left(n^{2-4\eps+o(1)}\cdot (\log(\wmax))^{O(1)}\right )$.
\end{itemize}

We provide an algorithm for executing each phase below, after we complete the proof of \Cref{thm: main subroutine} under these assumptions.

Recall that the number of phases in the algorithm is $z=\ceil{\log \left(\frac{8\wmax'\cdot n^2}{\gamma}\right )}\leq 4\log n\cdot \log (\wmax')$, and that $M_z\leq \frac{1}{8n^2}$.
The total running time of all $z$ phases and the preprocessing step is then bounded by:

\[O\left(z\cdot n^{2-4\eps+o(1)}\cdot (\log(\wmax))^{O(1)}\right )\leq O\left(n^{86/45+o(1)}\cdot (\log(\wmax))^{O(1)}\right ), \]

since $\eps=1/45$. The number of times the algorithm accesses the $\delta$-subgraph oracle is bounded by $z+1\leq 8\log n\cdot \log (\wmax)$.

If our algorithm ever terminates with a ``FAIL'', then we return $c_s=\wmax$. Clearly, $c_s$ is at least as high as the value of the minimum $s$-$T_s$ vertex cut in $G$. Let $\event'$ be the bad event that $(s,T)$ is a good pair, and the algorithm either terminated with a ``FAIL'', or at least one of the shortcut operations that it performed was invalid. From Invariants \ref{inv: shortcut}  and \ref{inv: fail prob}:

\[\prob{\event'}\leq \frac{10(z+1)^2}{n^{\eps}\cdot \log^2(\wmax)}\leq \frac{640\log^2 n}{n^{\eps}}\leq \frac{1}{16},\]

from Inequality \ref{eq: neps large}.
Notice that, from Invariant \ref{inv: flow}, $\val(f_{z})\geq \opt_s-4n\cdot M_{z}\geq \opt_s-\frac{1}{2n}$. Let $E'=E_{f_z}$ be the collection of all edges $e\in E(G'')$ with $f_z(e)>0$. From Invariant \ref{inv: few edges carry flow}, $|E'|\leq O\left(z\cdot n^{2-4\eps+o(1)}\cdot \log^4(\wmax)\right )\leq O\left( n^{2-4\eps+o(1)}\cdot \log^5(\wmax)\right )$.

We let $G^*\subseteq G''$ be the subgraph of $G''$ that only contains the edges of $E'$. The last step in our algorithm is to compute a maximum $s^{\out}$-$t$ flow in $G^*$, using the algorithm from \Cref{thm: maxflow}. We then output $c_s=\val(f^*)$. Observe that the running time of this last step is: 

\[O(|E'|^{1+o(1)}\cdot \log(\wmax))\leq O\left( n^{2-4\eps+o(1)}\cdot \log^7(\wmax)\right )\leq O\left( n^{86/45+o(1)}\cdot \log^7(\wmax)\right ),\]

since $\eps=1/45$, and so the total running time of the algorithm is 
$O\left(n^{86/45+o(1)}\cdot (\log(\wmax))^{O(1)}\right )$.

Since all edge capacities in $G^*$ are integral, flow $f^*$, and its value $\val(f^*)$ are integral. Since $\opt_s$ is an integer, and since
 $\val(f_z)\geq \opt_s-\frac{1}{2n}$, we get that $c_s=\val(f^*)\geq \ceil{\val(f_z)}\geq \opt_s$.
 
 Lastly, if $(s,T)$ is a good pair, and if Event $\event'$ did not happen, then all shortcut operations that the algorithm performed were valid, and so, from \Cref{claim: shortcut}, the value of the maximum $s^{\out}$-$t$ flow in $G''$, and hence in $G^*$, is bounded by $\opt_s$. Since $\prob{\event'}<\half$, we get that, if $(s,T)$ is a good pair, then, with probability at least $\half$, $c_s$ is equal to the value of the minimum $s$-$T_s$ vertex-cut in $G$.

It now remains to provide an algorithm for executing each phase, which we do next.

\subsection{The Algorithm for the $i$th Phase}
\label{subsec: one phase}

We now provide the description of the $i$th phase, for $1\leq i\leq z$.
For convenience, we denote $M'=M_i=\frac{\wmax'}{2^i\cdot \gamma}$, and we denote by $\hat U_i\subseteq V(G)$ the set of all vertices $v\in V(G)$ with $w(v)\geq M'$. 
 We assume that, at the beginning of the phase, Invariants \ref{inv: shortcut}--\ref{inv: residual cap large} hold. 
In particular, we are given an $s^{\out}$-$t$ flow $f_{i-1}$ in the current graph $G''$, that is $M_{i-1}=(2M')$-integral, and $\val(f_{i-1})\geq  \opt_s-4n\cdot M_{i-1}=\opt_s-8n\cdot M'$ holds. Additionally, we are given the residual flow network $H$ for the current graph $G''$, with respect to the flow $f_{i-1}$. From Invariant \ref{inv: residual cap large}, the residual flow network $H$ has the following property:

\begin{properties}{Q}
\item For every vertex $v\in V(G)$ with $w(v)\geq M_{i-1}=2M'$, there is a special edge $e=(v^{\inn},v^{\out})$ in $H$, whose residual capacity $\hat c(e)\geq M_{i-1}=2M'$. \label{prop: residual capacity}
\end{properties}

Notice that, in particular, from the above property, and since the flow $f_{i-1}$ is $2M'$-integral, for every vertex $v\in \hat U_i$, the residual capacity of the corresponding special edge 
$(v^{\inn},v^{\out})$  in $H$ is at least $M'$. 
We further partition $\hat U_i$ into two subsets: set $\hat U_i^h$ containing all vertices $v$ for wich the corresponding special edge $e=(v^{\inn},v^{\out})$ has residual capacity $\hat c(e)\geq M'\cdot \gamma$ in $H$, and set $\hat U_i^{\ell}=\hat U_i\setminus \hat U_i^h$.

Let $\Delta=\opt_s-\val(f_{i-1})$, so $\Delta\leq 8n\cdot M'$.
By the properties of the residual flow network, there is an $s^{\out}$-$t$ flow in $H$ of value $\Delta$. 
Moreover, if $(s,T)$ is a good pair, and if all shortcut operations so far were valid, then the value of the maximum $s^{\out}$-$t$ flow in $H$ is exactly $\Delta$.

We now provide the intuition for the algorithm for the $i$th phase. For the sake of the intuition, assume that $(s,T)$ is a good pair. 
The goal of the $i$th phase is to compute an $s^{\out}$--$t$ flow $f_{i}$ in $H$ of value at least $\Delta-4n\cdot M'$, after possibly performing some shortcut operations, such that the flow is $M'$-integral. In order to do so, like in the preprocessing step, we first sparsify the graph $H$, and then compute the flow in the sparsified graph. A major step towards the sparsification of $H$ is to compute a partition $(Q,Q')$ of $\hat U_i$ into two subsets, such that $Q'$ only contains relatively few vertices of $S\cup L$. Additionally, if $|Q|>n^{1-4\eps}$, then we will also compute an $s^{\out}$-$t$ cut $(\hat X,\hat Y)$ in $H$, such that, in any maximum $s^{\out}$-$t$ flow in $H$, the total flow over the edges in $E_H(\hat X,\hat Y)$ is close to their capacity, and for every vertex $v\in Q$, $v^{\inn}\in \hat X$ and $v^{\out}\in \hat Y$ holds. Intuitively, if a vertex $v\in \hat U_i$ has many edges connecting it to the vertices of $Q'$, then we are guaranteed that $v\in S\cup R$, and therefore, we can add a shortcut edge $(v^{\out},t)$, and delete all other edges leaving $v^{\out}$ in the sparsified graph. If a vertex $v\in \hat U_i$ only has few edges connecting it to vertices of $Q'$, then we can include all the corresponding edges of $H$ in the sparsified subgraph, but we will need to further sparsify the edges connecting copies of such vertices to the vertices of $Q^*=Q^{\inn}\cup Q^{\out}$, if $|Q|>n^{1-4\eps}$. For this last sparsification step, we exploit the cut $(\hat X,\hat Y)$ that our algorithm computes in this case. In order to compute the partition $(Q,Q')$ of $\hat U_i$ with the desired properties, we start by constructing an auxiliary graph $J\subseteq H$ with $s^{\out}\in V(J)$, $t\not\in V(J)$, and which has the additional property that for the vast majority of vertices $v\in S$, $v^{\inn}\in V(J)$ and $v^{\out}\not \in V(J)$ holds. We then let $\tilde H$ be a graph that is obtained from $H$ by contracting all vertices of $V(H)\setminus V(J)$ into a new sink vertex $t'$.
 We show that the desired partition $(Q,Q')$ of $\hat U_i$ can be computed by performing a series of min-cost flow computation in  $\tilde H$. However, in order to ensure that these min-cost flow computations can be executed efficiently, we need to ensure that the graph $\tilde H$ itself is sufficiently sparse. 
  Our algorithm for Phase $i$ then consists of three steps:

\begin{itemize}[leftmargin=1.34cm]
 \item[\emph{Step $1$.}]\customlabel{step1: local flow}{1}
 Perform iterative local flow computations, whose purpose is to ensure that the graph $\tilde H$ constructed at the end of this step is sufficiently sparse. During this step we may perform some shortcut operations, and augment the current flow $f_{i-1}$, obtaining a new $s^{\out}$-$t$ flow $f$ in $G''$. This step is described in  \Cref{sec: step 1}.
 \item[\emph{Step $2$.}]\customlabel{step2: partition}{2}
 Compute a partition $(Q,Q')$ of $\hat U_i$ by executing a number of min-cost flow computations in $\tilde H$, and compute the cut $(\hat X,\hat Y)$ in $\tilde H$ if $|Q|>n^{1-4\eps}$. This step is  described in \Cref{sec: step 2}.
 \item[\emph{Step $3$.}]\customlabel{step3: sparsify}{3}
Sparsify the graph $H$ and compute the maximum $s^{\out}$-$t$ flow in the resulting flow network. We also augment the current flow $f$, to obtain the final flow $f_i$. This step is described in \Cref{sec: step 3}.
\end{itemize}

%
\paragraph{Notation.}
For brevity of notation, we denote $\hat U_i$ by $U$, and the sets $\hat U_i^h$ and $\hat U_i^{\ell}$ of vertices by $U^h$ and $U^{\ell}$, respectively. We start with the flow $f=f_{i-1}$, and, as the algorithm in this phase progresses, we may augment the flow $f$, which may lead to changes in the residual flow network. The sets $U^h$ and $U^{\ell}$ of vertices are always defined with respect to the current flow network $H$.
Recall that we denoted by  $V^{\inn}=\set{v^{\inn}\mid v\in V(G)}$ and by $V^{\out}=\set{v^{\out}\mid v\in V(G)}$.

We now describe each of the steps in turn.

\section{Step 1: Local Flow Augmentations}
\label{sec: step 1}

Our algorithm for this step relies on the technique of local flow augmentation, that was first introduced by~\cite{CHILP17} in the context of the Maximal $2$-Connected Subgraph problem. 
The technique was later refined and applied to \emph{unweighted} and \emph{approximation} versions of the global minimum vertex-cut problem~\cite{NSY19,FNY20,CQ21}.
The algorithm in this section further expands and generalizes this technique, in particular by allowing local flow augmentations to be performed via general flows rather than just paths.

In order to provide intuition for this step, we consider a subgraph $H'\subseteq H$, that is defined via the following algorithmic process. Initially, graph $H'$ contains only the vertex $s^{\out}$, and then we iteratively expand it using the following two rules:

\begin{properties}{R}
	\item If there exists an edge $e=(u,v)\in E(H)$ whose capacity $\hat c(e)\geq M'\cdot \gamma$, such that $u\in V(H')$ and $v\not\in V( H')$, then add the edge $e$ and the vertex $v$ to $H'$.\label{rule: DFS extend}
	
	\item If there exists an integer $1\leq j\leq \log \gamma$, a vertex $v^{\out}\in V^{\out}\setminus V(H')$, and a collection $Y\subseteq V(H')\cap V^{\inn}$ of exactly $2^{j}$ distinct vertices, such that, for every vertex $u^{\inn}\in Y$, a regular edge $(u^{\inn},v^{\out})$ of capacity at least $\frac{M'\cdot \gamma}{2^j}$ is present in $H$, add $v^{\out}$ to $H'$, and, for all $u^{\inn}\in Y$, add the edge $(u^{\inn},v^{\out})$ to $H'$. \label{rule: j-bad}
\end{properties}

As we will show later, if graph $H'$ is constructed via the above rules, then for every vertex $v\in V(H')$, there is an $s^{\out}$-$v$ flow in $H'$ of value at least $\frac{M'\cdot \gamma}{2}$. Let $\Gamma=V(H')\cap V^{\out}$. Intuitively, the goal of the current step is to ensure that $|\Gamma|$ is small. In order to achieve this, we will gradually augment the flow $f$, that is initially set to $f=f_{i-1}$, which, in turn, will lead to changes in the graphs $H$ and $H'$. Once we ensure that $|\Gamma|$ is sufficiently small, we will construct a new graph $\tilde H\subseteq H$, obtained from $H$ by unifying all vertices of $V(H)\setminus V(H')$ into a destination vertex $t'$, which, in turn, will allow us, in Step~\ref{step2: partition}, to compute the  partition $(Q,Q')$ of $U$ with the desired properties. We will ensure that the resulting graph $\tilde H$ is sufficiently small.

We use a parameter $N=\frac{32n^{1+2\eps}\cdot  \log^2(\wmax)}{\gamma}$. As long as $|\Gamma|\geq N$, our algorithm chooses a random vertex $v^{\out}\in \Gamma$, computes a flow $f'$ of value at least $\frac{M'\cdot \gamma}{2}$ in $H'$ from $s^{\out}$ to $v^{\out}$, performs a shortcut operation by adding the edge $(v^{\out},t)$ to graphs $G''$ and $H$, and then augments the current flow $f$ with the flow $f'$ (that is extended via the edge $(v^{\out},t)$, so it terminates at $t$). The residual flow network $H$ is then updated with the new flow, and graph $H'$ is recomputed. As long as $|\Gamma|$ is sufficiently large, with high probability, the resulting shortcut operation is valid. Since, in each such iteration, we augment the flow by at least $\frac{M'\cdot \gamma}{2}$ flow units, while we are guaranteed that, from Invariant \ref{inv: flow}, $\opt_s-\val(f_{i-1})\leq 4n M_{i-1}=8nM'$, the number of such iterations  is bounded by $z'=\ceil{\frac{32n}{\gamma}}$. In order to ensure that the running time of every iteration is sufficiently low, we do not construct the entire graph $H'$; we terminate its construction once $|V(H')\cap V^{\out}|$ reaches $N$. Consider any vertex $v^{\out}\in V^{\out}$ that does not currently lie in $H'$, and an integer $1\leq j\leq \gamma$. For brevity, we say that vertex $v^{\out}$ is \emph{$j$-bad}, if there is a subset $Y\subseteq V^{\inn}\cap V(H')$ of $2^{j}$ distinct vertices, such that, for every vertex $u^{\inn}\in Y$, a regular edge $(u^{\inn},v^{\out})$ of capacity at least $\frac{M'\cdot \gamma}{2^j}$ lies in $H$. 
The execution of Rule \ref{rule: j-bad} requires an algorithm that can correctly identify a $j$-bad vertex, if such a vertex exists for any $1\leq j\leq \log \gamma$. While our algorithm can do this efficiently for relatively small values of the parameter $j$, for higher such values, for the sake of efficiency, we will employ the algorithm for the Heavy Vertex Problem from \Cref{claim: heavy}. This, however, may lead to some imprecisions: for example, the algorithm may terminate the construction of graph $H'$ while some $j$-bad vertex $v$ still exists, for some sufficiently large value of $j$. However, in this case, we are guaranteed that, if $Y\subseteq V^{\inn}\cap V(H')$ is a set of vertices, such that, for every vertex $u^{\inn}\in Y$, a regular edge $(u^{\inn},v^{\out})$ of capacity at least  $\frac{M'\cdot \gamma}{2^j}$ lies in $H$, then $|Y|\leq O( 2^j\cdot \log n\cdot \log(\wmax))$ holds. Therefore, the graph that our algorithm constructs in the last iteration may be a subgraph of the graph $H'$ as defined above, but its properties are sufficient for our purposes.

We now turn to provide a formal description of the algorithm for Step 1.
The algorithm starts with the flow $f=f_{i-1}$, and then gradually augments it over the course of at most $z'=\ceil{\frac{32n}{\gamma}}$ iterations, where in every iteration the flow is augmented by $\frac{M'\cdot \gamma}{2}$ flow units. Throughout, we will ensure that the flow $f$ remains $M'$-integral, and additionally, the following property holds:

\begin{properties}{Q'}
	\item For every vertex $v\in V(G)$ with $w(v)\geq M'$, there is a special edge $(v^{\inn},v^{\out})$ in $H$, whose residual capacity is at least $M'$. \label{prop: residual capacity weaker}
\end{properties}

At the beginning of the algorithm, we set $f=f_{i-1}$. As observed already, this flow is $(2M')$-integral, and hence $M'$-integral. Moreover, since the flow is $(2M')$-integral, and since Property \ref{prop: residual capacity} holds for it, Property \ref{prop: residual capacity weaker} must hold as well. Indeed, consider a vertex $v\in V(G)$ with $w(v)\geq M'$. If $w(v)\geq 2M'$, then, from Property \ref{prop: residual capacity}, the residual capacity of the special edge $(v^{\inn},v^{\out})$ in $H$ is at least $2M'\geq M'$. Otherwise, since the flow $f_{i-1}$ is $(2M')$-integral, no flow may be sent along the special edge  $(v^{\inn},v^{\out})$, so its residual capacity remains $w(v)\geq M'$.
We now provide an algorithm for executing a single iteration.

\subsection{The Execution of a Single Iteration}

As input to the $q$th iteration, we receive access to the modified adjacency list representation of the current graph $G''$, the current $s^{\out}$-$t$ flow $f$ in $G''$ (that was obtained by augmenting the flow $f_i$), together with access to the  modified adjacency-list representation of the residual flow network $H=G''_{f}$. We assume that the flow $f$ is $M'$-integral, and that Property  \ref{prop: residual capacity weaker} holds for it.
Recall that we have defined a parameter $N=\frac{32n^{1+2\eps}\cdot  \log^2(\wmax)}{\gamma}$. Let
$\tau^*=\frac{n}{\sqrt{\gamma}}$, and let $j^*=\ceil{\log(\tau^*)}$.
Throughout, we denote by $\hat T\subseteq V(H)$ the set of vertices $v\in V(H)$ for which the edge $(v,t)$ is present in $H$.

We are now ready to describe an algorithm for a single iteration of Step~\ref{step1: local flow}. 
The main subroutine of the algorithm is Procedure \explore, that is formally described in Figure \ref{fig:alg}. The procedure gradually constructs the graph $H'$
using Rules \ref{rule: DFS extend} and \ref{rule: j-bad}, denoting by $X^*$ the set of vertices of $H'$ that were already discovered by the algorithm. It terminates once either (i) $|X^*\cap V^{\out}|\geq N$ holds; or (ii) a vertex of $\hat T$ was reached; or (iii) the construction of the graph is complete. As mentioned already, in the latter case, the resulting graph may be a subgraph of the graph $H'$ defined via the rules  \ref{rule: DFS extend} and \ref{rule: j-bad}, but some $j$-bad vertices may still exist. We denote by $J\subseteq H'$ the subgraph of $H'$ that the procedure constructs. In addition to the set $X^*\subseteq  V(H')$ of vertices that were already discovered, the algorithm maintains a set $X\subseteq X^*$ of vertices that it still needs to explore, and the counter $N'=|X^*\cap V^{\out}|$. Finally, for every vertex $v^{\out}\in V^{\out}\setminus V(J)$, for all $1\leq j\leq j^*$, the algorithm maintains a counter $n_j(v^{\out})$, that counts the number of regular edges $(u^{\inn},v^{\out})$ with $u^{\inn}\in X^*$ whose capacity is at least $\frac{M'\cdot \gamma}{2^j}$, that the algorithm discovered. Once this counter reaches value $2^{j}$, vertex $v^{\out}$ is added to $X$ and to $X^*$, and the counter is no longer maintained.
 All such edges $(u^{\inn},v^{\out})$ that the algorithm discovers prior to adding $v^{\out}$ to $X$ are stored together with $v^{\out}$, and, once $v^{\out}$ is added to $X$, all these edges are added to graph $J$.
 The formal description of Procedure \explore appears in Figure \ref{fig:alg}.

\program{Procedure \explore}{fig:alg}
{
\begin{enumerate}
	\item Initialize the data structures:
	\begin{enumerate}
		\item Set $X^*=X=\set{s^{\out}}$, $N'=0$.
		\item Initialize the graph $J$ to contain the vertex $s^{\out}$ only.
		\item For all $1\leq j\leq j^*$, for every vertex $v^{\out}\in V^{\out}$, set $n_j(v^{\out})=0$.
	\end{enumerate}  
	
	\item While $X\neq \emptyset$, do: \label{step to go back}
	\begin{enumerate}
		\item Let $x\in X$ be any vertex. Delete $x$ from $X$.
		\item If $x\in \hat T$: terminate the algorithm and return $x$.
	
		Assume from now on that $x\not \in \hat T$.
		
		\item For every edge $e=(x,y)\in E(H)$ with $y\not\in X^*$, do:

			\begin{enumerate}
				\item If the capacity of $e$ in $H$ is at least $M'\cdot \gamma$:\label{procedure: add vertex type 1}
				\begin{itemize}
					\item add $y$ to $X$ and to $X^*$;
					\item add $y$ and the edge $(x,y)$ to $J$;
					\item if $y\in V^{\out}$, then increase $N'$ by $1$, and, if $N'=N$ holds, halt.
				\end{itemize}
			    \item Otherwise, if $x\in V^{\inn}$, $y\in V^{\out}\setminus X^*$, and $e$ is a regular edge, then, for all $1\leq j\leq j^*$, such that the capacity of $e$ in $H$ is at least $\frac{M'\cdot \gamma}{2^j}$ do: \label{procedure: add vertex type 2}
			    \begin{itemize}
			    	\item increase $n_j(y)$ by $1$ and store the edge $(x,y)$ with $y$.
			    	\item if $n_j(y)=2^{j}$, then add $y$ to $X$ and to $X^*$; add $y$ and the $2^{j}$ regular edges of capacity at least $\frac{M'\cdot \gamma}{2^j}$ stored with it to $J$; increase $N'$ by $1$, and, if $N'=N$ holds, halt.
			    \end{itemize}
			\end{enumerate}
	\end{enumerate}
\item For all $j^*< j\leq \log \gamma$, do: \label{step: bad vertices}
\begin{enumerate}
	\item Execute the algorithm from \Cref{claim: heavy} on the instance $(H,Z,Z',\tau,c^*)$ of the Heavy Vertex problem, where $Z=V(J)\cap V^{\inn}$; $Z'=V^{\out}\setminus V(J)$; $c^*=\frac{M'\cdot \gamma}{2^j}$; $\tau=2^{j}+1$; and $W=\wmax$. 
	
	\item If the algorithm returns bit $b=1$, a vertex $v^{\out}\in Z'$ and vertices $u^{\inn}_1,\ldots,u^{\inn}_{\tau+1}\in Z$:\label{step: addition3}

(Observe that there is at most one vertex $u^{\inn}_r$ such that edge $(u^{\inn}_r,v^{\out})$ is special; assume w.l.o.g. that, if such a vertex $u^{\inn}_r$ exists then $r=\tau+1$.)
	\begin{itemize}		
		\item Add $v^{\out}$ to $X$ and to $X^*$;
		\item Add $v^{\out}$ to $J$, and, for all $1\leq a\leq \tau$, add edge $(u^{\inn}_{a},v^{\out})$ to $J$;
		\item Increase $N'$ by $1$, and, if $N'=N$ holds, halt.
		\item go to Step (\ref{step to go back})
	\end{itemize}
\end{enumerate}
\end{enumerate}
}

Let $J'$ be the graph that is identical to $J$, except that, for every special edge $e$ that lies in $J$, we decrease its capacity $\hat c(e)$ in $J'$ by at least $M'$ units and at most $2M'$ units, so that the new capacity $\hat c'(e)$ becomes an integral multiple of $M'$, and $\hat c(e)-2M'\leq \hat c'(e)\leq \hat c(e)-M'$ holds. 
(Note that the capacity of each special edge in $J$ must be at least $\gamma\cdot M'$, from the description of Procedure \explore).
Observe that now all edge capacities in $J'$ are integral multiples of $M'$. Indeed, the above transformation ensures that the capacity of every special edge in $J'$ is an integral multiple of $M'$, and $M'$ is an integral power of $2$ from the definition. Since the capacity of every regular edge in $G''$ is $\wmax$, an integral power of $2$, 
we get that it is an integral multiple of $M'$. Lastly, since the flow $f$ is $M'$-integral, we get that the capacity of every regular edge in $J'$ is an integral multiple of $M'$. We use the following claim.

\begin{claim}\label{claim: flow}
	For every vertex $v$ that is ever added to graph $J$ by Procedure $\explore$, there is an $s^{\out}$-$v$ flow in $J'$ of value at least $\frac{M'\cdot \gamma}{2}$, that is $M'$-integral. Moreover, after $f$ is augmented with this flow, Property \ref{prop: residual capacity weaker} continues to hold.
\end{claim}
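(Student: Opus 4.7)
The plan is to establish a stronger statement by induction on the order in which vertices are added to $J$ by Procedure~\explore: for every vertex $v \in V(J)$, the minimum $s^{\out}$-$v$ edge-cut in $J'$ has value at least $\frac{M'\cdot\gamma}{2}$. Since all edge capacities in $J'$ are integral multiples of $M'$ (as already verified immediately after the definition of $J'$), the Max-Flow/Min-Cut theorem then yields an $M'$-integral $s^{\out}$-$v$ flow of value at least $\frac{M'\cdot\gamma}{2}$, which is exactly what the claim requires. The base case $v = s^{\out}$ is trivial.

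For the inductive step, I would fix any $s^{\out}$-$v$ cut $(A,B)$ in $J'$ with $s^{\out} \in A$ and $v \in B$ and show that its capacity is at least $\frac{M'\cdot\gamma}{2}$, splitting into cases by the rule that inserted $v$. If $v$ was added via Step~\ref{procedure: add vertex type 1}, then there is a previously inserted $x \in V(J)$ with an edge $(x,v)$ of $H$-capacity at least $M'\cdot\gamma$; either $x \in B$, in which case $(A,B)$ is an $s^{\out}$-$x$ cut and induction gives capacity $\geq \frac{M'\cdot\gamma}{2}$, or $x \in A$, in which case $(x,v)$ crosses the cut with capacity in $J'$ at least $M'\cdot\gamma - 2M' \geq \frac{M'\cdot\gamma}{2}$ (using $\gamma \geq 4$, which follows from $\gamma \geq n^{7/9}$). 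If $v \in V^{\out}$ was added via Step~\ref{procedure: add vertex type 2} or Step~\ref{step: bad vertices}, then (after discarding the at-most-one possibly special edge in the latter case) there exist $2^j$ distinct previously inserted vertices $u_1^{\inn},\ldots,u_{2^j}^{\inn} \in V(J)$ such that each regular edge $(u_k^{\inn},v)$ lies in $J$ with capacity at least $\frac{M'\cdot\gamma}{2^j}$. If some $u_k^{\inn} \in B$, induction gives cut capacity $\geq \frac{M'\cdot\gamma}{2}$; otherwise all $u_k^{\inn} \in A$, and the $2^j$ regular edges to $v$ together cross the cut with total capacity at least $2^j \cdot \frac{M'\cdot\gamma}{2^j} = M'\cdot\gamma$.

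For the second part, I would argue that Property~\ref{prop: residual capacity weaker} is preserved when $f$ is augmented by such an $s^{\out}$-$v$ flow $f'$ in $J'$. The key observation is built into the definition of $J'$: every special edge $e$ of $J$ has its capacity reduced by at least $M'$ in passing to $J'$, so $\hat c'(e) \leq \hat c(e) - M'$, where $\hat c$ denotes the residual capacity in $H$ prior to augmentation. Consider any $u \in V(G)$ with $w(u) \geq M'$ and the forward special edge $e_u = (u^{\inn},u^{\out})$, whose residual capacity in $H$ is at least $M'$ at the start of the iteration by the hypothesis of Step~\ref{step1: local flow}. If $e_u \notin J'$, then $f'$ does not touch $e_u$ and the forward residual is unchanged; otherwise $f'(e_u) \leq \hat c'(e_u) \leq \hat c(e_u) - M'$, so the new forward residual is at least $\hat c(e_u) - f'(e_u) \geq M'$. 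If instead $f'$ uses only the backward copy of $e_u$, the forward residual only increases.

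I expect the principal obstacle to be organizing the inductive step cleanly, particularly reconciling the three distinct insertion rules of Procedure~\explore and correctly accounting for the possibly-special edge that is explicitly excluded in Step~\ref{step: bad vertices}; once the invariant is phrased in terms of the minimum cut in $J'$ rather than as an explicit flow construction, each case reduces to a simple dichotomy on the side of $(A,B)$ containing the previously inserted witnessing vertex (or vertices), turning the argument into a uniform and routine min-cut case analysis.
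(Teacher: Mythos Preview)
Your proposal is correct. Both you and the paper induct on the order in which vertices are added to $J$, and both handle the ``Moreover'' part identically via the $M'$-slack built into the definition of $J'$. The difference is in how the inductive step is carried out: the paper works on the flow side, explicitly constructing an $s^{\out}$-$v$ flow by taking the $2^j$ inductively guaranteed flows to $u_1^{\inn},\ldots,u_{2^j}^{\inn}$, scaling each by $1/2^j$, superimposing them, and then pushing through the witnessing edges to $v$; you work on the cut side, fixing an arbitrary $s^{\out}$-$v$ cut and doing a dichotomy on which side the witnessing predecessor(s) lie. Your argument is arguably cleaner, since it avoids the explicit flow-averaging construction and the need to verify capacity constraints for the superimposed flow; the paper's version is more constructive. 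Both are standard and equally valid.
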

\begin{proof}
	 It is  sufficient to prove that, for every vertex $v$ that lies in $J$, there is
	 an $s^{\out}$-$v$ flow in $J'$ of value at least $\frac{M'\cdot \gamma}{2}$. Indeed, if this is the case then, since all edge capacities in $J'$ are integral multiples of $M'$, we get that there is
	 an $s^{\out}$-$v$ flow in $J'$ of value at least $\frac{M'\cdot \gamma}{2}$ that is $M'$-integral. After augmenting $f$ with this flow, Property \ref{prop: residual capacity weaker} will hold, since, for every vertex $u\in V(G)$ with $w(u)\geq M'$, we decreased the capacity of the special edge $(u^{\inn},u^{\out})$ in $J'$ by at least $M'$ units. In the remainder of the proof we show that, for every vertex 
	$v\in V(J)$, there is
	an $s^{\out}$-$v$ flow in $J'$ of value at least $\frac{M'\cdot \gamma}{2}$.

	The proof is by induction over the time when the vertex $v$ was added to $J$. The base of the induction is the beginning of the algorithm, when $J$ only contains the vertex $s^{\out}$, so the claim trivially holds.
	
	Consider now some time $\tset$ during Procedure \explore, when a vertex $v$ was added to $J$, and assume that the claim holds for all vertices that were added before $v$ to $J$. Vertex $v$ may only be added to $J$ either during Step \ref{procedure: add vertex type 1}, or Step \ref{procedure: add vertex type 2}, or Step \ref{step: addition3} of the procedure. Assume first that $v$ was added during  Step  \ref{procedure: add vertex type 1}. Then there is a vertex $u$ that already lied in $J$ before time $\tset$, such that an edge $(u,v)$ of capacity at least $M'\cdot \gamma$ is present in $H$, and is added to $J$ at time $\tset$. From the induction hypothesis, there is a flow $f'$ in $J'$, that sends $\frac{M'\cdot \gamma}2$ flow units from $s^{\out}$ to $u$. Since the capacity of edge $(u,v)$ in $J'$ is at least $M'\cdot \gamma-2M'\geq\frac{M'\cdot \gamma}{2}$, we can extend this flow by sending $\frac{M'\cdot \gamma}{2}$ flow units along the edge $(u,v)$, obtaining a flow in $J'$, that sends $\frac{M'\cdot \gamma}{2}$ flow units from $s^{\out}$ to $v$.
	
	Assume now that $v$ was added to $J$ during Step  \ref{procedure: add vertex type 2} or Step \ref{step: addition3}. Then $v\in V^{\out}$, and there is some integer $1\leq j\leq \gamma$, and a collection $Y$ of $2^{j}$ vertices of $V^{\inn}$ that lied in $J$ before time $\tset$, such that, for every verex $u\in Y$, a regular edge $(u,v)$ of capacity at least $\frac{M'\cdot \gamma}{2^j}$ was added to $J$ at time $\tset$. Denote $Y=\set{u_1,\ldots,u_{2^{j}}}$. From the induction hypothesis, for all $1\leq r\leq 2^{j}$, there is a flow $f_r$ of value $\frac{M'\cdot \gamma}{2}$ from $s^{\out}$ to $u_r$ in graph $J'$. For all $1\leq r\leq 2^{j}$, let $\pset_r$ be a flow-path decomposition of $f_r$, and, for every path $P\in \pset_r$, let $f_r(P)$ be the amount of flow sent via this path by $f_r$.
	Notice that, for all $1\leq r\leq 2^{j}$, edge $(u_r,v)$ is regular, and so its capacity in $J'$ remains the same as in $J$, and is at least $\frac{M'\cdot \gamma}{2^j}$.
	
	Let $f'$ be the flow obtained by taking the union of the flows $f_1,\ldots,f_{2^{j}}$, scaled down by factor $2^{j}$. In other words, for all $1\leq r\leq 2^{j}$, for every path $P\in \pset_r$, flow $f'$ sends $\frac{f_r(P)}{2^{j}}$ flow units via the path $P$, simultaneously. Then it is easy to verify that $f'$ is a valid flow in $J'$ (that is, it respects the edge capacities), the total amount of flow that $s^{\out}$ sends via $f'$ is $\frac{M'\cdot \gamma}{2}$, and, for all $1\leq r\leq 2^{j}$, vertex $u_r$ receives exactly $\frac{M'\cdot \gamma}{2^{j}}$ flow units. For all $1\leq r\leq 2^{j}$, we send the $\frac{M'\cdot \gamma}{2^{j}}$ flow units that terminate at $u_r$, via the edge $(u_r,v)$, to vertex $v$, obtaining a valid $s^{\out}$-$v$ flow in $J'$, of value $\frac{M'\cdot \gamma}{2}$. We conclude that there exists an $s^{\out}$-$v$ flow in $J'$ of value $\frac{M'\cdot \gamma}{2}$. 
\end{proof}

\paragraph{Running Time Analysis.}
We now analyze the running time of Procedure \explore. We first analyze the running time while ignoring the time spent on calls to the algorithm from \Cref{claim: heavy} for the Heavy Vertex problem. 
Let $\Gamma$ be the set of all vertices of  $V^{\out}\setminus\set{s^{\out}}$ that have been added to $J$ by Procedure \explore, so $|\Gamma|\leq N$ holds. The time required to initialize the data structures is $O(n\log n)$. 
Notice that the running time of Procedure \explore (excluding the time spent on calls to the algorithm from \Cref{claim: heavy}) can be asymptotically bounded by the total number of edges that the algorithm considered, times $O(\log n)$.
Recall that every edge of $G''$ that is not incident to $t$ has one of its endpoints in $V^{\out}$. We partition the edges that the procedure explored into two subsets: set $E^1$ contains all edges that have an endpoint in $\Gamma$, and set $E^2$ contains all remaining edges. Clearly, $|E^1|\leq n\cdot |\Gamma|\leq n\cdot N\leq O\left (\frac{n^{2+2\eps}\cdot \log^2(\wmax)}{\gamma} \right )$, since $N=\frac{32n^{1+2\eps}\cdot  \log^2(\wmax)}{\gamma}$. Next, we bound the number of edges of $E^2$ that the algorithm ever considered in Step \ref{procedure: add vertex type 2}. Observe that, for every vertex $v^{\out}\in V^{\out}$ and integer $1\leq j\leq j^*$, once $2^{j}$ edges from set $E^2$ that are incident to $v^{\out}$ are considered by the algorithm, vertex $v^{\out}$ is added to $J$ and to $\Gamma$. Therefore, the total number of edges of $E^2$ considered by the algorithm in this step is bounded by: 

\[O\left(n\cdot 2^{j^*}\right )\leq O\left(n\cdot \tau^*\right )\leq O\left (\frac{n^2}{\sqrt{\gamma}}\right ),\] 

since $j^*=\ceil{\log(\tau^*)}$, and  $\tau^*=\frac{n}{\sqrt{\gamma}}$.

 The time required to consider the remaining edges of $E^2$ is asymptotically bounded by the time spent on calls  to the algorithm from \Cref{claim: heavy}, that we bound next. Recall that the running time of the algorithm from  \Cref{claim: heavy}, on input $(G,Z,Z',\tau,c^*)$, is $O\left(\frac{n\cdot |Z|\cdot \log n\cdot\log(\wmax)}{\tau}\right )\leq O\left(\frac{n^2\cdot \log n\cdot\log(\wmax)}{\tau}\right )$. Since we only apply the claim to values $\tau=2^{j}$ where $j^*< j\leq \log \gamma$,
where 
$j^*=\ceil{\log(\tau^*)}$ and $\tau^*=\frac{n}{\sqrt{\gamma}}$, we get that the running time of a single application of the algorithm from  \Cref{claim: heavy} is: 

\[ O\left(\frac{n^2\cdot \log n\cdot\log(\wmax)}{\tau^*}\right )\leq O\left(n\cdot \sqrt{\gamma}\cdot \log n\cdot \log(\wmax)\right ).\]

We use the following observation to bound the number of calls to the algorithm from \Cref{claim: heavy} in a single iteration.

\begin{observation}\label{obs: num of calls}
	The number of times that Procedure \explore executes the algorithm from \Cref{claim: heavy} is bounded by $O(N\log n)\leq O\left(\frac{n^{1+2\eps}\cdot \log n\cdot \log^2(\wmax)}{\gamma}\right )$. 
\end{observation}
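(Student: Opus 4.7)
The plan is to bound the calls to the heavy-vertex algorithm by observing that such calls only occur inside Step~\ref{step: bad vertices} of Procedure~\explore, and that every visit to this step either leads to the addition of a new vertex to $V^{\out}\cap V(J)$ (returning to Step~\ref{step to go back}) or exhausts all values of $j$ and terminates the outer loop. Therefore it suffices to separately bound (i) the number of distinct invocations of Step~\ref{step: bad vertices} and (ii) the number of calls made within a single such invocation.

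For (ii), the inner loop in Step~\ref{step: bad vertices} ranges over $j^*<j\leq\log\gamma$, so a single invocation performs at most $\log\gamma-j^*\leq\log\gamma$ calls to the algorithm from~\Cref{claim: heavy} before either succeeding (and jumping back to Step~\ref{step to go back}) or returning $b=0$ for every $j$. Using $\gamma\leq 2n^{7/9}$ from Inequality~\ref{eq: gamma}, this is $O(\log n)$.

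For (i), note that every \emph{successful} invocation of Step~\ref{step: bad vertices} adds exactly one new vertex $v^{\out}\in V^{\out}\setminus V(J)$ to $X^*\cap V^{\out}$, thereby incrementing the counter $N'$ by one. Since Procedure~\explore halts as soon as $N'=N$, the number of successful invocations is at most $N$. Including the (at most one) terminal unsuccessful invocation that exits with $X=\emptyset$, the total number of visits to Step~\ref{step: bad vertices} is at most $N+1$.

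Combining (i) and (ii), the total number of calls to the algorithm of~\Cref{claim: heavy} is at most $(N+1)\cdot O(\log n)=O(N\log n)$. Substituting $N=\frac{32n^{1+2\eps}\log^2(\wmax)}{\gamma}$ yields the claimed bound of $O\!\left(\frac{n^{1+2\eps}\log n\cdot \log^2(\wmax)}{\gamma}\right)$. No step here is subtle; the only thing to verify carefully is that control really does return to Step~\ref{step to go back} on a success (so we do not spuriously re-enter Step~\ref{step: bad vertices} between consecutive vertex additions), which is explicit in the pseudocode.
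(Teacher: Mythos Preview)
Your proof is correct and follows essentially the same approach as the paper's: both argue that each visit to Step~\ref{step: bad vertices} makes at most $O(\log n)$ calls to the algorithm from \Cref{claim: heavy}, and that Step~\ref{step: bad vertices} can be visited at most $O(N)$ times because every non-terminal visit adds a vertex to $\Gamma$. Your version is slightly more explicit (bounding the inner loop by $\log\gamma-j^*$ and accounting for the final unsuccessful visit), but the argument is the same.
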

\begin{proof}
	Note that all calls to the algorithm from \Cref{claim: heavy} occur during Step \ref{step: bad vertices}, and every time Step \ref{step: bad vertices} is executed, the algorithm from \Cref{claim: heavy} is called at most $O(\log n)$ times.
	
	If  Procedure \explore does not terminate once Step \ref{step: bad vertices} is executed, then at least one vertex is added to $\Gamma$ during this step. Therefore, Step \ref{step: bad vertices}  may be executed at most $O(N)\leq O\left(\frac{n^{1+2\eps}\cdot \log^2(\wmax)}{\gamma}\right )$ times. Altogether, the total number of calls to the algorithm from   \Cref{claim: heavy} by Procedure \explore is bounded by $O(N\cdot \log n)\leq O\left(\frac{n^{1+2\eps}\cdot \log n\cdot \log^2(\wmax)}{\gamma}\right )$.
\end{proof}

We conclude that the total running time that Procedure \explore spends on  Step \ref{step: bad vertices} is bounded by:

\[ O\left(n\cdot \sqrt{\gamma}\cdot \log n\cdot \log(\wmax)\right )\cdot O\left(\frac{n^{1+2\eps}\cdot \log n\cdot  \log^2(\wmax)}{\gamma}\right )\leq O\left(\frac{n^{2+2\eps}\cdot  \log^2n\cdot \log^3(\wmax)}{\sqrt{\gamma}}\right ). \]

To summarize, the running time of Procedure \explore is bounded by:

\[\begin{split} 
&O\left (\frac{n^{2+2\eps}\cdot  \log^2(\wmax)}{\gamma} \right )+\left (\frac{n^2}{\sqrt{\gamma}}\right )+O\left(\frac{n^{2+2\eps}\cdot  \log^2n\cdot \log^3(\wmax)}{\sqrt{\gamma}}\right )\\
&\quad\quad\quad\quad\quad\quad \leq O\left(\frac{n^{2+2\eps}\cdot  \log^2n\cdot \log^3(\wmax)}{\sqrt{\gamma}}\right ).
\end{split}
 \]

\paragraph{Bounding $|E(J)|$.}
We use the following observation to bound $|E(J)|$.

\begin{observation}\label{obs: bound E(J)}
	$|E(J)|\leq  O\left(\frac{n^{2+2\eps}\cdot \log^2(\wmax)}{\gamma}\right )$.
\end{observation}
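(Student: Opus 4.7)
The plan is to exploit two facts: that $J$ inherits the bipartite structure of the split graph—every edge of $J$ has exactly one endpoint in $V^{\out}$—and that Procedure \explore halts as soon as it has accumulated $N$ vertices of $V^{\out}$, so $|V(J)\cap V^{\out}|\leq N+1$.

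First I will argue that $t\notin V(J)$. By construction of $G''$ and by the definition of $\hat T$, the only vertices adjacent to $t$ in $H$ are precisely those of $\hat T$ (forward edges from $\hat T$-vertices to $t$, and residual reverses that go back from $t$ into $\hat T$). Procedure \explore terminates at the very top of any iteration in which a vertex $x\in\hat T$ is popped from $X$, before any outgoing edge of $x$ is examined; so $t$ is never added to $X^*$. Since every edge of $G'$ is either a special edge $(v^{\inn},v^{\out})$ or a regular edge $(x^{\out},y^{\inn})$, and residual reversals preserve endpoints, every edge of $H$ that is not incident to $t$ has one endpoint in $V^{\inn}$ and one in $V^{\out}$. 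Combined with $t\notin V(J)$, this means every edge of $J$ has exactly one endpoint in $V(J)\cap V^{\out}$.

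Next, the degree in $H$ of any $v^{\out}$, ignoring edges to $t$, is bounded by $2n$: vertex $v^{\out}$ is incident in $G''$ to at most $\deg_G(v)+1\leq n$ edges of the split graph (the special edge plus one regular edge per neighbor of $v$ in $G$), and each of these contributes at most two edges to $H$ (a forward and a backward residual copy). Hence
\[|E(J)|\;\leq\;2n\cdot|V(J)\cap V^{\out}|.\]
Finally, the counter $N'$ is incremented precisely when a new $V^{\out}$-vertex is added to $X^*$ (other than the initial $s^{\out}$), and the procedure halts the instant $N'=N$. Therefore $|V(J)\cap V^{\out}|\leq N+1$, and substituting $N=32n^{1+2\eps}\log^2(\wmax)/\gamma$ yields
\[|E(J)|\leq 2n(N+1)=O\!\left(\frac{n^{2+2\eps}\cdot\log^2(\wmax)}{\gamma}\right).\]
No serious obstacle is expected; the only technical point worth checking is that the shortcut edges of the current phase, always oriented into $t$, can only expand $\hat T$ and thus can only make termination occur sooner, so they do not jeopardize the claim $t\notin V(J)$.
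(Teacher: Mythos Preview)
Your proof is correct and follows essentially the same approach as the paper: both argue that every edge of $J$ has an endpoint in $V(J)\cap V^{\out}$, that $|V(J)\cap V^{\out}|\leq N+1$, and that each such vertex has degree $O(n)$ in $H$, giving $|E(J)|\leq O(nN)$. Your version is somewhat more careful in explicitly justifying that $t\notin V(J)$ and spelling out the $2n$ degree bound, whereas the paper simply cites the earlier remark that every edge of $G''$ not incident to $t$ has an endpoint in $V^{\out}$.
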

\begin{proof}
As observed already, every edge of $E(J)$ must be incident to a vertex of $\Gamma\cup \set{t}$. Since $|\Gamma|\leq N$, we get that $|E(J)|\leq O(n\cdot N)\leq O\left(\frac{n^{2+2\eps}\cdot \log^2(\wmax)}{\gamma}\right )$, since $N=\frac{32n^{1+2\eps}\cdot  \log^2(\wmax)}{\gamma}$.
\end{proof}

Once the algorithm from Procedure \explore terminates, we consider the following three cases.

\paragraph{Case 1: a vertex $x\in \hat T$ is returned.}\customlabel{case1: vertex_in_T}{1}
The first case happens if the algorithm for Procedure \explore terminates with a vertex $x\in \hat T$. 
Let $J'\subseteq J$ be the graph obtained from $J$ as before, by reducing the capacity of every special edge in $J$ by at least $M'$ and at most $2M'$ units, so it becomes an integral multiple of $M'$. From \Cref{claim: flow}, there is a flow $f'$ in $J'$ that is $M'$-integral, in which $s^{\out}$ sends $\frac{M'\cdot \gamma}{2}$ flow units to $x$. 
We use the algorithm from \Cref{thm: maxflow} to compute this flow in time $O(|E(J)|^{1+o(1)}\cdot \log \wmax)$. We extend this flow 
by sending $\frac{M'\cdot \gamma}{2}$ flow units via the edge $(x,t)$ obtaining an $s^{\out}$-$t$ flow $f''$ in $H$ of value $\frac{M'\cdot \gamma}{2}$, that is $M'$-integral. Finally, we augment the flow $f$ with this resulting flow $f''$, and update the residual flow network $H$. It is easy to verify that the resulting flow $f$ remains $M'$-integral, and that it has Property \ref{prop: residual capacity weaker}. The additional time that the algorithm spends in this case is bounded by $O(|E(J)|^{1+o(1)}\cdot \log (\wmax))$.

\paragraph{Case 2: $N'=N$.}\customlabel{case2: N}{2}
The second case happens if Procedure \explore terminated because $N'=N$ held, so, at the end of the procedure, $|\Gamma|=N$ holds.
We select a single vertex $v^{\out}\in \Gamma$ uniformly at random. We then perform a shortcut operation, by adding an edge $(v^{\out},t)$ of capacity $\wmax$ to graph $G''$ and to the residual flow network $H$. 
The remainder of the algorithm for this case is identical to the algorithm for Case~\ref{case1: vertex_in_T}, except that vertex $x$ is replaced with $v^{\out}$.
As before, from \Cref{claim: flow}, there is a flow $f'$ in $J'$ that is $M'$-integral, in which $s^{\out}$ sends $\frac{M'\cdot \gamma}{2}$ flow units to $v^{\out}$. 
We use the algorithm from \Cref{thm: maxflow} to compute such a flow in time $O(|E(J)|^{1+o(1)}\cdot \log (\wmax))$. We extend this flow 
by sending $\frac{M'\cdot \gamma}{2}$ flow units via the edge $(v^{\out},t)$, obtaining an $s^{\out}$-$t$ flow $f''$ in $H$ of value $\frac{M'\cdot \gamma}{2}$, that is $M'$-integral. Finally, we augment the flow $f$ with this resulting flow $f''$, and update the residual flow network $H$. It is easy to verify that the resulting flow $f$ remains $M'$-integral, and that it has Property \ref{prop: residual capacity weaker}. The additional time that the algorithm spends in this case is bounded by $O(|E(J)|^{1+o(1)}\cdot \log (\wmax))$.

This completes the description of Case~\ref{case2: N}. We use the following simple claim to show that the probability that the resulting shortcut operation was invalid is low.

\begin{claim}\label{claim: shortcut one iteration}
	If the algorithm chooses to perform a shortcut operation in an iteration, then the probability that this shortcut operation is invalid is at most $\frac{\gamma}{32n^{1+\eps}\cdot  \log^2(\wmax)}$.
\end{claim}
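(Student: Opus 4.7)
The plan is to reduce the claim to a simple counting argument on the set $\Gamma$ from which the shortcut endpoint is chosen. First I would dispose of the easy case: if $(s,T)$ is not a good pair, then by definition every shortcut operation is valid, so the bound on the failure probability holds trivially. Thus I may assume for the rest of the analysis that $(s,T)$ is a good pair with respect to the fixed minimum vertex-cut $(L,S,R)$, which, by our choice of that cut, satisfies $|S| \ge n^{44/45}\cdot |L|$, and in particular, by Inequality~\ref{eq: small L}, $|L| \le n^{\eps}$.

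Next I recall that the shortcut is only performed in Case~\ref{case2: N}, and it consists of adding the edge $(v^{\out},t)$ where $v^{\out}$ is chosen uniformly at random from the set $\Gamma$ of out-copies that were added to $J$ by Procedure \explore. At the moment the shortcut is performed, the algorithm has ensured that $|\Gamma|=N$. By the definition of a valid shortcut operation, $(v^{\out},t)$ is \emph{invalid} if and only if $v\in L$, i.e., $v^{\out}\in L^{\out}\cap \Gamma$.

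The heart of the proof is then the observation that the randomness used to choose $v^{\out}\in \Gamma$ is completely independent of the identity of $L$ (the set $\Gamma$ is determined by the deterministic execution of Procedure \explore up to this point, and the random choice is uniform over $\Gamma$). Therefore the probability of an invalid shortcut is simply
\[
\Pr[v^{\out}\in L^{\out}\cap \Gamma]=\frac{|L^{\out}\cap \Gamma|}{|\Gamma|}\le \frac{|L|}{N}\le \frac{n^{\eps}}{N}.
\]
Finally I substitute $N=\frac{32n^{1+2\eps}\cdot \log^2(\wmax)}{\gamma}$ to obtain
\[
\frac{n^{\eps}}{N}=\frac{n^{\eps}\cdot \gamma}{32n^{1+2\eps}\cdot \log^2(\wmax)}=\frac{\gamma}{32n^{1+\eps}\cdot \log^2(\wmax)},
\]
which is the claimed bound. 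There is no real obstacle here; the only subtle point is just confirming that the randomness in selecting $v^{\out}$ is independent of the conditioning implicit in ``$(s,T)$ is a good pair'' and of the set $L$, both of which are purely analytic objects that do not affect the distribution of the algorithm's random choice in this iteration.
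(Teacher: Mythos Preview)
Your proposal is correct and follows essentially the same argument as the paper: reduce to the case where $(s,T)$ is a good pair, note that invalidity requires $v\in L$, and bound $\Pr[v\in L]\le |L|/|\Gamma|\le n^{\eps}/N=\gamma/(32n^{1+\eps}\log^2(\wmax))$. One small inaccuracy: Procedure \explore is not strictly deterministic (it invokes the randomized Heavy Vertex algorithm), but this is immaterial since the bound $|L^{\out}\cap\Gamma|/|\Gamma|\le n^{\eps}/N$ holds for any realization of $\Gamma$ with $|\Gamma|=N$, and the choice of $v^{\out}$ is fresh uniform randomness over $\Gamma$.
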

\begin{proof}
	Let $(v^{\out},t)$ be the shortcut edge that the algorithm has added. Note that the shortcut operation may only be invalid if (i) $(s,T)$ is a good pair; and (ii) $v\in L$ holds. Since
	$|\Gamma|=N=\frac{32n^{1+2\eps}\cdot  \log^2(\wmax)}{\gamma}$, and since, from Inequality \ref{eq: small L}, if $(s,T)$ is a good pair, $|L|\leq n^{\eps}$ holds. We then get that, if $(s,T)$ is a good pair, the probability that $v^{\out}\in L$ is bounded by:
	
	\[\prob{v\in L}\leq \frac{|L|}{|\Gamma|}\leq 
	\frac{\gamma}{32n^{1+\eps}\cdot \log^2(\wmax)}.\]
\end{proof}

\paragraph{Case 3.} The third case happens when neither of the first two cases happen. In this case, the current iteration becomes the last one, and we construct two graphs, $\tilde H$ and $\tilde H'$, to be used in Step~\ref{step2: partition} of the algorithm, whose construction we describe later.

\paragraph{Running time of an iteration.}
From our discussion, the running time of a single iteration, excluding the time required to construct the graph $\tilde H$, is bounded by:

\[\begin{split}
 &O\left(\frac{n^{2+2\eps}\cdot  \log^2n\cdot \log^3(\wmax)}{\sqrt{\gamma}}\right )+O(|E(J)|^{1+o(1)}\cdot \log \wmax)\\
 &\quad\quad\quad\quad\quad  \leq O\left(\frac{n^{2+2\eps}\cdot  \log^2n\cdot \log^3(\wmax)}{\sqrt{\gamma}}\right )+O\left(\frac{{n^{2+2\eps +o(1)}\cdot  \log^4(\wmax)}}{\gamma}\right )\\
& \quad\quad\quad\quad\quad  \leq O\left(\frac{n^{2+2\eps+o(1)}\cdot \log^4(\wmax)}{\sqrt{\gamma}}\right ),
\end{split} \]

from \Cref{obs: bound E(J)}.

\subsection{Completing Step 1}
The algorithm for Step 1 performs $z'= \ceil{\frac{32n}{\gamma}}$ iterations as described above. If, after $z'$ iterations, the algorithm for a single iteration still terminates with cases~\ref{case1: vertex_in_T} or~\ref{case2: N}, then we terminate the algorithm for the current phase with a ``FAIL''.

Note that the running time of the algorithm for Step 1 so far is:

\[\begin{split}
&O\left(\frac{n^{2+2\eps+o(1)}\cdot  \log^4(\wmax)}{\sqrt{\gamma}}\right )\cdot z'\\
&\quad\quad\quad\quad \leq 
O\left(\frac{n^{3+2\eps+o(1)}\cdot   \log^4(\wmax)}{\gamma^{3/2}}\right )\\
&\quad\quad\quad\quad \leq O\left (n^{2-4\eps+o(1)}\cdot \log^4(\wmax)\right ),
\end{split}\]

since $\gamma\geq n^{2/3+4\eps}$ from Inequality \ref{eq: gamma2}.

Let $\hat \event$\customlabel{event: heavy err}{$\hat \event$}be the bad event that the algorithm from \Cref{claim: heavy} for the Heavy Vertex problem erred at any time when it was called during the execution of Step 1.
Recall that the probability that the algorithm from \Cref{claim: heavy} errs in a single execution is at most $\frac{1}{n^{10}\cdot \log^6(\wmax) }$, and, from \Cref{obs: num of calls}, the algorithm is executed at most $O\left(\frac{n^{1+2\eps}\cdot \log n\cdot \log^2(\wmax)}{\gamma}\right )$ in every iteration. Since the number of iterations is bounded by $z'=\ceil{\frac{32n}{\gamma}}$, and since we assumed that $n$ is sufficiently large, from the Union Bound, we get that:

\begin{equation}\label{eq: prob error Heavy Vertex}
\prob{\hat \event}\leq \frac{1}{n^{10}\cdot \log^6(\wmax) }\cdot O\left(\frac{n^{2+2\eps}\cdot \log n\cdot \log^2(\wmax)}{\gamma^2}\right ) \leq \frac{1}{n^7\cdot \log^4(\wmax)}.
\end{equation}

Let $J$ be the graph constructed by Procedure \explore in the last iteration, and let $E^R\subseteq E(H)$ be the set of all {\bf regular} edges $(v^{\inn},u^{\out})$ with $v^{\inn}\in V(J)\cap V^{\inn}$ and $u^{\out}\in V^{\out}\setminus V(J)$. For all $1\leq j\leq \log \gamma$, let $E^R_j\subseteq E^R$ denote the set of all edges $e\in E^R$ with $\frac{M'\cdot \gamma}{2^j}\leq \hat c(e)< \frac{M'\cdot \gamma}{2^{j-1}}$, where $\hat c(e)$ is the capacity of edge $e$ in graph $H$. 
We use the following claim that bounds the number of regular edges of $H$ whose first endpoint lies in $J$.

\begin{claim}\label{claim: ER}
	If $e=(x,y)\in E(H)$ is a regular edge with $x\in V(J)$ and $y\not\in V(J)$, then $e\in E^R$. Additionally, 
	$E^R=\bigcup_{j=1}^{\log\gamma}E^R_j$ holds. Lastly, if Event~\ref{event: heavy err} did not happen, then,
	 for all $1\leq j\leq \log\gamma$, every vertex $u^{\out}\in V^{\out}\setminus V(J)$ is incident to at most $2^{j+11}\cdot \log n\cdot \log (\wmax)$ edges of $E^R_j$, and at most $O(\gamma\cdot \log n\cdot \log(\wmax))$ edges of $E^R$.
\end{claim}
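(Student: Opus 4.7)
For the first part, the plan is to leverage the following fact: every \emph{forward} regular edge $e=(u^{\out},v^{\inn})\in E(H)$ has residual capacity $\hat c(e)\geq M'\cdot \gamma$. Indeed, in $G''$ the only edge entering $u^{\out}$ is the special edge $(u^{\inn},u^{\out})$ of capacity $w(u)<\wmax'$, so flow conservation forces the flow on any forward regular edge leaving $u^{\out}$ to be at most $\wmax'$; hence $\hat c(e)\geq \wmax-\wmax'\geq \wmax'\geq M'\cdot \gamma$, using $\wmax\geq 2n\wmax'$ and $M'\cdot \gamma=\wmax'/2^i\leq \wmax'$. Assuming toward contradiction that a regular $e=(x,y)\in E(H)$ has $x\in V(J)$, $y\notin V(J)$, and is forward (so $x\in V^{\out}$), note that the current iteration ended in Case 3, meaning that the main while loop of \explore emptied $X$, so every vertex of $X^{*}=V(J)$ was eventually popped and all its $H$-neighbours scanned. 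When $x$ was processed we had $y\notin X^{*}$ (else $y\in V(J)$) together with $\hat c(e)\geq M'\cdot \gamma$, and so Step \ref{procedure: add vertex type 1} would have added $y$ to $X^{*}$, a contradiction. Hence $e$ must be backward, i.e.\ $x\in V^{\inn}$ and $y\in V^{\out}$, giving $e\in E^R$.

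For the partition $E^R=\bigcup_{j=1}^{\log\gamma}E^R_j$, I will show that every $e\in E^R$ obeys $M'\leq \hat c(e)<M'\cdot \gamma$. The upper bound repeats the previous contradiction at $x\in V(J)\cap V^{\inn}$: if $\hat c(e)\geq M'\cdot \gamma$, Step \ref{procedure: add vertex type 1} would have pulled $y=u^{\out}$ into $V(J)$ at the time $x$ was popped. For the lower bound, each $e\in E^R$ is the backward copy of a forward regular edge $e'$ in $G''$, so $\hat c(e)=f(e')$, and the $M'$-integrality of $f$ together with $f(e')>0$ forces $f(e')\geq M'$.

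For the degree bound I plan to split the analysis at $j^{*}=\lceil\log(n/\sqrt{\gamma})\rceil$. For $1\leq j\leq j^{*}$, Step \ref{procedure: add vertex type 2} explicitly maintains, for every $y\in V^{\out}\setminus X^{*}$, a counter $n_j(y)$ of scanned edges into $y$ of capacity at least $M'\cdot \gamma/2^j$, and promotes $y$ into $X^{*}$ the instant $n_j(y)$ reaches $2^j$. In Case 3 every in-neighbour of $y$ lying in $V(J)\cap V^{\inn}$ has been scanned, and since $y\notin V(J)$ we must have $n_j(y)<2^j$. Every $E^R_j$-edge into $y$ has capacity at least $M'\cdot \gamma/2^j$, so it is counted by $n_j(y)$, giving $|E^R_j\cap \delta^-(y)|<2^j\leq 2^{j+11}\log n\cdot \log\wmax$. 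For $j^{*}<j\leq \log\gamma$ I will appeal to Step \ref{step: bad vertices}: Case 3 requires each Heavy Vertex call there to return $b=0$ on parameters $(Z,Z',\tau,c^{*})=(V(J)\cap V^{\inn},\,V^{\out}\setminus V(J),\,2^j+1,\,M'\cdot \gamma/2^j)$, and, conditional on Event~\ref{event: heavy err} not having happened, the correctness clause of \Cref{claim: heavy} then asserts that no $y\in V^{\out}\setminus V(J)$ has $\geq 1000(2^j+1)\log n\cdot \log\wmax$ incoming edges from $V(J)\cap V^{\inn}$ of capacity at least $M'\cdot \gamma/2^j$. This bounds $|E^R_j\cap \delta^-(y)|$ by $1000(2^j+1)\log n\cdot \log\wmax<2^{j+11}\log n\cdot \log\wmax$. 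Summing the per-$j$ bound geometrically in $2^j$ over $j\in[1,\log\gamma]$ then yields the final $O(\gamma\log n\cdot \log\wmax)$ bound on $|E^R\cap \delta^-(y)|$.

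The one point to watch is that ``Case 3'' really is the conjunction of $X=\emptyset$ at termination of the main while loop \emph{and} each Heavy Vertex call in Step \ref{step: bad vertices} returning $b=0$, so that the enrolment rules of Steps \ref{procedure: add vertex type 1}, \ref{procedure: add vertex type 2}, and \ref{step: bad vertices} all had a chance to absorb every qualifying vertex. Once this is in place, everything above reduces to bookkeeping against the capacity bounds established in the first paragraph and the counter/oracle guarantees in the third.
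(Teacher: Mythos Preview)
Your proposal is correct and follows essentially the same approach as the paper's proof: bound the residual capacity of forward regular edges from below so that Step~\ref{procedure: add vertex type 1} would have absorbed their heads, use $M'$-integrality of $f$ to pin each backward regular edge's capacity into $[M',M'\gamma)$, and then split the per-vertex degree bound at $j^*$, invoking the counters $n_j$ for small $j$ and the Heavy-Vertex no-error guarantee for large $j$. The only cosmetic difference is that you phrase the capacity bound via flow conservation at the tail $u^{\out}$ (the paper just asserts the flow bound directly); note that this conservation argument technically needs $u\neq s$, but the edge-case $u=s$ is handled symmetrically at the head $v^{\inn}$ since $v\in N_G(s)$ forces $v\notin T_s$.
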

\begin{proof}
Let $e=(x,y)\in E(H)$ be a regular edge with $x\in V(J)$ and $y\not\in V(J)$. Assume first that $x=v^{\out}$ and $y=u^{\inn}$, for some $v^{\out}\in V^{\out}$ and $u^{\inn}\in V^{\inn}$. Then edge $(v,u)$ lies in $G$, and so the regular edge $(v^{\out},u^{\inn})$ of capacity $\wmax$ was added to the split graph $G'$. Since the flow on this edge is bounded by $\max_{v\in V(G)}\set{w(v)}\leq W'_{\max}<\wmax/2$, we get that the capacity of the edge $e$ in $H$ must remain at least $\wmax/2\geq M'\cdot \gamma$. Therefore, Procedure \explore should have added $u^{\inn}$ to $J$ in Step~\ref{procedure: add vertex type 1}, a contradiction.
It then must be the case that $x=v^{\inn}$ and $y=u^{\out}$ for some $v^{\inn}\in V^{\inn}$ and $u^{\out}\in V^{\out}$, so $e\in E^R$.
	
Next, we prove that 	$E^R=\bigcup_{j=1}^{\log\gamma}E^R_j$.
Consider any regular edge $e=(v^{\inn},u^{\out})\in E^R$. Note that it is impossible that $\hat c(e)\geq M'\cdot \gamma$, since then $e$ and $u^{\out}$ should have been included in $J$ when Procedure \explore was applied for the last time.
We now claim that $\hat c(e)\geq M'$ must also hold. Indeed, assume otherwise. Notice that edge $e=(v^{\inn},u^{\out})$ does not exist in graph $G''$, from the definition of the split graph. If $\hat c(e)<M'$, then the current flow on edge $(u^{\out},v^{\inn})$ is non-zero, but below $M'$. Since the current flow is $M'$-integral, this is impossible.
We conclude that $E^R=\bigcup_{j=1}^{\log\gamma}E^R_j$.

Consider now an integer $1\leq j\leq \gamma$, and the set $E_j^R$ of edges; recall that it contains all edges $e\in E^R$ with $\frac{M'\cdot \gamma}{2^j}\leq \hat c(e)< \frac{M'\cdot \gamma}{2^{j-1}}$. Assume first that $j\leq j^*$. If some vertex $u^{\out}\in V^{\out}\setminus V(J)$ is incident to more than $2^j$ edges of $E^R_j$, then Step \ref{procedure: add vertex type 2} of Procedure \explore should have discovered it, when the procedure was executed for the last time, and $u^{\out}$ must have been added to $J$. Therefore, the number of edges of $E_j^R$ incident to any vertex $u^{\out}\in V^{\out}\setminus V(J)$ for $j\leq j^*$ is bounded by $2^j$. 

Lastly, consider an integer $j^*<j\leq \gamma$, and assume for contradiction that there is a vertex $u^{\out}\in V^{\out}\setminus V(J)$ that is incident to more than $2^{j+11}\cdot \log n\cdot \log(\wmax)$ edges of $E_j^R$. 
Consider the last time when Step \ref{step: addition3} of Procedure \explore was executed during the last iteration, and the call to the algorithm from  \Cref{claim: heavy} for the Heavy Vertex problem 
with parameters $c^*=\frac{M'\cdot \gamma}{2^j}$ and $\tau=2^j+1$ during the execution of that step. From our assumption, there are at least $2^{j+11}\cdot \log n\cdot \log(\wmax)>1000\tau\log n\log(\wmax)$ edges of capacity at least $c^*$ connecting vertices of $Z=V(J)\cap V^{\inn}$ to the vertex $u^{\out}$, that lies in $Z'$. Since this is the last time when Step \ref{procedure: add vertex type 2} was executed,  the algorithm from   \Cref{claim: heavy} must have returned bit $b=0$, and so it must have erred, contradicting the assumption that Event~\ref{event: heavy err} did not happen.
\end{proof}

As our last step, we verify that, for all $1\leq j\leq \log \gamma$, $|E_j^R|\leq |V^{\out}\setminus V(J)|\cdot 2^{j+11}\cdot \log n\cdot \log(\wmax)$. If we discover that this is not the case for any such value $j$, then we terminate the algorithm for the current phase with a ``FAIL''. Notice that, from  \Cref{claim: ER}, this may only happen if Event~\ref{event: heavy err} occurred. 

 In order to execute this last step efficiently,
 for all $1\leq j\leq \log \gamma$, we maintain a counter $N_j$, that counts the number of edges in $E_j^R$; we initialize all these counters to $0$.
  We then consider every vertex $u^{\inn}\in V^{\inn}\cap V(J)$ in turn. When such a vertex $u^{\inn}$ is considered, we use the modified adjacency-list representation of graph $H$ in order to count all regular edges $e=(u^{\inn},v^{\out})$ leaving $u^{\inn}$, that lie in $E^R_j$, for all $1\leq j\leq \log \gamma$, and update the counter $N_j$ accordingly. Once the counter $N_j$ for any such integer $1\leq j\leq \log \gamma$ becomes greater than $|V^{\out}\setminus V(J)|\cdot 2^{j+11}\cdot \log n\cdot \log(\wmax)$, we
terminate the algorithm with a ``FAIL''. Note that, for every vertex $u^{\inn} \in V^{\inn}\cap V(J)$, for every regular edge $e=(u^{\inn},v^{\out})$ whose capacity is at least $M'$, if $e\not\in E^R$, then $v^{\out}\in \Gamma$ holds. Therefore, the time required to execute this step is bounded by:

\[O(N\cdot n)+O(n\gamma\log n\cdot \log^2(\wmax))\le O\left (n^{2-4\eps+o(1)}\cdot \log^2(\wmax)\right ),  \]

since $N=\frac{32n^{1+2\eps}\cdot  \log^2(\wmax)}{\gamma}$ and $n^{6\eps}\le \gamma\leq n^{1-4\eps}$ from Inequality \ref{eq: gamma2}.

\paragraph{Construction of Graphs $\tilde H'$ and $\tilde H$.}
At the end of Step~\ref{step1: local flow} we construct the graph $\tilde H'$, that is defined as follows. Graph $\tilde H'$ is obtained from $H$, by contracting all vertices of $V(H)\setminus V(J)$ into a supersink $t'$, while keeping the parallel edges, but deleting all edges that leave $t'$, including self-loops.
For every edge $e=(x,y)$ of $H$ with $x\in V(J)$ and $y\not\in V(J)$, there is a unique edge $e'=(x,t)$ corresponding to $e$ in $\tilde H'$, whose capacity in $\tilde H'$ is equal to the capacity of $e$ in $H$. We sometimes say that $e'$ \emph{represents} $e$ in $\tilde H'$, or that $e'$ is a \emph{copy} of $e$ in $\tilde H'$, and we may not distinguish between $e$ and $e'$.

From \Cref{claim: ER}, it is easy to verify that every edge $e\in E(\tilde H')$ belongs to one of the following three cathegories:

\begin{itemize}
	\item either one endpoint of $e$ lies in $\Gamma$;
	\item or $e$ is a copy of an edge $e'\in E^R$; 
	\item or $e$ is a copy of a special edge $e'=(v^{\inn},v^{\out})$, with $v^{\inn}\in V^{\inn}\cap V(J)$.
\end{itemize}

Notice that the number of all edges incident to the vertices of $\Gamma$ is bounded by $|\Gamma|\cdot n\leq O\left(\frac{n^{2+2\eps}\cdot \log^2(\wmax)}{\gamma}\right )$, and the set of all such edges can be constructed in time $O\left(\frac{n^{2+2\eps}\cdot \log^2(\wmax)}{\gamma}\right )\leq O\left (n^{2-4\eps+o(1)}\cdot \log^2(\wmax)\right )$ by simply inspecting all edges that are incident to the vertices of $\Gamma$ using the adjacency-list representation of $H$. The edges of $E^R$, and all special edges $e=(v^{\inn},v^{\out})$ with $v^{\inn}\in V^{\inn}\cap V(J)$
 can be computed in time $O(n\gamma\log n\cdot \log^2(\wmax))\leq O\left (n^{2-4\eps+o(1)}\cdot \log^2(\wmax)\right )$ using the 
procedure described above for verifying that $|E^R_j|$ is not too large for all $1\leq j\leq \log \gamma$. Altogether, graph $\tilde H'$ can be constructed in time:

\[O\left (n^{2-4\eps+o(1)}\cdot \log^2(\wmax)\right ).  \]

Finally, we construct the graph $\tilde H$, that is identical to $\tilde H'$, except that we unify parallel edges into a single edge, whose capacity is equal to the total capacity of the corresponding parallel edges.

Altogether, the total running time of Step 1 then remains bounded by $O\left (n^{2-4\eps+o(1)}\cdot \log^4(\wmax)\right )$. We summarize the properties of graph $\tilde H'$ that will be useful for us in the next step, in the following observation.

\begin{observation}\label{obs: props of tilde H}
If the algorithm for Step~\ref{step1: local flow} did not terminate with a ``FAIL'', then the graph $\tilde H'$ that it constructed has the following properties: 
	
\begin{properties}{P}
\item	The total number of edges $e\in E(\tilde H')$ that have one  endpoint in $V(J)\cap V^{\out}$ and another in $V(J)\setminus V^{\out}$ is bounded by $O(n\cdot N)\le O\left(\frac{{n^{2+2\eps}\cdot  \log^2(\wmax)}}{\gamma}\right )$. 

\item Every other edge of $E(\tilde H')$ that is not special, is of the form $(v^{\inn},t')$, where $v^{\inn}\in V(J)\cap V^{\inn}$, and corresponds to an edge of $E^R$, that is, a regular edge $e=(v^{\inn},u^{\out})\in E(H)$, where $u^{\out}\in V^{\out}\setminus V(J)$, $u\neq v$, and $M'\leq \hat c(e)<M'\cdot \gamma$, where $\hat c(e)$ is the capacity of $e$ in $H$. \label{prop: regular incident to t'}

\item If we denote,	for all
	$1\leq j\leq \log \gamma$, by $E_j^R$ the set of all edges $e\in E^R$ with $\frac{M'\cdot \gamma}{2^j}\leq \hat c(e)< \frac{M'\cdot \gamma}{2^{j-1}}$, then $|E_j^R|\leq n\cdot 2^{j+11}\cdot \log n\cdot \log(\wmax)$ holds. Moreover,  if bad event~\ref{event: heavy err} did not happen, then, for all $1\leq j\leq \log \gamma$, every vertex $v^{\out}\in V^{\out}\setminus V(J)$ is incident to at most  $2^{j+11}\log n\cdot\log(\wmax)$ edges of $E_j^R$. \label{prop: no bad event}

\item \label{prop: few sticking out} Altogether, $|E^R|\leq O\left (n\cdot \gamma\cdot \log n\cdot \log(\wmax)\right )$, and the total capacity of the edges of $E^R$ in $H$ is at most $2^{12}n\cdot M'\cdot \gamma\cdot \log^2n\log(\wmax)$.
Also: $$|E(\hat H')|\leq  O\left(\frac{{n^{2+2\eps}\cdot \log^2(\wmax)}}{\gamma}\right )+O\left (n\cdot \gamma\cdot \log n\cdot \log(\wmax)\right )\leq O\left(n^{2-4\eps+o(1)}\cdot \log^2(\wmax)\right ).$$
\end{properties}
\end{observation}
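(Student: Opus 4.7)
The plan is to verify the four properties in turn, drawing almost entirely on facts we have already established about $J$, $\Gamma$, $E^R$, and the construction of $\tilde H'$.

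For the first property, I would observe that by construction the only vertices of $V(J) \cap V^{\out}$ are $s^{\out}$ together with the vertices placed in $\Gamma$ by Procedure \explore. The ``fail'' branch is not triggered exactly when $|\Gamma| < N$, so $|V(J) \cap V^{\out}| \leq N+1$. Every edge counted by Property P1 is incident to one of these vertices, and in $\tilde H'$ any such vertex has at most $|V(\tilde H')| \leq 2n+1$ neighbors, giving the bound $O(nN) = O(n^{2+2\eps}\log^2(\wmax)/\gamma)$. Recalling that $\gamma \geq n^{14\eps}$ from \eqref{eq: gamma3} gives the sub-$n^{2-4\eps+o(1)}$ bound used later.

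For the second property, take any non-special edge $e = (x,y) \in E(\tilde H')$ that is not counted in Property P1. Since $e$ arose either as an edge of $H[V(J)]$ or from the contraction of $V(H) \setminus V(J)$ to $t'$, and since the excluded case covers all edges incident to $V(J) \cap V^{\out}$ inside $V(J)$, the edge must have the form $(x,y)$ with $x \in V(J) \setminus V^{\out}$ and $y \notin V(J)$, hence $y = t'$ in $\tilde H'$. Then $x$ is either in $V^{\inn} \cap V(J)$ or in $V(J) \setminus V^*$; but the latter is empty because every vertex of $V(J)$ other than $t$ is a copy of a vertex of $G$. Applying \Cref{claim: ER} to the corresponding edge of $H$ puts it in $E^R$, giving the stated form and the capacity bound $M' \leq \hat c(e) < M'\cdot\gamma$ (the upper bound comes from the fact that an edge of capacity $\geq M'\cdot\gamma$ would have been absorbed by Rule \ref{rule: DFS extend}).

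Property P3 is essentially a restatement of the bounds in \Cref{claim: ER} together with the final verification loop that explicitly tests $|E^R_j| \leq |V^{\out} \setminus V(J)| \cdot 2^{j+11} \log n \log(\wmax) \leq n \cdot 2^{j+11} \log n \log(\wmax)$ and fails otherwise. The per-vertex bound is verbatim from \Cref{claim: ER} conditioned on $\neg$Event~\ref{event: heavy err}. The obstacle-free part is to just note that the verification step was executed, so upon not failing we know the global bound holds unconditionally, while the per-vertex bound is only asserted under $\neg$\ref{event: heavy err}.

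Finally, Property P4 is a straightforward geometric sum. Using the global bound from P3,
\[
|E^R| = \sum_{j=1}^{\log \gamma}|E^R_j| \leq \sum_{j=1}^{\log\gamma} n\cdot 2^{j+11}\log n\log(\wmax) \leq 2^{13}n\gamma \log n\log(\wmax),
\]
and for the total capacity, each edge in $E^R_j$ has capacity $< M'\cdot\gamma/2^{j-1}$, so the contribution of $E^R_j$ is at most $n\cdot 2^{j+11}\log n\log(\wmax) \cdot M'\gamma/2^{j-1} = 2^{12} n M'\gamma \log n\log(\wmax)$, and summing over the $\log\gamma$ values of $j$ gives at most $2^{12} nM'\gamma \log^2 n\log(\wmax)$. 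Combining the $O(nN)$ bound from P1 with the $O(n\gamma \log n \log(\wmax))$ bound on $|E^R|$ and using $N = 32n^{1+2\eps}\log^2(\wmax)/\gamma$ and $\gamma \leq 2n^{1-10\eps}$ from \eqref{eq: gamma2} gives the advertised $n^{2-4\eps+o(1)}\log^2(\wmax)$ bound on $|E(\tilde H')|$. No step here looks technically hard; the only subtlety is making the case analysis in the second bullet airtight, since $\tilde H'$ keeps parallel edges and we must be careful that no non-special edge we are counting is secretly an edge among vertices of $V(J) \cap V^{\out}$ that we already accounted for in P1.
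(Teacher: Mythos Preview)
Your proposal is correct and follows essentially the same approach as the paper, which presents this observation as a summary of the preceding discussion without a separate proof: Property P1 follows from $|\Gamma|\leq N$ together with the $O(n)$-degree bound in $H$; Property P2 is exactly the content of \Cref{claim: ER}; Property P3 restates \Cref{claim: ER} together with the explicit verification loop that aborts if any $|E^R_j|$ is too large; and Property P4 is the geometric sum you compute. One minor phrasing issue: the sentence ``the `fail' branch is not triggered exactly when $|\Gamma|<N$'' conflates the per-iteration halt condition with the overall FAIL of Step~1---what you actually need is that the \emph{last} execution of Procedure \explore terminated in Case~3, which forces $N'<N$ and hence $|\Gamma|<N$; the FAIL conditions (too many iterations, or some $|E^R_j|$ too large) are separate.
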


We also obtain the following useful corollary of the observation:

\begin{corollary}\label{cor: edges entering L}
	Assume that $(s,T)$ is a good pair, that all shortcut operations so far have been valid, and that Event~\ref{event: heavy err} did not happen. Let $A^*$ be the set of all edges $(x,y)\in E(H)$, where $x\in V(J)$ and $y\in L^*\setminus V(J)$. Then $\sum_{e\in A^*}\hat c(e)\leq  2^{14}\cdot n^{\eps}\cdot \gamma\cdot M'\cdot \log n\cdot \log(\wmax).$
\end{corollary}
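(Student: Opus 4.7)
The plan is to split $A^*$ into its regular and special edges and bound each contribution separately, using the structural properties of $J$ guaranteed by Procedure \explore and \Cref{obs: props of tilde H}, combined with the smallness of $L$ from Inequality \ref{eq: small L}.

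For the regular edges in $A^*$, \Cref{claim: ER} forces any regular edge $(x,y)\in E(H)$ with $x\in V(J)$ and $y\not\in V(J)$ to be of the form $(v^{\inn},u^{\out})\in E^R$, with $v^{\inn}\in V(J)\cap V^{\inn}$ and $u^{\out}\in V^{\out}\setminus V(J)$. Thus, for such an edge to belong to $A^*$, its right endpoint must satisfy $u^{\out}\in L^{\out}\setminus V(J)$, a set of size at most $|L|\leq n^{\eps}$ by Inequality \ref{eq: small L}. Fix any such $u^{\out}$. For each $1\leq j\leq \log\gamma$, the stopping condition of Rule \ref{rule: j-bad} (for $j\leq j^*$) and property \ref{prop: no bad event} of \Cref{obs: props of tilde H} (for $j^*<j\leq \log\gamma$, using that Event $\hat{\event}$ did not occur) together imply that at most $2^{j+11}\log n\log(\wmax)$ edges of $E_j^R$ are incident to $u^{\out}$, while each such edge has capacity at most $2M'\gamma/2^j$. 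Summing a single $j$ yields a capacity contribution of at most $2^{12}M'\gamma\log n\log(\wmax)$, and summing over $j\leq \log\gamma$ and over the at most $n^{\eps}$ vertices of $L^{\out}\setminus V(J)$ yields the regular-edge contribution of at most $2^{12}\cdot n^{\eps}\cdot M'\cdot \gamma\cdot \log n\cdot \log\gamma\cdot \log(\wmax)$. Using $\log\gamma\leq \log n\leq \log(\wmax)$, this is absorbed into the target bound with room to spare in the constant.

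For the special edges in $A^*$, note that any special edge $(x,y)\in A^*$ is either a forward copy $(v^{\inn},v^{\out})$ with $v^{\inn}\in V(J)$, $v^{\out}\not\in V(J)$, or a backward copy $(v^{\out},v^{\inn})$ with $v^{\out}\in V(J)$, $v^{\inn}\not\in V(J)$; in either case $v\in L$, so at most $2|L|\leq 2n^{\eps}$ such edges exist. Moreover, each has residual capacity strictly less than $M'\gamma$. Indeed, in the forward case, a residual capacity of at least $M'\gamma$ would have triggered Rule \ref{rule: DFS extend} when Procedure \explore explored $v^{\inn}\in V(J)$, adding $v^{\out}$ to $V(J)$, a contradiction. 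In the backward case, the capacity equals $f(e)$ for the corresponding forward special edge; if this value were at least $M'\gamma$, then applying Rule \ref{rule: DFS extend} to the backward edge leaving $v^{\out}\in V(J)$ would have added $v^{\inn}$ to $V(J)$. Hence the special contribution is at most $2n^{\eps}\cdot M'\gamma$, which is negligible.

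Combining the two contributions gives the claimed bound $\sum_{e\in A^*}\hat{c}(e)\leq 2^{14}\cdot n^{\eps}\cdot \gamma\cdot M'\cdot \log n\cdot \log(\wmax)$, with the constant $2^{14}$ absorbing the additional $\log$-factor produced by summing over the scales $j$. The main subtlety will be in the bookkeeping of the logarithmic factors across the two regimes $j\leq j^*$ and $j>j^*$ — in the first regime one relies on the deterministic stopping bound $2^j$, while in the second one relies on the probabilistic (Heavy-Vertex) bound, which is a factor $O(\log n\log(\wmax))$ larger — but both regimes telescope against the capacity $2M'\gamma/2^j$ to yield the same per-$j$ contribution.
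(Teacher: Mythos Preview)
Your approach is essentially the same as the paper's: partition $A^*$ by the type of the target vertex (or equivalently, regular versus special edges), bound the regular part per vertex of $L^{\out}\setminus V(J)$ using the scale-by-scale bounds from \Cref{obs: props of tilde H}, and bound the special part by noting each such edge has residual capacity below $M'\gamma$ and there are at most $O(|L|)$ of them. The paper groups by whether $y\in L^{\inn}$ or $y\in L^{\out}$, but this is the same partition up to relabeling.

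There is, however, a genuine arithmetic slip in your final step. Summing the per-scale contribution $2^{12}M'\gamma\log n\log(\wmax)$ over the $\log\gamma$ scales and then over $|L|\le n^{\eps}$ vertices gives $2^{12}\,n^{\eps}M'\gamma\log\gamma\log n\log(\wmax)$, and your claim that $\log\gamma$ is ``absorbed into the constant $2^{14}$'' is false: that would require $\log\gamma\le 4$, whereas $\gamma\ge n^{7/9}$. The paper's own proof makes the identical slip (it states the per-vertex regular-edge total as $2^{12}M'\gamma\log n\log(\wmax)$ without the $\log\gamma$ factor). This is harmless for the overall argument, since every downstream use of the corollary (e.g., in the definition of $\eta$ and in \Cref{cor: flow on special paths}) actually carries a $\log^2 n$ factor, so the stated bound appears to be a typo for $2^{14}n^{\eps}\gamma M'\log^2 n\log(\wmax)$; with that correction both your argument and the paper's go through.
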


\begin{proof}
	We partition the set $A^*$ of edges into two subsets: set $A_1$ contains all edges $(x,y)\in A^*$ with $y\in L^{\inn}$, and $A_2$ contains all remaining edges.
	
	Consider first some edge $e=(x,y)\in A_1$. Since $y\in L^{\inn}$, from the construction of the split graph $G'$, it must be the case that $x\in V^{\out}$. We denote $x=v^{\out}$ and $y=u^{\inn}$, where $u\in L$ and $v\in V(G)$. From \Cref{claim: ER},  $e$ may not be a regular edge. Therefore, $e$ must be a special edge, $e=(v^{\out},v^{\inn})$, for some vertex $v\in L$. Since $v^{\inn}$ was not added to $J$ during the last application of Procedure \explore, we get that $\hat c(e)<M'\cdot \gamma$. Therefore, $\sum_{e\in A_1}\hat c(e)\leq |L|\cdot M'\cdot \gamma$.
	
Next, we bound the total capacity of the edges in $A_2$.
	For every vertex $v^{\out}\in L^{\out}\setminus V(J)$, let $A_2(v^{\out})\subseteq A_2$ denote the subset of all edges $e\in A_2$ that are incident to $v^{\out}$.
	From Property \ref{prop: no bad event} of \Cref{obs: props of tilde H}, for all 
	$1\leq j\leq \gamma$, $v^{\out}$ may be incident to at most  $2^{j+11}\cdot\log n\cdot\log(\wmax)$ edges of $E_j^R$: regular edges $(a,v^{\out})$ with $a\in V(J)$, whose capacity is at most $\frac{M'\gamma}{2^{j-1}}$. Altogether, the total capacity of all regular edges $(a,v^{\out})\in A_2(v^{\out})$ is at most $2^{12}\cdot M'\cdot\gamma\cdot\log n\cdot\log(\wmax)$. Only one special edge $(b,v^{\out})$ with $b\in V(J)$ may be incident to $v^{\out}$ in $H$, that must be a forward edge. Since vertex $v^{\out}$ was not added to $J$ when Procedure \explore was executed for the last time, the capacity of this edge is bounded by $M'\gamma$. Overall, we get that $\sum_{e\in A_2(v^{\out})}\hat c(e)\leq 2^{13}\cdot\gamma\cdot M'\cdot \log n\cdot \log(\wmax)$, and $\sum_{e\in A_2}\hat c(e)\leq |L|\cdot 2^{13}\gamma\cdot M'\cdot \log n\cdot \log(\wmax)$.
	
	Since, from Inequality \ref{eq: small L}, $|L|\leq n^{\eps}$, we get that: $\sum_{e\in A_1\cup A_2}\hat c(e)\leq 2^{14}\cdot n^{\eps}\cdot \gamma\cdot M'\cdot \log n\cdot \log(\wmax)$.
	\end{proof}


The next claim bounds the probability that one of the shortcut steps executed in Step~\ref{step1: local flow} was invalid, or that the algorithm for this step terminated with a ``FAIL''.

\begin{claim}\label{claim: step 1 analysis}
	The probability that any shortcut operation executed in the current step is invalid is at most $\frac{1}{n^{\eps}\cdot \log^2(\wmax)}$.
	Moreover, if $(s,T)$ is a good pair, and if all shortcut operations performed before Phase $i$ were valid, then the probability that the algorithm terminates with a ``FAIL'' in the first step is at most $\frac{2}{n^{\eps}\cdot \log^2(\wmax)}$.
\end{claim}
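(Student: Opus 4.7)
The plan is to bound the two stated probabilities separately, by breaking down where randomness and failures can enter Step~\ref{step1: local flow}. Recall that randomized choices in this step come only from the random shortcut vertex $v^{\out}\in\Gamma$ chosen in Case~\ref{case2: N} (and from the algorithm of \Cref{claim: heavy}, which affects the ``bad'' event~\ref{event: heavy err}). The algorithm may terminate with \emph{FAIL} in exactly two places: (a) after $z'=\ceil{32n/\gamma}$ iterations if every iteration fell into Case~\ref{case1: vertex_in_T} or Case~\ref{case2: N}; or (b) at the final verification step when some $|E_j^R|>|V^{\out}\setminus V(J)|\cdot 2^{j+11}\log n\log(\wmax)$ is detected.

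For the first part of the claim, observe that at most one shortcut operation is performed in each iteration (only in Case~\ref{case2: N}), so at most $z'$ shortcut operations are performed over the whole step. By \Cref{claim: shortcut one iteration}, each such operation is invalid with probability at most $\gamma/(32 n^{1+\eps}\log^2(\wmax))$. Applying the union bound yields an invalid shortcut probability of at most
\[
z'\cdot \frac{\gamma}{32 n^{1+\eps}\log^2(\wmax)}\;\le\;\ceil{\tfrac{32n}{\gamma}}\cdot \frac{\gamma}{32 n^{1+\eps}\log^2(\wmax)}\;\le\;\frac{1}{n^{\eps}\log^2(\wmax)},
\]
using that $\gamma\le n$ so the ceiling adds only a lower-order term.

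For the second part, assume $(s,T)$ is a good pair and that every shortcut performed before Phase~$i$ was valid. Let $\event_{\mathrm{sh}}$ be the event that some shortcut executed \emph{during} Step~\ref{step1: local flow} is invalid; by the first part $\prob{\event_{\mathrm{sh}}}\le 1/(n^{\eps}\log^2(\wmax))$. Recall also that bad event~\ref{event: heavy err} satisfies $\prob{\hat\event}\le 1/(n^7\log^4(\wmax))$ by \eqref{eq: prob error Heavy Vertex}. I will argue that, conditioned on $\neg\event_{\mathrm{sh}}\wedge\neg\hat\event$, the algorithm cannot FAIL. Failure mode (b) is ruled out by \Cref{claim: ER} (Property~\ref{prop: no bad event}): the per-vertex bound on $|E_j^R|$ holds, and the global bound $|E_j^R|\le n\cdot 2^{j+11}\log n\log(\wmax)$ follows by summing. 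For failure mode (a), note that under $\neg\event_{\mathrm{sh}}$ every shortcut performed so far is valid, so by \Cref{claim: shortcut} the value of the maximum $s^{\out}$-$t$ flow in the current graph $G''$ is still exactly $\opt_s$, and hence $\val(f)\le \opt_s$ throughout Step~\ref{step1: local flow}. However, by \Cref{claim: flow}, each iteration that falls into Case~\ref{case1: vertex_in_T} or Case~\ref{case2: N} augments $f$ by exactly $M'\gamma/2$ flow units. Starting from $\val(f_{i-1})\ge \opt_s-8nM'$ (Invariant~\ref{inv: flow}), if all $z'$ iterations took Case~\ref{case1: vertex_in_T} or Case~\ref{case2: N}, we would obtain
\[
\val(f)\;\ge\;\opt_s-8nM'+z'\cdot \tfrac{M'\gamma}{2}\;\ge\;\opt_s-8nM'+\tfrac{32n}{\gamma}\cdot\tfrac{M'\gamma}{2}\;=\;\opt_s+8nM'\;>\;\opt_s,
\]
a contradiction. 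Therefore at least one iteration must reach Case~3, at which point the algorithm proceeds to construct $\tilde H'$ and the FAIL-check (a) is never triggered.

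Combining these observations by a union bound over $\event_{\mathrm{sh}}$ and $\hat\event$ yields
\[
\prob{\text{FAIL in Step~\ref{step1: local flow}}\,\big|\,(s,T)\text{ good},\text{ prev.\ shortcuts valid}}\;\le\;\frac{1}{n^{\eps}\log^2(\wmax)}+\frac{1}{n^7\log^4(\wmax)}\;\le\;\frac{2}{n^{\eps}\log^2(\wmax)},
\]
as required. The only mildly delicate point is verifying that the two parts of the claim compose correctly in the conditional probability calculation, but there is no real obstacle since both contributing events have been bounded unconditionally (the first using \Cref{claim: shortcut one iteration}, the second using \eqref{eq: prob error Heavy Vertex}).
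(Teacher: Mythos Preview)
Your proof is correct and follows essentially the same approach as the paper: a union bound over the at most $z'$ shortcut operations using \Cref{claim: shortcut one iteration} for the first part, and for the second part the observation that under $\neg\event_{\mathrm{sh}}\wedge\neg\hat\event$ neither FAIL mode can occur---mode (b) is excluded by \Cref{claim: ER}, and mode (a) leads to the same flow-value contradiction ($\val(f)>\opt_s$) the paper derives. You are slightly more explicit than the paper in separating the two FAIL modes, but the logic is identical.
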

\begin{proof}
	Since the number of iterations is at most $z'=\frac{32n}{\gamma}$, and since, from \Cref{claim: shortcut one iteration}, for every iteration, the probability that the algorithm performs an invalid shortcut operation over the course of the iteration is at most  $\frac{\gamma}{32n^{1+\eps}\cdot  \log^2(\wmax)}$, we get that overall, the probability that any shortcut operation executed in Step~\ref{step1: local flow} is invalid is bounded by:
	
	\[ \frac{\gamma}{32n^{1+\eps}\cdot  \log^2(\wmax)}\cdot z'\leq \frac{1}{n^{\eps}\cdot \log^2(\wmax)}.\]

Assume now that $(s,T)$ is a good pair, and that all shortcut operations performed before Phase $i$ were valid. We claim that, in this case, if the algorithm for Step~\ref{step1: local flow} terminates with a ``FAIL'', then either Event~\ref{event: heavy err} has happened, or  at least one shortcut operation performed in the current step was invalid. Indeed, assume otherwise, so all shortcut operations executed by the algorithm were valid, and Event~\ref{event: heavy err} did not happen. 
The only way that the algorithm for Step~\ref{step1: local flow} terminates with a ``FAIL'' in this case is if the algorithm did not terminate after $z'$ iterations.

Observe that, since we assumed that all shortcut operations were valid, the value of the maximum $s^{\out}$-$t$ flow in the graph $G''$ remains equal to $\opt_s$ throughout this step. From Invariant \ref{inv: flow}, at the beginning of the phase, we were given a flow $f_{i-1}$ with $\val(f_{i-1})\geq \opt_s-4n\cdot M_{i-1}=\opt_s-8n\cdot M'$. Our algorithm performed $z'=\frac{32n}{\gamma}$ iterations, and in each iteration it augmented the flow by $\frac{M'\cdot \gamma}{2}$ flow units. Therefore, at the end of Step~\ref{step1: local flow}, we obtain a valid $s^{\out}$-$t$ flow in $G''$, whose value is: 

\[\val(f_{i-1})+z'\cdot \frac{M'\cdot \gamma}{2}>\val(f_{i-1})+8n\cdot M'\geq \opt_s,\]

a contradiction. Therefore, if $(s,T)$ is a good pair, and if all shortcut operations performed before Phase $i$ were valid, the algorithm for Step~\ref{step1: local flow} may only terminate with a ``FAIL'' if at least one of the shortcut operations it performed was invalid, or Event~\ref{event: heavy err} has happened. The probability of the former is bounded by $\frac{1}{n^{\eps}\cdot \log^2(\wmax)}$, and, from Inequality \ref{eq: prob error Heavy Vertex}, $\prob{\hat \event}\leq \frac{1}{n^7\cdot \log^4(\wmax)}$. Altogether, we get that, if all shortcut operations performed before Phase $i$ were valid, then the probability that the algorithm for Step~\ref{step1: local flow} terminates with a ``FAIL'' is at most:

\[\frac{1}{n^{\eps}\cdot \log^2(\wmax)}+\frac{1}{n^7\cdot \log^4(\wmax)}\leq  \frac{2}{n^{\eps}\cdot \log^2(\wmax)}.\]
\end{proof}

\section{Step 2: Computing the Partition $(Q,Q')$ of $U$}
\label{sec: step 2}

In this step, we will use the graphs $\tilde H$ and $\tilde H'$ that were constructed in the previous step.
The goal of this step is to compute a partition $(Q,Q')$ of the set $U$ of vertices, such that such that $Q'$ only contains relatively few vertices of $R\cup L$. Additionally, if $|Q|>n^{1-4\eps}$, then we will also compute an $s^{\out}$-$t$ cut $(\hat X,\hat Y)$ in $H$, such that, in any maximum $s^{\out}$-$t$ flow in $H$, the total flow over the edges in $E_H(\hat X,\hat Y)$ is close to their capacity, and for every vertex $v\in Q$, $v^{\inn}\in \hat X$ and $v^{\out}\in \hat Y$ holds.
Recall that, for every edge $e\in E(H)$, we denoted by $\hat c(e)$ its capacity in $H$.
Throughout this step, we denote by $F$ the value of the maximum $s^{\out}$-$t$ flow in $H$, and by $F'$ the value of the maximum $s^{\out}$-$t'$ flow in $\tilde H'$. We start by exploring some properties of the graphs $H$ and $\tilde H'$.

\subsection{Properties of Maximum $s^{\out}$-$t$ Flow in $H$}

We define several subsets of edges of $H$, and then express the value $F$ of the maximum $s^{\out}$-$t$ flow in $H$ in terms of their capacities. We use the following definition:

\begin{definition}[Set $E'_S$ of edges.]\label{def: set E'S}
Consider the following three subsets of edges of $H$: set

$$E^0_S=\set{(v^{\inn},v^{\out})\mid v\in S}$$ 

that contains all forward special edges representing the vertices of $S$; set 

$$E^1_S=\set{(v^{\inn},u^{\out})\mid v\in L\cup S, u\in S\setminus\set{v}}$$ 

that contains all regular edges connecting in-copies of vertices of $L\cup S$ to out-copies of vertices of $S$; and set 

$$E^2_S=\set{(v^{\inn},u^{\out})\mid v\in S, u\in R}$$ 

that contains all regular edges connecting in-copies of vertices of $S$ to out-copies of vertices of $R$. 
We set  $E'_S=E^0_S\cup E^1_S\cup E^2_S$, and we denote by $\hat c_S=\sum_{e\in E'_S}\hat c(e)$ the total capacity of the edges of $E'_S$ in $H$.
\end{definition}

Note that every edge of $E'_S$ has an endpoint in $S^{\out}\cup R^*$.

 We start with the following simple observation regarding the structure of $H$.

\begin{observation}\label{obs: edges crossing}
	Assume that $(s,T)$ is a good pair, and that all shortcut operations so far have been valid.
Consider the partition $(X,Y)$ of the vertices of $G''$ (and of $H$), where $X=L^* \cup S^{\inn}$, and $Y=V(G'')\setminus X=S^{\out}\cup R^*\cup \set{t}$. Then	$E_H(X,Y)=E'_S$.
\end{observation}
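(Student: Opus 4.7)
My plan is a direct case analysis on the types of edges in $G''$, pushed forward to the residual graph $H$. Recall that $G''$ is obtained from the split graph $G'$ by inserting shortcut edges, so its edges split into four kinds: the special edges $(v^{\inn},v^{\out})$ for $v\in V(G)$; the regular edges $(u^{\out},v^{\inn})$ induced by each $(u,v)\in E(G)$; the regular edges $(v^{\inn},t)$ for $v\in T_s$; and the shortcut edges $(v,t)$. Since $(s,T)$ is a good pair we have $T_s\subseteq R$, so every edge $(v^{\inn},t)$ has $v^{\inn}\in R^{\inn}\subseteq Y$; and since all shortcut operations are valid, every shortcut edge $(v,t)$ satisfies $v\in S^{\out}\cup R^*\subseteq Y$. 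Hence no edge of $G''$ incident to $t$ contributes to $E_{G''}(X,Y)\cup E_{G''}(Y,X)$.

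Next I classify the remaining special and regular edges using $(X,Y)=(L^*\cup S^{\inn},\,S^{\out}\cup R^*\cup\{t\})$. A special edge $(v^{\inn},v^{\out})$ crosses from $X$ to $Y$ iff $v\in S$, and never crosses from $Y$ to $X$ (since $v^{\inn}\in Y$ would force $v\in R$, whence $v^{\out}\in Y$ as well). For a regular edge $(u^{\out},v^{\inn})$ coming from $(u,v)\in E(G)$, the key fact is that $(L,S,R)$ is a vertex cut of $G$, which rules out any $G$-edge with one endpoint in $L$ and the other in $R$. A straightforward enumeration over the choices of $u,v\in L\cup S\cup R$ then yields:
\begin{itemize}
\item no regular edge of $G''$ crosses from $X$ to $Y$ (such an edge would require $u\in L$ and $v\in R$);
\item the regular edges of $G''$ crossing from $Y$ to $X$ are exactly those $(u^{\out},v^{\inn})$ with either $u\in S,\,v\in L\cup S,\,v\neq u$, or $u\in R,\,v\in S$.
\end{itemize}

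Passing to $H$, any edge of $E_H(X,Y)$ is either a forward copy of a $G''$-edge crossing $X\to Y$, or a backward copy of a $G''$-edge crossing $Y\to X$. By the classification above, forward copies have the form $(v^{\inn},v^{\out})$ with $v\in S$ and so lie in $E^0_S$, while backward copies have the form $(v^{\inn},u^{\out})$, matching $E^1_S$ when $u\in S$ and $E^2_S$ when $u\in R$. This gives $E_H(X,Y)\subseteq E'_S$. The reverse inclusion is immediate: by definition $E'_S$ consists of edges of $H$, and direct inspection of $E^0_S,E^1_S,E^2_S$ shows that each such edge has its tail in $X$ and its head in $Y$, so it belongs to $E_H(X,Y)$.

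I do not anticipate any real technical obstacle; the proof is essentially an enumeration. The only point requiring care is tracking where each hypothesis enters the argument---validity of the shortcut operations is used to keep their tails inside $Y$, the good-pair assumption is used to force $T_s\subseteq R$ so that the edges $(v^{\inn},t)$ also lie inside $Y$, and the vertex-cut property of $(L,S,R)$ is what eliminates regular $G$-edges between $L$ and $R$.
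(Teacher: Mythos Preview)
Your proof is correct and follows essentially the same approach as the paper: a direct case analysis showing both inclusions. The only difference is organizational---you first classify the $G''$-edges crossing $(X,Y)$ in each direction and then lift to $H$ via forward/backward copies, whereas the paper argues directly in $H$ by noting that any edge $(x,y)\in E_H(X,Y)$ has $y\neq t$ and satisfies $x\notin L^*$ or $y\notin R^*$, and then splits into those two cases.
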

\begin{proof}
	It is immediate to verify that $E'_S\subseteq E_H(X,Y)$. We now prove that $E_H(X,Y)\subseteq E'_S$.
	
	Consider any edge $e=(x,y)$ of $H$ with $x\in X$ and $y\in Y$. Since $(L,S,R)$ is a vertex cut in $G$, and since we have assumed that $(s,T)$ is a good pair, and all shortcut operations so far had been valid, $y\neq t$, and moreover, either $x\not\in L^*$, or $y\not\in R^*$ holds.
	
	Assume first that $x\not \in L^*$, so $x=v^{\inn}$, for some $v\in S$. If $e$ is a special edge, then $e=(v^{\inn},v^{\out})\in E^0_S$ must hold. If $e$ is a regular edge, then $e=(v^{\inn},u^{\out})$, where $u\in (S\cup R)\setminus\set{v}$ must hold, so $e\in E^1_S\cup E^2_S$.
	
	Assume now that $y\not\in R^*$, so $y=v^{\out}$ for some $v\in S$. As before, if $e$ is a special edge, then $e=(v^{\inn},v^{\out})\in E^0_S$ must hold. If $e$ is a regular edge, then $e=(u^{\inn},v^{\out})$, where $u\in S\cup L$ must hold, and $u\neq v$, so $e\in E^1_S$. In any case, $e\in E'_S$ must hold.
\end{proof}

Note that, from \Cref{obs: edges crossing}, if $(s,T)$ is a good pair, and if all shortcut operations so far have been valid, then graph $H\setminus E'_S$ contains no $s^{\out}$-$t$ path. Next, we express the value $F$ of the maximum $s^{\out}$-$t$ flow in $H$, in terms of $\hat c_S$.

\begin{claim}\label{claim: max flow in H}
Assume that $(s,T)$ is a good pair, and that all shortcut operations so far have been valid. Then $F=\hat c_S$ and $F\leq 8n\cdot  M'$ must hold. Moreover, if $f'$ is a maximum $s^{\out}$-$t$ flow in $H$, and $\pset$ is its flow-path decomposition, then every path in $\pset$ contains exactly one edge of $E'_S$.
\end{claim}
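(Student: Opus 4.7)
The plan is to exploit the fact that $E'_S$ forms an $s^{\out}$-$t$ cut in $H$ whose capacity coincides with the max-flow value. The cut structure comes directly from \Cref{obs: edges crossing}, applied to the partition $X = L^* \cup S^{\inn}$ and $Y = V(H) \setminus X$: under the hypothesis that $(s,T)$ is a good pair and all previous shortcut operations are valid, $E_H(X,Y) = E'_S$. Since $s^{\out} \in X$ and $t \in Y$, this immediately yields the weak inequality $F \le \hat c_S$ by the max-flow/min-cut theorem.

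For the matching lower bound, I would relate $\hat c_S$ back to the ``original'' cut in $G''$. Consider the same partition $(X,Y)$ in $G''$. By construction of $G''$ and the definition of $\wmax$ (only special edges can contribute to a small cut), $E_{G''}(X,Y) = E_S = \{(v^{\inn},v^{\out}) : v \in S\}$, and from \Cref{claim: shortcut} this is a minimum $s^{\out}$-$t$ cut in $G''$ of capacity exactly $\opt_s$. For each edge $e \in E_{G''}(X,Y)$ the forward residual capacity in $H$ is $c(e) - f(e)$, and for each edge $e \in E_{G''}(Y,X)$ the backward copy contributes $f(e)$ to $E_H(X,Y)$. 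Summing gives
\[
\hat c_S \;=\; \sum_{e \in E_{G''}(X,Y)} c(e) \;-\; \Bigl(\sum_{e \in E_{G''}(X,Y)} f(e) - \sum_{e \in E_{G''}(Y,X)} f(e)\Bigr) \;=\; \opt_s - \val(f).
\]
By the standard residual-graph identity, the max $s^{\out}$-$t$ flow value in $H$ equals $\opt_s - \val(f)$, so $F = \opt_s - \val(f) = \hat c_S$. The bound $F \le 8nM'$ then follows from Invariant \ref{inv: flow}, which guarantees $\val(f) \ge \opt_s - 4nM_{i-1} = \opt_s - 8nM'$ (and any subsequent augmentations in Step~1 only decrease $\opt_s - \val(f)$).

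For the structural claim about $\pset$, I would use the tightness of the cut. Every $s^{\out}$-$t$ path $P$ crosses from $X$ to $Y$, so if $k_P$ denotes the number of $E'_S = E_H(X,Y)$-edges on $P$ and $m_P$ the number of $E_H(Y,X)$-edges on $P$, then $k_P - m_P = 1$, in particular $k_P \ge 1$. Hence
\[
\sum_{e \in E'_S} f'(e) \;\ge\; \sum_{P \in \pset} f'(P) \cdot k_P \;=\; \sum_{P \in \pset} f'(P) + \sum_{P \in \pset} f'(P) \cdot m_P \;\ge\; F.
\]
On the other hand, capacities force $\sum_{e \in E'_S} f'(e) \le \hat c_S = F$, so both inequalities must be equalities, which forces $m_P = 0$ and hence $k_P = 1$ for every $P \in \pset$. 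This is exactly the claim.

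The argument is essentially a tight-cut/complementary-slackness manipulation once \Cref{obs: edges crossing} and \Cref{claim: shortcut} are in hand, so I do not expect a real technical obstacle. The only point that requires care is being explicit about why $E_S$ is a min cut of capacity exactly $\opt_s$ in $G''$ (which is where the good-pair hypothesis and validity of all shortcuts are used, via \Cref{claim: shortcut}), because it is this fact, rather than just the cut inequality in $H$, that promotes $F \le \hat c_S$ to the equality $F = \hat c_S$.
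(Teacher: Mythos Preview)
Your proposal is correct and follows essentially the same approach as the paper's proof: both use the partition $(X,Y)=(L^*\cup S^{\inn},\,V(H)\setminus X)$, identify $E_H(X,Y)=E'_S$ via \Cref{obs: edges crossing}, compute $\hat c_S=\opt_s-\val(f)$ by decomposing the residual cut into forward residuals of $E_{G''}(X,Y)=E^0_S$ and backward residuals of $E_{G''}(Y,X)$, and then invoke \Cref{claim: shortcut} together with Invariant~\ref{inv: flow} to conclude $F=\hat c_S\le 8nM'$. Your tight-cut argument for the flow-path decomposition (using $k_P-m_P=1$) is a slightly more explicit version of the paper's one-line capacity-counting observation, and additionally records that no path recrosses the cut ($m_P=0$).
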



\begin{proof}
Assume that $(s,T)$ is a good pair, and that all shortcut operations so far have been valid.
Let $(X,Y)$ be the partition of $V(G'')$ that is defined as before:  $X=L^*\cup S^{\inn}$, and $Y=V(G'')\setminus X=S^{\out}\cup R^*\cup \set{t}$.

Consider the current $s^{\out}$-$t$ flow $f$ in $G''$, and recall that its value is equal to the total amount of flow leaving $X$, minus the total amount of flow entering $X$, so $\val(f)=f^{\out}(X)-f^{\inn}(X)$, where  $f^{\out}(X)=\sum_{e\in E_{G''}(X,Y)}f(e)$ and $f^{\inn}(X)=\sum_{e\in E_{G''}(Y,X)}f(e)$. From the definition of the split graph $G''$, and since $(L,S,R)$ is a vertex cut in $G$, it is immediate to verify that $E_{G''}(X,Y)=\set{(v^{\inn},v^{\out})\mid v\in S}=E^0_S$ (we have used the fact that, from Property \ref{prop: residual capacity weaker}, the residual capacity of each original special edge of $G''$ in $H$ is greater than $0$). Therefore, $f^{\out}(X)=\sum_{e\in E^0_S}f(e)$.
It is easy to verify that, if $e=(x,y)\in E_{G''}(Y,X)$, then $e$ is a regular edge, where $x=v^{\out}$ for some vertex $v^{\out}\in V^{\out}$, and $y=u^{\inn}$ for some vertex $u^{\inn}\in V^{\inn}\setminus\set{v^{\inn}}$; moreover,  either (i) $v\in R\cup S$ and $u\in S$; or (ii) $v\in S$ and $u\in L\cup S$  must hold. In other words, each such edge must be anti-parallel to some edge in $E_S^1\cup E_S^2$. From the definition of the split graph $G''$, the edges of $E_S^1\cup E_S^2$ may not lie in $G''$. So for every edge $(u^{\inn},v^{\out})\in E_S^1\cup E_S^2$, its residual capacity in $H$ is: $\hat c(u^{\inn},v^{\out})=f(v^{\out},u^{\inn})$.
Therefore:

\[f^{\inn}(X)=\sum_{e\in E_S^1\cup E_S^2}\hat c(e).\]

Altogether, we get that:

\[\val(f)=f^{\out}(X)-f^{\inn}(X)=\sum_{e\in E^0_S}f(e)-\sum_{e\in E_S^1\cup E_S^2}\hat c(e).\]
	
Recall that the value of the maximum $s^{\out}$-$t$ flow in $G''$ is $\opt_s=\sum_{e\in E^0_S}c(e)$, and the value of the maximum $s^{\out}$-$t$ flow in $H$ is:

\[\begin{split}
F&=\opt_s-\val(f)\\
&= \sum_{e\in E^0_S}c(e)-\sum_{e\in E^0_S}f(e)+\sum_{e\in E_S^1\cup E_S^2}\hat c(e)\\
&=\sum_{e\in E_S^0}\hat c(e)+\sum_{e\in E_S^1\cup E_S^2}\hat c(e)\\
&=\hat c_S.
\end{split} \]

Consider now the maximum $s^{\out}$-$t$ flow $f'$ in $H$, whose value is $F=\hat c_S=\sum_{e\in E'_S}\hat c(e)$, and let $\pset$ be its flow-path decomposition. Since, from \Cref{obs: edges crossing}, the edges of $E'_S$ separate $s^{\out}$ from $t$ in $H$, we get that every flow-path $P\in \pset$ must contain at least one edge of $E'_S$. Since the total flow on the paths in $\pset$ is equal to the total capacity of the edges of $E'_S$, we get that every path in $\pset$ must contain exactly one edge of $E'_S$.

It remains to show that
$F\leq 8n\cdot M'$. Recall that, if  $(s,T)$ is a good pair, and all shortcut operations so far have been valid, then, from \Cref{claim: shortcut}, the value of the maximum $s^{\out}$-$t$ flow in $G'$ is $\opt_s$, and so the value of the maximum $s^{\out}$-$t$ flow in the residual flow network $H$ is equal to $\opt_s-\val(f)$. Since flow $f$ was obtained by augmenting the flow $f_{i-1}$, from Invariant \ref{inv: flow}, $\val(f)\geq \val(f_{i-1})> \opt_s-4nM_{i-1}=\opt_s-8nM'$, since
$M'=M_i=M_{i-1}/2$. Therefore, $F=\opt_s-\val(f)\leq 8nM'$.
\end{proof}

\subsection{Properties of the Maximum $s^{\out}$--$t'$ Flow in $\tilde H'$}

Next, we explore the properties of the maximum $s^{\out}$-$t'$ flow in $\tilde H'$, whose value is denoted by $F'$. We start by defining and analyzing a special type of $s^{\out}$-$t'$ paths in $\tilde H'$, that we call \emph{problematic} paths.

\begin{definition}[Problematic path]\label{def: problematic path}
Consider any $s^{\out}$-$t'$ path $P$ in $\tilde H'$, and denote its last edge by $e=(x,t')$. Let $e'=(x,y)\in E(H)$ be the edge of $H$ that edge $e$ represents. We say that path $P$ is \emph{problematic}, if $E(P)\cap E'_S=\emptyset$, and $e'\not\in E'_S$.
\end{definition}

We will use the following observation in order to bound the total flow carried by problematic paths in any $s^{\out}$-$t'$ flow in $\tilde H'$.

\begin{observation}\label{obs: y in L}
Assume that $(s,T)$ is a good pair, that all shortcut operations so far have been valid, and that Event~\ref{event: heavy err} did not happen.
	Let $P$ be any $s^{\out}$-$t'$ path in $\tilde H'$ that is problematic, let $e=(x,t')$ be the last edge on $P$, and let $e'=(x,y)$ be the edge of $H$ that edge $e$ represents. Then either (i) $x=v^{\out}$ and $y=v^{\inn}$ for some vertex $v\in L$, so $e'$ is a backward special edge representing a vertex in $L$;
 or  (ii) $y\in L^{\out}$.
\end{observation}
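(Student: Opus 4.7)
The plan is to lift the path $P\subseteq \tilde H'$ back to an $s^{\out}$-$y$ path $P'$ in $H$ by replacing the contracted last edge $(x,t')$ with the original edge $e'=(x,y)\in E(H)$; the remaining edges of $P$ are already edges of $H$ since their endpoints lie in $V(J)$. By the definition of a problematic path, no edge of $P'$ lies in $E'_S$, so $P'\subseteq H\setminus E'_S$.

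Next I would invoke \Cref{obs: edges crossing} (which applies since $(s,T)$ is a good pair and all shortcut operations so far are valid), yielding the partition $X=L^*\cup S^{\inn}$, $Y=S^{\out}\cup R^*\cup\{t\}$, with $E_H(X,Y)=E'_S$. Since $s\in L$, we have $s^{\out}\in L^{\out}\subseteq X$, and $P'$ cannot leave $X$ without crossing $E'_S$; hence every vertex of $P'$ lies in $X$, and in particular $y\in X=L^*\cup S^{\inn}$. If $y\in L^{\out}$, condition (ii) of the observation holds and we are done.

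It remains to treat the cases $y\in L^{\inn}$ and $y\in S^{\inn}$, where $y=v^{\inn}$. Here the key input is \Cref{claim: ER}: if $e'=(x,y)$ is regular with $x\in V(J)$ and $y\notin V(J)$, then $e'\in E^R$, forcing $x\in V^{\inn}$ and $y\in V^{\out}$. Since our $y=v^{\inn}\in V^{\inn}$, $e'$ cannot be regular, so the only possibility is that $e'$ is the backward special edge $(v^{\out},v^{\inn})$, giving $x=v^{\out}$. If $v\in L$, this is exactly condition (i) of the observation. If $v\in S$, then $x=v^{\out}\in S^{\out}\subseteq Y$, contradicting $x\in X$ (which we established from $P'\subseteq X$); so this subcase is impossible.

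The main subtlety — and the step I would be most careful about — is ruling out the seemingly-plausible configuration where $y=v^{\inn}\in L^{\inn}$ but $e'$ is a regular edge $(u^{\out},v^{\inn})$ for a neighbor $u\in L$ of $v$; such an edge lives entirely inside $X$ and is not excluded by the cut $(X,Y)$ alone. It is \Cref{claim: ER}, which controls the structure of edges leaving $V(J)$, that precisely eliminates this configuration by forcing the boundary edge to point from $V^{\inn}\cap V(J)$ to $V^{\out}\setminus V(J)$. The combination of the cut structure from \Cref{obs: edges crossing} with the directional constraint from \Cref{claim: ER} is what pins down the observation.
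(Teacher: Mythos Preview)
Your proof is correct and follows essentially the same approach as the paper: lift $P$ to a path $P'\subseteq H$, use \Cref{obs: edges crossing} to confine $P'$ to $L^*\cup S^{\inn}$, and then do a case analysis on the last edge $e'$. The only cosmetic differences are that the paper organizes the case analysis by the type of $e'$ (backward special / forward special / regular) rather than by the location of $y$, and invokes Property~\ref{prop: regular incident to t'} of \Cref{obs: props of tilde H} in place of \Cref{claim: ER} for the regular-edge case---but these are equivalent inputs.
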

\begin{proof}
	Let $P'\subseteq H$ be the path obtained from $P$ by replacing its last edge with edge $e'=(x,y)$. Since $P'$ contains no edges of $E'_S$, and its first vertex is $s^{\out}\in L^{\out}$, from \Cref{obs: edges crossing}, all vertices on $P'$ must lie in $L^*\cup S^{\inn}$.
	
	Assume first that $e'$ is a special edge.
	If $e'$ a backward special edge, that is, $e'=(v^{\out},v^{\inn})$ for some vertex $v\in V(G)$, then, since all vertices of $P'$ are contained in $L^*\cup S^{\inn}$, it must be the case that $v^{\out}\in L^{\out}$, and so $v\in L$. 
	
	 If $e'$ is a forward special edge, that is, $e'=(v^{\inn},v^{\out})$ for some vertex $v$, then, since $e'\not\in E'_S$, $v\in L$ and $v^{\out}\in L^{\out}$ must hold. 
	
Lastly, assume that $e'$ is a regular edge. Then from Property \ref{prop: regular incident to t'} of \Cref{obs: props of tilde H}, $x= v^{\inn}$ for some vertex $v^{\inn}\in V^{\inn}$ and $y=u^{\out}$ for some vertex $u^{\out}\in V^{\out}$. Since $y\in L^*\cup S^{\inn}$, we get that $y\in L^{\out}$ must hold. 
\end{proof}

We now obtain the following immediate corollary of the observation.

\begin{corollary}\label{cor: flow on special paths}
	Assume that $(s,T)$ is a good pair, that all shortcut operations so far have been valid, and that Event~\ref{event: heavy err} did not happen.
	Let $f''$ be any $s^{\out}$-$t'$ flow in $\tilde H'$, let $\pset$ be a flow-path decomposition of $f''$, and let $\pset'\subseteq \pset$ be the set of problematic paths. Then $\sum_{P\in \pset'}f''(P)\leq 2^{14}n^{\eps}\cdot M'\cdot \gamma \cdot \log^2 n\cdot \log(\wmax)$.
\end{corollary}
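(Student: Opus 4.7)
The plan is to identify, for every problematic path $P\in\pset'$, its last edge in $\tilde H'$ with an edge of $H$ that belongs to the set $A^*$ of \Cref{cor: edges entering L}, and then use the capacity bound of that corollary.

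Fix $P\in \pset'$ and let its last edge in $\tilde H'$ be $(x,t')$, representing an edge $e'=(x,y)\in E(H)$. By \Cref{obs: y in L}, either $e'$ is a backward special edge $(v^{\out},v^{\inn})$ with $v\in L$, in which case $y=v^{\inn}\in L^{\inn}\subseteq L^*$, or $y\in L^{\out}\subseteq L^*$. Next I would argue the two membership conditions defining $A^*$: since every vertex of $\tilde H'$ other than $t'$ lies in $V(J)$, we have $x\in V(J)$; and by the construction of $\tilde H'$, the edge $(x,t')$ appears only because $y\in V(H)\setminus V(J)$ (edges internal to $V(J)$ stay internal, while edges to $V(H)\setminus V(J)$ get re-routed to $t'$). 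Combining, $e'=(x,y)\in E(H)$ has $x\in V(J)$ and $y\in L^*\setminus V(J)$, so $e'\in A^*$.

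Now I would use that the capacity of $(x,t')$ in $\tilde H'$ equals $\hat c(e')$ by definition of the contraction, so the total $f''$-flow sent through $(x,t')$ is at most $\hat c(e')$. Since distinct problematic paths may share a last edge, grouping paths by their last edge in $H$ gives
\[
\sum_{P\in\pset'}f''(P) \;\leq\; \sum_{e'\in A^*} \hat c(e').
\]
Invoking \Cref{cor: edges entering L}, whose hypotheses (good pair, all previous shortcuts valid, Event~\ref{event: heavy err} did not happen) match those of the present corollary, yields $\sum_{e'\in A^*}\hat c(e')\leq 2^{14}\cdot n^{\eps}\cdot \gamma\cdot M'\cdot \log n\cdot\log(\wmax)$, which is bounded by the claimed $2^{14}n^{\eps}\cdot M'\cdot\gamma\cdot \log^2 n\cdot \log(\wmax)$.

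No step here looks genuinely difficult; the argument is essentially bookkeeping. The only mildly delicate point is making precise that the last edge of a problematic path really does land in $A^*$, which requires pairing the structural dichotomy of \Cref{obs: y in L} (the $H$-endpoint $y$ is a copy of a vertex of $L$) with the defining feature of $\tilde H'$ (the edge goes to $t'$ exactly when $y\notin V(J)$). Once that is clear, the flow-versus-capacity bound is immediate and \Cref{cor: edges entering L} closes the argument.
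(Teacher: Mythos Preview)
Your proposal is correct and follows essentially the same approach as the paper: identify the last edge of each problematic path with an edge of $A^*$ via \Cref{obs: y in L} and the construction of $\tilde H'$, then bound the total flow by the total capacity of $A^*$ using \Cref{cor: edges entering L}. Your write-up is in fact a bit more explicit than the paper's in verifying $x\in V(J)$ and $y\notin V(J)$, but the argument is the same.
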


\begin{proof}
	We assume that $(s,T)$ is a good pair, that all shortcut operations so far have been valid, and that Event~\ref{event: heavy err} did not happen.
	Let $A^*$ be the set of all edges $(x,y)\in E(H)$, where $x\in V(J)$ and $y\in L^*\setminus V(J)$. From \Cref{cor: edges entering L}, 
	$\sum_{e\in A^*}\hat c(e)\leq  2^{14}\cdot n^{\eps}\cdot \gamma\cdot M'\cdot \log^2 n\cdot \log(\wmax)$.

Consider now any $s^{\out}$-$t'$ flow $f''$ in $\tilde H'$, and let $\pset$ be a flow-path decomposition of $f''$. Let $\pset'\subseteq \pset$ be the set of all problematic paths. Then from \Cref{obs: y in L}, for every path $P\in \pset'$, the last edge on $P$ is a copy of an edge in $A^*$, and so $\sum_{P\in \pset'}f''(P)\leq \sum_{e\in A^*}\hat c(e)\leq 2^{14}\cdot\gamma\cdot M'\cdot n^{\eps}\cdot\log^2 n\cdot\log(\wmax)$.	
\end{proof}

For brevity, we denote by $\eta=2^{14}\cdot\gamma\cdot M'\cdot n^{\eps}\cdot\log^2 n\cdot\log(\wmax)$.
Recall that we denoted by $U\subseteq V(G)$ the set of all vertices $v\in V(G)$, for which $w(v)\geq M'$ holds.
Let $Q'_0$ denote the set of all vertices $v\in U$ with $v^{\inn}\not\in V(J)$. 
The next claim bounds the value $F'$ of the maximum $s^{\out}$--$t'$ flow in $\tilde H'$ in terms of $F$, and also shows that
 only a small number of vertices of $Q'_0$ may lie in $S\cup L$, provided that $(s,T)$ is a good pair, all shortcut operations so far have been valid, and bad event~\ref{event: heavy err} did not happen.

\begin{claim}\label{claim: max flow in tilde H}
Assume that $(s,T)$ is a good pair, that all shortcut operations so far have been valid, and that Event~\ref{event: heavy err} did not happen. Then $F\leq F'\leq F+\eta$, and $|Q'_0\cap (S\cup L)|\leq \frac{2\eta}{M'}\leq n^{1-8\eps}$.
\end{claim}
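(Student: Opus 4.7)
The plan is to establish the three pieces of the claim in order: first the lower bound $F \leq F'$, next a refined upper bound $F' \leq F - D + \eta$ for a natural ``surplus'' quantity $D$, from which $D \leq \eta$ will drop out, and finally the cardinality bound by handling $Q'_0 \cap S$ via $D$ and $Q'_0 \cap L$ via the trivial $|L|$ bound. The main obstacle will be the middle step: correctly identifying which $E'_S$-edges \emph{survive} in $\tilde H'$ (namely those whose tail lies in $V(J)$) and arguing that every non-problematic path uses at least one of them, so that summing capacities really produces a valid upper bound.

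For $F \leq F'$ I would take a maximum $s^{\out}$-$t$ flow $f^*$ in $H$ with flow-path decomposition $\pset^*$ and map it to $\tilde H'$ path-by-path. Since we are in Case~3 of Step~\ref{step1: local flow}, Procedure \explore terminates without Case~1 ever triggering, so $t \notin V(J)$, and hence every path $P \in \pset^*$ contains a first edge $(u,v)$ that leaves $V(J)$; I truncate $P$ at $u$ and append the $\tilde H'$-edge $(u,t')$ representing $(u,v)$. Working at the path level rather than edge-wise is essential: a naive edge-wise restriction of $f^*$ to $\tilde H'$ would violate conservation at $V(J)$-vertices that in $f^*$ also receive flow from outside $V(J)$. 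The resulting flow $\bar f^*$ has value $F$ in $\tilde H'$, so $F' \geq F$.

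For the upper bound, let $f''$ be a max $s^{\out}$-$t'$ flow in $\tilde H'$, let $\pset$ be a flow-path decomposition of $f''$, and let $\pset' \subseteq \pset$ be the problematic paths (Definition~\ref{def: problematic path}). \Cref{cor: flow on special paths} immediately gives $\sum_{P \in \pset'} f''(P) \leq \eta$. For each non-problematic $P$, the lifted $H$-path $P_H$ contains an edge of $E'_S$, and the \emph{first} such edge has its $H$-tail in $V(J)$ (it is either an internal $V(J)$-to-$V(J)$ edge of $P$, or the $H$-correspondent of the last $\tilde H'$-edge $(v_k,t')$, whose tail $v_k$ lies in $V(J)$), so it corresponds to a genuine $\tilde H'$-edge used by $P$. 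Charging each non-problematic path to this first surviving $E'_S$-edge and summing capacity-wise gives
\[
\sum_{P \in \pset \setminus \pset'} f''(P) \;\leq\; \sum_{\substack{e \in E'_S \\ \mathrm{tail}(e)\in V(J)}} \hat c(e) \;=\; F - D, \qquad D \;:=\; \sum_{\substack{e \in E'_S \\ \mathrm{tail}(e)\notin V(J)}} \hat c(e).
\]
Adding the two bounds yields $F' \leq F - D + \eta$, and combining with $F \leq F'$ delivers the key fact $D \leq \eta$ (as well as the weaker $F' \leq F + \eta$).

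To finish I would split $Q'_0 \cap (S \cup L)$ into its $S$- and $L$-parts. For each $v \in Q'_0 \cap S$, the forward special edge $(v^{\inn},v^{\out}) \in E^0_S \subseteq E'_S$ has tail $v^{\inn} \notin V(J)$ and residual capacity at least $M'$ by Property~\ref{prop: residual capacity weaker}, contributing at least $M'$ to $D$; hence $M' \cdot |Q'_0 \cap S| \leq D \leq \eta$. For $Q'_0 \cap L$, since $(s,T)$ is a good pair, Inequality~\ref{eq: small L} gives $|L| \leq n^{\eps}$, and since $\eta/M' = 2^{14} n^{\eps}\gamma \log^2 n \log(\wmax) \geq n^{\eps}$, the trivial bound $|Q'_0 \cap L| \leq |L|$ already suffices. (Note that $E^1_S$-edges with tail $v^{\inn}$ for $v \in L$ are backward-regular and have no useful lower bound on capacity, so $D$ does not control the $L$-part directly; this is why the $|L|$ estimate is needed.) Summing, $|Q'_0 \cap (S \cup L)| \leq 2\eta/M'$, and a short calculation using $\gamma \leq 2n^{1-10\eps}$ (Inequality~\ref{eq: gamma2}) and Inequality~\ref{eq: neps large} gives $2\eta/M' \leq n^{1-8\eps}$.
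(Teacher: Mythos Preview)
Your argument is correct and takes a genuinely different route from the paper's. The paper proves the two assertions separately: for $F'\le F+\eta$ it takes a maximum flow in $\tilde H'$, observes that every non-problematic path contains some edge of $E'_S$ (or an edge representing it), and bounds the non-problematic flow by $\hat c_S=F$ using \Cref{claim: max flow in H}; for $|Q'_0\cap S|$ it instead takes a maximum flow $f'$ in $H$, truncates each flow-path at its first exit from $V(J)$ to obtain a flow $\hat f$ in $\tilde H'$, and notes that for every $v\in Q'_0\cap S$ all truncated paths through $(v^{\inn},v^{\out})$ are problematic (since $v^{\inn}\notin V(J)$), so their total flow $\sum_{e\in E''_S}\hat c(e)$ is at most $\eta$ by \Cref{cor: flow on special paths}.

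Your approach unifies the two halves via the surplus quantity $D=\sum_{e\in E'_S:\ \mathrm{tail}(e)\notin V(J)}\hat c(e)$: the sharpened observation that a non-problematic path must use an $E'_S$-edge whose tail lies in $V(J)$ (hence a surviving $\tilde H'$-edge) yields $F'\le (\hat c_S-D)+\eta=F-D+\eta$, and then $D\le\eta$ follows immediately from the already-established $F\le F'$. This buys you the $|Q'_0\cap S|$ bound for free, since the $E^0_S$-edges of those vertices contribute at least $M'$ each to $D$. It is a cleaner bookkeeping that avoids running two different flow arguments; the paper's approach, on the other hand, is slightly more direct in that it never needs the refined upper bound and makes the role of problematic paths through $Q'_0\cap S$ explicit. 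Both rely on the same two ingredients, \Cref{claim: max flow in H} (for $F=\hat c_S$) and \Cref{cor: flow on special paths}.
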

\begin{proof}
	Recall that graph $\tilde H'$ is obtained from $H$ by contracting all vertices of $V(H)\setminus V(J)$ into the sink $t'$, and recall that $t\not\in V(J)$. It is then immediate to see that any $s^{\out}$-$t$ flow in $H$ defines an $s^{\out}$-$t'$ flow in $\tilde H'$ of the same value, so $F'\geq F$ must hold.
	
Next, we show that $|Q'_0\cap (S\cup L)|\leq \frac{2\eta}{M'}\leq n^{1-8\eps}$.	
	Consider  an optimal $s^{\out}$-$t$ flow $f'$ in $H$, whose value is $F$, and let $\pset$ be a flow-path decomposition of this flow. 
Let $E''_S=\set{(v^{\inn},v^{\out})\mid v\in Q'_0\cap S}$ be the set of all forward special edges of $H$ that correspond to the vertices of $Q'_0\cap S$, so $E''_S\subseteq E^0_S$.
For every edge $e\in E''_S$, let $\pset(e)\subseteq \pset$ be the set of flow-paths containing the edge $e$.
	From \Cref{claim: max flow in H}, $F=\hat c_S$, and every flow-path in $\pset$ contains exactly one edge of $E'_S$. Therefore, for every edge $e\in E''_S$, $\sum_{P\in \pset(e)}f'(P)=\hat c(e)$.
	
Flow $f'$ naturally defines an $s^{\out}$-$t'$ flow $\hat f$ of value $F$ in graph $\tilde H'$, as follows. For every path $P\in \pset$, let $a_P$ be the first vertex on $P$ that does not belong to $V(J)$, and let $P'$ be obtained by taking the subpath of $P$ from $s^{\out}$ to $a_P$, and replacing $a_P$ with $t'$. We then set the flow $\hat f(P')=f'(P)$. It is immediate to verify that $\hat f$ is a valid $s^{\out}$--$t'$ flow of value $F$ in graph $\tilde H'$. Consider now any edge $e=(v^{\inn},v^{\out})\in E''_S$, and a path $P\in \pset(e)$. Since $v^{\inn}\not\in V(J)$, the corresponding path $P'$ must be problematic.
From \Cref{cor: flow on special paths}: 

\[\sum_{e\in E''_S}\sum_{P\in \pset(e)}\hat f(P')\leq 2^{14}n^{\eps}\cdot M'\cdot \gamma \cdot \log^2 n\cdot \log(\wmax)=\eta.\]

On the other hand, since, from Property \ref{prop: residual capacity weaker}, the capacity of every edge $e\in E''_S$ is at least $M'$, we get that:

\[ \sum_{e\in E''_S}\sum_{P\in \pset(e)}\hat f(P')= \sum_{e\in E''_S}\sum_{P\in \pset(e)}f''(P)=\sum_{e\in E''_S}\hat c(e)\geq M'\cdot |E''_S|=M'\cdot |Q'_0\cap S|.\]

Altogether, we get that: 

\[|Q'_0\cap (S\cup L)|\leq |Q'_0\cap S|+|L|\leq \frac{\eta}{M'}+n^{\eps}\leq \frac{2\eta}{M'}\leq n^{1-8\eps},\]

from Inequality \ref{eq: small L}. We have also used the fact that $\eta=2^{14}n^{\eps}\cdot M'\cdot \gamma \cdot \log^2 n\cdot \log(\wmax)$, $\gamma\leq 2n^{1-10\eps}$ from Inequality \ref{eq: gamma2}, and $n^{\eps}>2^{20}\cdot\log^2 n\cdot\log(\wmax)$ from Inequality \ref{eq: neps large}.

	Consider now a maximum $s^{\out}$-$t'$ flow $f''$ in $\tilde H'$, whose value is $F'$, and let $\pset'$ be a flow-path decomposition of this flow. We partition the flow-paths in $\pset'$ into two subsets: set $\pset'_1$ containing all paths $P$ that are non-problematic, and set $\pset'_2$ containing all problematic paths. Since every path of $\pset'_1$ contains an edge of $E'_S$ (or an edge representing it), $\sum_{P\in \pset'_1}f''(P)\leq \sum_{e\in E'_S}\hat c(e)=\hat c_S=F$  from \Cref{claim: max flow in H}. 
	From \Cref{cor: flow on special paths}, $\sum_{P\in \pset'_2}f''(P)\leq 2^{14}\cdot\gamma \cdot M'\cdot n^{\eps}\cdot\log^2 n\cdot\log(\wmax)=\eta$.
	Therefore: 
	
	\[\val(f)\leq \sum_{P\in \pset'_1}f''(P)+\sum_{P\in \pset'_2}f''(P)\leq F+\eta.\]
\end{proof}

\subsection{Flow Deficit and Bad Batches of Vertices.}
Recall that we have denoted by $Q'_0$ the set of all vertices $v\in U$ with $v^{\inn}\not\in V(J)$, and that, from \Cref{claim: max flow in tilde H} if $(s,T)$ is a good pair, all shortcut operations so far are valid, and Event~\ref{event: heavy err} did not happen, then $|Q'_0\cap (S\cup L)|\leq  n^{1-8\eps}$. 
We now denote by $Q_0$ all remaining vertices of $U$, that is, vertices $v\in U$ with $v^{\inn}\in V(J)$. 
Our goal is to compute a partition $(Q,Q')$ of $U$ that has the following property: on the one hand, only a small number of vertices of $Q'$ belong to $S\cup L$; on the other hand, in any maximum $s^{\out}$-$t'$ flow in $\tilde H'$, for most vertices $v\in Q$, the total flow on the parallel edges $(v,t')$ is close to their capacity. We will ensure that $Q'_0\subseteq Q'$, so it now remains to compute the partition of $Q_0$. A central notion that we use in in order to compute the partition is that of a \emph{flow deficit}, which we define next.

\paragraph{Flow deficit.}
For every vertex $v\in Q_0$, we denote by $A(v)$ the set of all edges $(v^{\inn},x)\in E(H)$ with $x\not\in V(J)$, and by $\beta(v)$ their total capacity in $H$. Notice that every edge $(v^{\inn},x)\in A(v)$ is represented by a distinct edge $(v^{\inn},t')$ in $\tilde H'$; we do not distinguish between these edges, so sometimes we may equivalently view $A(v)$ as the set of all parallel edges $(v^{\inn},t')$ of $\tilde H'$.
Given an $s^{\out}$-$t'$ flow $f'$ in $\tilde H'$, for every vertex $v\in Q_0$, the \emph{flow deficit of $v$}, denoted by $\Delta_{f'}(v)$, is $\beta(v)-\sum_{e\in A(v)}f'(e)$.

For convenience, for any subset $\hat Q\subseteq Q_0$ of vertices, we will denote by $\beta(\hat Q)=\sum_{v\in \hat Q}\beta(v)$, and by $A(\hat Q)=\bigcup_{v\in \hat Q}A(v)$.

 Intuitively, the flow deficit measures the amount of the capacity of the edges of $A(v)$ that is unused by the flow. The following claim is  key to the second step. Intuitively, the claim shows that, if we are given any maximum $s^{\out}$-$t'$ flow in $\tilde H'$, then the total deficit of the vertices of $Q_0\cap S$ must be small. This property will be used in order to identify subsets $\hat B\subseteq Q_0$ of vertices that only contain relatively few vertices of $S$; we will refer to such subsets as \emph{bad batches} of vertices.
 
\begin{claim}\label{claim: small deficit}
	Assume that $(s,T)$ is a good pair, that all shortcut operations so far have been valid, and that Event~\ref{event: heavy err} did not happen. Let $f'$ be any maximum $s^{\out}$-$t'$ flow in $\tilde H'$. Then the total flow deficit of the vertices of $Q_0\cap S$ is $\sum_{v\in Q_0\cap S}\Delta_{f'}(v)\leq 2\eta$.
\end{claim}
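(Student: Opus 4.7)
The plan is to show $\sum_{v\in Q^S}\Delta_{f'}(v)=\beta(Q^S)-f'(A(Q^S))\leq 2\eta$, writing $Q^S:=Q_0\cap S$, by bounding the first term from above and the second from below. For each $v\in Q^S$, partition $A(v)$ as the disjoint union $A^S(v)\cup A^L(v)$, where $A^S(v):=A(v)\cap E'_S$ and $A^L(v):=A(v)\setminus E'_S$, and write $\beta^S(v):=\sum_{e\in A^S(v)}\hat c(e)$ and $\beta^L(v):=\sum_{e\in A^L(v)}\hat c(e)$. Inspecting the possible outgoing edges from $v^{\inn}$ in $H$ (the forward special edge $(v^{\inn},v^{\out})$ and backward-regular edges $(v^{\inn},u^{\out})$ for $u\in N_G(v)$), every edge of $A^L(v)$ has the form $(v^{\inn},u^{\out})$ with $u\in L$ and $u^{\out}\notin V(J)$, and hence $A^L(v)\subseteq A^*$, the edge set from \Cref{cor: edges entering L}. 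Summing over $v\in Q^S$ (the $A(v)$'s being pairwise disjoint since they have distinct first endpoints) yields $\sum_v\beta^L(v)\leq\sum_{e\in A^*}\hat c(e)\leq\eta$ and $\sum_v\beta^S(v)\leq\sum_{e\in E'_S}\hat c(e)=F$, where the equality uses \Cref{claim: max flow in H}. Combining, $\beta(Q^S)\leq F+\eta$ and $\sum_v\beta^S(v)\geq\beta(Q^S)-\eta$.

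To lower-bound $f'(A(Q^S))$, flow conservation at $t'$ gives $F'=f'(A(Q^S))+f_{\mathrm{other}}$, where $f_{\mathrm{other}}$ denotes the total $f'$-flow into $t'$ along edges whose $V(J)$-endpoint lies outside $\set{v^{\inn}:v\in Q^S}$. Take a flow-path decomposition of $f'$ and split its paths into problematic and non-problematic; the problematic paths carry total flow at most $\eta$ by \Cref{cor: flow on special paths}. Consider now a non-problematic path $P$ whose terminal $\tilde H'$-edge does not belong to $A(Q^S)$. By \Cref{def: problematic path}, $P$ must use at least one edge whose $H$-representative lies in $E'_S$. Every edge of $A^S(Q^S)$ is of the form $(v^{\inn},t')$ with $v\in Q^S$ and is therefore terminal, while the terminal edge of $P$ is assumed to lie outside $A(Q^S)$, so no $E'_S$-copy used by $P$ can lie in $A^S(Q^S)$. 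Thus $P$ uses some edge of $E^*$, defined as the $E'_S$-copies present in $\tilde H'$ that are not in $A^S(Q^S)$; and the total capacity of $E^*$ equals the total capacity of all $E'_S$-copies surviving in $\tilde H'$ minus $\sum_v\beta^S(v)$, hence is at most $F-\sum_v\beta^S(v)\leq F-\beta(Q^S)+\eta$. Consequently $f_{\mathrm{other}}\leq\eta+(F-\beta(Q^S)+\eta)=2\eta+F-\beta(Q^S)$.

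Combining with $F'\geq F$ from \Cref{claim: max flow in tilde H},
\begin{align*}
\sum_{v\in Q^S}\Delta_{f'}(v)=\beta(Q^S)-F'+f_{\mathrm{other}}\leq\beta(Q^S)-F+\bigl(2\eta+F-\beta(Q^S)\bigr)=2\eta,
\end{align*}
as claimed. The main technical point is the case analysis in the second paragraph, whose crux is the observation that $A^S(Q^S)$ consists only of terminal edges of $\tilde H'$: this forces any non-problematic path terminating outside $A(Q^S)$ to ``charge'' its mandatory $E'_S$-usage to the leftover capacity $F-\sum_v\beta^S(v)$, which, thanks to the upper bound $\beta(Q^S)\leq F+\eta$ established in the first paragraph, is precisely what is needed to match the first term and close the argument.
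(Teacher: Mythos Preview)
Your proof is correct. It uses the same three ingredients as the paper's argument—\Cref{claim: max flow in H} ($F=\hat c_S$), \Cref{cor: flow on special paths} (problematic paths carry at most $\eta$), and \Cref{cor: edges entering L} (edges from $V(J)$ into $L^*\setminus V(J)$ have total capacity at most $\eta$)—but packages them differently.

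The paper's route is more direct: it first shows the global slack bound $\sum_{e\in E'_S}(\hat c(e)-f'(e))\le\eta$ (because every non-problematic path crosses $E'_S$, so $\sum_{e\in E'_S}f'(e)\ge F'-\eta\ge\hat c_S-\eta$), then restricts this to $E''_S:=A(Q^S)\cap E'_S$ and adds the capacity of $A(Q^S)\setminus E''_S$ (your $A^L(Q^S)$), which is at most $\eta$ by \Cref{cor: edges entering L}. Your argument instead uses flow conservation at $t'$ to write the deficit as $\beta(Q^S)-F'+f_{\mathrm{other}}$ and bounds $f_{\mathrm{other}}$ via the leftover $E'_S$-capacity $E^*$; the crucial observation that every edge of $A^S(Q^S)$ is terminal in $\tilde H'$ is what lets you charge non-problematic ``other'' paths entirely to $E^*$. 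The two $\eta$'s in your final bound come from exactly the same sources as in the paper (one from problematic paths, one from $A^L$), so the arguments are equivalent in strength; the paper's version just avoids introducing $E^*$ and the conservation identity by working with per-edge slack on $E'_S$ directly.
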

\begin{proof}
	Assume that $(s,T)$ is a good pair, that all shortcut operations so far have been valid, and that Event~\ref{event: heavy err} did not happen.
	Let $f'$ be any maximum $s^{\out}$-$t'$ flow in $\tilde H'$. From \Cref{claim: max flow in tilde H} and \Cref{claim: max flow in H}, the value of the flow is $F'\geq F=\hat c_S=\sum_{e\in E'_S}\hat c(e)$. Let $\pset$ be a flow-path decomposition of $f'$, and let $(\pset',\pset'')$ be a partition of $\pset$, where set $\pset''$ contains all problematic paths, and set $\pset'$ contains all remaining paths. From \Cref{cor: flow on special paths}, $\sum_{P\in \pset''}f'(P)\leq 2^{14}n^{\eps}\cdot M'\cdot \gamma \cdot \log^2 n\cdot \log(\wmax)=\eta$.
	Therefore, $\sum_{P\in \pset'}f'(P)\geq F'-\eta \geq \sum_{e\in E'_S}\hat c(e)-\eta$.
	Note that, from the definition of problematic paths, every path $P\in\pset'$ contains an edge of $E'_S$ or its copy; in the discussion below, we do not distinguish between edges of $E'_S$ and their copies in $\tilde H'$. Therefore:
	$\sum_{e\in E'_S}f'(e)\geq \sum_{P\in \pset'}f'(P)\geq \sum_{e\in E'_S}\hat c(e)-\eta$. Altogether, we get that:
	
	\[\sum_{e\in E'_S}(\hat c(e)-f'(e))\leq \eta.\]
	
	Let $A'=\bigcup_{v\in Q_0\cap S}A(v)$, and let $E''_S=A'\cap E'_S$.
	Since, for every edge $e\in E'_S$, $f'(e)\leq \hat c(e)$, we get that:
	
	\[ \sum_{e\in E''_S}(\hat c(e)-f'(e))\leq \sum_{e\in E'_S}(\hat c(e)-f'(e))\leq \eta. \]
	
	Since $E''_S\subseteq A'$, we get that:

\begin{equation}\label{eq: deficit}	
\sum_{e\in A'}(\hat c(e)-f'(e))= \sum_{e\in A'\setminus E''_S}(\hat c(e)-f'(e))+\sum_{e\in E''_S}(\hat c(e)-f'(e))\leq \sum_{e\in A'\setminus E''_S}\hat c(e)+\eta.  
\end{equation}

	Finally, we bound $\sum_{e\in A'\setminus E''_S}\hat c(e)$ in the following observation.

	\begin{observation}\label{obs: compare capacities}
			Assume that $(s,T)$ is a good pair, that all shortcut operations so far have been valid, and that Event~\ref{event: heavy err} did not happen. Then
		$\sum_{e\in A'\setminus E''_S}\hat c(e)\leq \eta$.
	\end{observation}
\begin{proof}
	Consider some vertex $v\in Q_0\cap S$, and recall that we have denoted by $A(v)$ the set of all edges $(v^{\inn},x)\in E(H)$ with $x\not\in V(J)$.
	Consider now any edge $e=(v^{\inn},x)\in A(v)$. From the construction of the split graph, $x\in V^{\out}$ must hold, so we denote $x=u^{\out}$ for some vertex $u\in V(G)$. If $u=v$, then edge $e$ lies in $E^0_S$, and if $u\in (R\cup S)\setminus\set{v}$, then $e\in E^1_S\cup E^2_S$. Therefore, the only way that $e\not\in E'_S$ is if $u\in L$. We conclude that for every edge $e=(x,y)$ with $e\in A'\setminus E''_S$, it must be the case that $x\in V(J)$ and $y\in L^*\setminus V(J)$. From \Cref{cor: edges entering L}, $\sum_{e\in A'\setminus E''_S}\hat c(e)\leq  2^{14}\cdot n^{\eps}\cdot \gamma\cdot M'\cdot \log^2 n\cdot \log(\wmax)=\eta$.
\end{proof}
	
Altogether, we get that:

\[\sum_{v\in Q_0\cap S}\Delta_{f'}(v)=\sum_{e\in A'}(\hat c(e)-f'(e))\leq \sum_{e\in A'\setminus E''_S}\hat c(e)+\eta\leq 2\eta. \]
\end{proof}

Recall that $M'=M_i=\frac{\wmax'}{2^i\cdot \gamma}$, and that the number of phases in the algorithm is bounded by $z=\ceil{\log \left(\frac{8\wmax'\cdot n^2}{\gamma}\right )}$. 
Therefore,
$M'\geq M_z\geq \frac{\wmax'}{\gamma\cdot 2^z}$, and $\log(M')\geq -8\log(\wmax)$. We define a parameter $j_0$ as follows: if $M'\geq 1$, then $j_0=0$, and otherwise, $j_0=\log(M')$. From our discussion:

\begin{equation}\label{eq: bound on j0}
-8\log(\wmax)\leq j_0\leq 0.
\end{equation}

Notice that, since the capacity of every regular edge in $G''$ is $\wmax$, from Property \ref{prop: residual capacity weaker}, and since the flow $f$ is $M'$-integral, the capacity of every edge in $H$ is at least $2^{j_0}$.

For an integer $j\geq j_0$, let $B_j=\set{v\in Q_0\mid \beta(v)\geq 2^j}$; for brevity, we will refer to the vertices of $B_j$ as \emph{$j$-interesting} vertices. We next bound the number of $j$-interesting vertices that may lie in $S$, for all such $j$. The following claim shows that for sufficiently large values of $j$, $|B_j\cap S|$ must be small, and so in particular such a set $B_j$ of vertices may only contain few vertices of $S\cup L$.

\begin{claim}\label{claim: bound j-interesting}
	Assume that $(s,T)$ is a good pair, that all shortcut operations so far have been valid, and that Event~\ref{event: heavy err} did not happen.
	Then for all $j\geq j_0$, $|B_j\cap S|\leq \frac{9n\cdot M'}{2^{j}}$.
\end{claim}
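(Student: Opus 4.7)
The plan is to use an arbitrary maximum $s^{\out}$--$t'$ flow $f'$ in $\tilde H'$ as a certificate, and combine the deficit bound of \Cref{claim: small deficit} with the max-flow bounds of \Cref{claim: max flow in H} and \Cref{claim: max flow in tilde H}. Fix $j\geq j_0$ and any vertex $v\in B_j\cap S$. By definition, $\beta(v)\geq 2^j$, and by the definition of the flow deficit, $\sum_{e\in A(v)}f'(e)=\beta(v)-\Delta_{f'}(v)\geq 2^j-\Delta_{f'}(v)$. The key observation is that the edge sets $\{A(v)\}_{v\in Q_0}$ are pairwise disjoint (they have distinct tails $v^{\inn}$) and every edge in $A(v)$ is an edge of $\tilde H'$ ending at $t'$; hence
\[
\sum_{v\in B_j\cap S}\sum_{e\in A(v)}f'(e)\ \leq\ \sum_{e\in \delta^-_{\tilde H'}(t')} f'(e)\ =\ F'.
\]

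Summing the lower bound for each $v\in B_j\cap S$ and combining with the display above,
\[
|B_j\cap S|\cdot 2^j\ \leq\ F'+\sum_{v\in B_j\cap S}\Delta_{f'}(v).
\]
By \Cref{claim: max flow in tilde H}, $F'\leq F+\eta$, and by \Cref{claim: max flow in H}, $F\leq 8nM'$, so $F'\leq 8nM'+\eta$. By \Cref{claim: small deficit}, the total deficit over $Q_0\cap S$ is at most $2\eta$, and since $B_j\cap S\subseteq Q_0\cap S$ we get $\sum_{v\in B_j\cap S}\Delta_{f'}(v)\leq 2\eta$. Altogether,
\[
|B_j\cap S|\ \leq\ \frac{8nM'+3\eta}{2^j}.
\]

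It remains to verify that $3\eta\leq nM'$, so that the right-hand side is bounded by $9nM'/2^j$. Recall $\eta=2^{14}n^{\eps}\cdot M'\cdot\gamma\cdot\log^2n\cdot\log(\wmax)$. Using $\gamma\leq 2n^{1-10\eps}$ from \Cref{eq: gamma2} and the assumption $n^{\eps/10}>2^{20}\log^2n\cdot\log^2(\wmax)$ from \Cref{eq: neps large} (which in particular implies $2^{14}\log^2n\cdot\log(\wmax)\leq n^{\eps/10}$), we get
\[
3\eta\ \leq\ 6\cdot M'\cdot n^{\eps}\cdot n^{1-10\eps}\cdot n^{\eps/10}\ =\ 6\cdot M'\cdot n^{1-8.9\eps}\ \leq\ nM',
\]
for $n$ sufficiently large. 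This yields the desired bound $|B_j\cap S|\leq 9nM'/2^j$.

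The argument is essentially a routine application of the preceding lemmas; the only point that requires any care is the last parameter calculation showing $3\eta\leq nM'$, but this is immediate from the standing inequalities on $\gamma$, $\eps$, and $n$.
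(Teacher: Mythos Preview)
Your proof is correct and follows essentially the same approach as the paper: fix a maximum $s^{\out}$--$t'$ flow, use that the total flow into $t'$ through the edges of $A(B_j\cap S)$ is at most $F'\leq 8nM'+\eta$, combine this with the deficit bound $\sum_{v\in Q_0\cap S}\Delta_{f'}(v)\leq 2\eta$ from \Cref{claim: small deficit}, and finish with the parameter calculation $3\eta\leq nM'$. The only difference is cosmetic—the paper phrases the inequality as a lower bound on the total deficit rather than an upper bound on the total flow, but the two are algebraically identical.
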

\begin{proof}
		Assume that $(s,T)$ is a good pair, that all shortcut operations so far have been valid, and that Event~\ref{event: heavy err} did not happen.
	Fix an integer $j\geq j_0$, and consider a maximum $s^{\out}$-$t'$ flow $f'$ in graph $\tilde H'$. Let $\pset$ be a flow-path decomposition of $f'$. 
	
	Let $A'=A(B_j\cap S)=\bigcup_{v\in B_j\cap S}A(v)$. Note that, in graph $\tilde H'$, every edge of $A'$ connects a vertex $v^{\inn}$, for $v\in B_j\cap S$, to the vertex $t'$. 
	Therefore, we can assume w.l.o.g. that every flow-path in $\pset$ contains at most one edge of $A'$. From \Cref{claim: max flow in tilde H} and \Cref{claim: max flow in H}:

	\[\val(f')=F'\leq F+\eta\leq  8n M'+\eta.\]
	
	Therefore:
	
	\[\sum_{e\in A'}f'(e)\leq \val(f')\leq 8n M'+\eta.\]
	
We can then lower-bound the total deficit of the vertices of $B_j\cap S$ with respect to $f'$ as follows:
	
	\[
	\begin{split}
	\sum_{v\in B_j\cap S}\Delta_{f'}(v)&=\sum_{v\in B_j\cap S}\left (\beta(v)-\sum_{e\in A(v)}f'(e)\right )\\
	&\geq |B_j\cap S|\cdot 2^j-\sum_{e\in A'}f'(e)\\
	&\geq |B_j\cap S|\cdot 2^j-8n M'-\eta.
	\end{split}
	\]
	
	On the other hand, from \Cref{claim: small deficit}, $\sum_{v\in B_j\cap S}\Delta_{f'}(v)\leq 2\eta$ must hold.
	
	Therefore, we get that:
	
	\[|B_j\cap S|\leq \frac{3\eta}{2^j}+\frac{8n M'}{2^j}\leq \frac{9n M'}{2^{j}},\]
	
	since $\eta=2^{20}n^{\eps}\cdot M'\cdot \gamma \cdot \log^2 n\cdot \log(\wmax)\leq nM'$, as $\gamma\leq 2n^{1-10\eps}$ from Inequality \ref{eq: gamma2} and $n^{\eps}\geq 2^{14}\log^2 n\log(\wmax)$ from Inequality \ref{eq: neps large}.
\end{proof}

Let $j^*$ be the smallest integer for which $2^{j^*}\geq 4 n^{9\eps}\cdot M'$ holds (and notice that $j^*$ may be negative). Clearly:

\begin{equation}
4 n^{9\eps}\cdot M'\leq 2^{j^*}\leq 8 n^{9\eps}\cdot M'.\label{eq: bound on j*}
\end{equation} 

Additionally, since $\gamma\ge n^{10\eps}$ from Inequality \ref{eq: gamma3}, we get that:

\begin{equation}
 2^{j^*}\leq \frac{\gamma\cdot M'}{16}.\label{eq: bound2 on j*}
\end{equation}

Let $Q'_1\subseteq Q_0$ be the collection of all vertices $v\in Q_0$ with $\beta(v)\geq 2^{j^*}$; equivalently, $Q'_1=B_{j^*}$. Then, from \Cref{claim: bound j-interesting}, 
if $(s,T)$ is a good pair, all shortcut operations so far have been valid, and Event~\ref{event: heavy err} did not happen, it must be the case that: 

\begin{equation}\label{eq: bound1 q'}
|Q'_1\cap (S\cup L)|\leq |Q'_1\cap S|+|Q'_1\cap L|\leq \frac{9nM'}{2^{j^*}}+n^{\eps}\leq 2n^{1-8\eps}.
\end{equation}


We are now ready to define a central notion for the current step: the $j$-bad batches of vertices.

\begin{definition}[$j$-bad batch of vertices.]
	Let $f'$ be any maximum $s^{\out}$--$t'$ flow in $\tilde H'$, and let $j_0\leq j< j^*$ be an integer. A subset $\hat B\subseteq Q_0\setminus Q'_1$ of vertices is a \emph{$j$-bad batch} with respect to $f'$ if the following hold:
	
	\begin{itemize}
		\item for every vertex $v\in \hat B$, $\Delta_{f'}(v)\geq 2^j$; and
		
		\item $|\hat B|\geq \frac{n^{5.5\eps}\cdot \gamma\cdot M'}{2^j}$.
	\end{itemize}
\end{definition}

In the next key claim we show that only a small fraction of vertices in a $j$-bad batch may lie in $S\cup L$.

\begin{claim}\label{claim: bad batch not in SL}
		Assume that $(s,T)$ is a good pair, that all shortcut operations so far have been valid, and that Event~\ref{event: heavy err} did not happen. Let $\hat B\subseteq Q_0\setminus Q'_1$ be a $j$-bad batch of vertices with respect to some maximum $s^{\out}$--$t'$ flow  $f'$ in $\tilde H'$, for some $j_0\leq j<j^*$. Then $|\hat B\cap(S\cup L)|\leq \frac{|\hat B|}{n^{4\eps}}$.
\end{claim}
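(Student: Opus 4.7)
The plan is to bound $|\hat B \cap S|$ and $|\hat B \cap L|$ separately, each by $\tfrac{|\hat B|}{2 n^{4\eps}}$, and then combine. The bound on the $S$-part will come from the total-deficit bound of \Cref{claim: small deficit}, while the bound on the $L$-part will follow from the (very small) size $|L|\le n^{\eps}$ together with a lower bound on $|\hat B|$ extracted from $j<j^*$.

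First I would handle $S$. By definition of a $j$-bad batch, every $v\in\hat B$ satisfies $\Delta_{f'}(v)\ge 2^j$, so the vertices in $\hat B\cap S\subseteq Q_0\cap S$ contribute at least $|\hat B\cap S|\cdot 2^j$ to the total deficit $\sum_{v\in Q_0\cap S}\Delta_{f'}(v)$. By \Cref{claim: small deficit}, this total is at most $2\eta$, giving
\[
|\hat B\cap S|\;\le\;\frac{2\eta}{2^{j}}\;=\;\frac{2^{15}\, n^{\eps}\,\gamma\, M'\,\log^{2}n\,\log(\wmax)}{2^{j}}.
\]
On the other hand, the $j$-bad-batch hypothesis gives $|\hat B|\ge \tfrac{n^{5.5\eps}\gamma M'}{2^{j}}$, hence $\tfrac{|\hat B|}{2n^{4\eps}}\ge \tfrac{n^{1.5\eps}\gamma M'}{2\cdot 2^{j}}$. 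Comparing the two bounds, it suffices to check $2\cdot 2^{15}\log^{2}n\log(\wmax)\le n^{0.5\eps}$, which follows easily from the standing assumption $n^{\eps/10}>2^{20}\log^{2}n\log^{2}(\wmax)$ in \eqref{eq: neps large}. Thus $|\hat B\cap S|\le \tfrac{|\hat B|}{2n^{4\eps}}$.

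Next I would handle $L$. Since $(s,T)$ is a good pair, Inequality~\eqref{eq: small L} gives $|L|\le n^{\eps}$, so $|\hat B\cap L|\le n^{\eps}$. To upgrade this to $\tfrac{|\hat B|}{2n^{4\eps}}$ it is enough to show $|\hat B|\ge 2n^{5\eps}$. For this I use $j<j^{*}$ and Inequality~\eqref{eq: bound on j*}, which yields $2^{j}<2^{j^{*}}\le 8n^{9\eps}M'$. Substituting into the lower bound on $|\hat B|$:
\[
|\hat B|\;\ge\;\frac{n^{5.5\eps}\gamma M'}{2^{j}}\;>\;\frac{n^{5.5\eps}\gamma M'}{8 n^{9\eps} M'}\;=\;\frac{\gamma}{8 n^{3.5\eps}}.
\]
Using $\gamma\ge n^{14\eps}$ from \eqref{eq: gamma3}, we obtain $|\hat B|\ge n^{10.5\eps}/8\ge 2n^{5\eps}$ (for $n$ sufficiently large by \eqref{eq: neps large}). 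Hence $|\hat B\cap L|\le n^{\eps}\le \tfrac{|\hat B|}{2n^{4\eps}}$.

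Combining the two bounds yields $|\hat B\cap(S\cup L)|\le |\hat B\cap S|+|\hat B\cap L|\le \tfrac{|\hat B|}{n^{4\eps}}$, as required. The proof is a short arithmetic chase; no new ideas are needed beyond \Cref{claim: small deficit} and the definitions. The only place where one must be slightly careful is the $L$-part, where one must use $j<j^{*}$ to get a usable lower bound on $|\hat B|$, since the raw definition of a $j$-bad batch alone is not strong enough to beat $|L|\le n^{\eps}$ for small values of $j$.
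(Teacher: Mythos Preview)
Your proof is correct and follows essentially the same approach as the paper's: both use \Cref{claim: small deficit} to bound the $S$-part and $|L|\le n^{\eps}$ together with a lower bound on the batch size (derived from $j<j^*$) to handle the $L$-part. The only cosmetic difference is that the paper packages this as a proof by contradiction (assuming $|\hat B\cap(S\cup L)|>|\hat B|/n^{4\eps}$ and deriving $\sum_{v\in\hat B\cap S}\Delta_{f'}(v)>2\eta$), whereas you argue directly by bounding $|\hat B\cap S|$ and $|\hat B\cap L|$ separately from above.
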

\begin{proof}
	Assume that $(s,T)$ is a good pair, that all shortcut operations so far have been valid, and that Event~\ref{event: heavy err} did not happen. Let $\hat B\subseteq Q_0\setminus Q'_1$ be a $j$-bad batch of vertices with respect to some maximum $s^{\out}$--$t'$ flow $f'$ in $\tilde H'$, for some $j_0\leq j<j^*$. Denote by $B'=\hat B\cap(S\cup L)$, and assume for contradiction that $|B'|>\frac{|\hat B|}{n^{4\eps}}$. Then:
	
	\[|B'|>\frac{|\hat B|}{n^{4\eps}}\geq  \frac{n^{1.5\eps}\cdot \gamma\cdot M'}{2^j}.\]
	
	Recall that $\gamma\cdot M'\geq 2^{j^*}\geq 2^j$, from Inequality \ref{eq: bound2 on j*}. Therefore, $|B'\cap L|\leq n^{\eps}\leq \frac{|B'|}{2}$ (we have used Inequality \ref{eq: small L}). But then we get that:
	
	\[\sum_{v\in B'\cap S}\Delta_{f'}(v)\geq 2^j\cdot |B'\cap S|\geq  \frac{n^{1.5\eps}\gamma\cdot M'}{2}>2\eta, \]
	
	contradicting \Cref{claim: small deficit} (we have used the fact that $\eta=2^{14}n^{\eps}\cdot M'\cdot \gamma \cdot \log^2 n\cdot \log(\wmax)$, and that $n^{\eps/10}>2^{20}\log^2 n\cdot \log (\wmax)$ from Inequality \ref{eq: neps large})
\end{proof}

Let $Q'_2\subseteq Q_0$ contain all vertices $v\in Q_0$ with $v^{\out}\in V(J)$. It is easy to verify that:

\begin{equation}\label{eq: start q2}
|Q'_2|\leq N=\frac{{32n^{1+2\eps}\cdot \log^2(\wmax)}}{\gamma}\leq n^{1-5\eps},
\end{equation} 

since $N=\frac{32n^{1+2\eps}\cdot  \log^2(\wmax)}{\gamma}$, 
$\gamma\geq n^{8\eps}$ from Inequality \ref{eq: gamma3}, and $n^{\eps}\leq 2^{20}\log n\log (\wmax)$ from Inequality \ref{eq: neps large}.

\subsection{The Algorithm}

We are now ready to describe the algorithm for Step~\ref{step2: partition}. Our algorithm works with graph $\tilde H$ -- the graph that is obtained  from $\tilde H'$ by unifying the sets of parallel edges that are incident to $t'$. Note that, for every vertex $v\in Q_0$, $\beta(v)$ is equal to the capacity of the edge $(v,t')$ in graph $\tilde H$. We compute the set $Q'_1=B_{j^*}$ of vertices, and we  
initialize $Q=Q_0\setminus (Q'_1\cup Q'_2)$ and $Q'=Q'_0\cup Q'_1\cup Q'_2$. Notice that $(Q,Q')$ is a partition of $U$.

Our algorithm performs a number of iterations. In every iteration, we identify some $j$-bad batches $\hat B\subseteq Q$ of vertices, for some $j_0\leq j<j^*$, and move the vertices of each such batch $\hat B$ from $Q$ to $Q'$. We ensure that all such batches of vertices are disjoint.

Specifically, in order to execute a single iteration, we set up an instance of maximum min-cost flow in graph $\tilde H$, with source $s^{\out}$ and destination $t'$. For every vertex $v$ that currently lies in $Q$, we set the \emph{cost} of the edge $(v,t')$ to be $1$, and the costs of all other edges of $\tilde H$ are set to $0$. We then compute a maximum min-cost flow from $s^{\out}$ to $t'$ in $\tilde H$, using the algorithm from \Cref{thm: maxflow}, obtaining an integral flow $f'$. For every vertex $v\in Q$, we compute its deficit $\Delta_{f'}(v)=\beta(v)-f'(v,t')$. For all $j_0\leq j<j^*$, we let $\hat B_j\subseteq Q$ be the set of all vertices $v\in Q$ with $2^j\leq \Delta_{f'}(v)<2^{j+1}$. For every integer $j_0\leq j<j^*$ for which $|\hat B_j|\geq\frac{n^{5.5\eps}\cdot \gamma\cdot M'}{2^j}$ holds, we move all vertices of $\hat B_j$ from $Q$ to $Q'$; notice that, in this case, $\hat B_j$ is a $j$-bad batch with respect to $f'$.
If, for any integer $j_0\leq j<j^*$, $|\hat B_j|\geq\frac{n^{5.5\eps}\cdot \gamma\cdot M'}{2^j}$ holds, then we say that the current iteration is of \emph{type 1}. Otherwise, 
 for all $j_0\leq j<j^*$, $|\hat B_j|<\frac{n^{5.5\eps}\cdot \gamma\cdot M'}{2^j}$ holds, and we say that the current iteration is of \emph{type 2}. In this case, the current iteration becomes the last one.
Notice that the time required to execute every iteration is bounded by $O\left(|E(\tilde H)|^{1+o(1)}\cdot \log (\wmax)\right )$. Since every edge of $\tilde H$ is either incident to $t'$ or has an endpoint in $\Gamma$, we get that: $|E(\tilde H)|\leq O\left (N\cdot n\right )\leq  O\left(\frac{n^{2+2\eps}\cdot \log^2(\wmax)}{\gamma}\right )$, since $N=\frac{32n^{1+2\eps}\cdot  \log^2(\wmax)}{\gamma}$. Altogether, the running time of a single iteration is $O\left(\frac{n^{2+2\eps+o(1)}\cdot \log^3(\wmax)}{\gamma}\right )$.
We bound the number of iterations in the following claim.

\begin{claim}\label{claim: number of iterations}
If $(s,T)$ is a good pair, all shortcut operations so far have been valid, and the event $\hat \event$ did not happen, then the total number of type-1 iterations is at most $\frac{216n^{1-5.5\eps}\cdot \log n\cdot \log(\wmax)}{\gamma}$.
\end{claim}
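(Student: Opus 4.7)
I would bound $T$ via a deficit potential argument, tracking the total ``deficit mass'' removed from $Q$ across all type-1 iterations. The rough strategy is: give a uniform lower bound on the deficit mass removed per iteration, and a matching upper bound on the total deficit mass that can ever be removed.

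First, I would establish a per-iteration lower bound. By definition, each type-1 iteration moves at least one $j$-bad batch $\hat B_j$ with $|\hat B_j|\geq n^{5.5\eps}\gamma M'/2^j$, and every $v\in\hat B_j$ has $\Delta_{f'}(v)\geq 2^j$. Thus the deficit mass contributed by just one such batch is $\sum_{v\in\hat B_j}\Delta_{f'}(v)\geq n^{5.5\eps}\gamma M'$. Summing over $T$ type-1 iterations gives a total deficit mass removed of at least $T\cdot n^{5.5\eps}\gamma M'$. Crucially, each vertex is moved at most once across the entire phase (once it enters $Q'$ it never returns), so these contributions are over disjoint vertex sets.

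Second, I would upper-bound the total deficit mass removed. Since $\Delta_{f'}(v)\leq \beta(v)$ for each moved vertex, the total is at most $\sum_{v\in Q_0\setminus Q'_1}\beta(v)$. I would bound this sum by splitting $A(v)$ into its regular-edge and special-edge parts. For the regular edges, Property~\ref{prop: few sticking out} of \Cref{obs: props of tilde H} directly caps their total capacity by $2^{12}\,nM'\gamma\log^2 n\log(\wmax)$. For the boundary special edges $(v^{\inn},v^{\out})$ with $v^{\out}\notin V(J)$, the constraint $\beta(v)<2^{j^*}\leq 8n^{9\eps}M'$ bounds each individual contribution; to push through the stronger bound in the claim I would further sharpen this by leveraging that the maximum $s^{\out}$--$t'$ flow in $\tilde H'$ has value $F'\leq 8nM'+\eta = O(nM')$ (\Cref{claim: max flow in tilde H}), so that in any min-cost max-flow almost all of the special-edge capacity of $Q$-vertices (except for at most $F'$ worth) is already carrying flow and contributes nothing to deficit.

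Finally, combining these bounds gives $T\cdot n^{5.5\eps}\gamma M' \leq O(nM'\log n\log\wmax)$, which rearranges to $T\leq \frac{O(n^{1-5.5\eps}\log n\log\wmax)}{\gamma}$, as claimed. The main obstacle I anticipate is the middle step: the naive inequality $\sum_{v\in Q_0\setminus Q'_1}\beta(v)\leq n\cdot 2^{j^*}$ is too weak by a factor of roughly $n^{9\eps}$, so the tighter bound must be extracted from the min-cost max-flow structure rather than from the $\beta$-values alone. Concretely, I would partition the deficit as $\sum_v \Delta_{f'}(v) = \beta(Q) - \sum_v f'(v,t')$ and argue that the ``untapped'' portion $\beta(Q)-\sum_v f'(v,t')$ is controlled simultaneously by (i) the regular-edge capacity bound on $E^R$ and (ii) the small deficit of $S$-vertices from \Cref{claim: small deficit}, with $L$-vertex contributions absorbed using $|L|\leq n^{\eps}$ and Inequality~\ref{eq: neps large}.
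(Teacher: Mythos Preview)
Your potential argument has a genuine gap at the upper bound, and your proposed fix does not close it. First, the regular-edge contribution: you correctly quote Property~\ref{prop: few sticking out} as $\sum_{e\in E^R}\hat c(e)\le 2^{12}nM'\gamma\log^2 n\log(\wmax)$, but this already exceeds the target $O(nM'\log n\log(\wmax))$ by a factor of $\gamma\log n$; your final line silently drops the $\gamma$. Second, your sharpening for the special edges runs in the wrong direction. From $F'\le O(nM')$ you only get that the \emph{flow} on the edges of $A(Q)$ is at most $O(nM')$, so the total deficit $\beta(Q)-\sum_v f'(v,t')$ is \emph{at least} $\beta(Q)-O(nM')$, which can be $\Theta(n^{1+9\eps}M')$ when $\beta(Q)$ is; the flow bound gives a lower bound on deficit, not an upper bound. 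Invoking \Cref{claim: small deficit} does not help either: that claim only controls the deficit of vertices in $S$, whereas by \Cref{claim: bad batch not in SL} the removed batches consist almost entirely of $R$-vertices, whose deficit is not constrained.

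The paper abandons deficit-mass accounting and instead counts \emph{vertices}, stratified by $\beta$-level. The key step (\Cref{obs: after first iteration}) shows that after the very first iteration, for every $j$ one has $|B'_j\cap Q|\le O(nM'/2^j)$: vertices of $B'_j$ with deficit $\ge 2^{j-1}$ would have populated $\hat B_{j-1}$ or $\hat B_j$ and been removed as a bad batch, while vertices with deficit $<2^{j-1}$ each carry $\ge 2^{j-1}$ flow into $t'$, so there are at most $O(F'/2^j)=O(nM'/2^j)$ of them. This is exactly where the flow bound $F'=O(nM')$ is used correctly --- to cap the \emph{count} of high-flow vertices, not their deficit. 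Summing, $|B_j\cap Q|\le O(nM'/2^j)$ after the first iteration; since every subsequent type-$1(j)$ iteration removes $\ge n^{5.5\eps}\gamma M'/2^j$ vertices from $B_j$, there are $\le O(n^{1-5.5\eps}/\gamma)$ such iterations for each $j$, and summing over the $O(\log n\log(\wmax))$ values of $j$ yields the claim. The idea you are missing is this two-sided control \emph{after one iteration}: the flow bound handles low-deficit vertices and the first round of batch removals handles high-deficit ones, together capping the per-level vertex count.
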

\begin{proof}
	For every integer $j_0\leq j<j^*$, let $B'_j=B_j\setminus B_{j+1}$, so $B'_j$ contains all vertices $v\in Q_0$, with $2^j\leq \beta(v)<2^{j+1}$. The following observation is key to proving the claim.
	
	\begin{observation}\label{obs: after first iteration}
		Assume that $(s,T)$ is a good pair, all shortcut operations so far have been valid, and the event $\hat \event$ did not happen.
		Then for every integer $j_0\leq j<j^*$, after the first iteration is completed, $|B'_j\cap Q|\leq \frac{64nM'}{2^{j}}$ holds.
	\end{observation}
\begin{proof}
		Assume that $(s,T)$ is a good pair, all shortcut operations so far have been valid, and the event $\hat \event$ did not happen.
	We fix an integer $j_0\leq j<j^*$ and prove the claim for this integer. Let $f'$ be the flow computed in the first iteration. We partition the vertices of $B'_j\cap Q$ into two subsets: set $X$ containing all vertices $v$ with $\Delta_{f'}(v)\geq 2^{j-1}$, and set $X'$ containing all remaining vertices. 
	
	Observe first that $|X|\leq \frac{4n^{5.5\eps}\cdot \gamma\cdot M'}{2^{j-1}}$ must hold. Indeed, assume otherwise, so $|X|> \frac{4n^{5.5\eps}\cdot \gamma\cdot M'}{2^{j-1}}$. Since, for every vertex $v\in B'_j$, $\beta(v)<2^{j+1}$, we get that $\Delta_{f'}(v)<2^{j+1}$ must hold as well. Therefore, for every vertex $v\in X$, $2^{j-1}\leq \Delta_{f'}(v)<2^{j+1}$ holds. We let $X_1\subseteq X$ be the set of all vertices $v\in X$ with $2^{j-1}\leq \Delta_{f'}(v)<2^j$, and we let $X_2$ be the set of all remaining vertices, so for each vertex $v\in X_2$, $2^{j}\leq \Delta_{f'}(v)<2^{j+1}$ holds. If $|X_1|\ge \frac{|X|}{2}>\frac{2n^{5.5\eps}\cdot \gamma\cdot M'}{2^{j-1}}$ holds, then, since $X_1\subseteq \hat B_{j-1}$, the set $\hat B_{j-1}$ of vertices should have been moved from $Q$ to $Q'$, including all vertices of $X_1$. Otherwise, $|X_2|\geq \frac{|X|}{2}>\frac{4n^{5.5\eps}\cdot \gamma\cdot M'}{2^{j}}$ holds, then, since $X_1\subseteq \hat B_{j-1}$, the set $\hat B_{j-1}$ of vertices should have been moved from $Q$ to $Q'$, a contradiction.
	We conclude that $|X|\leq \frac{4n^{5.5\eps}\cdot \gamma\cdot M'}{2^{j-1}}$.


	Consider now the set $X'$ of vertices. Observe that, for every vertex $v\in X'$, $\Delta_{f'}(v)< 2^{j-1}$, and, since $\beta(v)\geq 2^j$, we get that $f'(v,t')\geq 2^{j-1}$. Overall, we get that:
	
	\[\val(f')\geq \sum_{v\in X'}f'(v,t')\geq |X'|\cdot 2^{j-1}.\]
	
	On the other hand, 
	from \Cref{claim: max flow in tilde H} and \Cref{claim: max flow in H}, if $(s,T)$ is a good pair, all shortcut operations so far have been valid and Event $\hat \event$ did not happen:

	\[\val(f')=F'\leq F+\eta\leq  8n M'+\eta\leq 16n M',\]
	
	since $\eta=2^{14}n^{\eps}\cdot M'\cdot \gamma \cdot \log^2 n\cdot \log(\wmax)$, $\gamma\leq n^{1-2\eps}$ from Inequality \ref{eq: gamma2}, and $2^{14}\cdot \log^2 n\cdot\log(\wmax)\leq n^{\eps}$ from Inequality \ref{eq: neps large}.
	Altogether we get that $|X'|\leq \frac{16n M'}{2^{j-1}}$, and:
	
	\[|B'_j\cap Q|\leq |X|+|X'|\leq  \frac{4n^{5.5\eps}\cdot \gamma\cdot M'}{2^{j-1}}+ \frac{16n M'}{2^{j-1}}\leq \frac{64n M'}{2^{j}},\]
	since $\gamma\leq 2n^{1-10\eps}$ from Inequality \ref{eq: gamma2}.
\end{proof}

Since, for all $j_0\leq j<j^*$, $B_j=\bigcup_{j'=j}^{j^*-1}B'_j$, we get that, after the first iteration, for all $j_0\leq j<j^*$, $|B_j\cap Q|\leq \frac{128nM'}{2^{j}}$ holds. Consider now some iteration of the algorithm that is a type-1 iteration. 
Then there must be at least one integer $j_0\leq j<j^*$, for which, in the current iteration, $|\hat B_j|\geq\frac{n^{5.5\eps}\cdot \gamma\cdot M'}{2^j}$ held. We then say that this iteration is a \emph{type-$1(j)$ iteration} (if there are several such integers $j$, we choose one arbitrarily.)
Recall that $\hat B_j$ contains all vertices $v\in Q$ with $2^j\leq \Delta_{f'}(v)<2^{j+1}$, so in particular $\beta(v)\geq 2^j$ for each such vertex $v$, and therefore, $\hat B_j\subseteq B_j$ must hold. If some iteration $r$ is a type-$1(j)$ iteration, then the vertices of $\hat B_j$ are deleted from $Q$ during that iteration. Since, after the first iteration, $|B_j\cap Q|\leq \frac{128n M'}{2^{j}}$, and since each type-$1(j)$ iteration moves at least $\frac{n^{5.5\eps}\cdot \gamma\cdot M'}{2^j}$ vertices of $B_j$ from $Q$ to $Q'$, we get that the total number of type-$1(j)$ iterations is bounded by:

\[ \frac{128 n M'}{2^{j}}\cdot \frac{2^j}{n^{5.5\eps}\cdot \gamma\cdot M'}\leq \frac{128n^{1-5.5\eps}}{\gamma}.\]

Summing this up over all $j_0\leq j<j^*$, and recalling that $j^*\leq \log n\log(\wmax)$ from Inequality \ref{eq: bound2 on j*}, and $j_0\geq -8\log(\wmax)$, from Inequality \ref{eq: bound on j0}, we get that the total number of iterations is bounded by $\frac{216n^{1-5.5\eps}\cdot\log n\cdot \log(\wmax)}{\gamma}$. 
\end{proof}

We execute the iterations of min-cost flow computations until either a type-2 iteration is encountered, or more than  $\frac{256n^{1-5.5\eps}\cdot \log n\cdot \log(\wmax)}{\gamma}$ type-1 iterations occur. In the latter case, we terminate the algorithm for the current phase with a ``FAIL''. The following observation is immediate.

\begin{observation}\label{obs: fail in step 2}
	The algorithm for Step 2 may only terminate with a ``FAIL'' if either $(s,T)$ is not a good pair, or at least one shortcut operation executed before the current phase or during Step 1 of the current phase is invalid, or Event $\hat \event$ has happened.
\end{observation}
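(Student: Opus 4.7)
The plan is to obtain the observation as an immediate contrapositive of \Cref{claim: number of iterations}. By construction, the algorithm of Step~\ref{step2: partition} terminates with a \texttt{FAIL} in exactly one scenario: the number of type-1 iterations exceeds the threshold $\frac{256n^{1-5.5\eps}\cdot \log n\cdot \log(\wmax)}{\gamma}$. (A type-2 iteration, by definition, becomes the last iteration and leads to successful termination rather than \texttt{FAIL}.)

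So I would prove the contrapositive: assume that $(s,T)$ is a good pair, that all shortcut operations executed before the current phase or during Step~\ref{step1: local flow} of this phase are valid, and that Event~\ref{event: heavy err} did not happen. Under these three hypotheses, \Cref{claim: number of iterations} directly yields that the total number of type-1 iterations executed in Step~\ref{step2: partition} is at most $\frac{216n^{1-5.5\eps}\cdot \log n\cdot \log(\wmax)}{\gamma}$, which is strictly below the failure threshold $\frac{256n^{1-5.5\eps}\cdot \log n\cdot \log(\wmax)}{\gamma}$. Consequently, the loop of min-cost flow computations must terminate either because a type-2 iteration is encountered (a successful termination), or because $Q$ becomes empty (also not a \texttt{FAIL}); in either case, the algorithm does not output \texttt{FAIL}. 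Taking the contrapositive gives the observation.

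There is essentially no obstacle here, since the quantitative work has already been carried out in \Cref{claim: number of iterations}: the present statement simply records that the threshold chosen in the algorithm is strictly larger than the bound guaranteed by that claim, so the \texttt{FAIL} branch cannot be triggered except through the failure of one of the three hypotheses.
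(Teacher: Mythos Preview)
Your proposal is correct and matches the paper's reasoning: the paper states that the observation is immediate, and your argument—applying the contrapositive of \Cref{claim: number of iterations} and noting that the claim's bound $216$ is below the algorithm's failure threshold $256$—is exactly the intended justification. The aside about ``$Q$ becomes empty'' is harmless but unnecessary, since that situation is already a type-2 iteration.
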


Since the running time of every iteration is $O\left(\frac{n^{2+2\eps+o(1)}\cdot \log^3(\wmax)}{\gamma}\right )$, we get that the total running time of Step~\ref{step2: partition} so far is:

\[\begin{split}
&O\left(\frac{n^{2+2\eps+o(1)}\cdot \log^3(\wmax)}{\gamma}\right )\cdot O\left(\frac{n^{1-5.5\eps}\cdot \log n\cdot \log(\wmax)}{\gamma}\right )\\
&\quad\quad\quad\quad\quad\quad\quad\quad\leq  O\left(\frac{n^{3-3.5\eps+o(1)}\cdot \log^4(\wmax)}{\gamma^2}\right )\\
&\quad\quad\quad\quad\quad\quad\quad\quad\leq  O\left(n^{2-4\eps+o(1)}\cdot \log^4(\wmax)\right ),
\end{split}\]

since $\gamma\geq n^{2/3+5\eps}$ from Inequality \ref{eq: gamma2}.

We now summarize the properties of the final partition $(Q,Q')$ of $U$ that the algorithm obtains. We start by showing that $Q'$ may only contain a relatively small number of vertices from $S\cup L$.

\begin{claim}\label{claim: Q' few S and L}
	Assume that $(s,T)$ is a good pair, that all shortcut operations so far have been valid, and that Event~\ref{event: heavy err} did not happen. Then	$|Q'\cap (S\cup L)|\leq 2n^{1-4\eps}$.
\end{claim}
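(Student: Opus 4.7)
\textbf{Proof plan for \Cref{claim: Q' few S and L}.}
The plan is to decompose $Q'$ into four disjoint contributions and bound $|X \cap (S \cup L)|$ for each piece $X$ separately. Specifically, write $Q' = Q'_0 \cup Q'_1 \cup Q'_2 \cup Q'_3$, where $Q'_3$ denotes the set of all vertices that are moved from $Q$ to $Q'$ over the course of the iterations of the min-cost flow computations. The three ``initial'' pieces are controlled by results already established: $|Q'_0 \cap (S \cup L)| \leq n^{1-8\eps}$ by \Cref{claim: max flow in tilde H} (since $|Q'_0 \cap S| \leq 2\eta/M'$ and $|L| \leq n^{\eps}$), $|Q'_1 \cap (S \cup L)| \leq 2n^{1-8\eps}$ by Inequality~\ref{eq: bound1 q'}, and $|Q'_2 \cap (S \cup L)| \leq |Q'_2| \leq n^{1-5\eps}$ by Inequality~\ref{eq: start q2}.

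The main remaining task is to bound $|Q'_3 \cap (S \cup L)|$. First I would observe that $Q'_3$ is the disjoint union of all batches $\hat B_j$ moved from $Q$ to $Q'$ during some iteration, where each move is triggered because the current iteration is of type $1(j)$, namely $|\hat B_j| \geq \frac{n^{5.5\eps}\cdot \gamma\cdot M'}{2^j}$ for some $j_0 \leq j < j^*$. By construction, every vertex $v \in \hat B_j$ belongs to $Q \subseteq Q_0 \setminus Q'_1$ at the time of the move and satisfies $\Delta_{f'}(v) \geq 2^j$ with respect to the maximum $s^{\out}$-$t'$ flow $f'$ computed in that iteration. Hence $\hat B_j$ meets the definition of a $j$-bad batch with respect to $f'$, so \Cref{claim: bad batch not in SL} applies and yields $|\hat B_j \cap (S \cup L)| \leq |\hat B_j|/n^{4\eps}$. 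Summing over all batches moved throughout the algorithm, and using that the batches are pairwise disjoint (a vertex leaves $Q$ at most once), I get
\[
|Q'_3 \cap (S \cup L)| \leq \sum_{\hat B_j} \frac{|\hat B_j|}{n^{4\eps}} = \frac{|Q'_3|}{n^{4\eps}} \leq \frac{n}{n^{4\eps}} = n^{1-4\eps}.
\]

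Finally, combining the four bounds gives
\[
|Q' \cap (S \cup L)| \leq 3n^{1-8\eps} + n^{1-5\eps} + n^{1-4\eps} \leq 2n^{1-4\eps},
\]
where the last inequality holds for sufficiently large $n$ (guaranteed by Inequality~\ref{eq: neps large}), completing the proof. The only step requiring real care is the verification that each moved batch genuinely qualifies as a $j$-bad batch at the moment it is moved; this is where the invariant $\hat B_j \subseteq Q_0 \setminus Q'_1$ (enforced by initializing $Q'$ to contain $Q'_1$ and only ever removing vertices from $Q$) must be used, together with the choice of flow $f'$ as a maximum $s^{\out}$-$t'$ flow in $\tilde H$ in every iteration so that the hypothesis of \Cref{claim: bad batch not in SL} is met.
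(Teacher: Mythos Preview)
Your proposal is correct and takes essentially the same approach as the paper: decompose $Q'$ into the initial pieces $Q'_0, Q'_1, Q'_2$ and the union of bad batches moved during the iterations, invoke the already-established bounds (\Cref{claim: max flow in tilde H}, Inequality~\ref{eq: bound1 q'}, Inequality~\ref{eq: start q2}) for the first three, and apply \Cref{claim: bad batch not in SL} batch-by-batch to bound the last piece by $n^{1-4\eps}$. The paper's proof is essentially identical, differing only in the bookkeeping of the final arithmetic.
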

\begin{proof}
	We assume  that $(s,T)$ is a good pair, that all shortcut operations so far have been valid, and that Event~\ref{event: heavy err} did not happen. Then from \Cref{claim: max flow in tilde H} $|Q'_0\cap (S\cup L)|\leq  n^{1-8\eps}$. Additionally, from Inequality \ref{eq: bound1 q'},  $|Q'_1\cap (S\cup L)|\leq 2n^{1-8\eps}$, and from Inequality \ref{eq: start q2}, $|Q'_2|\leq  n^{1-5\eps}$. Altogether:
	
	\[ |(Q_0'\cup Q'_1\cup Q_2')\cap (S\cup L)|\leq 2n^{1-5\eps}. \]
	
	For conveninence, denote $Q'_3=Q'\setminus (Q_0'\cup Q'_1\cup Q_2')$. From our algorithm, there is a partition $\bset$ of $Q'_3$ that has the following property: for every set $\hat B\in \bset$, there is an integer $j_0\le j<j^*$ and a maximum $s^{\out}$-$t'$ flow $f'$ in $\tilde H'$, such that $\hat B$ is a $j$-bad batch for $f'$. From \Cref{claim: bad batch not in SL}, for each such batch $\hat B\in \bset$, $|\hat B\cap(S\cup L)|\leq \frac{|\hat B|}{n^{4\eps}}$.
	Altogether, we get that:
	
	\[|Q'_3\cap (S\cup L)|=\sum_{\hat B\in \bset}|\hat B\cap (S\cup L)|\leq \sum_{\hat B\in \bset}\frac{|\hat B|}{n^{4\eps}}\leq n^{1-4\eps}.\]
	
	Altogether:
	
	\[|Q'\cap (S\cup L)|= |(Q_0'\cup Q'_1\cup Q_2')\cap (S\cup L)|+|Q'_3\cap (S\cup L)|\leq 2n^{1-5\eps}+n^{1-4\eps}\leq 2n^{1-4\eps}. \]
\end{proof}

Let $f^*$ be the min-cost maximum $s^{\out}$-$t'$ flow in $\tilde H$ computed in the last iteration, so the value of the flow must be $F'$. This flow naturally defines an $s^{\out}$-$t'$ flow in $\tilde H'$ of the same value, that we also denote by $f^*$. We also denote by $F^*=\sum_{e\in A(Q)}f^*(e)$ -- the total flow that $f^*$ sends on the edges of $A(Q)=\bigcup_{v\in Q}A(v)$. We summarize the central properties of the set $Q$ of vertices in the following claim.

\begin{claim}\label{claim: prop of Q}
	$F^*\geq \beta(Q)-4M'\cdot n^{5.5\eps}\cdot\gamma\cdot \log n\cdot\log(\wmax)$. Moreover, if $\hat f$ is any maximum $s^{\out}$-$t'$ flow in $\tilde H'$, then $\sum_{e\in A(Q)}\hat f(e)\geq F^*$.
\end{claim}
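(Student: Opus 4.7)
The plan is to analyze the specific min-cost maximum $s^{\out}$-$t'$ flow $f^*$ that was computed in the last iteration of Step~\ref{step2: partition}. Since the algorithm did not halt in a type-1 iteration (we are in the final, type-2 iteration), we know that for every $j_0\leq j<j^*$, the batch $\hat B_j=\{v\in Q:2^{j}\leq \Delta_{f^*}(v)<2^{j+1}\}$ satisfies $|\hat B_j|<\frac{n^{5.5\eps}\cdot\gamma\cdot M'}{2^{j}}$. The central identity to use is
\[
\sum_{v\in Q}\Delta_{f^*}(v)\;=\;\sum_{v\in Q}\bigl(\beta(v)-f^*(v,t')\bigr)\;=\;\beta(Q)-\sum_{e\in A(Q)}f^*(e)\;=\;\beta(Q)-F^*,
\]
so the first part of the claim is equivalent to the bound $\sum_{v\in Q}\Delta_{f^*}(v)\le 4M'\cdot n^{5.5\eps}\cdot\gamma\cdot\log n\cdot\log(\wmax)$.

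To establish that deficit bound, I will first observe that every positive deficit lies in the range $[2^{j_0},2^{j^*})$. For the lower endpoint, since the maintained flow $f$ is $M'$-integral and the capacities in $H$ (and hence the $\beta(v)$'s) are $2^{j_0}$-integral by Property~\ref{prop: residual capacity weaker} together with the choice of $j_0$, the integer min-cost max-flow algorithm of Theorem~\ref{thm: maxflow} (applied after scaling by $2^{-j_0}$) yields a $2^{j_0}$-integral $f^*$, so any positive $\Delta_{f^*}(v)$ is at least $2^{j_0}$. For the upper endpoint, $v\in Q\subseteq Q_0\setminus Q'_1$ gives $\beta(v)<2^{j^*}$ and hence $\Delta_{f^*}(v)<2^{j^*}$. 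Consequently every vertex of $Q$ with nonzero deficit sits in some $\hat B_j$ with $j_0\le j<j^*$, and the type-2 bound plus $\Delta_{f^*}(v)<2^{j+1}$ on $\hat B_j$ gives $\sum_{v\in \hat B_j}\Delta_{f^*}(v)<2n^{5.5\eps}\gamma M'$. Summing across the $j$ values, and using the bounds $j^*\le\log n+\log(\wmax)+O(1)$ from Inequality~\ref{eq: bound on j*} together with $j_0\ge -8\log(\wmax)$ from Inequality~\ref{eq: bound on j0}, yields $j^*-j_0\le 2\log n\cdot\log(\wmax)$ (using $\log n,\log(\wmax)\ge 1$), and the first statement follows.

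The second statement will be a direct consequence of the min-cost property of $f^*$. In the last iteration the edge costs are set to $1$ on every edge in $A(Q)=\{(v,t'):v\in Q\}$ (now identified with the parallel edges of $\tilde H'$) and to $0$ on all other edges, so the total cost of any $s^{\out}$-$t'$ flow $\hat f$ in $\tilde H$ equals $\sum_{e\in A(Q)}\hat f(e)$. Since $f^*$ is a maximum flow of minimum cost and $\hat f$ is a maximum flow as well (which lifts to a maximum $s^{\out}$-$t'$ flow in $\tilde H$ by merging parallel $(v,t')$ edges), the cost of $\hat f$ is at least the cost of $f^*$, i.e. $\sum_{e\in A(Q)}\hat f(e)\ge \sum_{e\in A(Q)}f^*(e)=F^*$. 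The only subtlety to check is that a maximum $s^{\out}$-$t'$ flow in $\tilde H'$ corresponds to one in $\tilde H$ of the same value, which is immediate from the contraction of parallel $(v,t')$ edges. I do not expect any real obstacle here; the main care needed is just the integrality/scaling argument in the first paragraph so that $f^*$ can indeed be chosen $2^{j_0}$-integral, ensuring that only the batches $\hat B_j$ with $j_0\le j<j^*$ need to be summed.
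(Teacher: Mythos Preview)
Your proposal is correct and follows essentially the same approach as the paper's proof: bound each batch $\hat B_j$ using the type-2 iteration condition, use $2^{j_0}$-integrality of $f^*$ to conclude that vertices outside $\bigcup_j \hat B_j$ have zero deficit, sum over the $O(\log n\cdot\log(\wmax))$ values of $j$, and invoke the min-cost property for the second assertion. Your explicit scaling argument for $2^{j_0}$-integrality is in fact slightly more careful than the paper, which simply asserts that $f^*$ is $M'$-integral.
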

\begin{proof}
	The second assertion follows immediately from the definition of the min-cost flow: recall that the costs of all edges of $\tilde H'$ that represent the edges of $A(Q)$ are set to $1$, while the costs of all other edges are $0$. If there exists a  maximum $s^{\out}$-$t'$ flow $\hat f$ in $\tilde H'$ with $\sum_{e\in A(Q)}\hat f(e)< F^*$, then the cost of this flow would be lower than that of $f^*$, a contradiction.
	
	We now turn to prove the first assertion. As before, for all $j_0\leq j<j^*$, we let $\hat B_j\subseteq Q$ be the set of all vertices $v\in Q$ with $2^j\leq \Delta_{f^*}(v)<2^{j+1}$. Since the iteration where $f^*$ was computed is of type 2, we get that, for all $j_0\leq j<j^*$, $|\hat B_j|<\frac{n^{5.5\eps}\cdot \gamma\cdot M'}{2^j}$ holds.
	Therefore, for all $j_0\leq j<j^*$:
	
	\begin{equation}\label{eq: deficit in Bj}
		\sum_{v\in \hat B_j}\left (\beta(v)-\sum_{e\in A(v)}f^*(e)\right )\leq 2^{j+1}\cdot \frac{n^{5.5\eps}\cdot \gamma\cdot M'}{2^j}\leq 2M'n^{5.5\eps}\gamma
		\end{equation}
	
	Let $\hat Q=Q\setminus\left(\bigcup_{j=j_0}^{j^*-1}\hat B_j\right )$.
	Recall that, if $v\in Q$, then $\beta(v)<2^{j^*}$ must hold (as otherwise $v$ should have been added to $Q'_1$), and so $\Delta_{f^*}(v)<2^{j^*}$ must hold. Since the flow $f^*$ is $M'$-integral, and $j_0=\log M'$, we get that, for every vertex $v\in \hat Q$, $\Delta_{f^*}(v)=0$, and $\beta(v)=\sum_{e\in A(v)}f^*(e)$. By combining this with Inequality \ref{eq: deficit in Bj} for all $j_0\leq j<j^*$, and recalling that $j^*\leq \log n\cdot \log(\wmax)$, while $j_0\geq -8\log(\wmax)$ from Inequalities \ref{eq: bound on j0} and \ref{eq: bound2 on j*}, we get that:
	
	\[\sum_{v\in Q}\left (\beta(v)-\sum_{e\in A(v)}f^*(e)\right )\leq  2M'\cdot n^{5.5\eps}\cdot\gamma\cdot (j^*-j_0)\leq 4M'\cdot n^{5.5\eps}\cdot\gamma \cdot \log n\cdot\log(\wmax),
		\]
		
	and so:
	
	\[F^*=\sum_{v\in Q}\sum_{e\in A(v)}f^*(e)\geq \beta(Q)-4M'\cdot n^{5.5\eps}\cdot\gamma\cdot\log n\cdot\log(\wmax).
	\]	
\end{proof}

\subsection{The Final Cut}

We say that Case 1 happens if $|Q|> n^{1-4\eps}$, and otherwise we say that Case 2 happens. If Case 1 happens, then we need to perform one additional step that we describe below, after we prove the following simple observation.

\begin{observation}\label{obs: case 1}
	If Case 1 happened, then $F^*\geq n^{1-4\eps}\cdot M'/2$.
\end{observation}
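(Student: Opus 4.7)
The plan is to combine the lower bound $F^* \geq \beta(Q) - 4M' \cdot n^{5.5\eps}\cdot \gamma \cdot \log n \cdot \log(\wmax)$ from Claim~\ref{claim: prop of Q} with a simple per-vertex lower bound on $\beta(v)$, and then absorb the additive error using the parameter inequalities on $\gamma$ and $n^{\eps}$.

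First I would show that for every $v \in Q$, $\beta(v) \geq M'$. Since $Q \subseteq U$, every $v \in Q$ satisfies $w(v) \geq M'$, so Property~\ref{prop: residual capacity weaker} guarantees that the special edge $e_v = (v^{\inn},v^{\out})$ is present in $H$ with residual capacity $\hat c(e_v) \geq M'$. Moreover, by construction of $Q$, $v \in Q_0$ means $v^{\inn} \in V(J)$, and $v \notin Q'_2$ means $v^{\out} \notin V(J)$. Thus $e_v$ is an edge of $H$ whose tail lies in $V(J)$ and whose head lies outside $V(J)$, so by definition $e_v \in A(v)$, and consequently $\beta(v) \geq \hat c(e_v) \geq M'$. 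Summing over $v \in Q$ and using $|Q| > n^{1-4\eps}$ (Case~1), we obtain $\beta(Q) \geq |Q| \cdot M' > n^{1-4\eps}\cdot M'$.

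Next I would plug this into Claim~\ref{claim: prop of Q} to get
\[
F^* \;\geq\; n^{1-4\eps}\cdot M' \;-\; 4M'\cdot n^{5.5\eps}\cdot \gamma \cdot \log n\cdot \log(\wmax).
\]
Using $\gamma \leq 2n^{1-10\eps}$ from Inequality~\ref{eq: gamma2}, the error term is at most $8M'\cdot n^{1-4.5\eps}\cdot \log n\cdot \log(\wmax)$. From Inequality~\ref{eq: neps large} we have $n^{\eps/10} > 2^{20}\log^2 n\cdot \log^2(\wmax)$, and in particular $n^{0.5\eps} \geq n^{\eps/10} \geq 16\log n \log(\wmax)$, so the error term is bounded by $M'\cdot n^{1-4\eps}/2$. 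This yields $F^* \geq n^{1-4\eps}\cdot M'/2$, as desired.

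No step here is conceptually hard: the per-vertex bound $\beta(v) \geq M'$ is immediate from the definition of $Q$ and Property~\ref{prop: residual capacity weaker}, and the absorption of the additive error is a routine parameter calculation. The only thing to be careful about is that the claim is stated unconditionally (no ``good pair'' assumption), which matches the fact that both the lower bound $F^* \geq \beta(Q) - 4M'\cdot n^{5.5\eps}\gamma \log n \log(\wmax)$ from Claim~\ref{claim: prop of Q} and the structural facts we use about $Q, Q_0, Q'_2$ and Property~\ref{prop: residual capacity weaker} hold unconditionally.
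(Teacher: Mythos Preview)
Your proposal is correct and follows essentially the same approach as the paper's proof: both argue that each $v\in Q$ has $v^{\inn}\in V(J)$, $v^{\out}\notin V(J)$ (from the definitions of $Q'_0$ and $Q'_2$), so the special edge $(v^{\inn},v^{\out})\in A(v)$ contributes at least $M'$ to $\beta(v)$ by Property~\ref{prop: residual capacity weaker}; then sum over $|Q|>n^{1-4\eps}$, apply Claim~\ref{claim: prop of Q}, and absorb the error term using $\gamma\leq 2n^{1-10\eps}$ and $n^{\eps/10}>16\log n\log(\wmax)$. Your observation that the argument is unconditional is also accurate.
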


\begin{proof}
	Recall that for every vertex $v\in Q$, $v^{\inn}\in V(J)$ and $v^{\out}\not\in V(J)$ holds (by the definition of the sets $Q_0'$ and $Q'_2$ of vertices), so the special edge $(v^{\inn},v^{\out})$ belongs to $A(v)$. From Property \ref{prop: residual capacity weaker}, the capacity of this edge in $H$ is at least $M'$, and so $\beta(v)\geq M'$ must hold. Since we assumed that Case 1 happened, $|Q|\geq n^{1-4\eps}$, so $\beta(Q)\geq n^{1-4\eps}\cdot M'$. From 
\Cref{claim: prop of Q}:

\[F^*\geq \beta(Q)-4M'n^{5.5\eps}\cdot \gamma\cdot \log n\log(\wmax)\geq \frac{n^{1-4\eps}\cdot M'} 2,\]

since $n^{\eps/10}>16\log n\cdot\log(\wmax)$ from Inequality \ref{eq: neps large} and $\gamma\leq 2n^{1-10\eps}$ from Inequality \ref{eq: gamma2}.
\end{proof}

Assume that Case 1 happened.
We start by providing an intuitive motivation and explanation of  the additional step required in this case.
For the sake of this informal exposition, we assume that $(s,T)$ is a good pair, that all shortcut operations so far have been valid, and that Event~\ref{event: heavy err} did not happen.

In Step~\ref{step3: sparsify}, we will sparsify the graph $H$, to compute a graph $H'\subseteq H$ (after possibly performing some shortcut operations). The goal of that step is to ensure that, on the one hand, $|E(H')|$ is quite small, while, on the other hand, the value of the maximum $s^{\out}$-$t$ flow in $H'$ is close to $F$ -- the value of the maximum $s^{\out}$-$t$ flow in $H$. One of the main ideas in sparsifying the graph $H$ is to use the fact that set $Q'$ contains only a small number of vertices of $S\cup L$. Therefore, if a vertex $v\in V(G)$ has many neighbors in $Q'$, then $v\in S\cup R$ must hold. We can then safely add a shortcut edge $(v^{\out},t')$, and delete all other edges leaving $v^{\out}$ from $H$. Additionally, in the sparsification step, we rely on the strong properties of the graph $J$ that are summarized in \Cref{obs: props of tilde H}, and that ensure that the total number of edges of $H$ with both endpoints in $J$, and of edges $(x,y)$ with $x\in J$ and $y\not\in J$, is relatively small. However, one major issue still remains, and it involves the set of regular edges $E^*=\set{(u^{\out},v^{\inn})\mid u^{\out}\not\in J,v^{\inn}\in J}$. If $|Q|$ remains sufficiently large, then the number of such edges may be large, and it is not immediately clear how to sparsify them.

For further intuition, consider any maximum $s^{\out}$-$t$ flow $f$ in $H$, and let $\pset$ be a flow-path decomposition of $f$. For every path $P\in \pset$, let $P'\subseteq P$ be the subpath of $P$ starting from $s^{\out}$ and terminating at the first vertex of $P$ that lies outside of $V(J)$. Let $P''$ be the subpath of $P$ starting from the last vertex of $P'$ and terminating at $t$. We say that path $P$ is \emph{meandering} with respect to $V(J)$, if $P''$ contains some vertex of $V(J)$. Notice that all paths in $\pset$ that use the edges of $E^*$ must be meandering. We also define a flow $\hat f$ in $H$, where for every path $P\in \pset$, we send $f(P)$ flow units via the path $P'$ (so in other words, we only send flow via the prefixes of the paths in $\pset$).

Let $A'=A(Q)$, and let $A''$ contain all other edges $e=(x,y)\not\in A'$ with $x\in V(J)$ and $y\not\in V(J)$. Notice that our algorithm ensures that, in any maximum $s^{\out}$-$t'$ flow in $\tilde H'$, the total flow on the edges of $A'$ is close to their total capacity. Assume for now that we could achieve the same property for the edges of $A''$. Consider now a maximum $s^{\out}$-$t$ flow $f$ in $H$, and let the set $\pset$ of paths and the flow $\hat f$ be defined as before. Since the value $F$ of the flow $f$ is quite close to the value $F'$ of the maximum $s^{\out}$-$t'$ flow in $\tilde H$, the edges of $A'\cup A''$ must be almost saturated by the flow $\hat f$. If $P\in \pset$ is a meandering path, then it needs to use at least two edges of $A'\cup A''$. It would then follow that the total flow on the meandering paths must be low, and so simply deleting the edges of $E^*$ from the graph $H$ would not significantly decrease the value of the maximum $s^{\out}$-$t$ flow in $H$.

Unfortunately, our algorithm only guarantees that, in any maximum $s^{\out}$-$t'$ flow $f'$ in $\tilde H'$, the total flow on the edges of $A'$ is close to the total capacity of these edges, but the flow on the edges of $A''$ may be much lower than their total capacity. This may potentially leave a lot of capacity that the meandering paths may exploit, and so we cannot claim directly that the total flow on the meandering paths is low. Therefore, deleting the edges of $E^*$ from $H$ may significantly reduce the value of the maximum $s^{\out}$-$t$ flow in $ H$. In fact it is not difficult to show that, in any maximum $s^{\out}$-$t$ flow in $H$, there is only a small subset $\hat Q\subseteq Q$ of vertices, such that the edges of $E^*$ incident to the vertices of $\set{v^{\inn}\mid v\in \hat Q}$ carry any significant amount of flow (this follows from the properties of the min-cost flow). So only a relatively small number of edges of $E^*$ are essential for approximately preserving the maximum $s^{\out}$-$t$ flow in $ H$. But unfortunately it is not clear how to identify such edges.

In order to overcome this difficulty, we compute a minimum cut in graph $\tilde H$, separating $s^{\out}$ from the edges of $A''$; denote this cut by $(\hat X,\hat Y)$, with $s^{\out}\in \hat X$. We denote by $Q'_3\subseteq Q$ the set of vertices $v\in Q$ with $v^{\inn}\in \hat Y$, and we show that $|Q'_3|$ is small. The vertices of $Q'_3$ are then deleted from $Q$ and added to $Q'$. If we now consider the graph $H^*$, that is obtained from $H$ by unifying all vertices of $V(H)\setminus \hat X$ into a vertex $t''$, then, as we show later, this graph now has the desired properties: the value of the maximum $s^{\out}$--$t''$ flow in $H^*$ is close to $F$, and in any such flow, all edges $(x,y)$ with $x\in \hat X$ and $y\not\in \hat X$ are close to being saturated. This allows us to delete from $H$ all edges $(u^{\out},v^{\inn})$ with $u^{\out}\not\in \hat X$ and $v^{\inn}\in \hat X$, in order to sparsify the graph $H$, without significantly decreasing the value of the maximum $s^{\out}$-$t$ flow in it. We now describe the algorithm for the final cut more formally. This algorithm is only executed if Case 1 happens, that is, $|Q|>n^{1-4\eps}$ holds after the last iteration of the min-cost flow calculations.

Recall that we denoted by $A'=A(Q)=\bigcup_{v\in Q}A(v)$, and by $A''$ all edges $(x,y)\in E(H)\setminus A'$ with $x\in V(J)$ and $y\not\in V(J)$.
We let $\tilde H''$ be the graph that is very similar to $\tilde H'$, except for the following changes. We delete the sink $t'$ and instead add two new sinks $t_1$ and $t_2$ to $\tilde H''$. For every edge $e=(x,y)\in A'$, we add the edge $(x,t_1)$ of capacity $\hat c(e)$ representing $e$ to $\tilde H''$, and for every edge $e'=(x',y')\in A''$, we add the edge $(x',t_2)$ of capacity $\hat c(e')$ representing $e'$ to $\tilde H''$. 

We then compute a minimum $s^{\out}$-$t_2$ cut $(\hat X,\hat Y)$ (with respect to edge capacities) in $\tilde H''$, where $s^{\out}\in \hat X$ and $t_2\in \hat Y$. 
In order to compute the cut, we unify all parallel edges, obtaining a graph with $O(|E(\tilde H)|)$ edges, and then apply the algorithm from \Cref{cor: mincut} to the resulting graph.
Note that the running time for this step is bounded by:

\[O(|E(\tilde H'')|)+O\left (|E(\tilde H|)^{1+o(1)}\cdot \log \wmax\right )\leq O(|E(\tilde H')|)+O\left (|E(\tilde H|)^{1+o(1)}\cdot \log \wmax\right ).\]

It is easy to see that this step does not increase the asymptotic running time of Step~\ref{step2: partition}, which remains bounded by $O\left(n^{2-4\eps+o(1)}\cdot \log^4(\wmax)\right )$.

Let $Q'_3\subseteq Q$ be the set of all vertices $v\in Q$ with $v^{\inn}\in \hat Y$. Additionally, let $\hat A$ be the set of all edges $e=(x,y)\in E(H)$, with $x\in \hat X$ and $y\not\in \hat X$.
In the following claim, we summarize the main properties of the cut $(\hat X,\hat Y)$.

\begin{claim}\label{claim: final cut}
	The total capacity $\sum_{e\in \hat A}\hat c(e)\leq F'+M'\cdot n^{1-4.4\eps}$. Additionally,
	$t_1\in \hat X$, and $|Q'_3|\leq n^{1-4.4\eps}$. 
\end{claim}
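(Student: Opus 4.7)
The plan is to establish the three conclusions in sequence. First, I would argue that we may assume $t_1 \in \hat X$ without loss of generality: since $t_1$ has only incoming edges in $\tilde H''$ (all coming from $A'$), relocating $t_1$ from $\hat Y$ to $\hat X$ only removes edges from the cut $E_{\tilde H''}(\hat X, \hat Y)$, and never adds any, so any min $s^{\out}$-$t_2$ cut can be modified to satisfy $t_1 \in \hat X$ without increasing its value.

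Next, for the capacity bound, I would identify $\sum_{e \in \hat A} \hat c(e)$ with $\hat C + \beta(Q \setminus Q'_3)$, where $\hat C$ denotes the value of the min $s^{\out}$-$t_2$ cut in $\tilde H''$. Interpreting $\hat X$ inside $V(H)$ as $\hat X \cap V(J)$, the edges of $\hat A$ split into three groups: within-$V(J)$ edges crossing the cut; edges of $A''$ whose source lies in $\hat X$; and edges of $A'$ whose source lies in $\hat X$. The first two contributions sum to exactly $\hat C$ (because $t_1 \in \hat X$, so the $A'$ edges do not cross the $\tilde H''$-cut), and the third equals $\beta(Q \setminus Q'_3)$ by the very definition of $Q'_3$. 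The key technical identity is then $\hat C = F' - F^*$, which I would prove via two inequalities: (a) any max $s^{\out}$-$\{t_1,t_2\}$ flow has its $t_1$-component at least $F^*$ by the min-cost property of $f^*$ (since the cost of the flow is exactly its flow on $A'$), hence its $t_2$-component is at most $F' - F^*$, which after flow decomposition yields an $s^{\out}$-$t_2$ flow of value $F' - F^*$, giving $\hat C \geq F' - F^*$; and (b) extending any max $s^{\out}$-$t_2$ flow $\phi^*$ of value $\hat C$ by augmenting in its residual toward $t_1$ produces a max $\{t_1,t_2\}$ flow of value $F'$ with $t_1$-component $F' - \hat C$, so by minimality of $F^*$ we get $F^* \leq F' - \hat C$, hence $\hat C \leq F' - F^*$. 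Combining with Claim~\ref{claim: prop of Q}'s bound $\beta(Q) - F^* \leq 4 M' n^{5.5\eps} \gamma \log n \log(\wmax)$, and using $\gamma \leq 2n^{1-10\eps}$ (Inequality~\ref{eq: gamma2}) together with Inequality~\ref{eq: neps large} to absorb the $\log$ factors, the total is at most $F' + M' n^{1-4.4\eps}$.

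Finally, for $|Q'_3| \leq n^{1-4.4\eps}$, I would decompose the min-cost max flow $f^*$ of the last iteration into $\phi_1 + \phi_2$, where $\phi_1$ is the $s^{\out}$-$t_1$ part of value $F^*$ and $\phi_2$ is the $s^{\out}$-$t_2$ part of value $\hat C$. Since $\hat X$ is the source-side min $s^{\out}$-$t_2$ cut, which equals the set of vertices reachable from $s^{\out}$ in the residual graph of any max $s^{\out}$-$t_2$ flow (including $\phi_2$, by uniqueness of the source-side min cut), any vertex $v^{\inn}$ lying on some $\phi_1$-path must belong to $\hat X$: each edge on that path, traversed forward in $\phi_2$'s residual, has residual capacity at least the corresponding $\phi_1$-value and hence strictly positive. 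Contrapositively, $v \in Q'_3$ forces $f^*(A(v)) = 0$, so the deficit is $\Delta_{f^*}(v) = \beta(v) \geq M'$. Since the last iteration is type-2, $|\hat B_j| < n^{5.5\eps} \gamma M' / 2^j$ for all $j$, and $\beta(v) < 2^{j^*}$ for $v \in Q$ (else $v \in Q'_1$). Summing over the $j$ with $2^j \geq M'$ (using that flows are $M'$-integral, so smaller-index $\hat B_j$ are empty for our $v$) yields $|Q'_3| \leq 4 n^{5.5\eps} \gamma \leq n^{1-4.4\eps}$ via Inequalities~\ref{eq: gamma2} and~\ref{eq: neps large}.

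The main obstacle I expect is the identity $\hat C = F' - F^*$, since it requires carefully combining the min-cost interpretation (cost equals flow routed to $t_1$) with the flow-decomposition view that splits a max $\{t_1,t_2\}$ flow into $t_1$- and $t_2$-components; once this is in hand, the capacity bound is essentially bookkeeping. A secondary subtlety, used in the $|Q'_3|$ bound, is the uniqueness of the source-side min cut, which is what lets us apply the residual-reachability argument to the particular $\phi_2$ extracted from $f^*$ rather than to whatever max flow the min-cut algorithm happens to use internally.
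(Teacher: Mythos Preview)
Your plan is essentially correct and reaches the same key identity (the min $s^{\out}$--$t_2$ cut value equals $F'-F^*$), but it proceeds differently from the paper in two places. For that identity, the paper builds an auxiliary graph $Z$ with capacitated edges $(t_1,t)$ and $(t_2,t)$ and does a three-case cut analysis (Observation~6.12); you instead argue by augmenting a max $s$--$t_2$ flow toward $t_1$. Your step~(b) is true but not as immediate as you suggest: augmenting toward $t_1$ could, in principle, open new $s$--$t_2$ paths. The clean way to justify it is to observe that for any max $s$--$\{t_1,t_2\}$ flow $\rho$, the difference $\rho-\phi^*$ lives in $G_{\phi^*}$ and decomposes into $s$--$t_1$ paths of total value $F'-\hat C$ (plus $t_2$--$t_1$ paths and cycles), so the max $s$--$t_1$ flow in $G_{\phi^*}$ is at least $F'-\hat C$. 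For $|Q'_3|$, the paper instead shows (Observation~6.13) that $\tilde H''[\hat X]$ supports an $s^{\out}$--$t_1$ flow of value $\ge F^*$, giving $\beta(Q\setminus Q'_3)\ge F^*$ and hence $\beta(Q'_3)\le 4M'n^{5.5\eps}\gamma\log n\log\wmax$; your residual-reachability argument is a valid alternative that even avoids the $\log$ factors.

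Two small gaps to watch: (i) your ``WLOG $t_1\in\hat X$'' only exhibits \emph{some} min cut with this property, whereas the claim is about the specific cut the algorithm returns; the paper proves it for \emph{every} min cut using $F^*>0$ (if $t_1\in\hat Y$ then $(\hat X,\hat Y)$ would separate $s^{\out}$ from $t'$ with value $\hat C\le F'-F^*<F'$). Your modification is harmless downstream since $t_1\notin V(H)$, but it does not literally establish the stated conclusion. (ii) Your $|Q'_3|$ argument assumes $\hat X$ is the \emph{source-side} min cut; this matches the implementation behind Corollary~2.2, but you should say so, or else note that $\hat Y$ is contained in the sink side of the source-side cut for \emph{any} min cut, so non-reachability in the residual of $\phi_2$ still follows. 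Also, in your step~(a) the phrase ``$t_2$-component at most $F'-F^*$ \ldots yields an $s^{\out}$--$t_2$ flow of value $F'-F^*$'' is a non sequitur; you mean to use the specific flow $f^*$, whose $t_2$-component is exactly $F'-F^*$.
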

\begin{proof}
The following claim is central to the proof of \Cref{claim: final cut}.

\begin{claim}\label{claim: partial flow}
The value of the maximum $s^{\out}$-$t_2$ flow in $\tilde H''$ is equal to $F'-F^*$.
\end{claim}
\begin{proof}
Recall that we have computed an $s^{\out}$-$t'$ flow $f^*$ in $\tilde H'$ of value $F'$, where $\sum_{e\in A'}f^*(e)=F^*$, and so $\sum_{e\in A''}f^*(e)=F'-F^*$. This flow naturally defines an $s^{\out}$-$t_2$ flow in $\tilde H''$ of value $F'-F^*$. Therefore, the value of the maximum $s^{\out}$-$t_2$ flow in $\tilde H''$ is at least $F'-F^*$. It now remains to prove that it is at most $F'-F^*$.

Assume otherwise, that is, that the value of the maximum $s^{\out}$-$t_2$ flow in $\tilde H''$ is greater than $F'-F^*$. Let $r=\min\set{1,M'}$. Since all edge capacities in $G''$ are integral, and the flow $f$ is $M'$-integral, the value of the flow must be at least $F'-F^*+r$.
	
Let $Z$ be the graph obtained from $\tilde H''$ by adding a single destination vertex $t$, and adding an edge $(t_1,t)$ of capacity $F^*-r$, and an edge $(t_2,t)$ of capacity $F'-F^*+r$ (see \Cref{fig: Z}).
\begin{figure}[t]
	\centering
	\includegraphics[width=0.8\textwidth]{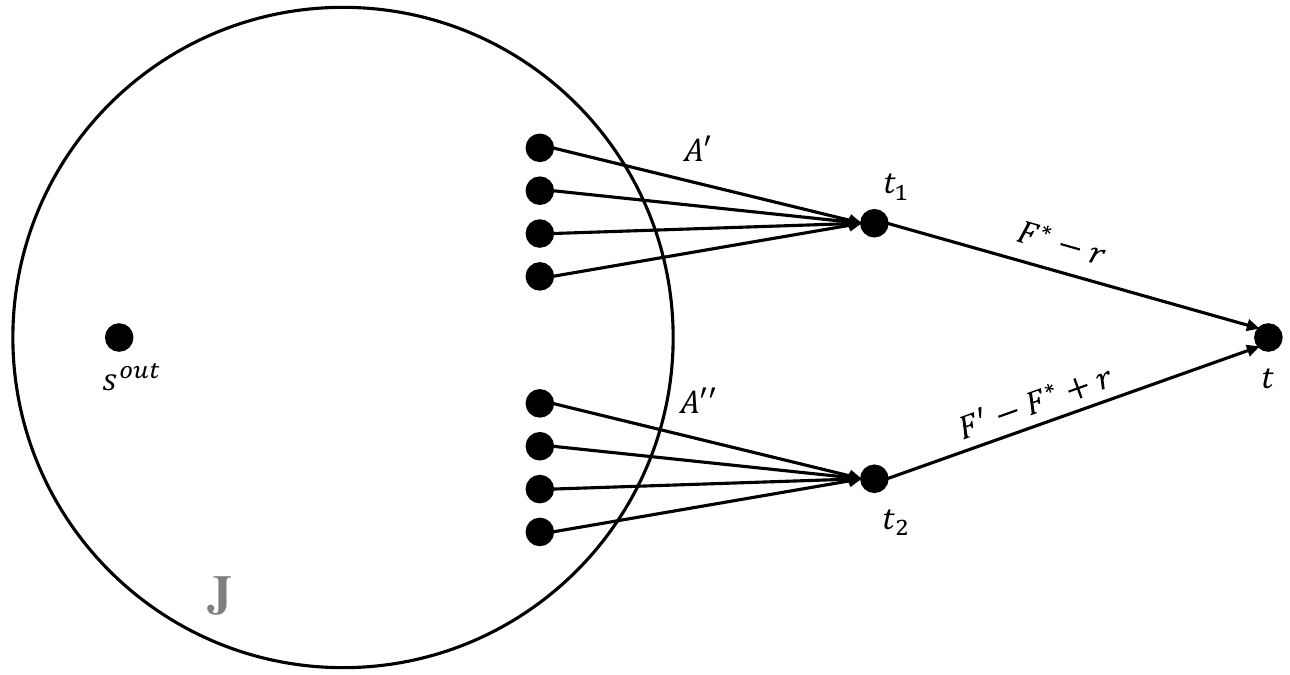}
	\caption[-]{A schematic view of graph $Z$.} \label{fig: Z}
\end{figure}
We use the following observation.
	
\begin{observation}\label{obs: violating flow}
If the value of the maximum $s^{\out}$-$t_2$ flow in $\tilde H''$ is at least  $F'-F^*+r$, then there is an $s^{\out}$-$t$ flow in $Z$ of value $F'$.
\end{observation}

The proof of \Cref{claim: partial flow} follows immediately from \Cref{obs: violating flow}. Indeed, assume for contradiction that the value of the maximum $s^{\out}$-$t_2$ flow in $\tilde H''$ is greater than $F'-F^*$, so it is at least $F'-F^*+r$.
Then, from \Cref{obs: violating flow}, there is an $s^{\out}$-$t$ flow in $Z$ of value $F'$. This flow naturally defines an $s^{\out}$-$t'$ flow $\hat f$ in $\tilde H'$ of value $F'$, in which $\sum_{e\in A'}\hat f(e)<F^*$, contradicting \Cref{claim: prop of Q}. In order to complete the proof of 
\Cref{claim: partial flow}, it is now enough to prove \Cref{obs: violating flow}, that we do next.

\begin{proofof}{\Cref{obs: violating flow}}
	For convenience, and in order to implify the notation, in this proof we denote $s^{\out}$ by $s$, and we denote the capacities of the edges $e\in E(Z)$ by $c(e)$.
	The proof provided here follows standard arguments, and it is very similar to the proof of Lemma 4.5 from \cite{chuzhoy2016improved}, which, in turn, relies on arguments suggested by Paul Seymour.

	Assume for contradiction that the value of the maximum $s$-$t$ flow in $Z$ is less than  $F'$. Then from the Max-Flow / Min-Cut Theorem,
	 there is an $s$-$t$ cut $(X,Y)$ in $Z$ with $\sum_{e\in E_Z(X,Y)} c(e)< F'$. Since the total capacity of the edges $(t_1,t)$ and $(t_2,t)$ is $F'$, it must be the case that either $t_1\in Y$, or $t_2\in Y$ (or both).
	
	Assume first that $t_1\in X$ and $t_2\in Y$. Then edge $(t_1,t)$ contributes its capacity $F^*-r$ to the cut. Let $E'$ denote the set of all edges $E_Z(X,Y)$, excluding the edge $(t_1,t)$. Then $\sum_{e\in E'}c(e)=\sum_{e\in E_Z(X,Y)} c(e)-(F^*-r)<F'-F^*+r$. However, from our assumption, there exists an $s$-$t_2$ flow $\hat f$ in $Z$, whose value is at least $F'-F^*+r$. If we consider a flow-path decomposition $\qset$ of this flow, it is immediate to verify that every flow-path in the decomposition needs to use an edge of $E'$, a contradiction.

	Assume next that $t_2\in X$ and $t_1\in Y$. Then edge $(t_2,t)$ contributes its capacity $F'-F^*+r$ to the cut. Let $E''$ denote the set of all edges $E_Z(X,Y)$, excluding the edge $(t_2,t)$. Then $\sum_{e\in E''}c(e)=\sum_{e\in E_Z(X,Y)} c(e)-(F'-F^*+1)<F^*-r$.
	Recall however that we have computed an $s^{\out}$-$t'$ flow in $\tilde H'$ of value $F'$, where the total flow on the edges of $A'$ is $F^*$. This flow can be naturally used to define an $s$-$t_1$ flow in $Z$ of value $F^*$. If we consider a flow-path decomposition $\qset'$ of this flow, then every flow-path has to use an edge of $E''$, a contradiction.
	
	Finally, assume that $t_1,t_2\in Y$. Denote $E'''=E_Z(X,Y)$. Recall that graph $\tilde H'$ contains an $s^{\out}$-$t'$ flow $f^*$ of value $F'$. This flow naturally defines a flow $\tilde f$ in $Z$ of value $F'$ from $s$ to $t_1$ and $t_2$. In other words, if we consider a flow-path decomposition $\qset''$ of this flow, then every path originates at $s$ and terminates at either $t_1$ or $t_2$. It is easy to verify that every flow-path in $\qset''$ must contain an edge of $E'''$. Since the value of the flow is $F'$, while the total capacity of the edges in $E'''$ is less than $F'$, this is a contradiction.
\end{proofof}

This completes the proof of \Cref{claim: partial flow}.
\end{proof}

We are now ready to prove that $t_1 \in \hat X$. Indeed, assume for contradiction that $t_1\not\in \hat X$. In this case, $\hat A=E_{\tilde H''}(\hat X,\hat Y)$ holds; recall all edges of $\hat A$ lie in $\tilde H'$. Moreover, the total capacity of the edges of $\hat A$ in $\tilde H'$ is $F'-F^*$ by \Cref{claim: partial flow}, and from the Max-Flow / Min-Cut theorem. Additionally, from \Cref{obs: case 1}, $F^*>0$, and so $\sum_{e\in \hat A}\hat c(e)=F'-F^*<F'$. Notice however that the set $\hat A$ of edges separates $s^{\out}$ from $t'$ in graph $\tilde H'$, contradicting the fact that there is an $s^{\out}$-$t'$ flow in $\tilde H'$ of value $F'$. We conclude that $t_1\in \hat X$ must hold.

Let $J'$ be the subgraph of $\tilde H''$ induced by the vertices of $\hat X$.
In order to bound the cardinality of $Q'_3$, we need the following simple observation.

\begin{observation}\label{obs: flow in J'}
	There is an $s^{\out}$--$t_1$ flow in $J'$ of value at least $F^*$.
\end{observation}
\begin{proof}
	For convenience, in this proof, we denote the capacities of the edges $e\in \tilde H''$ by $c(e)$.
	Assume otherwise. From the Max-Flow / Min-Cut theorem, there is a collection $E_1$ of edges in graph $\tilde H''$ with $\sum_{e\in E_1}c(e)<F^*$, such that in $J'\setminus E_1$ there is no path connecting $s^{\out}$ to $t_1$. Let $E_2=E_{\tilde H''}(\hat X,\hat Y)$, and recall that, from \Cref{claim: partial flow} and the Max-Flow / Min-Cut theorem, $\sum_{e\in E_2}c(e)=F'-F^*$. Clearly, $\tilde H''\setminus(E_1\cup E_2)$ contains no paths connecting $s$ to either $t_1$ or $t_2$, and, from our discussion, $\sum_{e\in E_1\cup E_2}c(e)<F^*+F'-F^*=F'$. However, there is an $s^{\out}$-$t'$ flow $f^*$ in $\tilde H'$ of value $F'$, and this flow naturally defines a flow in $\tilde H''$ of value $F'$ from $s^{\out}$ to $t_1$ and $t_2$. In other words, if $\qset$ is a flow-path decomposition of this flow, then every path in $\qset$ originates at $s^{\out}$ and terminates at $t_1$ or at $t_2$. Clearly, every path in $\qset$ has to contain an edge of $E_1\cup E_2$. Since the total flow on the paths in $\qset$ is $F'$, while the total capacity of the edges in $E_1\cup E_2$ is less than $F'$, this is a contradiction.
\end{proof}

Consider an $s^{\out}$-$t_1$ flow $\hat f$ in  $J'$ of value at least $F^*$, and let $\qset$ be a flow-path decomposition of this flow. Note that, for every path $P\in \qset$, its penultimate vertex must belong to set $\set{v^{\inn}\mid v\in Q\setminus Q'_3}$, and so its last edge must lie in set $\bigcup_{v\in Q\setminus Q'_3}A(v)$. Therefore, $\beta(Q\setminus Q'_3)\geq F^*$ must hold. From \Cref{claim: prop of Q}, $\beta(Q)\leq F^*+4M'\cdot n^{5.5\eps}\cdot\gamma\cdot \log n\cdot\log(\wmax)$. Therefore, we get that $\beta(Q'_3)=\beta(Q)-\beta(Q\setminus Q'_3)\leq 4M'\cdot n^{5.5\eps}\cdot\gamma\cdot \log n\cdot \log(\wmax)$. Recall that, for every vertex $v\in Q$, $v^{\out}\not\in V(J)$, and so the special edge $(v^{\inn},v^{\out})$, whose capacity in $H$ is at least $M'$ (from Property \ref{prop: residual capacity weaker}), belongs to $A(v)$. Therefore, for every vertex $v\in Q$, $\beta(v)\geq M'$. We conclude that:

\[ |Q'_3|\leq \frac{\beta(Q'_3)}{M'}\leq 4n^{5.5\eps}\cdot\gamma\cdot \log n\cdot \log(\wmax)\leq n^{1-4.4\eps}, \]

since $n^{\eps/10}\geq 2^{20}\cdot \log n \cdot \log (\wmax)$ from Inequality \ref{eq: neps large} and $\gamma\leq 2n^{1-10\eps}$ from Inequality \ref{eq: gamma2}.

Lastly, recall that we denoted by $\hat A$ all edges $(x,y)$ of $H$ with $x\in \hat X$ and $y\not \in \hat X$. Since $t_2\not\in \hat X$ and $t_1\in \hat X$, the set $\hat A$ of edges contains all edges of $E_{\tilde H''}(\hat X,\hat Y)$ (where we do not distinguish between the edges of $H$ and their copies in $\tilde H''$), and all edges in $\bigcup_{v\in Q\setminus Q'_3}A(v)$. By combining the Max-Flow / Min-Cut theorem with \Cref{claim: partial flow}, we get that the total capacity of the edges in $E_{\tilde H''}(\hat X,\hat Y)$ is equal to $F'-F^*$. The total capacity of the edges in $\bigcup_{v\in Q\setminus Q'_3}A(v)$ is: 

\[\beta(Q\setminus Q'_3)\leq \beta(Q)\leq  F^*+4M'\cdot n^{5.5\eps}\cdot \gamma\cdot \log n\cdot \log(\wmax)\leq F^*+M'\cdot n^{5.6\eps}\cdot \gamma\leq F^*+M'\cdot n^{1-4.4\eps}.\]

(we used \Cref{claim: prop of Q} and the fact that $n^{\eps/10}\geq 2^{20}\cdot \log n\cdot \log(\wmax)$ from Inequality \ref{eq: neps large} and that $\gamma\leq 2n^{1-10\eps}$ from Inequality \ref{eq: gamma2}).

Altogether, we get that the total capacity of the edges of $\hat A$ in $H$ is at most $F'+M'\cdot n^{1-4.4\eps}$. 
\end{proof}

If Case 1 happens, then we compute the cut $(\hat X,\hat Y)$ in $\tilde H''$ as described above, and then move the vertices of $Q'_3$ from $Q$ to $Q'$. Since, from \Cref{claim: final cut}, $|Q'_3|\leq n^{1-4.4\eps}$, from \Cref{claim: Q' few S and L}, 
if $(s,T)$ is a good pair, all shortcut operations so far have been valid, and Event~\ref{event: heavy err} did not happen, then $|Q'\cap (S\cup L)|\leq  3n^{1-4\eps}$ holds. We also let $X^*=\hat X\cap V(H)$. Then we are guaranteed that $X^*\subseteq V(J)$, and $\sum_{e\in E_H(X^*,V(H)\setminus X^*)}\hat c(e)\leq F'+M'\cdot n^{1-4.4\eps}$. Recall that, from \Cref{claim: max flow in tilde H}, if $(s,T)$ is a good pair,  all shortcut operations so far have been valid, and  Event~\ref{event: heavy err} did not happen, then:

\[F'\leq F+\eta=F+2^{14}n^{\eps}\cdot M'\cdot \gamma \cdot \log^2 n\cdot \log(\wmax)\leq F+n^{1-8\eps}\cdot M'.\]


We now summarize the outcome of Step~\ref{step2: partition} in the following observation, whose proof follows immediately from our discussion.

\begin{observation}\label{obs: Step 2 summary}
	Let $(Q,Q')$ be the partition of $U$ computed in Step~\ref{step2: partition}. Then the following hold:
	
	\begin{itemize}
		\item If $(s,T)$ is a good pair,  all shortcut operations so far have been valid, and  Event~\ref{event: heavy err} did not happen, then $|Q'\cap (S\cup L)|\leq  3n^{1-4\eps}$; and
		
		\item If $|Q|\geq n^{1-4\eps}$, then the algorithm computed a subset $X^*\subseteq V(J)$ of vertices, such that, for every vertex $v\in Q$, $v^{\inn}\in X^*$ and $v^{\out}\not\in X^*$ holds. 
		Additionally, if $(s,T)$ is a good pair,  all shortcut operations so far have been valid, and Event~\ref{event: heavy err} did not happen, then $\sum_{e\in E_H(X^*,V(H)\setminus X^*)}\hat c(e)\leq F+2n^{1-4.4\eps}\cdot M'$.
	\end{itemize} 
\end{observation}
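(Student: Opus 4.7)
The proof will essentially combine the bounds already established in the preceding claims, so my plan is to assemble them carefully in the right order, keeping track of exactly when each hypothesis ($(s,T)$ is a good pair, all shortcut operations so far are valid, and $\neg\hat\event$) is needed.

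For the first bullet, I will start from \Cref{claim: Q' few S and L}, which gives $|Q'\cap(S\cup L)|\le 2n^{1-4\eps}$ for the partition $(Q,Q')$ obtained right after the type-2 iteration of min-cost flow computations, under the three hypotheses. The only subsequent modification to $(Q,Q')$ happens in Case 1, where we move the vertices of $Q'_3$ from $Q$ to $Q'$. By \Cref{claim: final cut}, $|Q'_3|\le n^{1-4.4\eps}$ (and this bound holds unconditionally, since the argument for $|Q'_3|$ in \Cref{claim: final cut} does not rely on any of the three hypotheses). Adding this to the previous bound gives at most $2n^{1-4\eps}+n^{1-4.4\eps}\le 3n^{1-4\eps}$, which yields the first claimed bound.

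For the second bullet, I will only need to consider Case 1, since the statement's assumption $|Q|\ge n^{1-4\eps}$ together with the observation that Case 2 would terminate with $|Q|\le n^{1-4\eps}$ forces us into Case 1. In Case 1, the algorithm computes the cut $(\hat X,\hat Y)$ in $\tilde H''$ and we set $X^*=\hat X\cap V(H)$. To see that $X^*\subseteq V(J)$: every vertex of $\tilde H''$ that lies in $V(H)$ is also a vertex of $\tilde H'$ (the auxiliary sinks $t_1,t_2$ are not in $V(H)$), and $V(\tilde H')\cap V(H) = V(J)$ by construction of $\tilde H'$, so $X^*\subseteq V(J)$. For $v\in Q$ (after the removal of $Q'_3$), we have $v^{\inn}\in \hat X$ by definition of $Q'_3$, hence $v^{\inn}\in X^*$. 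Also $v\in Q\subseteq Q_0\setminus Q'_2$, so $v^{\out}\notin V(J)$ and therefore $v^{\out}\notin X^*$, as required.

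For the capacity bound, I will use that the set $\hat A$ of edges $(x,y)\in E(H)$ with $x\in \hat X$, $y\notin \hat X$ is precisely $E_H(X^*,V(H)\setminus X^*)$, since $X^*=\hat X\cap V(H)$ and the edges incident to the auxiliary sinks $t_1,t_2$ in $\tilde H''$ are copies of edges of $H$ crossing $(X^*,V(H)\setminus X^*)$. Under the three hypotheses, \Cref{claim: final cut} gives $\sum_{e\in\hat A}\hat c(e)\le F'+M'\cdot n^{1-4.4\eps}$, and \Cref{claim: max flow in tilde H} combined with the bound $\eta=2^{14}n^{\eps}M'\gamma\log^2 n\log(\wmax)\le n^{1-8\eps}M'$ (which follows from $\gamma\le 2n^{1-10\eps}$ via \Cref{eq: gamma2} and $n^{\eps}\ge 2^{20}\log^2 n\log(\wmax)$ via \Cref{eq: neps large}) gives $F'\le F+n^{1-8\eps}M'$. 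Summing these yields $\sum_{e\in E_H(X^*,V(H)\setminus X^*)}\hat c(e)\le F+(n^{1-8\eps}+n^{1-4.4\eps})M'\le F+2n^{1-4.4\eps}M'$, as desired. The proof is essentially bookkeeping; the only point requiring a moment of care is verifying that $X^*\subseteq V(J)$ and that $\hat A$ coincides with $E_H(X^*,V(H)\setminus X^*)$, which the author has already implicitly set up in the construction.
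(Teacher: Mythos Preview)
Your proposal is correct and matches the paper's own treatment: the paper states that the observation's proof ``follows immediately from our discussion,'' and you have accurately reconstructed that discussion by combining \Cref{claim: Q' few S and L} with the bound $|Q'_3|\le n^{1-4.4\eps}$ from \Cref{claim: final cut} for the first bullet, and \Cref{claim: final cut} together with \Cref{claim: max flow in tilde H} (and the estimate $\eta\le n^{1-8\eps}M'$) for the second. The only cosmetic point is the boundary between $|Q|\ge n^{1-4\eps}$ in the observation and $|Q|>n^{1-4\eps}$ in the definition of Case~1, but this is an artifact of the paper's phrasing rather than a gap in your argument.
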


\section{Step 3: Sparsification}
\label{sec: step 3}

The goal of this step is to compute a subgraph $H'\subseteq H$, that is relatively sparse, so that the value of the maximum $s^{\out}$-$t$ flow in $H'$ is close to that in $H$. We will then compute the maximum $s^{\out}$-$t$ flow in $H'$, and augment $f$ with this flow, obtaining the desired flow $f_{i}$ that will serve as the output of the current phase.

We start by defining a partition of the edges of $E(H)$ into six subsets $E_1,\ldots,E_6$, and analyzing each subset separately. For some of these subsets we also describe the intuition for its sparsification. After that we formally define the sparsified graph $H'$, and provide an algorithm for constructing $H'$ efficiently. We also formally prove that, with high probability, the value of the maximum $s^{\out}$-$t$ flow in $H'$ is sufficiently high. Lastly, we provide an algorithm for computing the desired output of the current phase.

\subsection{Partition of $E(H)$}

We start by defining a partition of $E(H)$ into six subsets $E_1,\ldots,E_6$, and by providing intuition for sparsifying some of these subsets. It may be convenient for now to think about constructing the graph $H'$ by starting with $H'=H$ and then gradually modifying it via the steps described below, though the eventual efficient algorithm for constructing $H'$ will be different. We now define and analyze each of the sets in the partition in turn.

\paragraph{Set $E_1$:} Recall that, for every vertex $v\in V(G)\setminus U$, $w(v)<M'$ must hold. We let $E_1$ be the set of all edges of $H$ that are incident to the vertices of $\set{v^{\inn},v^{\out}\mid v\in V(G)\setminus U}$. Note that, since the current flow $f$ is $M'$-integral, for every vertex  $v\in V(G)\setminus U$, the flow on the special edge $(v^{\inn},v^{\out})$ must be $0$. It is then easy to verify that the flow on every regular edge of $G''$ that enters $v^{\inn}$ or leaves $v^{\out}$ is also $0$. Therefore, every regular edge of $E_1$ must be of the form $(x^{\out},y^{\inn})$, where either $x\in V(G)\setminus U$, or $y\in V(G)\setminus U$, and this edge must also be present in $G''$. 
We will delete the edges of $E_1$ from the graph $H'$. It is easy to verify (and we do so formally later) that this may only decrease the value of the maximum $s^{\out}$-$t$ flow in $H'$ by at most $M'\cdot n$.

\paragraph{Set $E_2$:} Set $E_2$ contains all special edges of $H$ (both the forward copies $(v^{\inn},v^{\out})$, and their backward counterparts $(v^{\out},v^{\inn})$), excluding those that lie in $E_1$. Additionally, $E_2$ contains all edges of $H$ that are incident to $t$. Clearly, $|E_2|\leq 3n$. All edges of $E_2$ will be included in $H'$.

\paragraph{Set $E_3$:} Set $E_3$ contains all regular edges $(x,y)$ of $H$ that do not lie in $E_1$, such that either (i) $x\in V(J)$; or (ii) at least one of the vertices $x,y$ lies in $\Gamma=V(J)\cap V^{\out}$
 (note that both conditions may hold simultaneously). Consider any  edge $e=(x,y)\in E_3$. Recall that $|\Gamma|\leq N= \frac{32n^{1+2\eps}\cdot \log^2(\wmax)}{\gamma}$.
Therefore, the number of edges of $E_3$ that are incident to the vertices of $\Gamma$ is bounded by $\frac{64n^{2+2\eps} \cdot \log^2(\wmax)}{\gamma}\leq O(n^{2-4\eps})$, since $64 \log^2(\wmax)\leq n^{\eps}$ from Inequality \ref{eq: neps large}, and $\gamma\geq n^{7\eps}$ from Inequality \ref{eq: gamma3}. 
Note that every edge with both endpoints in $J$ must be incident to a vertex of $\Gamma$. It is then easy to verify that, if $e$ is an edge of $E_3$ that is not incident to a vertex of $\Gamma$, then it must be the case that  
$e=(v^{\inn},u^{\out})$, where $v^{\inn}\in V^{\inn}\cap V(J)$ and $u^{\out}\in V^{\out}\setminus V(J)$. From \Cref{claim: ER}, each such edge  must lie in set $E^R$, and, from Property \ref{prop: few sticking out} of  \Cref{obs: props of tilde H}, $|E^R|\leq O\left (n\cdot \gamma\cdot \log n\cdot \log(\wmax)\right )\leq O(n^{2-4\eps})$, since $\gamma\leq n^{1-5\eps}$ from Inequality \ref{eq: gamma2}, and $\log n\cdot \log(\wmax)\le n^{\eps}$.
Altogether, we get that $|E_3|\leq O(n^{2-4\eps})$. All edges of $E_3$ will be included in $H'$.

\paragraph{Set $E_4$:} Set $E_4$ contains all regular edges $(x,y)$ that do not lie in $E_1\cup E_3$, such that $x\in V^{\out}$ and $y\in \set{v^{\inn}\mid v\in Q'}$. Recall that, from \Cref{obs: Step 2 summary}, if $(s,T)$ is a good pair,  all shortcut operations so far have been valid, and  Event $\hat \event$ did not happen, then $|Q'\cap (S\cup L)|\leq  3n^{1-4\eps}$. Therefore, if $u^{\out}\in V^{\out}$ is a vertex that is incident to more than $3n^{1-4\eps}$ edges of $E_4$, then it must be the case that $u$ has at least one neighbor in $G$ that lies in $R$, and so $u\in R\cup S$ must hold. For each such vertex $u^{\out}$, we can then add a shortcut edge $(u^{\out},t)$ to $G''$, to $H$, and to $H'$, and delete all other edges leaving $u^{\out}$ from $H'$. In order to execute this step efficiently, we will employ the $\delta$-subgraph oracle with the set $Z=Q'$ of vertices; recall that the parameter $\delta$ was defined so that $n^{\delta}=\frac{n^{4\eps}}{4000\log n}$ (see Equation \ref{eq: delta}).
Recall that the oracle must return a partition $(Y^h,Y^{\ell})$ of $V(G)$, and a collection $E'$ of at most $n^{2-\delta}\cdot \log^2(\wmax)=n^{2-4\eps}\cdot 4000\log n\cdot \log^2\wmax$ edges of $G$, such that 
$E'=E_G(Y^{\ell},Z)$ holds. If the oracle does not err,
then every vertex of $Y^h$ has at least $\frac{n^{1-\delta}}{1000\log n}\geq 4n^{1-4\eps}$ neighbors in $Q'$. Therefore, if $(s,T)$ is a good pair,  all shortcut operations so far have been valid,  Event $\hat \event$ did not happen and the oracle did not err, then every vertex $v\in Y^h$ must lie in $S\cup R$. For each such vertex $v\in Y^h$, we will add the shortcut edge $(v^{\out},t)$ to $G''$, $H$, and $H'$, and delete all edges of $E_4$ that leave the vertex $v^{\out}$ from $H'$. From the above discussion, the number of edges of $E_4$ that remain in $H'$ is then bounded by $O(n^{2-4\eps}\cdot\log n\cdot \log^2(\wmax))$.

\paragraph{Set $E_5$.} Set $E_5$ contains all regular edges $(x,y)\in E(H)$ that do not lie in $E_1\cup E_3\cup E_4$, such that $y=v^{\inn}$ for some vertex $v\in Q$, and $x\in V^{\out}\setminus V(J)$. If $|Q|\leq n^{1-4\eps}$, then $|E_5|\leq |Q|\cdot n\leq n^{2-4\eps}$, and we keep all edges of $E_5$ in graph $H'$. Otherwise, all edges of $E_5$ will be deleted from $H'$. We will show that, if $(s,T)$ is a good pair,  all shortcut operations so far have been valid, and  Event $\hat \event$ did not happen, then the deletion of the edges of $E_5$ from $H'$ may only reduce the value of the maximum $s^{\out}$-$t$ flow in $H'$ slightly. Intuitively, if $\hat f$ is the maximum  $s^{\out}$-$t$ flow in $H'$ before the edges of $E_5$ are deleted from it, and $\pset$ is its flow-path decomosition, then every flow-path $P\in \pset$ that contains an edge of $E_5$ must use at least two edges of $E_H(X^*,Y^*)$, where $Y^*=V(H)\setminus X^*$. However, from \Cref{obs: Step 2 summary}, the total capacity of the edges in $E_H(X^*,Y^*)$ is close to $F$ -- the value of the maximum $s^{\out}$-$t$ flow in $H$. It then follows that the paths of $\pset$ that contain edges of $E_5$ cannot carry large amount of flow in $\hat f$.

\paragraph{Set $E_6$.} This set contains all remaining edges of $H$. The following observation provides a characterization of the edges of $E_6$.

\begin{observation}\label{obs: edges of E6}
	Let $e=(x,y)$ be an edge of $E_6$. Then $e$ is a regular edge, with $x\in V^{\inn}\setminus V(J)$ and $y\in V^{\out}\setminus V(J)$.
\end{observation}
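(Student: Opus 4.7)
The plan is to proceed by elimination, showing that each of the alternatives to ``regular edge from $V^{\inn}\setminus V(J)$ to $V^{\out}\setminus V(J)$'' is already absorbed into one of the sets $E_1,\ldots,E_5$.

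First I would observe that since $e\notin E_2$, edge $e$ is not a special edge, nor is it incident to $t$, so $e$ must be a regular edge of $H$. By the construction of the split graph $G'$ (and the observation that shortcut edges are incident to $t$, hence in $E_2$), every regular edge of $H$ either goes from a vertex of $V^{\out}$ to a vertex of $V^{\inn}$ (a forward copy of an edge of $G''$) or from a vertex of $V^{\inn}$ to a vertex of $V^{\out}$ (a backward copy present in $H$ because some flow was routed through the corresponding forward edge). Moreover, since $e\notin E_1$, neither endpoint of $e$ corresponds to a vertex of $V(G)\setminus U$. Combined with the fact that $e\notin E_3$, we also get $x\notin V(J)$ and neither $x$ nor $y$ lies in $\Gamma=V(J)\cap V^{\out}$.

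The main step is to rule out the forward case, i.e. $x\in V^{\out}$ and $y=v^{\inn}$ for some $v\in V(G)$. Here $v\in U$ (since $e\notin E_1$), and because $(Q,Q')$ is a partition of $U$, either $v\in Q$ or $v\in Q'$. If $v\in Q'$, then $e$ satisfies the defining condition of $E_4$, contradicting $e\notin E_4$. Otherwise $v\in Q$, and since we already have $x\in V^{\out}\setminus V(J)$ (using $x\notin V(J)$), the edge $e$ satisfies the defining condition of $E_5$, again a contradiction. Hence the forward case cannot occur.

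Therefore we must be in the backward case: $x=u^{\inn}$ and $y=v^{\out}$ for some $u,v\in V(G)$. The condition $x\notin V(J)$ gives $x\in V^{\inn}\setminus V(J)$ directly. For the other endpoint, since $y=v^{\out}\in V^{\out}$ and $y\notin \Gamma=V(J)\cap V^{\out}$, we conclude $y\notin V(J)$, i.e. $y\in V^{\out}\setminus V(J)$, completing the proof. The key observation driving everything is that the only regular edges with $x\in V^{\out}\to V^{\inn}$ incident to vertices of $U$ are captured by $E_4$ (when the head lies above $Q'$) and $E_5$ (when it lies above $Q$, with tail outside $V(J)$), so once we exclude both we are forced into the backward configuration. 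No hard step is expected here; this is a clean bookkeeping argument over the partition.
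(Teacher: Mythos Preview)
Your proof is correct and follows essentially the same approach as the paper's: both proceed by elimination, using $E_2$ to force $e$ regular, $E_3$ to force $x\notin V(J)$ and $y\notin\Gamma$, and then in the forward case $x\in V^{\out},\ y=v^{\inn}$ using the partition $U=Q\cup Q'$ together with the definitions of $E_4$ and $E_5$ to reach a contradiction. The only cosmetic difference is the order in which you dispatch the two subcases $v\in Q'$ versus $v\in Q$; the paper argues $E_5\Rightarrow v\notin Q$, then $E_1$ and the partition force $v\in Q'$, then $E_4$ gives the contradiction, whereas you split on $v\in Q'$ or $v\in Q$ directly---logically equivalent.
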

\begin{proof}
Consider any edge $e=(x,y)\in E_6$. From the definition of set $E_2$, $e$ must be a regular edge, and from the definition of set $E_3$, $x\not\in V(J)$ must hold. 

Assume first that $x=v^{\out}$ and $y=u^{\inn}$ for some vertices $v^{\out}\in V^{\out}$ and $u^{\inn}\in V^{\inn}$. From the definition of set $E_5$, and since $x\not\in V(J)$, we get that $u\not\in Q$. From the definition of set $E_1$, and since $(Q,Q')$ is a partition of $U$, we get that $u\in Q'$. But then the edge must lie in $E_4$.

We conclude that it must be the case that $x\in V^{\inn}$ and $y\in V^{\out}$; denote $x=v^{\inn},y=u^{\out}$ for some $v,u\in V(G)$. From the definition of set $E_3$, we get that $v^{\inn}\not \in J$ and $u^{\out}\not\in \Gamma$. We conclude that neither of the vertices $v^{\inn},u^{\out}$ may lie in $J$. 
\end{proof}

Consider now any vertex $a^{\inn}\in V^{\inn}\setminus V(J)$. 
We say that this vertex is \emph{bad} if it is incident to at least $n^{1-4\eps}$ edges of $E_6$, and that it is \emph{good} otherwise. 
The following observation is central to dealing with the edges of $E_6$.

\begin{observation}\label{obs: bad vertices}
	Assume that $(s,T)$ is a good pair, that all shortcut operations performed in previous phases and in Step 1 of the current phase were valid, and that Event $\hat \event$ did not happen. Let $a^{\inn}\in V^{\inn}\setminus V(J)$ be a bad vertex. Then $a\in R$ must hold.
\end{observation}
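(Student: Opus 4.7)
My plan is to prove the observation by contradiction, assuming $a\in L\cup S$. The argument will split into two main parts: first, extract strong structural consequences from the assumption that $a^{\inn}\not\in V(J)$ combined with the bad-vertex condition; then use these consequences together with the properties of the partition $(Q,Q')$ from \Cref{obs: Step 2 summary} to derive a contradiction.

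For the first part, each $E_6$-edge $(a^{\inn}, u^{\out})$ incident to $a^{\inn}$ is a backward residual edge of $H$, hence corresponds to a forward edge $(u^{\out},a^{\inn})$ of $G''$ carrying flow at least $M'$ by $M'$-integrality of $f$. Since $a^{\inn}$ has at least $n^{1-4\eps}$ such edges, flow conservation at $a^{\inn}$ forces $f(a^{\inn},a^{\out})\geq n^{1-4\eps}\cdot M'$. Using Inequality~\ref{eq: gamma2} (which gives $\gamma\leq 2n^{1-10\eps}$ and hence $n^{1-4\eps}>\gamma$), this yields $f(a^{\inn},a^{\out})>M'\cdot\gamma$. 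Therefore the backward special edge $(a^{\out},a^{\inn})\in E(H)$ has residual capacity strictly above $M'\gamma$, and if $a^{\out}\in V(J)$ then Rule~\ref{rule: DFS extend} (applied in the final Case~3 iteration of Procedure~\explore, which runs until $X=\emptyset$) would have added $a^{\inn}$ to $V(J)$, contradicting the hypothesis. Hence $a^{\out}\not\in V(J)$. An identical argument using forward regular edges $(u^{\out},a^{\inn})$—whose residual is at least $\wmax-W'\geq M'\gamma$ by the definitions of $\wmax$ and $M'\gamma$—shows that $u^{\out}\not\in V(J)$ for every neighbor $u\in N_G(a)$.

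In particular, this immediately rules out $a\in N_G(s)$, since $s^{\out}\in V(J)$; we then know $a\neq s$. Moreover, since $w(a)\geq f(a^{\inn},a^{\out})\geq n^{1-4\eps}M'\geq M'$, we have $a\in U$, and combining this with $a^{\inn}\not\in V(J)$ we conclude $a\in Q_0'\subseteq Q'$. Thus, if $a\in L\cup S$, then $a\in Q'\cap(L\cup S)$, which by \Cref{obs: Step 2 summary} has cardinality at most $3n^{1-4\eps}$.

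The hard part will be extracting the contradiction from this final step. The plan is to use the fact that Procedure~\explore in the final iteration also propagates membership of $V(J)$ through the ``$L$-side'' of the cut: the forward regular edge $(s^{\out},v^{\inn})$ places every $v\in N_G(s)$ into $V(J)^{\inn}$, and then any high-residual forward special edge $(v^{\inn},v^{\out})$ pushes $v^{\out}$ into $V(J)$, which in turn by further forward regular edges $(v^{\out},w^{\inn})$ (for $w\in N_G(v)$) pushes more in-copies into $V(J)$. The plan is to show that if $a\in L$ then this propagation (controlled by which special edges have large residual $\geq M'\gamma$, governed by Property~\ref{prop: residual capacity weaker} together with the bounds $\hat c_S\leq 8nM'$ from \Cref{claim: max flow in H}) cannot avoid reaching $a^{\inn}$, using the fact that flow of magnitude $n^{1-4\eps}M'$ into $a^{\inn}$ must be routed from $s^{\out}$ through forward-oriented paths. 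Analogously, if $a\in S$ then the forward special edge $(a^{\inn},a^{\out})$ has $f\geq n^{1-4\eps}M'$, which together with Property~\ref{prop: residual capacity weaker} and the $\hat c_S\leq 8nM'$ accounting constrains $a$'s position enough to reach $V(J)$ via a backward propagation through a neighbor already in $V(J)$. Carrying out this propagation argument carefully—reconciling the $V(J)$-reachability rules with the flow structure—is the main technical obstacle.
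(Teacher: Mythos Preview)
Your first block of reasoning is correct and clean: from the $M'$-integrality of $f$ and the bad-vertex hypothesis you do get $f(a^{\inn},a^{\out})\ge n^{1-4\eps}M'>M'\gamma$, and the Rule~\ref{rule: DFS extend} argument correctly forces $a^{\out}\notin V(J)$ and $u^{\out}\notin V(J)$ for every $u\in N_G(a)$; the conclusion $a\in Q_0'$ follows. But knowing that $a$ is one of the at most $3n^{1-4\eps}$ (or even $n^{1-8\eps}$, via \Cref{claim: max flow in tilde H}) vertices of $Q'\cap(S\cup L)$ is not a contradiction: you must rule out a \emph{single} vertex, not a large set.

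The propagation plan you sketch cannot close this gap. To push membership of $V(J)$ from $v^{\inn}$ to $v^{\out}$ you need $\hat c(v^{\inn},v^{\out})\ge M'\gamma$, and there is no control whatsoever on residuals of special edges with $v\in L$: the bound $\hat c_S\le 8nM'$ from \Cref{claim: max flow in H} only concerns $E_S'$, whose special-edge part $E_S^0$ lies entirely in $S$. A vertex $v\in L$ on a $G[L]$-path from $s$ to $a$ may well have $\hat c(v^{\inn},v^{\out})<M'\gamma$ (indeed $w(v)$ itself may be below $M'\gamma$), blocking the propagation. Worse, you have already shown that \emph{every} edge of $H$ entering $a^{\inn}$ originates outside $V(J)$ (the forward regular ones come from $u^{\out}$ with $u\in N_G(a)$, the backward special one from $a^{\out}$), so no propagation can ever reach $a^{\inn}$---this is consistent with $a^{\inn}\notin V(J)$, not contradictory. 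The flow-routing remark does not help either: flow paths of $f$ in $G''$ produce \emph{backward} residual edges in $H$, i.e.\ reachability \emph{toward} $s^{\out}$, not from it.

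The paper's argument is entirely different and never tries to place $a^{\inn}$ into $V(J)$. It observes that the at least $n^{1-4\eps}/2$ edges of $\hat E''(a)=\{(a^{\inn},v^{\out})\in E_6:v\notin L\}$ all lie in $E_S'$ (since $a^{\inn}\in L^{\inn}\cup S^{\inn}$ and $v^{\out}\in S^{\out}\cup R^{\out}$). By \Cref{claim: max flow in H}, every edge of $E_S'$ is saturated in the maximum $s^{\out}$--$t$ flow $\hat f$ in $H$, and on each flow path the segment preceding this unique $E_S'$-edge lies in $L^*\cup S^{\inn}$. Truncating each such path at its first vertex outside $V(J)$ (which precedes $a^{\inn}$, since $a^{\inn}\notin V(J)$) produces a \emph{problematic} $s^{\out}$--$t'$ path in $\tilde H'$, carrying in total at least $n^{1-4\eps}M'/2$ flow. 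This contradicts the bound $\eta=2^{14}n^{\eps}M'\gamma\log^2 n\log(\wmax)$ of \Cref{cor: flow on special paths}. The ingredient you are missing is precisely this appeal to the problematic-path bound; your structural deductions, while correct, do not substitute for it.
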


We provide the proof of the observation below, after providing additional intuition for sparsification of the set $E_6$ of edges. Intuitively, we will set up an instance of the  Directed Heavy Degree Estimation problem (see \Cref{def: heavy weight est}), whose input is the graph $H$, the sets $Z=V^{\inn}\setminus V(J)$ and $Z'=V^{\out}\setminus V(J)$ of vertices, threshold value $\tau=n^{1-4\eps}$, capacity thershold $c^*=M'$, and parameter $W=\wmax$.
We will then use the algorithm from \Cref{claim: heavy weight est} to this problem, and denote by $\hat Y\subseteq Z$ the outcome of the algorithm. Note that if the algorithm from \Cref{claim: heavy weight est} does not err, then we are guaranteed that, for every vertex $x^{\inn}\in \hat Y$, at least $n^{1-4\eps}$ edges of $E_6$ are incident to $x^{\inn}$, so every vertex in $\hat Y$ is a bad vertex. Additionally,  every vertex $a^{\inn}\in Z\setminus \hat Y$ is incident to at most $1000\tau\log n\cdot\log(\wmax)$ edges of $E_6$.
For every vertex $a^{\inn}\in \hat Y$, 
we will add a shortcut edge $(a^{\inn},t)$ to $G''$, $H$ and $H'$; from \Cref{obs: bad vertices}, if $(s,T)$ is a good pair,  all shortcut operations executed before Step 3 of the current phase have been valid, and Event $\hat \event$ did not happen, then this is a valid shortcut edge. We will then delete from $H'$ all edges of $E_6$ leaving $a^{\inn}$. The edges of $E_6$ that are incident to the vertices of $Z\setminus \hat Y$ remain in $H'$; if the algorithm from \Cref{claim: heavy weight est} does not err, their number is bounded by $O\left(n\cdot \tau\log n\cdot\log(\wmax)\right )\leq O\left(n^{2-4\eps}\log n\cdot\log(\wmax)\right )$.
 We now prove \Cref{obs: bad vertices}.

\begin{proofof}{\Cref{obs: bad vertices}}
	Assume that $(s,T)$ is a good pair, that all shortcut operations performed in previous phases and in Step 1 of the current phase are valid, and that Event $\hat \event$ did not happen. Let $a^{\inn}\in V^{\inn}\setminus V(J)$ be a bad vertex, and assume for contradiction that $a\not\in R$, so $a\in L\cup S$.

	We denote by $\hat E(a)$ the set of all edges of $E_6$ incident to  $a^{\inn}$. Let $e=(a^{\inn},v^{\out})\in \hat E(a)$ be any such edge. From the construction of the split graph, edge $e$ does not lie in $G''$, so $e$ is a backward edge corresponding to the edge $(v^{\out},a^{\inn})$ of $G''$. Since the current flow $f$ is $M'$-integral, the flow on the edge $(v^{\out},a^{\inn})$ is at least $M'$, and so the capacity of $e$ in $H$ is at least $M'$. 
	
	We further partition set $\hat E(a)$ into two subsets: set $\hat E'(a)$ containing all edges $(a^{\inn},v^{\out})$ where $v\in L$, and set $\hat E''(a)$ containing all remaining edges. From Inequality \ref{eq: small L}, 
	$|L|\leq n^{\eps}$, and so $|\hat E''(a)|\geq n^{1-4\eps}-n^{\eps}\geq \frac{n^{1-4\eps}}{2}$. 
	
	Consider any edge $e=(a^{\inn},v^{\out})\in \hat E''(a)$. From the definition of set $E''(a)$, $v\in S^{\out}\cup R^{\out}$, and from our assumption, $a^{\inn}\in S^{\inn}\cup L^{\inn}$. It is then immediate to verify that $e\in E'_S$ must hold (see \Cref{def: set E'S}), and so $E''(a)\subseteq E'_S$.
	
	Let $\hat f$ be a maximum $s^{\out}$-$t$ flow in $H$, so $\val(\hat f)=F$, and let $\pset$ be the flow-path decomposition of $F$. From \Cref{claim: max flow in H}, $F=\hat c_S$ holds, and moreover, every path in $\pset$ contains exactly one edge of $E'_S$. Since $E''(a)\subseteq E'_S$, we get that, for every edge $e\in E''(a)$,  $\hat f(e)=\hat c(e)$ holds.
	
	Recall that, from \Cref{obs: edges crossing}, if we consider the partition $(X,Y)$ of the vertices of $H$, where $X=L^* \cup S^{\inn}$, and $Y=V(H)\setminus X=S^{\out}\cup R^*\cup \set{t}$, then	$E_H(X,Y)=E'_S$.
It then follows that, for every flow-path $P\in \pset$, if we delete the unique edge of $E'_S$ from $P$, then we obtain two subpaths of $P$: subpath $P_1$ whose vertices are contained in  $L^*\cup S^{\inn}$, and subpath $P_2$, whose vertices are contained in  $S^{\out}\cup R^*$.

Let $\pset'\subseteq \pset$ be the set of paths $P\in \pset$ with $E(P)\cap \hat E''(a)\neq \emptyset$. From our discussion, every path $P\in \pset'$ contains exactly one edge of $\hat E''(a)$, and all vertices preceding this edge on the path lie in $L^*\cup S^{\inn}$.

Consider now any path $P\in \pset'$, and let $v(P)\in V(P)$ be the first vertex of $P$ that does not lie in $V(J)$. Since $a^{\inn}\not\in V(J)$, $v(P)$ appears before $a^{\inn}$ on $P$ (or $v(P)=a^{\inn}$), and so $v(P)\in L^*\cup S^{\inn}$ must hold. We denote by $e(P)$ the edge of $P$ immediately preceding $v(P)$ on the path.
From our discussion so far:

\[\sum_{P\in \pset'}\hat f(P)=\sum_{e\in \hat E''(a)}\hat c(e)\geq |E''(a)|\cdot M'\ge \frac{n^{1-4\eps}\cdot M'}{2}.\]

Recall that graph $\tilde H'$ was obtained from $H$ by contracting all vertices of $V(H)\setminus V(J)$ into the vertex $t'$. Consider a path $P\in \pset'$, and let $P'$ be the subpath of $P$ between $s^{\out}$ and $v(P)$. Let $P''$ be the path obtained from $P'$ by replacing the last vertex $v(P)$ with $t'$. Note that $P''$ is a valid $s^{\out}$-$t'$ path in $\tilde H'$.
From our discussion, path $P'$ may not contain a vertex of $S^{\out}\cup R^*$, and so it may not contain an edge of $E'_S$ (see \Cref{def: set E'S}). Therefore,
path $P''$ must be problematic (see \Cref{def: problematic path}).
We define an $s^{\out}$-$t'$ flow $f'$ in graph $\tilde H'$ as follows:
for every path $P\in \pset'$, we set $f'(P'')=\hat f(P)$. Notice that $f'$ is a valid $s^{\out}$-$t'$ flow in $\tilde H'$, that sends at least 
$\sum_{P\in \pset'}\hat f(P)\ge \frac{n^{1-4\eps}\cdot M'}{2}$ flow units via problematic paths. But from \Cref{cor: flow on special paths}, if $(s,T)$ is a good pair, all shortcut operations so far have been valid, and that Event $\hat \event$ did not happen, the total flow that $f'$ may send via problematic paths is:

\[\sum_{P\in \pset'}f'(P'')\leq 2^{14}n^{\eps}\cdot M'\cdot \gamma \cdot \log^2 n\cdot \log(\wmax)<\frac{n^{1-4\eps}\cdot M'}{2},\]

since $2^{20}\log^2 n\cdot \log(\wmax)\leq n^{\eps}$ from Inequality \ref{eq: neps large} and $\gamma<n^{1-6\eps}$ from Inequality \ref{eq: gamma2}, a contradiction. We conclude that $a\in R$ must hold.
\end{proofof}

 Let $\tilde \event_i$ be the bad event that any of the shortcut operations performed in previous phases were invalid, and let $\tilde \event'_i$ be the bad event that any of the shortcut operations performed in Step 1 of the current phase was invalid. 
 From Invariant \ref{inv: shortcut}, $\prob{\tilde \event_i}\leq \frac{2i}{n^{\eps}\cdot \log^2(\wmax)}$, and from \Cref{claim: step 1 analysis}, $\prob{\tilde \event_i'}\leq \frac{1}{n^{\eps}\cdot \log^2(\wmax)}$.

\subsection{Constructing the graph $H'$}

We now provide an efficient algorithm for constructing the graph $H'$. 
We start with $V(H')=V(H)$ and $E(H')=\emptyset$, and then take care of the sets $E_2,\ldots,E_6$ of edges in turn.

\paragraph{Edges of $E_2$.} Recall that set $E_2\subseteq E(H)$ contains all special edges of $H$, and all edges of $H$ that are incident to $t'$. We can compute the edges of $E_2$ by considering every vertex $v\in V(G)$ with $w(v)\geq M'$ in turn; for each such vertex $v$, we can use the data structure $\DS_f$ to determine the flow value $f(v^{\inn},v^{\out})$, which in turn allows us to determine whether the edges  $(v^{\inn},v^{\out})$, $(v^{\out},v^{\inn})$  lie in $H$, and if so, to compute their residual capacity. It is then easy to see that set $E_2$ of edges can be computed in time $O(n)$; recall that $|E_2|\leq O(n)$ holds. We add the edges of $E_2$ to graph $H'$.

\paragraph{Edges of $E_3$.} Next, we add all edges of $E_3$ to $H'$. Recall that set $E_3$ contains all regular edges $(x,y)$ of $H$,  such that either $x\in V(J)$, or at least one of the vertices $x,y$ lies in $\Gamma$, excluding the edges that are incident to the vertices of $\set{v^{\inn},v^{\out}\mid v\in V(G)\setminus U}$. Recall that $|\Gamma|\leq \frac{32n^{1+2\eps}\cdot \log^2(\wmax)}{\gamma}$. By inspecting the arrays corresponding to each of the vertices of $\Gamma$ in the modified adjacency-list representation of $H$, we can compute all edges of $E_3$ that are incident to the vertices of $\Gamma$, in time $O(|\Gamma|\cdot n)\leq O\left(\frac{n^{2+2\eps}\cdot \log^2(\wmax)}{\gamma}\right )\leq O(n^{2-4\eps})$ (we have used the fact that $\log^2(\wmax)\leq n^{\eps}$ from Inequality \ref{eq: neps large}, and $\gamma\geq n^{9\eps}$ from Inequality \ref{eq: gamma3}). As observed already, if $e$ is an edge of $E_3$ that is not incident to a vertex of $\Gamma$, then it must be the case that  
$e=(v^{\inn},u^{\out})$, where $v^{\inn}\in V^{\inn}\cap V(J)$ and $u^{\out}\in V^{\out}\setminus V(J)$, and moreover $e\in E^R$.
For each such edge $e=(v^{\inn},u^{\out})$, the corresponding edge $(v^{\inn},t')$ lies in $\tilde H'$, and every edges incident to $t'$ in $\tilde H'$ represents either an edge of $E^R$, or a special edge, so the number of such edges is bounded by $n+|E^R|$. We can then compute all edges of $E^R$ by inspecting all edges that are incident to $t'$ in $\tilde H'$, and considering, for each such edge $e'$, the edge $e$ of $H$ that it represents. The time required to do so is bounded by 
$O(n+|E^R|)\leq O\left (n\cdot \gamma\cdot \log n\cdot \log(\wmax)\right )\leq O(n^{2-4\eps})$, since $\gamma\leq n^{1-5\eps}$ from Inequality \ref{eq: gamma2}, and $\log n\cdot \log(\wmax)\le n^{\eps}$ from Inequality \ref{eq: neps large}; we have also used  Property \ref{prop: few sticking out} of  \Cref{obs: props of tilde H} to bound $|E^R|$.
Overall, we get that $|E_3|\leq O(|\Gamma|\cdot n)+O(|E^R|)\leq O(n^{2-4\eps})$, and our algorithm computes the set $E_3$ of edges in time $O(n^{2-4\eps})$. We add all edges of $E_3$ into $H'$.

\paragraph{Edges of $E_4$.}
Recall that set $E_4$ contains all regular edges $(x,y)$ that do not lie in $E_1\cup E_3$, such that $x\in V^{\out}$ and $y\in \set{v^{\inn}\mid v\in Q'}$. Recall that we have set the value of $\delta$ so that $n^{\delta}=\frac{n^{4\eps}}{4000\log n}$ holds (see Equation \ref{eq: delta}), and from Inequality \ref{ineq: delta2},
$\delta\geq\frac{1}{\sqrt{\log n}}$. We perform a call to the $\delta$-subgraph oracle with graph $G$ and the set $Z=Q'$ of its vertices.
Recall that the oracle must return a partition $(Y^h,Y^{\ell})$ of $V(G)$, and a collection $E'$ of at most $n^{2-\delta}\cdot \log^2(\wmax)=n^{2-4\eps}\cdot 4000\log n\cdot \log^2(\wmax)$ edges of $G$, such that 
$E'=E_G(Y^{\ell},Z)$ holds. Recall that, if the oracle does not err,
then every vertex of $Y^h$ has at least $\frac{n^{1-\delta}}{1000\log n}\geq 4n^{1-4\eps}$ neighbors in $Q'$. 
We let $\hat \event'$ be the bad event that the oracle erred. From the definition of the subgraph oracle, $\prob{\hat \event'}\leq \frac{1}{n^5\cdot \log^4(\wmax)}$.
For every edge $(u,v)\in E'$ with $u\in U$, we add the corresponding edge $(u^{\out},v^{\inn})$, that must lie in $E_4$, to graph $H'$. Additionally, for every vertex $x\in Y^h$, we add the shortcut edge $(x^{\out},t)$ to $G''$, $H$, and $H'$. 
We will use the following observation.

\begin{observation}\label{obs: valid shortcuts E4}
If $(s,T)$ is a good pair and neither of the events $\tilde \event_i,\tilde \event'_i, \hat \event, \hat \event'$ happened, then all shortuct operations performed while processing the set $E_4$ of edges are valid.
\end{observation}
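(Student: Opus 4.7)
The plan is to verify that every shortcut edge $(x^{\out},t)$ added while processing $E_4$ targets a vertex $x\in Y^h$ that satisfies $x\in S\cup R$, which is exactly the definition of a valid shortcut operation with respect to $(L,S,R)$. So the task reduces to showing that $Y^h\cap L=\emptyset$ under the stated hypotheses.

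First, I would use the fact that $\hat\event'$ did not occur to conclude that the $\delta$-subgraph oracle did not err on the query with $Z=Q'$. By the definition of the oracle and the choice of $\delta$ (so that $n^{\delta}=\frac{n^{4\eps}}{4000\log n}$), every $x\in Y^h$ has at least $\frac{n^{1-\delta}}{1000\log n}\geq 4n^{1-4\eps}$ neighbors in $Q'$ in $G$.

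Next, I would invoke \Cref{obs: Step 2 summary} to bound $|Q'\cap(S\cup L)|$. Its hypotheses are: $(s,T)$ is a good pair; all shortcut operations performed so far were valid; and $\hat\event$ did not happen. The first and third hold by assumption. Since Step~\ref{step2: partition} performs no shortcut operations, "all shortcut operations so far" refers exactly to the operations from previous phases (governed by $\neg\tilde\event_i$, via Invariant~\ref{inv: shortcut}) and those from Step~\ref{step1: local flow} of the current phase (governed by $\neg\tilde\event'_i$). Both hold, so \Cref{obs: Step 2 summary} yields $|Q'\cap(S\cup L)|\leq 3n^{1-4\eps}$.

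Finally, by pigeonhole, every $x\in Y^h$ has at least $4n^{1-4\eps}-3n^{1-4\eps}=n^{1-4\eps}\geq 1$ neighbors in $Q'\setminus(S\cup L)\subseteq R$, i.e., $x$ has a neighbor in $R$ in $G$. Since $(L,S,R)$ is a vertex-cut of $G$, there are no edges between $L$ and $R$, and hence $x\notin L$. Thus $x\in S\cup R$, and adding $(x^{\out},t)$ is a valid shortcut operation, proving the observation. There is no real obstacle here: the statement is essentially a clean combination of the oracle guarantee on $Y^h$, the cardinality bound from \Cref{obs: Step 2 summary}, and the basic structural property of the fixed vertex-cut $(L,S,R)$; the only mild subtlety is aligning the "all shortcut operations so far" hypothesis of \Cref{obs: Step 2 summary} with the events $\tilde\event_i$ and $\tilde\event'_i$ that we are allowed to rule out.
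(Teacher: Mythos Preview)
Your proposal is correct and follows essentially the same approach as the paper: use $\neg\hat\event'$ to get that each $x\in Y^h$ has at least $4n^{1-4\eps}$ neighbors in $Q'$, use $\neg\tilde\event_i,\neg\tilde\event'_i,\neg\hat\event$ together with the good-pair assumption to invoke \Cref{obs: Step 2 summary} for $|Q'\cap(S\cup L)|\leq 3n^{1-4\eps}$, and conclude $x$ has a neighbor in $R$, hence $x\in S\cup R$. Your version is slightly more explicit about aligning the ``all shortcut operations so far'' hypothesis with the events $\tilde\event_i,\tilde\event'_i$, which the paper leaves implicit.
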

\begin{proof}
Assume that $(s,T)$ is a good pair, and that neither of the events $\tilde \event_i,\tilde \event'_i, \hat \event, \hat \event'$ has happened, so in particular, all shortcut operations prior to Step 3 of the current phase were valid. Then, from \Cref{obs: Step 2 summary} $|Q'\cap (S\cup L)|\leq  3n^{1-4\eps}$ holds. 
Consider now some vertex $x\in Y^h$. Since we have assumed that Event $\hat \event'$ did not happen, $x$ has  at least $4n^{1-4\eps}$ neighbors in $Q'$ in graph $G$, and so in particular, $x$ must have a neighbor that lies in $R$. We conclude that $x\in S\cup R$ must hold, and so the shortcut edge $(x^{\out},t)$ is valid.
\end{proof}

To summarize, we add to $H'$ all edges of $E_4$, except for the edges $(u^{\out},v^{\inn})$ with $u\in Y^h$. We also add, for every vertex $u\in Y^h$, the shortcut edge $(u^{\out},t)$ to $H'$. The number of edges added to $H'$ while processing $E_4$ is bounded by $O(|E'|+n)\leq O\left (n^{2-4\eps}\cdot \log n\cdot \log^2(\wmax)\right )$. The time that the algorithm spends on processing the edges of $E_4$ (excluding the running time of the algorithm that implements the oracle) is bounded by $O(|E'|+n)\leq O\left (n^{2-4\eps}\cdot \log n\cdot \log^2(\wmax)\right )$.

\paragraph{Edges of $E_5$.} Recall that set $E_5$ contains all regular edges $(x,y)\in E(H)$ that do not lie in $E_1\cup E_3\cup E_4$, such that $y=v^{\inn}$ for some vertex $v\in Q$, and $x\in V^{\out}\setminus V(J)$. If $|Q|> n^{1-4\eps}$, then we do not include any edges of $E_5$ in $H'$. Otherwise, we include all edges of $E_5$ in $H'$. In the latter case, in order to compute the edges of $E_5$, we simply inspect the arrays in the modified adjacency list representation of $H$ that correspond to the vertices in $\set{v^{\inn}\mid v\in Q}$. Since $|Q|\leq n^{1-4\eps}$, we can compute the set $E_5$ of edges in time $O(|Q|\cdot n)\leq O(n^{2-4\eps})$, and $|E_5|\leq O(n^{2-4\eps})$ holds in this case.

\paragraph{Edges of $E_6$.} Recall that, if $e=(x,y)\in E_6$, then, from \Cref{obs: edges of E6},  $x\in V^{\inn}\setminus V(J)$ and $y\in V^{\out}\setminus V(J)$ holds. We set up an instance of the Directed Heavy Degree Estimation problem (see \Cref{def: heavy weight est}), whose input is the graph $H$, the sets $Z=V^{\inn}\setminus V(J)$ and $Z'=V^{\out}\setminus V(J)$ of vertices, threshold value $\tau=n^{1-4\eps}$,  capacity thershold $c^*=M'$, and parameter $W=\wmax$. We apply the algorithm from \Cref{claim: heavy weight est} to this problem, and denote by $\hat Y\subseteq Z$ the outcome of the algorithm. Let $\hat \event''$ be the event that the algorithm from \Cref{claim: heavy weight est} errs. Then, from \Cref{claim: heavy weight est}, $\prob{\hat \event''}\leq \frac{1}{n^{10}\cdot \log^4(\wmax)}$. If Event $\hat \event''$ did not happen, then for every vertex $x^{\inn}\in \hat Y$, at least $n^{1-4\eps}$ edges of $E_6$ are incident to $x^{\inn}$, so every vertex in $\hat Y$ is a bad vertex. Additionally,  every vertex $a^{\inn}\in Z\setminus \hat Y$, is incident to at most $1000\tau\log n\cdot\log(\wmax)$ edges of $E_6$.
For every vertex $x^{\inn}\in \hat Y$, we add the shortcut edge $(x^{\inn},t)$ to $G''$, $H$, and $H'$. Note that, from \Cref{obs: bad vertices}, if $(s,T)$ is a good pair, and neither of the events $\tilde \event_i,\tilde \event'_i,\hat \event,\hat \event''$ happened, then for every vertex $x^{\inn}\in \hat Y$, $x\in R$ holds, and so all shortcut operations performed while processing the edges of $E_6$ are valid.

Next, we consider every vertex $x^{\inn}\in Z\setminus \hat Y$ in turn. As observed already, if Event $\hat \event''$ did not happen, then $x^{\inn}$ is incident to at most $1000\tau\log n\cdot\log(\wmax)=1000n^{1-4\eps}\log n\cdot \log(\wmax)$ edges of $E_6$. Notice that the only regular edges that leave $x^{\inn}$ are the edges of $E_6$, or the edges that are incident to vertices of $\Gamma=V^{\out}\cap V(J)$ (this is since every edge $e\in E_6$ is a backward copy of a regular edge of $G''$, and so it may not lie in $E_1$). Since $|\Gamma|\le N= \frac{32n^{1+2\eps}\cdot \log^2(\wmax)}{\gamma}<n^{1-4\eps}$ (as $2^{10}\log n \cdot \log(\wmax)<n^{\eps}$ from Inequality \ref{eq: neps large} and $\gamma\geq n^{10\eps}$ from Inequality \ref{eq: gamma3}), we get that, if Event $\hat \event''$ did not happen, the total number of edges leaving the vertex $x^{\inn}$ in $H$ is bounded by $2000n^{1-4\eps}\log n\cdot \log(\wmax)$. For every vertex $x^{\inn}\in Z\setminus \hat Y$, we process the lists of all edges leaving $x^{\inn}$ in the modified adjacency-list representation of $H$. As we process these edges, we add each regular edge that is not incident to vertices of $\Gamma$ to $H'$. If the number of such edges becomes greater than $1000n^{1-4\eps}\log n\cdot \log(\wmax)$, then we terminate the algorithm and return ``FAIL''; in this case, bad event $\hat \event''$ must have happened. 

Recall that the running time of the algorithm for the Directed Heavy Degree Estimation problem from \Cref{claim: heavy weight est} is $O\left(\frac{n\cdot |Z'|\cdot \log n\cdot\log(\wmax)}{\tau}\right )\leq O\left(n^{1+4\eps}\cdot \log n\cdot\log(\wmax)\right )\leq O(n^{2-4\eps})$, since $\tau=n^{1-4\eps}$, $\log n\cdot \log(\wmax)\leq n^{\eps}$, and $\eps=\frac 1 {45}$.
The remaining time required in order to recover the edges of $E_6$ that are incident to the vertices of $Z\setminus \hat Y$ is bounded by $O(|Z|\cdot n^{1-4\eps}\cdot \log n\cdot \log(\wmax))\leq O( n^{2-4\eps}\cdot \log n\cdot \log(\wmax))$. Note that we insert into $H'$ all edges of $E_6$ that are incident to the vertices of $Z\setminus \hat Y$, and, for each vertex $x\in \hat Y$, we insert a shortcut edge $(x,t)$ into $H'$.

This completes the construction of the graph $H'$. 
Consider now some edge $e\in E(H')$. If $e$ is a regular edge, or a backward special edge, then we set its capacity  $\hat c'(e)$ in $H'$ to be $\hat c(e)$, the same as its capacity in $H$. 
Otherwise, $e$ is a  forward special edge, and we reduce its capacity by at least $M'$ and at most $2M'$ units, to ensure that the resulting capacity $\max\set{\hat c(e)-2M',0}\leq \hat c'(e)\leq \hat c(e)-M'$ is an integral multiple of $M'$ (recall that, from Property \ref{prop: residual capacity weaker}, $\hat c(e)\geq M'$ must hold).
Since $\wmax,\wmax'$ and $M'$ are all integral powers of $2$, and since the current flow $f$ is $M'$-integral, it is easy to verify that for every edge $e$ of $H'$, its capacity in $H'$ is an integral multiple of $M'$.
From the above discussion, it is immediate to verify that: 

\begin{equation}\label{eq: size of H'}
|E(H')|\leq O(n^{2-4\eps}\cdot \log n\cdot \log^3(\wmax)),
\end{equation}

 and the time that our algorithm spends on constructing $H'$ is bounded by $O(n^{2-4\eps}\cdot \log n\cdot \log^3(\wmax))$.

Recall that, if $(s,T)$ is not a good pair, then any shortcut operation is valid by definition. 
From our discussion, if $(s,T)$ is a good pair, and neither of the events $\tilde \event_i,\tilde \event'_i, \hat \event, \hat \event', \hat \event''$  happened, then all shortcut operations performed in Step 3 are valid ones.  

Recall that  $\tilde \event_i$ is the bad event that any of the shortcut operations performed in previous phases were invalid, 
and, from Invariant \ref{inv: shortcut}, $\prob{\tilde \event_i}\leq \frac{2i}{n^{\eps}\cdot \log^2(\wmax)}$. Recall that  $\tilde \event'_i$ is the bad event that any of the shortcut operations performed in Step 1 of the current phase was invalid, and, from \Cref{claim: step 1 analysis}, $\prob{\tilde \event_i'}\leq \frac{1}{n^{\eps}\cdot \log^2(\wmax)}$. Let $\tilde \event''_i$ be the bad event that at least one shortcut operation performed over the course of Step 3 is invalid. Notice that shortuct operations may only be performed during Step 3 when processing edges of $E_4$ or of $E_6$. From \Cref{obs: valid shortcuts E4}, and our discussion above, if $(s,T)$ is a good pair, and if neither of the events  $\tilde \event_i,\tilde \event'_i$ has happened, then Event $\tilde \event''_i$ may only happen if either of the events $\hat \event, \hat \event'$, or $\hat \event''$ has happened. In other words, if $(s,T)$ is a good pair, then:

\[\begin{split}
\prob{\tilde \event''_i\mid \neg \tilde \event_i\band \neg\tilde \event'_i}& \leq \prob{\hat \event\bor \hat \event'\bor\hat\event''\mid \neg \tilde \event_i\band \neg\tilde \event'_i}\\
&\leq \frac{1}{n^7\cdot \log^4(\wmax)}+ \frac{1}{n^5\cdot \log^4(\wmax)} +\frac{1}{n^{10}\cdot \log^4(\wmax)}\\
&\leq \frac{3}{n^5\cdot \log^4(\wmax)}.
\end{split}
\]

(we have used Inequality \ref{eq: prob error Heavy Vertex} to bound $\prob{\hat \event}$; the bounds on the probabilities of Events $\hat\event'$ and $\hat \event''$ follow from the above discussion.)

Let $\tilde \event_{i+1}$ be the bad event that any of the shortcut operations executed over the course of the first $i$ phases was invalid. Then:

\begin{equation}\label{eq: invalid shortcuts}
\begin{split}
\prob{\tilde \event_{i+1}}&\leq \prob{\tilde \event_i}+\prob{\tilde \event'_i}+\prob{\tilde \event''_i\mid  \neg \tilde \event_i\band \neg\tilde \event'_i }\\
&\leq \frac{2i}{n^{\eps}\cdot \log^2(\wmax)}+\frac{1}{n^{\eps}\cdot \log^2(\wmax)}+\frac{3}{n^5\cdot \log^4(\wmax)}\\
&\leq \frac{2(i+1)}{n^{\eps}\cdot \log^2(\wmax)}.
\end{split}
\end{equation}

Our algorithm will perform no further shortcut operations in the current phase, and so Invariant  \ref{inv: shortcut} holds at the end of the phase.

\subsection{Properties of Graph $H'$}
In this subsection we establish the main property of $H'$, namely, that the value of the maximum $s^{\out}$-$t$ flow in $H'$ is close to that in $H$, in the following claim.

\begin{claim}\label{claim: max flow value}
	Let $F$ be the value of the maximum $s^{\out}$-$t$ flow in $H$, and let $F'$ be the value of the maximum $s^{\out}$-$t$ flow in $H'$.
	If $|Q|\leq n^{1-4\eps}$, then $F'\geq F-2n\cdot M'$ holds. Additionally, if $(s,T)$ is a good pair, and neither of the bad events $\event_i, \tilde \event_{i+1},\hat \event,\hat \event',\hat \event''$ has happened, then $F'\geq F-2n\cdot M'-2n^{1-4.4\eps}\cdot M'$ holds even if $|Q|>n^{1-4\eps}$. 
\end{claim}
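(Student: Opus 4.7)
The plan is to bound $F - F'$ via a flow-modification argument: start from a maximum $s^{\out}$-$t$ flow $f^*$ in $H$, decompose it into paths, and build a flow $\tilde f$ in $H'$ by discarding or rerouting paths that use edges absent from $H'$, then correct for the capacity reductions applied to the forward special edges of $H'$. I would assume (without loss of generality) that $f^*$ is acyclic, so that its path decomposition $\pset$ satisfies $\sum_{P:v\in V(P)} f^*(P)\le F$ for every vertex $v$, and I would process each path $P\in\pset$ by case analysis on the first edge of $P$ that is not present in $H'$: (a) if $P$ uses an edge of $E_1$, discard $P$; (b) if $|Q|>n^{1-4\eps}$ and $P$ uses an edge of $E_5$, discard $P$; (c) if $P$ reaches some $x^{\out}$ with $x\in Y^h$ via an $E_4$ edge, truncate $P$ at $x^{\out}$ and append the shortcut $(x^{\out},t)$; (d) symmetrically for $x^{\inn}\in\hat Y$ via an $E_6$ edge; (e) otherwise, keep $P$ unchanged. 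The prefix before any truncation lies in $H'$ because all remaining edge-types that might appear on it ($E_2,E_3$, the retained parts of $E_4,E_5,E_6$) are included in $H'$. The shortcut capacity is $\wmax\ge 8nM'\ge F$, which is at least the total flow through any single vertex in an acyclic decomposition, so no shortcut-capacity is exceeded.

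The main technical step is to bound the total flow lost in cases (a), (b), and in the capacity-reduction correction. For (a), the key observation is that for every $v\in V(G)\setminus U$ the $M'$-integrality of $f$ combined with $w(v)<M'$ forces $f(v^{\inn},v^{\out})=0$; by flow conservation in $G''$ this kills every backward edge at $v^{\inn}$ and at $v^{\out}$ in $H$, so the only way out of $v^{\inn}$ in $H$ is the forward special edge of capacity $w(v)<M'$. Consequently, for each such $v$ the total flow through $v^{\inn}$ in $f^*$ is at most $w(v)$, and a union bound over $v\in V(G)\setminus U$ gives that the total flow on paths discarded in (a) is strictly less than $(n-|U|)M'$. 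For (b), I would invoke Observation~\ref{obs: Step 2 summary}: under the assumed good conditions, the cut $(X^*,V(H)\setminus X^*)$ in $H$ has capacity at most $F+2n^{1-4.4\eps}M'$, and since every edge of $E_5$ goes from $V^{\out}\setminus V(J)\subseteq V(H)\setminus X^*$ to $\{v^{\inn}\mid v\in Q\}\subseteq X^*$, it is a \emph{backward} edge across this cut. Flow conservation gives $f^+ - f^- = F$ with $f^+\le F+2n^{1-4.4\eps}M'$, hence $f^-\le 2n^{1-4.4\eps}M'$; since every path using an $E_5$ edge uses at least one backward crossing, the total flow on paths discarded in (b) is at most $\sum_{e\in E_5}f^*(e)\le f^-\le 2n^{1-4.4\eps}M'$.

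Finally, for the capacity-reduction correction: the constructed flow $\tilde f$ respects every capacity in $H'$ except possibly the forward special edges of $H'$ (one per $v\in U$), each of which may be violated by at most $2M'$. For each such edge I would iteratively reduce the flow along a path through it until the capacity is respected, losing at most $2M'$ per edge and thus at most $2|U|M'$ in total. Summing the three contributions gives
\[
F - F' < (n-|U|)M' + 2|U|M' + [|Q|>n^{1-4\eps}]\cdot 2n^{1-4.4\eps}M'
= (n+|U|)M' + [\cdots] \le 2nM' + [\cdots]\cdot 2n^{1-4.4\eps}M',
\]
which yields both bounds in the claim. The only subtle point to watch is the argument for case (a) — it is crucial that backward special and backward regular edges at vertices of $V(G)\setminus U$ are entirely absent from $H$, which is precisely where the $M'$-integrality of the maintained flow $f$ is used; everything else is a routine accounting exercise.
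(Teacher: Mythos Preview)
Your approach is essentially the same as the paper's: decompose a maximum $s^{\out}$-$t$ flow in $H$ into paths, truncate-and-shortcut those paths whose first missing edge lies in $E_4$ or $E_6$, discard paths through $E_1$ (and through $E_5$ when $|Q|>n^{1-4\eps}$), and account separately for the $\le 2M'$ capacity reduction on each forward special edge. The paper organizes this via an intermediate graph $H''$ (which is $H'$ with $E_1$ restored and original capacities) and then argues that once capacities are reduced the special edge of each $v\notin U$ has capacity $0$, making the subsequent $E_1$-removal free; your split into ``$(n-|U|)M'$ for $E_1$ plus $2|U|M'$ for the reduction'' is an equivalent bookkeeping that lands on the same $\le 2nM'$. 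Your $E_5$ argument via the backward flow $f^-$ across $(X^*,V(H)\setminus X^*)$ is the dual of the paper's ``each such path crosses forward at least twice'' and gives the identical bound.

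One small gap: your justification that the shortcut edges carry no more than their capacity relies on the chain $\wmax\ge 8nM'\ge F$, but $F\le 8nM'$ is only established when $(s,T)$ is a good pair and all shortcuts are valid (\Cref{claim: max flow in H}); the first bound in the claim must hold unconditionally. The fix is to bypass $F$ altogether and note that for any $v\ne s$ the total residual capacity leaving $v^{\inn}$ in $H$ (respectively entering $v^{\out}$) is exactly $w(v)<\wmax'<\wmax$, since the forward-special residual plus the sum of backward-regular residuals telescopes to $w(v)$ by flow conservation of $f$ in $G''$. Hence the $f^*$-flow through any such vertex is at most $w(v)$, and the shortcut capacity $\wmax$ suffices without any assumption.
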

\begin{proof}
We define an intermediate graph $H''$, that is obtained from $H'$ by adding all edges of $E_1$ to it, and setting all edge capacities in $H''$ to be identical to those in $H$. Let $F''$ denote the value of the maximum $s^{\out}$-$t$ flow in $H''$. We use the following claim to lower-bound $F''$.
	
	\begin{claim}\label{claim: flow in H''}
		If $|Q|\leq n^{1-4\eps}$, then $F''=F$. Additionally,
		if $(s,T)$ is a good pair and  neither of the bad events $\event_i, \tilde \event_{i+1},\hat \event,\hat \event',\hat \event''$ happened, then $F''\geq F-2n^{1-4.4\eps}\cdot M'$ even if $|Q|>n^{1-4\eps}$.
	\end{claim}
\begin{proof}
	Let $f^*$ be the maximum $s^{\out}$-$t$ flow in $H$, and let $\pset$ be its flow-path decomposition. Let $\pset'\subseteq \pset$ be the collection of all paths $P\in \pset$ that are contained in $H''$, and let $\pset''=\pset\setminus \pset'$. Consider now a path $P\in \pset''$, and let $e(P)\in E(P)$ be the first edge on $P$ that does not lie in $H''$. Note that the edges of $E_1,E_2$ and $E_3$ are all present in $H''$, so $e(P)\in E_4\cup E_5\cup E_6$ must hold. We further partition the set $\pset''$ into two subsets: set $\pset''_1$ containing all paths $P\in \pset''$ with $e(P)\in E_4\cup E_6$, and set $\pset''_2=\pset''\setminus \pset''_1$.
	
	Consider first a path $P\in \pset''_1$, and denote $e(P)=(x,y)$. Recall that $e(P)\in (E_4\cup E_6)\setminus E(H'')$ holds. This can only happen if shortcut edge $(x,t)$ was added to $G'',H$ and $H'$, so in particular this edge lies in $H''$. We let $P'$ be the path obtained from $P$ by contcatenating the subpath of $P$ from $s^{\out}$ to $x$ and the edge $(x,t)$.
	
	Let $f'$ be the $s^{\out}$-$t$ flow in $H''$ obtained by sending, for every path $P\in \pset'$, $f^*(P)$ flow units on $P$, and, for every path $P\in \pset'_1$, $f^*(P)$ flow units on $P'$. Notice that $f'$ is a valid $s^{\out}$-$t$ flow in $H''$, and its value is $\sum_{P\in \pset'\cup \pset''_1}f^*(P)=F-\sum_{P\in \pset''_2}f^*(P)$.
Observe that, if $|Q|\leq n^{1-4\eps}$, then all edges of $E_5$ were added to $H'$, and so $\pset''_2=\emptyset$. Therefore, in this case, $F'\geq \sum_{P\in \pset'\cup \pset''_1}f^*(P)=F$. 

Assume from now on that $|Q|>n^{1-4\eps}$, and additionally that $(s,T)$ is a good pair and  neither of the bad events $\event_i, \tilde \event_{i+1},\hat \event,\hat \event',\hat \event''$ happened.
It is now sufficient to prove that, in this case, $\sum_{P\in \pset''_2}f^*(P)\leq 2n^{1-4.4\eps}\cdot M'$ holds.
	
	Recall that for every edge $e=(x,y)\in E_5$, $y=v^{\inn}$ for some vertex $v\in Q$ and $x\in V^{\out}\setminus V(J)$. In particular, $y\in  X^*$ and $x\not\in  X^*$ holds. Since $s^{\out}\in X^*$, we get that every path in $\pset''_2$ must contain at least two edges of $E_H(X^*,V(H)\setminus X^*)$. Clearly, every path in $\pset$ must contain at least one such edge. Therefore:
	
	\[ \sum_{e\in E_H(X^*,V(H)\setminus X^*)}f^*(e)\geq F+\sum_{P\in \pset''_2}f^*(P).\]
	
	However, from \Cref{obs: Step 2 summary}, $\sum_{e\in E_H(X^*,V(H)\setminus X^*)}\hat c(e)\leq F+2n^{1-4.4\eps}\cdot M'$ must hold, and so:
	
		\[ \sum_{e\in E_H(X^*,V(H)\setminus X^*)}f^*(e)\leq \sum_{e\in E_H(X^*,V(H)\setminus X^*)}\hat c(e)\leq F+2n^{1-4.4\eps}\cdot M'.\]
		
Altogether, we get that $\sum_{P\in \pset''_2}f^*(P)\leq 2n^{1-4.4\eps}\cdot M'$, and so: 

$$F''\geq \sum_{P\in \pset'\cup \pset''_1}f^*(P)=F-\sum_{P\in \pset''_2}f^*(P)\geq F-2n^{1-4.4\eps}\cdot M'.$$
\end{proof}

Observe that graph $H'$ can be thought of as being obtained from $H''$ in two steps: first, we decrease the capacity of every forward special edge by at least $M'$ and at most $2M'$ flow units. It is immediate to verify that this may only decrease the value of the maximum flow by at most $2nM'$ units, by inspecting the minimum $s^{\out}$-$t$ cut in $H''$, whose capacity may decrease by at most $2nM'$. Let $\hat H$ be the resulting flow network. Lastly, for every vertex $v\in V(G)$ with $w(v)<M'$, we delete all regular edges that are incident to $v^{\out}$ or $v^{\inn}$ in $\hat H$, obtaining the final graph $H'$. We claim that this last step does not affect the value of the maximum $s^{\out}$-$t$ flow. Indeed, let $\hat f$ be the maximum $s^{\out}$-$t$ flow in $\hat H$, and denote its value by $\hat F$. Let $e=(v^{\inn},v^{\out})$ be a special edge corresponding to a vertex $v\in V(G)$ with $w(v)<M'$, and let $e'=(x,y)$ be a regular edge of $H''$ that is incident to either $v^{\inn}$ or $v^{\out}$. We claim that $\hat f(e')=0$ must hold, so deleting this edge will not influence the flow $\hat f$. Indeed, recall that, from Property \ref{prop: residual capacity weaker}, $w(v)<M'$ must hold. Since the current flow $f$ is $M'$-integral, $f(v^{\inn},v^{\out})=0$ must hold. Since the only edge leaving $v^{\inn}$ in $G''$ is the edge $(v^{\inn},v^{\out})$, it is immediate to verify that, for every regular edge $(u^{\out},v^{\inn})\in E(G'')$ that enters $v^{\inn}$, $f(u^{\out},v^{\inn})=0$. Therefore, the only regular edges incident to $v^{\inn}$ in $H$ are the forward regular edges that enter $v^{\inn}$. Since the capacity of the edge $(v^{\inn},v^{\out})$ in $\hat H$ is $0$, we get that the flow $\hat f(e')$ on every edge $e'\in E(\hat H)$ that is incident to $v^{\inn}$ must be $0$. Using a similar reasoning, the flow $\hat f(e'')$ on every edge $e''\in E(\hat H)$ that is incident to $v^{\out}$ must be $0$. We conclude that $F'\geq \hat F\geq F''-2nM'$. From \Cref{claim: flow in H''}, if $|Q|\leq n^{1-4\eps}$, then $F''=F$, and so $F'\geq F-2n\cdot M$. Moreover, 
if $(s,T)$ is a good pair and if neither of the bad events $\event_i, \tilde \event_{i+1},\hat \event,\hat \event',\hat \event''$ happened, then $F''\geq F-2n^{1-4.4\eps}\cdot M'$ holds even if $|Q|>n^{1-4\eps}$. In this case, $F'\geq F-2nM'-2n^{1-4.4\eps}\cdot M'$.
\end{proof}

\subsection{Completing the Phase}

In order to complete the algorithm for Step 3, we use the algorithm from \Cref{thm: maxflow} in order to compute a maximum $s^{\out}$-$t$ flow $f''$ in graph $H'$; since all edge capacities in $H'$ are integral multiples of $M'$, the resulting flow $f''$ is guaranteed to be $M'$-integral.
The running time of the algorithm from \Cref{thm: maxflow} is $O\left (|E(H')|^{1+o(1)}\cdot \log(\wmax)\right )\leq O\left(n^{2-4\eps+o(1)}\cdot \log^4(\wmax)\right )$, from Inequality \ref{eq: size of H'}.

Next, we consider two cases. The first case happens if $|Q|\leq n^{1-4\eps}$. In this case, we let $f_i$ be the 
$s^{\out}$-$t$ flow in $G''$ obtained by augmenting the current flow $f$ with the flow 
$f''$. From \Cref{claim: max flow value}, and because $F\geq \opt_s-\val(f)$ must hold, we get that:

\begin{equation}\label{eq: flow value case 1}
\val(f_i)=\val(f)+\val(f'')=\val(f)+F'\geq \val(f)+F-2nM'\geq \opt_s-2nM'.
\end{equation}

Note that the resulting flow $f_i$ must be $M'$-integral. Given the flow $f$, the flow $f_i$ can be computed in time $O(|E(H')|)\leq O(n^{2-4\eps}\cdot \log n\cdot \log^3(\wmax))$, and we can update the data structure $\DS_f$, and the residual flow network $H$ within the same asymptotic running time.

Assume now that $|Q|>n^{1-4\eps}$. In this case, we check whether $\val(f'')\geq \sum_{e\in E_H(X^*,V(H)\setminus X^*)}\hat c(e)-4n^{1-4.4\eps}\cdot M'$ holds, and, this is not the case, then we terminate the algorithm with a ``FAIL''. Notice that the time to compute the value $\sum_{e\in E_H(X^*,V(H)\setminus X^*)}\hat c(e)$ is bounded by $O(|E(\tilde H')|)$, and it is subsumed by the running time of the algorithm for Step 3 so far. From \Cref{obs: Step 2 summary}, if $(s,T)$ is a good pair and neither of the bad events $\tilde \event_{i+1}$, \ref{event: heavy err} happened, then $F\geq \sum_{e\in E_H(X^*,V(H)\setminus X^*)}\hat c(e)-2n^{1-4.4\eps}\cdot M'$. Additionally, from \Cref{claim: max flow value}, 
$(s,T)$ is a good pair, and neither of the bad events $\event_i, \tilde \event_{i+1},\hat \event,\hat \event',\hat \event''$ happened, then:

\[\val(f'')=F'\geq F-2n\cdot M'-2n^{1-4.4\eps}\cdot M'\geq \sum_{e\in E_H(X^*,V(H)\setminus X^*)}\hat c(e)-2n\cdot M'-4n^{1-4.4\eps}\cdot M'.
\]

Therefore, if $|Q|>n^{1-4\eps}$, $(s,T)$ is a good pair, and neither of the bad events $\event_i, \tilde \event_{i+1},\hat \event,\hat \event',\hat \event''$ has happened, we are guaranteed that $\val(f'')\geq \sum_{e\in E_H(X^*,V(H)\setminus X^*)}\hat c(e)-2n\cdot M'-4n^{1-4.4\eps}\cdot M'$ must hold. So our algorithm may only terminate with a ``FAIL'' in this case if $(s,T)$ is not a good pair, or at least one of the events $\event_i, \tilde \event_{i+1},\hat \event,\hat \event',\hat \event''$ has happened.

Finally, assume that $|Q|>n^{1-4\eps}$, and that $\val(f'')\geq \sum_{e\in E_H(X^*,V(H)\setminus X^*)}\hat c(e)-4n^{1-4.4\eps}\cdot M'$. Clearly, $F\leq \sum_{e\in E_H(X^*,V(H)\setminus X^*)}\hat c(e)$ must hold. Therefore: 

\begin{equation}\label{eq: flow value 2}
\val(f'')\geq F-4n^{1-4.4\eps}\cdot M'-2nM'\geq F-4nM'\geq \opt_s-\val(f)-4nM'.
\end{equation}


By augmenting the flow $f$ with $f''$, exactly as before, we the obtain a new flow $f_i$, with $\val(f_i)=\val(f)+\val(f'')\geq \opt_s-4nM'$, that is $M'$-integral.

To summarize, the running time of the algorithm for Step 3 is bounded by  $O(n^{2-4\eps+o(1)}\cdot \log^4(\wmax))$, and the total running time of the entire phase is bounded by $O(n^{2-4\eps+o(1)}\cdot \left(\log(\wmax)\right)^{O(1)})$. If the algorithm for the $i$th phase does not terminate with a ``FAIL'', then it computes a flow $f_i$ in $G''$, that is $M'$-integral, where $M'=M_i$, 
with $\val(f_i)\geq \opt_s-4nM'=\opt_s-4nM_i$, establishing Invariant \ref{inv: flow}. We have already established Invariant \ref{inv: shortcut}, and Invariant \ref{inv: residual cap large} follows from the fact that we decrease the residual capacity of every forward special edge in $H'$ by at least $M'=M_i$ units.
From Invariant \ref{inv: few edges carry flow}, the total number of edges $e\in E(G'')$ with $f_{i-1}(e)>0$ was at most $O\left(i\cdot n^{2-4\eps+o(1)}\cdot \log^4(\wmax)\right )$. We only augmented the initial flow $f=f_{i-1}$ in Steps 1 and 3. Recall that the number of iterations in Step 1 is bounded by $z'=\frac{32n}{\gamma}$. 
In each iteration $j$, we may compute a flow  $\hat f_j$ in the graph $J$ that is constructed in the iteration, and then augment $f$ with $\hat f_j$. Since, from \Cref{obs: bound E(J)}, $|E(J)|\leq  O\left(\frac{n^{2+2\eps}\cdot \log^2(\wmax)}{\gamma}\right )$, we get that the total number of edges $e\in E(G'')$ with $f_{i-1}(e)=0$, such that the flow $f(e)>0$ holds at the end of Step 1 is bounded by:

\[
 O\left(\frac{n^{2+2\eps}\cdot \log^2(\wmax)}{\gamma}\right )\cdot z'\leq O\left(\frac{n^{3+2\eps}\cdot \log^2(\wmax)}{\gamma^2}\right )\leq O\left(n^{2-4\eps+o(1)}\cdot \log^2(\wmax)\right ),
\]

since $\gamma\geq n^{2/3+5\eps}$ from Inequality \ref{eq: gamma2}. In Step 3 we perform a single augmentation of the flow $f$, by using the maximum $s^{\out}$-$t$ flow $f''$ in graph $H'$. Since $|E(H')|\leq O(n^{2-4\eps}\cdot \log n\cdot \log^3(\wmax))$ from Inequality \ref{eq: size of H'}, this augmentation may affect at most $O(n^{2-4\eps}\cdot \log n\cdot \log^3(\wmax))$ edges of $G''$. Overall, the number of edges $e\in E(G'')$ with $f_{i-1}(e)=0$ and $f_i(e)>0$ is bounded by $O\left(n^{2-4\eps+o(1)}\cdot \log^3(\wmax)\right )$, and so Invariant 
\ref{inv: few edges carry flow} continues to hold. It now only remains to establish Invariant \ref{inv: fail prob}.

Assume that $(s,T)$ is a good pair.
The algorithm may terminate Phase $i$ with a ``FAIL'' either in Step 1, or Step 2, or in Step 3. From \Cref{claim: step 1 analysis}, if Event $\tilde \event_i$ did not happen, then the probability that the algorithm terminates with a ``FAIL'' in Step 1 of the current phase is at most $\frac{2}{n^{\eps}\cdot \log^2(\wmax)}$.
From \Cref{obs: fail in step 2}, the algorithm for Step 2 may only terminate with a ``FAIL'' if at least one of the events $\tilde \event_{i+1}$ or $\hat \event$ has happened. 
The algorithm may only terminate in Step 3 of the current phase with a ``FAIL'' if one  of the bad events $\event_i, \tilde \event_{i+1},\hat \event,\hat \event',\hat \event''$ happened. Overall, if $(s,T)$ is a good pair, then the probability that the current phase terminates with a ``FAIL'' is bounded by:

\[
\frac{2}{n^{\eps}\cdot \log^2(\wmax)}+\prob{\tilde \event_{i+1}}+\prob{\hat \event}+\prob{\hat \event'}+\prob{\hat \event''}.
\]

Recall that 
$\prob{\tilde \event_{i+1}}\leq \frac{2(i+1)}{n^{\eps}\cdot \log^2(\wmax)}$ from Inequality \ref{eq: invalid shortcuts}, and $\prob{\hat \event}\leq  \frac{1}{n^7\cdot \log^4(\wmax)}$ from Inequality \ref{eq: prob error Heavy Vertex}. Additionally, from our discussion, $\prob{\hat \event'}\leq  \frac{1}{n^5\cdot \log^4(\wmax)}$ and $\prob{\hat \event''}\leq \frac{1}{n^{10}\cdot \log^4(\wmax)}$.
Altogether, we get that, if $(s,T)$ is a good pair, then the probability that the current phase terminates with a ``FAIL'' is bounded by:

\[
\begin{split}
\frac{2}{n^{\eps}\cdot \log^2(\wmax)}&+\frac{2(i+1)}{n^{\eps}\cdot \log^2(\wmax)}+ \frac{1}{n^5\cdot \log^4(\wmax)}+\frac{1}{n^{10}\cdot \log^4(\wmax)}\\
&\leq \frac{2i+5}{n^{\eps}\cdot \log^2(\wmax)}\\
&\leq \frac{7i}{n^{\eps}\cdot \log^2(\wmax)}.
\end{split}
\]

From Invariant \ref{inv: fail prob}, if
 $(s,T)$ is a good pair, then the probability that the algorithm terminates prior to the beginning of Phase $i$ with a ``FAIL'' is at most $\frac{10i^2-2i}{n^{\eps}\cdot \log^2(\wmax)}$. We then get that, if $(s,T)$ is a good pair, then the probability that  the algorithm terminates prior to the beginning of Phase $i$ with a ``FAIL'' is at most $\frac{10i^2+5i}{n^{\eps}\cdot \log^2(\wmax)}\leq \frac{10(i+1)^2-2(i+1)}{n^{\eps}\cdot \log^2(\wmax)}$, so Invariant  \ref{inv: fail prob} continues to hold at the end of the phase.

\paragraph{Acknowledgement.} The authors thank Ron Mosenzon for pointing out a minor error in the definition of a good set of terminals (\Cref{def: good set of terminals}) in a previous version of the paper.

\newpage
\appendix
\section{Isolating Cuts: Proof of \Cref{thm: min cuts via isolating}}
\label{sec: appendix_isolating}

In this section we provide the proof of Theorem~\ref{thm: min cuts via isolating}. We restate this theorem for the convenience of the reader.

\isolating*

In this proof, for a vertex set $X\subseteq V(G)$, we denote by $N_G(X)=\left(\bigcup_{v\in X}N_G(v)\right )\setminus X$.
It will be convenient for us to assume that the vertices of $T$ form an independent set, that is, no edge of $G$ connects a pair of such vertices. In order to ensure this property, we simply subdivide every edge of $G$ with a new vertex of weight $2nW$. It is easy to verify that, for every vertex $v\in T$, the value of the minimum vertex-cut separating $v$ from $T\setminus\set{v}$ does not change following this transformation; if there is no vertex-cut separating $v$ from $T\setminus\set{v}$ in $G$, then the value of the minimum vertex-cut separating $v$ from $T\setminus\set{v}$ in the new graph is at least $2Wn$, higher than the value of any proper vertex-cut in $G$. Additionally, the number of edges in $G$ only grows by factor $2$. Therefore, we assume from now on that $T$ is an independent set.

Let $r=\lceil \log |T|\rceil$. We assign, to every vertex $v\in T$, a binary string $\beta_v$ of length $r$, so that strings assigned to distinct vertices of $T$ are  different.
Next, we perform $r$ iterations. For $1\leq i\leq r$, in order to perform the $i$th iteration, we let $A_i\subseteq T$ contain all vertices $v\in T$ for which the $i$th bit in $\beta_v$ is equal to $0$, and we let $B_i=T\setminus A_i$. 
We then compute a minimum $A_i$-$B_i$ vertex-cut $(X_i,Y_i,Z_i)$ in $G$, in time $O\left(m^{1+o(1)}\cdot\log W\right )$, using the algorithm from \Cref{cor: min_vertex_cut}.
Consider now the graph $G\setminus\left(\bigcup_{i=1}^rY_i\right )$, and, for every vertex $v\in T$, denote by $C_v$ the connected component of this graph that contains $v$, by $U_v=V(C_v)$, and by $m_v=\sum_{u\in U_v}\deg_G(u)$. 
Observe that, for all $1\leq i\leq r$, $Y_i\cap T=\emptyset$, and for all $v\in T$, $N_G(U_v)\subseteq \bigcup_{i=1}^rY_i$. Therefore, for every vertex $v\in T$, $N_G(U_v)\cap T=\emptyset$.
We start with the following simple observation.

\begin{observation}\label{obs: individual clusters}
	For every vertex $v\in T$, $U_v\cap T=\set{v}$.
\end{observation}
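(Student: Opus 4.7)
The plan is to exploit the binary encoding of terminals, together with the fact that each cut $(X_i, Y_i, Z_i)$ is constructed to separate $A_i$ from $B_i$. The key idea is that any two distinct terminals must differ in at least one bit position, and the cut corresponding to that bit position isolates them from each other in $G \setminus \bigcup_i Y_i$.

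First, I would verify that $v \in U_v$, which simply requires that $v \notin \bigcup_{i=1}^r Y_i$. For each $i$, since $v \in T = A_i \cup B_i$ and $(X_i, Y_i, Z_i)$ is an $A_i$-$B_i$ vertex-cut with $A_i \subseteq X_i$ and $B_i \subseteq Z_i$, the vertex $v$ lies in either $X_i$ or $Z_i$, hence not in $Y_i$. (The cut is well-defined because the reduction to an independent set $T$ at the start of the proof guarantees no edge connects a vertex of $A_i$ to a vertex of $B_i$.) Thus $v$ survives in the graph $G \setminus \bigcup_i Y_i$ and lies in its own connected component $C_v$.

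Next, take any other terminal $u \in T \setminus \{v\}$. Since $\beta_u \neq \beta_v$, there is some index $i^\star \in \{1,\dots,r\}$ where the two binary strings disagree. By construction, one of $u,v$ lies in $A_{i^\star}$ and the other in $B_{i^\star}$. Because $Y_{i^\star}$ separates $A_{i^\star}$ from $B_{i^\star}$ in $G$, there is no path from $u$ to $v$ in $G \setminus Y_{i^\star}$. Since $Y_{i^\star} \subseteq \bigcup_{i=1}^r Y_i$, there is also no such path in $G \setminus \bigcup_i Y_i$, so $u$ and $v$ lie in different connected components of this graph, and therefore $u \notin U_v$.

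Combining the two observations gives $U_v \cap T = \{v\}$, as required. The argument is essentially a straightforward application of the coordinate-wise separation property of the binary code; the only subtlety worth spelling out is the well-definedness of each cut $(X_i, Y_i, Z_i)$ and the membership $v \notin Y_i$, both of which follow immediately from the independent-set reduction and the definition of an $A_i$-$B_i$ vertex-cut. There is no real obstacle here — the hardest part is just unwinding the definitions carefully.
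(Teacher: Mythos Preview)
Your proof is correct and follows essentially the same approach as the paper: both argue that distinct terminals $u,v$ differ in some bit $j$, so $Y_j$ separates them in $G$, hence they cannot lie in the same component of $G\setminus\bigcup_i Y_i$. Your explicit verification that $v\notin\bigcup_i Y_i$ is a detail the paper handles just before stating the observation (noting $Y_i\cap T=\emptyset$), but otherwise the arguments are identical.
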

\begin{proof}
	Consider a vertex $v\in T$.
	By definition, $v\in U_v\cap T$. Assume for contradiction that $U_v$ contains another vertex $u \neq v$ with $u\in T$.
	Since the binary strings assigned to $u$ and $v$ are different, there is an index $1\leq j\leq r$, such that $\beta_v$ and $\beta_u$ differ in the $j$th bit.
	Assume without loss of generality that $u\in A_j$ and $v\in B_j$. Since $Y_j \subseteq V(G)$ separates $A_j$ from $B_j$ in $G$, every path $P$ connecting $u$ to $v$ in $G$ must contain a vertex of $Y_j$. Therefore, it is impossible that $u$ and $v$ lie in the same connected component of $G\setminus\left(\bigcup_{i=1}^rY_i\right )$.
\end{proof}

From \Cref{obs: individual clusters}, we get that for every pair $u,v\in T$ of distinct vertices, $C_u\neq C_v$, and so $\sum_{v\in T}m_v\leq O(m)$.

For every vertex $v\in T$, we then construct a
graph $G_v$ as follows. We start from the subgraph of $G$ induced by the set $U_v\cup N_G(U_v)$ of vertices, and then remove all edges $e=(x,y)$ with $x,y\in N_G(U_v)$. We then add a vertex $t$, and connect it to every vertex in $N_G(U_v)$ with an edge. This finishes the construction of the graph $G_v$. Note that $|E(G_v)|\leq O(m_v)$, and moreover, given the adjacency-list representaton of $G$, we can compute $G_v$ in time $O(m_v)$, by inspecting the adjacency lists of the vertices of $U_v$.
From our observation, since $(U_v\cup N_G(U_v))\cap T=\set{v}$, we get that the $V(G_v)\cap T=\set{v}$.
We use the algorithm from \Cref{cor: min_vertex_cut} to compute, in time $O\left (|E(G_v)|^{1+o(1)}\cdot \log W\right )\leq O\left (m_v^{1+o(1)}\cdot \log W \right )$, a minimum $v$-$t$ vertex-cut in $G_v$, and denote by $c_v$ its value. We then return, for every vertex $v\in T$, the value $c_v$ as the value of the minimum vertex cut separating $v$ from $T\setminus\set{v}$ in $G$.
From our discussion, the total running time of the algorithm is bounded by $O\left(m^{1+o(1)}\cdot\log W\right )+\sum_{v\in T}O\left (m_v^{1+o(1)}\cdot \log W\right )\leq O\left(m^{1+o(1)}\cdot\log W\right )$. It now remains to show that, for every vertex $v$, the value $c_v$ that the algorithm outputs is indeed the value of the minimum vertex-cut separating $v$ from $T\setminus\set{v}$ in $G$.

Consider any vertex $v\in T$, and let $(L_v,S_v,R_v)$ be a minimum vertex-cut separating $v$ from $T\setminus \set{v}$ in $G$, so that $v\in L_v$; from among all such cuts, choose the one minimizing $|L_v|$. We denote $w(S_v)$ by $\lambda_v$, so $\lambda_v$ is the value of the  minimum vertex-cut separating $v$ from $T\setminus\set{v}$ in $G$.
The following claim is key to the correctness of the algorithm.

\begin{claim}\label{claim: single vertex separator}
	For every vertex $v\in T$, $L_v\subseteq U_v$ must hold.
\end{claim}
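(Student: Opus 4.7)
The plan is to show $L_v\cap Y_i=\emptyset$ for every $1\leq i\leq r$: combined with a short connectivity reduction this yields $L_v\subseteq U_v$. For the connectivity reduction, I will first replace $L_v$ with the component of $v$ in the subgraph $G\setminus S_v$; this produces a valid cut separating $v$ from $T\setminus\set{v}$ of the same weight $w(S_v)$ but no larger $|L_v|$, so by the tie-breaker minimizing $|L_v|$ among minimum cuts, this replacement must have been trivial and $L_v$ is already the component of $v$ in $G\setminus S_v$. Consequently every $u\in L_v$ is reachable from $v$ via a path contained in $L_v$; once we show $L_v$ is disjoint from $\bigcup_i Y_i$, this same path witnesses $u\in U_v$.

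To prove $L_v\cap Y_i=\emptyset$, fix $i$ and assume $v\in A_i$ without loss of generality (the case $v\in B_i$ is symmetric by swapping the roles of $X_i$ and $Z_i$). I will uncross the cuts $(L_v,S_v,R_v)$ and $(X_i,Y_i,Z_i)$ in two ways. First, on the $v$-side, take $L_1:=L_v\cap X_i$ and $S_1:=(S_v\cap X_i)\cup(S_v\cap Y_i)\cup(L_v\cap Y_i)$. The two no-edge conditions of the original cuts (no edges from $L_v$ to $R_v$, and none from $X_i$ to $Z_i$) imply that every neighbor of $L_1$ lies in $(L_v\cup S_v)\cap(X_i\cup Y_i)=L_1\cup S_1$; and because $T\setminus\set{v}\subseteq R_v$ is disjoint from $L_v\cup S_v$, the remaining terminals land in $V\setminus(L_1\cup S_1)$. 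Hence $(L_1,S_1,\cdot)$ is a valid $v$-to-$T\setminus\set{v}$ cut, and since the three pieces of $S_1$ are pairwise disjoint, $w(S_1)=w(S_v)-w(S_v\cap Z_i)+w(L_v\cap Y_i)$. Minimality of $w(S_v)$ then gives
\[
w(L_v\cap Y_i)\;\geq\;w(S_v\cap Z_i).
\]

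Second, on the $A_i$-side, take $X_i':=X_i\cup L_v$ and $S_1':=N_G(X_i')\setminus X_i'$. The same two no-edge conditions give $N_G(X_i')\subseteq X_i'\cup Y_i\cup S_v$, so $S_1'\subseteq(Y_i\setminus L_v)\cup(S_v\cap Z_i)$, because $S_v\cap Y_i$ is already contained in $Y_i\setminus L_v$. A quick check using $B_i\subseteq Z_i\cap R_v$ confirms $B_i$ is disjoint from $X_i'$ and from $S_1'$, so $(X_i',S_1',\cdot)$ is a valid $A_i$-to-$B_i$ cut of weight at most $w(Y_i)-w(L_v\cap Y_i)+w(S_v\cap Z_i)$. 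Minimality of $w(Y_i)$ yields the reverse inequality $w(L_v\cap Y_i)\leq w(S_v\cap Z_i)$, so both inequalities are tight. In particular $w(S_1)=w(S_v)$, so $S_1$ is itself a minimum $v$-to-$T\setminus\set{v}$ cut; since $L_1=L_v\cap X_i\subseteq L_v$, the $|L_v|$-tie-breaker forces $L_1=L_v$, i.e., $L_v\subseteq X_i$, giving $L_v\cap Y_i=\emptyset$ as desired.

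The main obstacle is purely bookkeeping: verifying that the two uncrossed candidates really are valid vertex cuts (both the no-straddling-edge condition for the separator and the placement of $v$ and $T\setminus\set{v}$ on the correct sides) requires carefully tracking how each intersection like $L_v\cap Y_i$ or $S_v\cap Z_i$ participates in the new cut. All such checks reduce to the two no-edge conditions of the original cuts together with the pairwise disjointness of $L_v,S_v,R_v$ and of $X_i,Y_i,Z_i$, so no new ideas are needed---only care. A more subtle point already noted is that the implication ``$L_v\cap\bigcup_i Y_i=\emptyset$ implies $L_v\subseteq U_v$'' relies on $L_v$ being connected in $G\setminus S_v$, and that is exactly what the tie-breaker on $|L_v|$ purchases.
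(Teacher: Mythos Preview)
Your proof is correct and takes essentially the same approach as the paper: both uncross $L_v$ against the minimum $A_i$--$B_i$ cut $(X_i,Y_i,Z_i)$, use the minimality of both cuts together with the $|L_v|$ tie-breaker to conclude $L_v\subseteq X_i$ (equivalently $L_v\cap Y_i=\emptyset$), and then invoke connectivity of $G[L_v]$ to place $L_v$ inside $U_v$. The only cosmetic difference is that the paper packages the uncrossing as an appeal to the submodularity inequality $w(N(A))+w(N(B))\geq w(N(A\cup B))+w(N(A\cap B))$ and argues by contradiction, whereas you build the two uncrossed cuts by hand and combine the resulting pair of inequalities directly.
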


We prove the claim below, after we complete the proof of \Cref{thm: min cuts via isolating} using it. Fix a vertex $v\in T$, and
notice that, since $L_v\subseteq U_v$, $N_G(L_v)\subseteq U_v\cup N_G(U_v)$ must hold.
 Consider the  partition $(L',S',R')$ of the vertices of $G_v$, where $L'=L_v$; $S'=N_G(L_v)$; and $R'$ contains the remaining vertices of $G_v$, so in particular $t\in R'$. Clearly, $S'\subseteq S_v$, and moreover, no edge of $G_v$ may connect a vertex of $L'$ to a vertex of $R'$. Therefore, 
$(L',S',R')$ is a valid $v$-$t$ vertex-cut in $G_v$, and its value is at most $\lambda_v=w(S_v)$. We conclude that $c_v\leq \lambda_v$. On the other hand, consider any $v$-$t$ vertex-cut $(A,B,C)$ in $G_v$, so $v\in A$ and $t\in C$ holds.
Note that $A\subseteq U_v$ must hold, since every vertex $u\in V(G_v)\setminus\set{t}$ that does not lie in $U_v$ is connected to $t$ with an edge.
In particular, this means that $N_G(A)=N_{G_v}(A)\subseteq B$.
Consider now a partition $(A',B',C')$ of $V(G)$, where $A'=A$, $B'=N_G(A)$, and $C'$ contains the remaining vertices of $G$. From the above discussion, $B'\subseteq B$ must hold. Clearly, $(A',B',C')$ is a vertex-cut in $G$, whose value is at most $w(B)$. Since $V(G_v)\cap T=\set{v}$, the only terminal that lies in $A\cup B$ is $v$, and so this cut separates $v$ from $T\setminus\set{v}$. We conclude that $c_v\geq \lambda_v$, and altogether, $c_v=\lambda_v$. In order to complete the proof of \Cref{thm: min cuts via isolating}  it now remains to prove \Cref{claim: single vertex separator}.

\begin{proofof}{\Cref{claim: single vertex separator}}
	We fix a vertex $v\in T$. For all $1\leq i\leq r$, we define a set $\Lambda_i$ of vertices as follows. If $v\in A_i$, then we set $\Lambda_i=X_i$; note that $\Lambda_i\cap T=A_i$ in this case. Otherwise, $v\in B_i$, and we set $\Lambda_i=Z_i$; in this case, $\Lambda_i\cap T=B_i$.
	In the following claim we show that, for all $1\leq i\leq r$, $L_v\subseteq \Lambda_i$. 
	
	\begin{claim}\label{claim: containment for one iteration}
		For all $1\leq i\leq r$, $L_v\subseteq \Lambda_i$. 
	\end{claim}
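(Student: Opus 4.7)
The plan is to use a standard submodularity property of vertex-cuts, together with the minimality of $|L_v|$, to show $L_v\subseteq \Lambda_i$ for each $i$. Fix $i$ and assume, without loss of generality, that $v\in A_i$ (so $\Lambda_i=X_i$, $A_i\subseteq X_i\cup Y_i$, $B_i\subseteq Y_i\cup Z_i$, and moreover $A_i\cap L_v=\{v\}$ and $B_i\subseteq R_v$ since $(L_v,S_v,R_v)$ separates $v$ from $T\setminus\{v\}$); the case $v\in B_i$ is entirely symmetric with $X_i$ replaced by $Z_i$.

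First I would establish the submodularity fact: for any two vertex-cuts $(L_1,S_1,R_1)$ and $(L_2,S_2,R_2)$ of $G$, setting
\[
\hat L=L_1\cap L_2,\quad \hat R=R_1\cup R_2,\quad \hat S=V(G)\setminus(\hat L\cup\hat R),
\]
\[
\tilde L=L_1\cup L_2,\quad \tilde R=R_1\cap R_2,\quad \tilde S=V(G)\setminus(\tilde L\cup\tilde R),
\]
one has $w(\hat S)+w(\tilde S)\le w(S_1)+w(S_2)$, and both $(\hat L,\hat S,\hat R)$ and $(\tilde L,\tilde S,\tilde R)$ are valid vertex-partitions with no edge from the ``$L$'' side to the ``$R$'' side (any such edge would violate validity of $(L_1,S_1,R_1)$ or $(L_2,S_2,R_2)$). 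The inequality is routine: one checks, for every possible pair of memberships $(u\in L_j/S_j/R_j)$ for $j=1,2$, that the contribution of $u$ to $w(\hat S)+w(\tilde S)$ is at most its contribution to $w(S_1)+w(S_2)$.

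Next I would apply this with $(L_1,S_1,R_1)=(L_v,S_v,R_v)$ and $(L_2,S_2,R_2)=(X_i,Y_i,Z_i)$. Since $v\in L_v\cap X_i=\hat L$ and $T\setminus\{v\}\subseteq R_v\subseteq\hat R$, the triple $(\hat L,\hat S,\hat R)$ is a valid vertex-cut separating $v$ from $T\setminus\{v\}$, so $w(\hat S)\ge\lambda_v=w(S_v)$. Similarly, $A_i\subseteq X_i\subseteq\tilde L$, and $B_i\subseteq R_v\cap Z_i=\tilde R$, so $(\tilde L,\tilde S,\tilde R)$ is a valid $A_i$-$B_i$ vertex-cut and $w(\tilde S)\ge w(Y_i)$. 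Combined with submodularity, both inequalities must be equalities, and in particular $w(\hat S)=w(S_v)=\lambda_v$, i.e., $(\hat L,\hat S,\hat R)$ is itself a minimum vertex-cut separating $v$ from $T\setminus\{v\}$.

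Finally, note that $\hat L=L_v\cap X_i\subseteq L_v$. If $L_v$ were not contained in $X_i$, this containment would be strict, so $|\hat L|<|L_v|$, contradicting our choice of $(L_v,S_v,R_v)$ as a minimum vertex-cut of smallest $|L_v|$. Hence $L_v\subseteq X_i=\Lambda_i$, completing the argument. The main (and essentially only) subtlety is the careful case-analysis verifying the submodularity inequality and the validity of the two derived partitions; everything else follows from unpacking the definitions.
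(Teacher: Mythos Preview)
Your proof is correct and follows essentially the same submodularity argument as the paper. The only cosmetic difference is that the paper phrases submodularity in terms of neighborhoods, proving $w(N(A))+w(N(B))\ge w(N(A\cup B))+w(N(A\cap B))$ and applying it with $A=L_v$, $B=\Lambda_i$, whereas you work directly with the partition triples $(L,S,R)$ and the ``uncrossed'' cuts $(\hat L,\hat S,\hat R)$ and $(\tilde L,\tilde S,\tilde R)$; both formulations unwind to the same case analysis and the same use of the minimality of $|L_v|$ and of the cut $(X_i,Y_i,Z_i)$.
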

	\begin{proof}
	The proof of the claim relies on \emph{submodularity of vertex-cuts}, that is summarized in the following observation.

	\begin{observation}[Submodularity of weighted vertex-cuts.]\label{obs:submodularity} Let $G=(V,E)$ be an undirected graph with weights $w(v)\geq 0$ on its vertices $v\in V$, and let $A,B\subseteq V$ be a pair of vertex subsets. Then:
		
		$$w(N(A)) + w(N(B)) \geq w(N(A \cup B)) + w(N(A \cap B)).$$
	\end{observation}
	\begin{proof}
		We consider the contribution of every vertex $v\in V$ to the expressions: 
		
		\begin{equation}\label{eq 1}
		w(N(A))+w(N(B))
		\end{equation}
		
		and
		
		\begin{equation}\label{eq: 2}
		w(N(A \cup B)) + w(N(A \cap B)).
		\end{equation}
		

Notice that a vertex $v \notin N(A) \cup N(B)$ contributes $0$
to expression \ref{eq 1}. We now show that it also contributes $0$ to expression \ref{eq: 2}. Indeed, if no neighbor of $v$ lies in $A\cup B$, then $v$ may not lie in either of the sets $N(A \cup B)$ or $N(A \cap B)$, so its contribution to expression \ref{eq: 2} is $0$. Assume now that some neighbor $u\in N(v)$ of $v$ lies in $A$, but no neighbor of $v$ lies in $B$. Since $v\not\in N(A)$, $v\in A$ must hold. But then $v\in  A\cup B$, and no neighbor of $v$ may lie in $A\cap B$. Therefore, $v\not\in N(A \cup B)$ and $v\not \in N(A \cap B)$ must hold. The case where some neighbor of $v$ lies in $B$ but no neighbor of $v$ lies in $A$ is symmetric. Finally, assume that some neighbor of $v$ lies in $A$, and the same holds for $B$. Since $v\not \in  N(A) \cup N(B)$, it must be the case that $v\in A$ and $v\in B$. But then $v\in A\cap B$, so it may not lie in either of the sets $N(A \cup B)$ or in $N(A \cap B)$. To conclude, if $v \notin N(A) \cup N(B)$, its contribution to both expressions is $0$.

Assume now that $v\in N(A)\cup N(B)$. 
		Observe first that, if $v \in N(A) \cap N(B)$, then it contributes $2w(v)$ to the first expression, and its contribution to the second expression  is at most $2w(v)$.

		Assume next that $v\in N(A) \setminus N(B)$, so $v$ contributes $w(v)$ to expression \ref{eq 1}. 
		We show that in this case, either $v\not \in N(A\cap B)$, or $v\not \in N(A\cup B)$ must hold, so $v$ contributes at most $w(v)$ to the expression \ref{eq: 2}. In order to do so, we consider two cases. The first case happens if  $v\not\in B$. In this case, no vertex of $N(v)$ may lie in $B$, so $v\not \in N(A\cap B)$. Assume now that the second case happens, so $v\in B$. In this case, $v\not\in N(A\cup B)$ must hold, since, from the definition, $N(A\cup B)$ does not contain vertices of $A\cup B$.
		
		The only remaining case is where  $v \in N(B) \setminus N(A)$, and it follows a symmetric argument.
	\end{proof}
	
We are now ready to complete the proof of 	\Cref{claim: containment for one iteration}.	
Assume for contradiction that the claim does not hold, so $L_v \setminus \Lambda_i \neq \emptyset$. 
		Consider a vertex-cut $(\hat X,\hat Y,\hat Z)$ in $G$, where $\hat X=L_v \cap \Lambda_i$ and $\hat Y=N(L_v\cap \Lambda_i)$, and $\hat Z$ contains all remaining vertices of $G$. It is immediate to verify that this cut separates $v$ from $T\setminus\set{v}$. Indeed, since $L_v\cap T=\set{v}$ and $v\in \Lambda_i$, we get that $\hat X\cap T=\set{v}$. Additionally, since $\hat Y\subseteq L_v\cup N(L_v)$ and $(L_v\cup N(L_v))\cap T=\set{v}$, we get that $\hat Y\cap T=\emptyset$. Moreover, since $|\hat X|<|L_v|$, from the definition of the cut $(L_v,S_v,R_v)$, we get that $w(\hat Y)>w(S_v)$ must hold. 
		However, by \Cref{obs:submodularity}:

		\begin{equation}\label{eq: submod}
		w(N(L_v \cup \Lambda_i)) + w(N(L_v \cap \Lambda_i)) \leq w(N(L_v)) + w(N(\Lambda_i)).
		\end{equation}

		Recall that $(L_v,S_v,R_v)$ is a vertex-cut in $G$, so $N_G(L_v)\subseteq S_v$ must hold, and therefore, $w(N(L_v))\leq w(S_v)$. We then get that:
		
		\[w(N(L_v \cap \Lambda_i))=w(N(\hat X))=w(\hat Y)>w(S_v)\geq w(N(L_v)).\]

	Combining this with Inequality \ref{eq: submod}, we get that:

		\[
		w(N(L_v \cup \Lambda_i)) < w(N(\Lambda_i))
		\]
		
		must hold.
		Since $L_v\cap T=\set{v}$ and $v\in \Lambda_i$, we get that $(L_v \cup \Lambda_i) \cap T = \Lambda_i \cap T$. 
		In particular, $L_v \cup \Lambda_i$ contains all vertices of $A_i$ and no vertices of $B_i$ (or the other way around). 
		Moreover, since $N(L_v) \cap T = \emptyset$ and $N(\Lambda_i) \cap T = \emptyset$, we also get that $N(L_v \cup \Lambda_i) \cap T = \emptyset$. 
		Therefore, the vertices of $N(L_v\cup \Lambda_i)$ separate the vertices of $A_i$ from the vertices of $B_i$, and their weight is smaller than $w(N(\Lambda_i))\leq w(Y_i)$, contradicting the choice of the vertex-cut $(X_i,Y_i,Z_i)$.
	\end{proof}

	Let $U^*_v= \bigcap_{i=1}^r\Lambda_i$.
	We conclude that $L_v\subseteq U^*_v$. 
	Recall that $C_v$ is the connected component of $G\setminus\left(\bigcup_{i=1}^rY_i\right)$ containing the vertex $v$. It is also easy to verify that $C_v\subseteq U^*_v$, and that equivalently, $C_v$ can be defined as the connected component of $G[\bigcap_{i=1}^r\Lambda_i]$ containing $v$. Since $G[L_v]$ must be a connected graph from the minimality of $|L_v|$, it must be a subset of $U_v=V(C_v)$.
\end{proofof}

\section{Degree Estimation and Heavy Vertex Problems}
In this section we provide algorithms for both Undirected and Directed Degree Estimation problems, as well as an algorithm for the  Heavy Vertex Problem.
We start with an algorithm for the Undirected Degree Estimation problem.

\subsection{Algorithms for Degree Estimation Problems: Proof of Claims \ref{claim: degree est} and \ref{claim: heavy weight est}}
\label{sec: appendix_degree}


We start by presenting an algorithm for the Undirected Degree Estimation problem, proving \Cref{claim: degree est}, whose statement we restate here for the convenience of the reader.

\degree*

\paragraph{Proof of Claim~\ref{claim: degree est}.}
Assume first that $\tau> \frac{|Z|}{100\log n\cdot\log(\wmax)}$. In this case, we return 
	$A=\emptyset$. It is easy to verify that in this case the algorithm does not err, since no vertex in $Z'$ may have at least $1000\tau\log n\cdot \log(\wmax)>|Z|$ neighbors in $Z$. We assume from now on that $\tau \leq \frac{|Z|}{100\log n\cdot \log(\wmax)}$ holds.
	
	We start by computing a random set $T\subseteq Z$ of vertices as follows: every vertex $v\in Z$ is added to $T$ independently with probability $\frac{10\log n\cdot\log(\wmax)}{\tau}$. Let $\event$ be the bad event that $|T|>\frac{100|Z|\log n\cdot\log(\wmax)}{\tau}$. 
	Since $\expect{|T|}=\frac{10|Z|\log n\cdot\log(\wmax)}{\tau}$, by applying the first Chernoff bound from \Cref{lem: Chernoff} with $t=\frac{100|Z|\log n\cdot\log(\wmax)}{\tau}$, we get that $\prob{\event}\leq 2^{-(100|Z|\log n\cdot\log(\wmax))/\tau}\leq n^{-100\log(\wmax)}$, since $\tau\leq |Z|$. 
	Notice that we can check whether Event $\event$ happened in time $O\left (\frac{|Z|\cdot \log n\log(\wmax)}{\tau}\right )$. If we discover that the event happened, then we terminate the algorithm and return $A=\emptyset$.
	
	Assume now that Event $\event$ did not happen.
	We let $A\subseteq Z'$ be the set that contains all vertices $v\in Z'$ with $|N_{G}(v)\cap T|\geq 100\log n\log(\wmax)$. Note that set $A$ of vertices can be computed in time $O(|T|\cdot n)\leq O\left(\frac{n\cdot |Z|\cdot \log n\cdot \log(\wmax)}{\tau}\right )$, as follows: for every vertex $u\in T$, we consider all its $O(n)$ neighbors in $G$. For each such neighbor $v$, if $v\not\in Z'$, then no further processing of $v$ is needed. Otherwise, if $v\in Z'$, and $v$ is processed for the first time by our algorithm, then we initialize the counter $n_v=1$, and mark $v$ as being already considered by the algorithm. We also add $v$ to the list $\Lambda\subseteq Z'$ of vertices that our algorithm considered. Otherwise, we simply increase the counter $n_v$ by $1$. Clearly, the algorithm spends $O(n)$ time for processing every vertex $v\in T$, and $O(n|T|)\leq O\left(\frac{n\cdot |Z|\cdot \log n\cdot\log(\wmax)}{\tau}\right )$ time overall. We then consider the vertices $v\in \Lambda$ one by one, and for each such vertex $v$, if $n_v\geq 100\log n\cdot\log(\wmax)$, then we add $v$ to $A$.
	
	Consider now some vertex $v\in Z'$. We say that a bad event $\event'(v)$ happened if $|N_G(v)\cap Z|<\tau$ and $v\in A$, and we say that a bad event $\event''(v)$ happened if $|N_G(v)\cap Z|\geq 1000\tau\log n\log(\wmax)$ and $v\not\in A$. We now bound the probabilities of events $\event'(v)$ and $\event''(v)$.
	
	For convenience, denote $N_G(v)\cap Z=\set{y_1,\ldots,y_r}$. For all $1\leq i\leq r$, let $x_i$ be the random variable that is equal to $1$ if $y_i\in T$ and to $0$ otherwise, and let $X_v=\sum_{i=1}^rx_i$. Notice that $\expect{X_v}=\frac{10r\cdot \log n\cdot\log(\wmax)}{\tau}$.
	
	Assume first that $r<\tau$, so $\expect{X_v}<10\log n\log(\wmax)$, and recall that $v$ is only added to $A$ if $X_v>100\log n\cdot\log(\wmax)$. In this case, by using the first part of the Chernoff bound from \Cref{lem: Chernoff} with $t=100\log n\cdot\log(\wmax)$, we get that:
	
	\[\prob{\event'(v)}=\prob{X_v>100\log n\cdot\log(\wmax)}\leq 2^{-100\log n\cdot\log(\wmax)}=n^{-100\cdot\log(\wmax)}.\]
	
	Assume now that $r\geq 1000\tau\log n\cdot\log(\wmax)$, so $\expect{X_v}\geq 1000\log^2 n\log^2(\wmax)$. In this case, $\event''(v)$ may only happen if $X_v\leq 100\log n\cdot\log(\wmax)$. Using the second part of the Chernoff bound from \Cref{lem: Chernoff} with $\delta=0.9$, we get that:

	\[\begin{split}
	\prob{\event''(v)}&=\prob{X_v< 100\log n\cdot\log(\wmax)}\\
	&\leq \prob{X_v< (1-\delta)\cdot 1000\log^2 n\cdot\log^2(\wmax)}\\
	&\leq e^{-0.81\cdot 500\log^2 n\cdot\log^2(\wmax)}\\
	&\leq n^{-400\cdot\log(\wmax)}.
	\end{split}\]

	Note that our algorithm may only err if the event $\event$ happens, or if, for some $v\in Z'$, at least one of the events $\event'(v)$ or $\event''(v)$ happen. Using the Union Bound, we conclude that the probability that the algorithm errs is at most
\[\begin{split}
n^{-100\log(\wmax)}&+\sum_{v\in V}\left(n^{-100\cdot\log(\wmax)}+n^{-400\cdot\log(\wmax)}\right)\\
&\leq n^{-50\cdot \log(\wmax)} \\
&\leq n^{-50-\log(\wmax)} \\
&\leq \frac{1}{n^{10}\cdot \log^6(\wmax)}.
\end{split}\]

This concludes the proof of \Cref{claim: degree est}. We note that in the above proof we could have replaced $\wmax$ with any other parameter $\hat W$ that is greater than a large enough constant, so that $n^{\log(\hat W)}\geq \log^6\hat W$ holds.

We now restate Claim~\ref{claim: heavy weight est}.
\heavyweight*

The proof of \Cref{claim: heavy weight est} easily follows by adapting the algorithm from the proof of \Cref{claim: degree est}. The only difference is that we replace the parameter $\wmax$ with $W$ and  switch the roles of vertex sets $Z$ and $Z'$, so the set $T$ of vertices is subsampled from $Z'$.  We then add to the set $A$ every vertex $v\in Z$, such that there are at least $100\log n\log W$ edges $(v,u)$ with $u\in Z'$, whose capacity is at least $c^*$. The analysis of the algorithm remains essentially identical. 

\subsection{Algorithm for the Heavy Vertex Problem: Proof of \Cref{claim: heavy}}
\label{sec: appendix_heavy}

In this section we provide a proof of Claim~\ref{claim: heavy}. First, we restate this claim for the convenience of the reader.

\heavy*

\paragraph{Proof of Claim~\ref{claim: heavy}.}
The algorithm closely follows the algorithm from the proof of \Cref{claim: heavy weight est}, except that we switch the roles of the sets $Z$ and $Z'$ of vertices, and reverse the directions of the edges. 
As before, if $\tau> \frac{|Z|}{100\log n\cdot\log W}$, then we return $b=0$; it is easy to verify that in this case the algorithm does not err. Otherwise, we select a subset $T\subseteq Z$ of vertices by adding every vertex $v\in Z$ to the set $T$ independently with probability $\frac{10\log n\cdot\log W}{\tau}$ as before. If $|T|\geq\frac{100|Z|\log n\cdot\log }{\tau}$, then we terminate the algorithm and return $b=0$ as before. We assume from now on that $|T|<\frac{100|Z|\log n\cdot\log }{\tau}$. By inspecting the edges leaving the vertices of $T$, we identify a subset $A\subseteq Z'$, containing all vertices $v\in Z'$, such that the number of edges $e=(u,v)$ with $u\in T$ and $c(e)\geq c^*$ is at least $100\log n\cdot \log W$. If $A=\emptyset$, then we return $b=0$. Otherwise, we select an arbitrary vertex $v\in A$, and inspect all edges entering $v$ in $G$, in time $O(n)$. If we find a set $E'\subseteq \delta^-(v)$ of at least $\tau$ edges $(u,v)$ with $u\in Z$, whose capacities are at least $c^*$, then we select arbitrary $\tau$ such edges, and return $b=1$ together with the endpoints of these edges that are distinct from $v$. Otherwise, we return $b=0$. 
Following the same analysis as in the proofs of \Cref{claim: heavy weight est} and \Cref{claim: degree est}, it is easy to verify that the probability that the algorithm errs is at most  $\frac{1}{n^{10}\cdot \log^6W}$. Since $\tau\leq |Z|$, we get that $O(n)\leq O\left(\frac{n\cdot |Z|\cdot \log n\cdot\log W}{\tau}\right )$, so the running time of the algorithm remains bounded by $O\left(\frac{n\cdot |Z|\cdot \log n\cdot\log W}{\tau}\right )$.

%
%

\newpage
\bibliographystyle{alphaurlinit}
\bibliography{global-min-cut}



\end{document}